\def\eq#1{Eq. \!\!\eqref{#1}}
\def\nn{\nonumber\\}
\def\span{\mathrm{span}}
\def\tr{\mathrm{tr}}
\def\set#1{\{#1\}}
\def\proj{\mathrm{proj}}
\def\mat#1{\left(\begin{matrix} #1 \end{matrix}\right)}
\newcommand{\bra}[1]{{\left\langle{#1}\right\vert}}
\newcommand{\ket}[1]{{\left\vert{#1}\right\rangle}}
\newcommand{\qw}[1][-1]{\ar @{-} [0,#1]}
\newcommand{\qwx}[1][-1]{\ar @{-} [#1,0]}
\newcommand{\gate}[1]{*+<.6em>{#1} \POS ="i","i"+UR;"i"+UL **\dir{-};"i"+DL **\dir{-};"i"+DR **\dir{-};"i"+UR **\dir{-},"i" \qw}
\newcommand{\meter}{*=<1.8em,1.4em>{\xy ="j","j"-<.778em,.322em>;{"j"+<.778em,-.322em> \ellipse ur,_{}},"j"-<0em,.4em>;p+<.5em,.9em> **\dir{-},"j"+<2.2em,2.2em>*{},"j"-<2.2em,2.2em>*{} \endxy} \POS ="i","i"+UR;"i"+UL **\dir{-};"i"+DL **\dir{-};"i"+DR **\dir{-};"i"+UR **\dir{-},"i" \qw}
\newcommand{\control}{*!<0em,.025em>-=-<.2em>{\bullet}}
\newcommand{\ctrl}[1]{\control \qwx[#1] \qw}
\newcommand{\targ}{*+<.02em,.02em>{\xy ="i","i"-<.39em,0em>;"i"+<.39em,0em> **\dir{-}, "i"-<0em,.39em>;"i"+<0em,.39em> **\dir{-},"i"*\xycircle<.4em>{} \endxy} \qw}
\newcommand{\Qcircuit}{\xymatrix @*=<0em>}
\def\braket#1#2{\langle #1 | #2 \rangle}
\def\ketbra#1#2{| #1 \rangle \! \langle #2 |}
\def\cC{\mathcal C}
\def\cE{\mathcal E}
\def\cH{\mathcal H}
\def\cL{\mathcal L}
\def\cP{\mathcal P}
\def\bbC{\mathbb C}
\def\bbH{\mathbb H}
\def\bbR{\mathbb R}
\def\bbZ{\mathbb Z}
\theoremstyle{definition}
\newtheorem{definition}{Definition}
\theoremstyle{theorem}
\newtheorem{theorem}[definition]{Theorem}
\newtheorem{lemma}[definition]{Lemma}
\newtheorem{corollary}[definition]{Corollary}
\newtheorem{prop}[definition]{Proposition}
\newtheorem{example}[definition]{Example}
\theoremstyle{definition}
\newtheorem{remark}[definition]{Remark}
\def\Gp{G(\textstyle\frac \pi 4)}
\newcommand\subsetsim{\mathrel{\substack{
  \textstyle\subset\\[-0.2ex]\textstyle\sim}}}
\definecolor{darkgreen}{rgb}{0,0.5,0}
\begin{document}
\title{\bf Quantum simulation from the bottom up: \\
the case of rebits}
\author[1]{Dax Enshan Koh\thanks{daxkoh@mit.edu}}
\author[2]{Murphy Yuezhen Niu\thanks{yzniu@mit.edu}} 
\author[2,3]{Theodore J. Yoder\thanks{tjyoder@mit.edu}}
\affil[1]{\small Department of Mathematics, Massachusetts Institute of Technology, Cambridge, Massachusetts 02139, USA}
\affil[2]{\small Department of Physics, Massachusetts Institute of Technology, Cambridge, Massachusetts 02139, USA}
\affil[3]{\small IBM T.J.~Watson Research Center, Yorktown Heights, NY 10598, USA}
\date{}  

\maketitle
\begin{abstract}
Typically, quantum mechanics is thought of as a linear theory with unitary evolution governed by the Schr\"{o}dinger equation. While this is technically true and useful for a physicist, with regards to \emph{computation} it is an unfortunately narrow point of view. Just as a classical computer can simulate highly nonlinear functions of classical states, so too can the more general quantum computer simulate nonlinear evolutions of quantum states. 
We detail one particular simulation of nonlinearity on a quantum computer, showing how the entire class of \textit{$\mathbb{R}$-unitary} evolutions (on $n$ qubits) can be simulated using a unitary, real-amplitude quantum computer (consisting of $n+1$ qubits in total). These operators can be represented as the sum of a linear and antilinear operator, and add an intriguing new set of nonlinear quantum gates to the toolbox of the quantum algorithm designer. Furthermore, a subgroup of these nonlinear evolutions, called the \textit{$\mathbb{R}$-Cliffords}, can be efficiently classically simulated, by making use of the fact that Clifford operators can simulate non-Clifford (in fact, non-linear) operators. This perspective of using the physical operators that we have to simulate non-physical ones that we do not is what we call bottom-up simulation, and we give some examples of its broader implications.
\end{abstract}

\section{Introduction}

Simulation is a ubiquitous task in the modern world with diverse uses from fundamental research (e.g.~particle physics simulations in the LHC) to entertainment (e.g.~virtual reality headsets). Computers are often associated with simulation due to their wide ranging capabilities as (finite instances of) universal Turing machines. Like classical computers, universal quantum computers are expected to be powerful simulators offering potentially even greater efficiency for some very important quantum tasks, such as chemistry \cite{lanyon2010towards,Aspuru-Guzik1704,babbush2017low} and fermionic simulations \cite{bravyi2002fermionic,bravyi2017tapering}. Moreover, the notion of simulation is not limited to a correspondence between especially disparate systems -- for instance, quantum error-correcting codes can be said to simulate a handful of encoded qubits with many physical ones.

Yet, it is quite common for a simulator $P$ to be developed in terms of end-goals, that is, for the purpose of modeling specific operators $O_L$ on the simulated system $L$ that are deemed interesting. Such a design, which we call top-down, reveals the set of operators $O_P$ on $P$ that are necessary to simulate $O_L$. However, with access to a universal simulator, like a quantum computer, it may be more relevant to start from the bottom with operators we can definitely perform on $P$, and ask what operators on $L$ can be simulated. In contrast to the top-down simulation, this bottom-up simulation by definition takes full advantage of the capability of the simulator.

In this paper, we provide a fleshed-out example of a bottom-up simulation, a nonlinear $n$-qubit quantum computer being simulated by a linear, real-amplitude quantum computer. The simulator consists of $n+1$ \emph{rebits}, which mathematically means that its states are normalized vectors restricted to $\mathbb{R}^{2^{n+1}}$ and it has the ability to perform orthogonal linear operators. Although the top-down version of this simulation has been noted many times in the past beginning with Bennett et.~al.~\cite{bennett1997strengths}, the bottom-up viewpoint, while less often considered (with just a brief mention in McKague, Mosca, and Gisin \cite{mckague2009simulating} and some special-case use in \cite{casanova2011quantum,di2013embedding}), reveals exciting new phenomena. In particular, the $(n+1)$-rebit computer is able to efficiently simulate, not just unitary, but also non-unitary (indeed nonlinear) operators on the $n$-qubit computer. Thus, a major takeaway is that nonlinearity can be simulated by linear systems on a larger space\footnote{At this point, an interested reader may refer to Appendix \ref{sec:simpleExample}, where we present a simple example of such a simulation.}. 

We completely characterize the operators that can be simulated using this bottom-up approach to rebit simulation. It happens that the simulable operators are a subgroup of the so-called $\mathbb{R}$-linear operators, which we call $\mathbb{R}$-unitary. We give universal gate sets for the $\mathbb{R}$-unitaries, which we note can be constructed from just \emph{partial antiunitary} operators, i.e. those that act unitarily on some subspace of the $n$-qubit Hilbert space and antiunitarily on the rest. Furthermore, the (orthogonal) Clifford hierarchy \cite{gottesman1999demonstrating} on rebits maps to a Clifford hierarchy, dubbed the $\bbR$-Clifford hierarchy, contained within the $\mathbb{R}$-unitaries. In the spirit of the Gottesman-Knill theorem \cite{gottesman1998heisenberg}, we show that the second level of the $\bbR$-Clifford hierarchy, which is strictly larger than the Clifford group, is classically efficiently simulable. We also explore the efficiency of our rebit simulation for general $\mathbb{R}$-unitary circuits.

A good reason to consider bottom-up simulation of quantum computers is for algorithm design. Our results show that when designing a quantum algorithm using the circuit model, the designer has at their disposal not just unitary operators, but also the set of $\mathbb{R}$-unitary operators. Examples of this utility are the quantum simulation of the Majorana equation \cite{casanova2011quantum} and measurement of entanglement monotones \cite{di2013embedding}. It is important to note, however, that while the rebit simulator can model non-linear operators, this simulation does not allow us to exceed the power of quantum computers. After all, a rebit simulator is just a special case of a quantum circuit. This conclusion that non-linear operators can be simulated by quantum computers does not contradict the results of \cite{abrams1998nonlinear}, since the $\mathbb{R}$-linear operators are not among the non-linear operators described in \cite{abrams1998nonlinear} that imply polynomial-time solutions for $\mathsf{NP}$-complete and \#$\mathsf{P}$-complete problems. We expect of course that generalizations of our result exist, and more exotic operators (though still not those of the type found in \cite{abrams1998nonlinear}) will become simulable when more rebits (as a function of $n$) are included in the simulator.


\subsection{Two kinds of simulation}
\label{sec:twokinds}
Here we offer a definition of simulation that suits our needs, but also applies to most other uses of the term. A list of examples is provided in Table~\ref{tab:examples}.

A simulation is a tuple $(L,P,\mathcal{P},O_{(\cdot)})$ where $L$ and $P$ are sets of states of the logical (the simulated) and the physical (the simulator) state spaces. The map $\mathcal{P}:L\rightarrow P$ serves to relate the two. Because the simulator should be faithful to the simulated space, we require that $\mathcal{P}$ is injective -- i.e.~$\mathcal{P}(a)=\mathcal{P}(b)$ implies $a=b$. But we do not require it to be surjective -- in general, $\text{Range}(\mathcal{P})\subseteq P$ and there is a partial inverse $\mathcal{L}:\text{Range}(\mathcal{P})\rightarrow L$.

The final element $O_{(\cdot)}$ of the tuple specifies either the operators we want to simulate -- i.e.~$O_{(\cdot)}=O_L:L\rightarrow L$ corresponding to a top-down simulation -- or the operators the simulator can support -- i.e.~$O_{(\cdot)}=O_P:\text{Range}(\mathcal{P})\rightarrow\text{Range}(\mathcal{P})$ corresponding to a bottom-up simulation. Once one set of operators (either $O_L$ or $O_P$) is specified, the other can be determined by using the established maps $\mathcal{P}$ and $\mathcal{L}$.

As an example, the Gottesman-Knill theorem \cite{gottesman1998heisenberg} provides a method for classical computers to simulate a limited, non-universal, $n$-qubit quantum computer. To be exact, the quantum computer is only allowed to perform Clifford gates that act on at most two qubits at a time (e.g.~the gate set $\{H,S,CX\}$ is sufficient) and single-qubit Pauli measurements, which make up the set of operators $O_L$. Furthermore, $L$ is the set of stabilizer states (which is closed under the aforementioned operators $O_L$) and $P$ is the set of $n\times (2n+1)$ binary matrices. A stabilizer state $\ket{\psi}$ of $n$-qubits is special because there are $n$ independent and mutually commuting Pauli operators $p_i$, $i=1,2,\dots,n$ so that $p_i\ket{\psi}=\ket{\psi}$. Since each such $n$-qubit Pauli can be specified using $2n+1$ bits (the last bit is to track a $\pm$ sign), these bit strings form the rows of a full rank matrix in $P$. This describes the map $\mathcal{P}$.

Gottesman-Knill as described is a top-down simulation -- the goal is to simulate the Clifford gates $O_L$. Indeed, the simulator can achieve this efficiently, since each operator in $O_P$ (i.e.~a classical manipulation of the binary matrices in $P$) that corresponds to an operator in $O_L$ takes constant time. Yet, important insights come from the bottom-up viewpoint. For instance, the classical simulation can exactly (and efficiently) calculate the entire probability distribution that the quantum computer can only sample from! The classical simulation is actually \emph{more} powerful than the Clifford quantum computer it was designed to simulate\footnote{There are actually several variants of efficiently-simulable Clifford circuits -- see \cite{jozsa2014classical} and \cite{koh2015further}. To be concrete, pick the variant in which the Clifford circuit is nonadaptive, the input is in the computational basis, and the objective is so-called \emph{strong simulation}, i.e. to calculate the probability a bit string $y$ is observed when measuring any subset of output qubits. Theorem 4 in \cite{jozsa2014classical} shows as a corollary of Gottesman-Knill that this particular strong simulation task is efficient on a classical computer. Furthermore, as shown in Proposition 1 of  \cite{terhal2004adptive}, the ability to strongly simulate a class of circuits implies the ability to classically efficiently sample from it. So, indeed the classical simulation is strictly more powerful than the quantum computation in this case.}.  As this example demonstrates, it is in this bottom-up manner that unexpected phenomena can occur in simulations. For another example, see Appendix~\ref{app:gottesman_knill} where we show that Gottesman-Knill can also simulate complex conjugation of stabilizer states.

\begin{table*}
\centerline{\begin{tabular}{|c|c|c|c|c|}
\hline
Task & Simulated space $L$ & Simulator space $P$ & Mapping $\mathcal{P}$ & operators $O_L$ \\
\hline\hline
Hamiltonian Sim.~\cite{cubitt2017universal} & $n$ qubits & 2D qubit lattice & CMP map \cite{cubitt2017universal} & All Hamiltonians \\\hline
Fermionic Sim.~\cite{bravyi2002fermionic} & $M$ Fermi modes & $Q$ qubits & Bravyi-Kitaev \cite{bravyi2002fermionic} & 2-body Hamiltonian \\\hline
Gottesman-Knill \cite{gottesman1998heisenberg} & Stabilizer states & Binary matrices & Heisenberg repr. & Stabilizer operators \\\hline
Quantum codes \cite{gottesman1997stabilizer} & $k$ qubits & $n$ qubits, $n>k$ & Encoder & Universal gate set \\\hline
Compiling \cite{dawson2005solovay} & $n$ qubits & $n$ qubits + $m$ ancillas & $\ket{\psi}\mapsto\ket{\psi}\otimes\ket{0}^{\otimes m}$ & All unitaries\footnotemark \\\hline
Rebits \cite{mckague2009simulating} & $n$ qubits & $n+1$ rebits & 
Eq.~\eqref{eq:rebit_encoding}
& $\mathbb{R}$-unitaries \\\hline
\end{tabular}}
\caption{\label{tab:examples} Several examples of our definition of simulation.}
\end{table*}
\footnotetext{Compiling looks like a rather trivial simulation until considered from the bottom-up perspective. The nontriviality is that the simulator has a set of operators $O_P$ that is much smaller than $O_L$. For instance, $O_P$ is often finite.}

\subsection{Simulation using rebits} \label{sec:simulationusingrebits}

The task we focus on in this paper is the simulation of $n$ qubits using $n+1$ rebits. In particular, we make use of the single-ancilla rebit encoding of qubits \cite{mckague2009simulating}. Using the definition of simulation given in Section \ref{sec:twokinds}, our rebit simulation is a tuple $(L,P,\mathcal{P},O)$, where $L$ is the set of $n$-qubit states, $P$ is the set of $(n+1)$-rebit states, and the \emph{encoding} map $\mathcal P:L\rightarrow P$ takes $n$-qubit states to $(n+1)$-rebit states as follows:
\begin{equation}\label{eq:rebit_encoding}
\mathcal{P}:\ket{\psi} \mapsto \Re\ket{\psi}\otimes\ket{0}_a+\Im\ket{\psi}\otimes\ket{1}_a,
\end{equation}
where subscripts $a$ indicate the ``ancilla" rebit, and $\Re$ and $\Im$ take the real and imaginary parts, respectively. The inverse of $\mathcal{P}$ is (as we show later) the \emph{decoding} map
\begin{equation}
\mathcal L: \ket \phi \mapsto (\bra 0+i \bra 1)_a \ket\phi.
\end{equation}
Handed only an unknown $n$-qubit state, the encoding operator $\mathcal{P}$ is nonlinear and thus unphysical. However, $n$-qubit states with only real amplitudes can be encoded simply by appending an ancilla $\ket{0}$ to the register. Since quantum algorithms (i.e.~those solving problems in $\mathsf{BQP}$ \cite{bernstein1997quantum}) start on the all-zeroes state $\ket{0}^{\otimes n}$, the inability to start a rebit simulation from an arbitrary, unknown state is not a problem for this standard computational model.\footnote{Indeed, a similar situation is forced upon error-correcting stabilizer codes, for which only certain states (e.g.~stabilizer states and maybe certain magic states depending on the code) are able to be fault-tolerantly prepared.}

Our primary task is to study rebit simulations corresponding to different sets $O$ of operators. Our main results concern 
bottom-up simulation using rebits (i.e. simulations corresponding to sets $O = O_P$), though to draw a contrast, we include a discussion of top-down simulation using rebits (where we choose $O =  O_L$) in Section \ref{sec:top_down}.

\subsubsection{Bottom-up simulation using rebits}
\label{sec:bottomuprebits}

We now introduce some terminology for sets of operators that will be useful for describing the bottom-up simulation using rebits. Rather than specify $O_P$ or $O_L$ directly, we specify instead a physically motivated set of operators $O$ that we can actually implement on $n+1$ qubits in the lab. In general, however, these operators $O$ do not map $P$ to $P$, i.e.~rebits to rebits. Thus, we need to restrict the simulator to operators $O_P$ that do. To do this, take $O_P=O\cap R_{n+1}\subseteq O$, where $R_{n+1}$ is the set of real linear operators on $n+1$ qubits. Correspondingly, there is a set of operators $O_L$ that behave the same way on the $n$-qubit states in $L$ that operators in $O_P$ behave on $P$: i.e.~$A\in O_L$ if and only if there is $B\in O_P$ such that $A\ket{\psi}=\mathcal{L}(B\mathcal{P}(\ket{\psi}))$ for all $\ket{\psi}\in L$. With a convenient abuse of notation, we denote this set $O_L=\mathcal{L}(O_P)$. These are the operators on $L$ being simulated. 





We use the following language to describe these sets $O$, $O_P$, and $O_L$ (see also Figure  \ref{fig:commdiagram}). 
If $O$ is the set of $\texttt S$ operators\footnote{$\texttt S$ is a placeholder for the word used to describe operators in the set $O$. For example, $\texttt S$ could stand for the word `unitary' or `linear'.} (for example, the unitary operators), then we call $O_P$ the set of real $\texttt S$ operators (denoted $O^{\bbR}$) on $P$, and $O_L$ the set of $\bbR$-$\texttt S$ operators (denoted $\mathbb{R}O$) on $L$.
For example, if $O = U_{n+1}$ is the set of $(n+1)$-qubit unitary operators, then $O_P=U^{\bbR}_{n+1}$ is the set of $(n+1)$-rebit \textit{real unitary} operators (also called orthogonal operators) and $O_L= \bbR U_n$ is the set of $n$-qubit $\bbR$\textit{-unitary} operators.\footnote{Note that $\bbR$\textit{-unitary} (which we pronounce as `R unitary') should not be confused with real unitary, which means real and unitary. The same goes for $\bbR$-linear, $\bbR$-Pauli, etc.}

\begin{figure}[h]
\centering
\begin{tikzcd}[column sep=
5em, row sep=7em
]
O \arrow[dr,"(\cdot)\cap R_{n+1}"] \\ \mathbb RO \arrow[r, bend left=15,"\mathcal P"] \arrow[u,leftarrow,"\mathbb R(\cdot)"] 
&  O^{\mathbb R} \arrow[l, bend left=15,"\mathcal L"]
\end{tikzcd}
\caption{Commutative diagram illustrating the relationships between the sets $O$, $\bbR O$ and $O^\bbR$.}
\label{fig:commdiagram}
\end{figure}
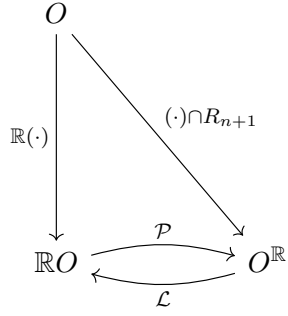


\subsection{Our results} 
\label{sec:results}

In this paper, we study the rebit simulation from a bottom-up perspective. Thus, one important task which we undertake is to determine, for various sets of $\texttt S$ operators, what the corresponding set $\bbR $-$\texttt S$ is. In the case of unitaries, we can then proceed to find universal gate sets for the $\mathbb{R}$-unitaries. We are led naturally to consider the efficiency of simulating these gate sets, and for specific subgroups (the $\mathbb{R}$-Clifford subgroup) we can even find efficient classical simulations.

In this subsection, we aim to highlight these results with a self-contained, albeit brief, presentation. References are provided to theorems and proofs in the main text (i.e.~the following sections) where the details can be perused. See Figure \ref{fig:subsetdiagram} for a summary of the various sets of operators considered in this paper.

\tikzstyle{block} = [rectangle, 
    text width=5em, text centered, minimum height=2em]
\tikzstyle{line} = [draw]

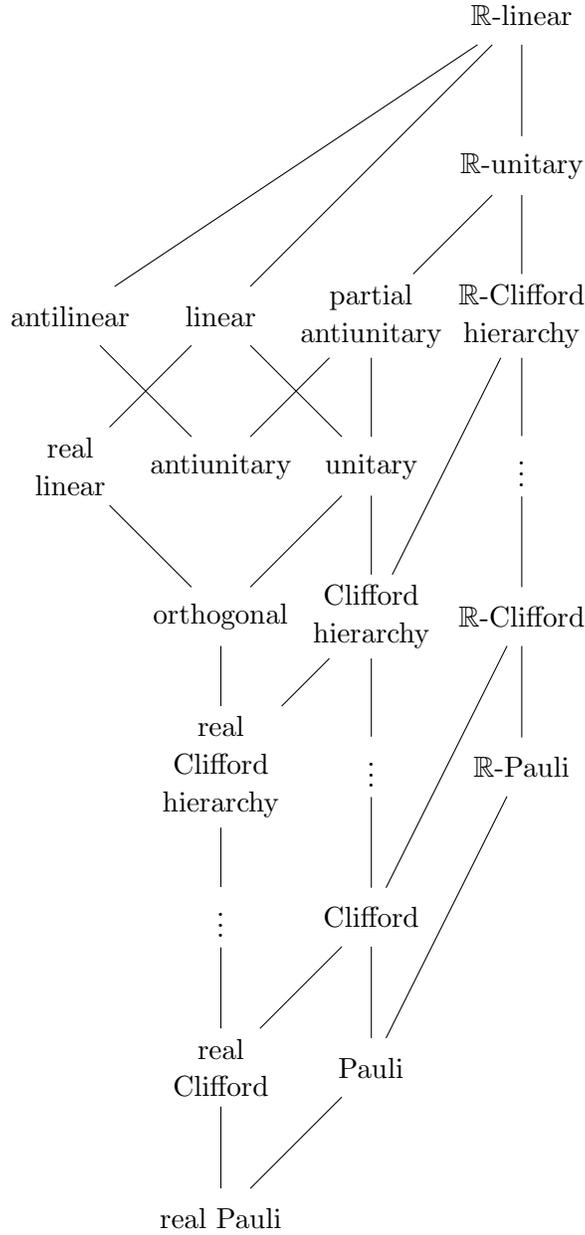
\begin{figure}
\begin{center}
\begin{tikzpicture}[node distance = 2cm, auto]

    \node [block] (a) {$\mathbb R$-linear};
    \node [block,below of=a] (b) {$\mathbb R$-unitary};
    \node [block,below of=b] (c) {$\mathbb R$-Clifford hierarchy};
    \node [block,below of=c] (d) {$\vdots$};
    \node [block,below of=d] (e) {$\mathbb R$-Clifford};
    \node [block,below of=e] (f) {$\mathbb R$-Pauli};
    \node [block,left of=c] (cl) {partial antiunitary};
    \node [block,left of=d] (dl) {unitary};
    \node [block,left of=e] (el) {Clifford hierarchy};
    \node [block,left of=f] (fl) {$\vdots$};
    \node [block,below of=fl] (gl) {Clifford};
    \node [block,below of=gl] (hl) {Pauli};
    \node [block,left of=cl] (cll) {linear};
    \node [block,left of=cll] (clll) {antilinear};
    \node [block,below of=clll] (dlll) {real \\ linear};
    \node [block,left of=dl] (dll) {antiunitary};
    \node [block,below of=dll] (ell) {orthogonal};
    \node [block,below of=ell] (fll) {real Clifford hierarchy};
    \node [block,below of=fll] (gll) {$\vdots$};
    \node [block,below of=gll] (hll) {real Clifford};
    \node [block,below of=hll] (ill) {real Pauli};
    
    \path [line] (a) -- (b);
    \path [line] (b) -- (c);
    \path [line] (c) -- (d);
    \path [line] (d) -- (e);
    \path [line] (e) -- (f);    
    \path [line] (b) -- (cl);  
    \path [line] (cl) -- (dl);
    \path [line] (dl) -- (el);
    \path [line] (el) -- (fl);
    \path [line] (fl) -- (gl);
    \path [line] (gl) -- (hl);
    \path [line] (c) -- (el);
    \path [line] (e) -- (gl);
    \path [line] (f) -- (hl);
    \path [line] (cl) -- (dll);
    \path [line] (dl) -- (ell);
    \path [line] (el) -- (fll);
    \path [line] (gl) -- (hll);
    \path [line] (hl) -- (ill);
    \path [line] (ell) -- (fll);
    \path [line] (fll) -- (gll);
    \path [line] (gll) -- (hll);
    \path [line] (hll) -- (ill);
   \path [line] (a) -- (cll);
   \path [line] (a) -- (clll);
   \path [line] (clll) -- (dll);
   \path [line] (cll) -- (dl);
   \path [line] (cll) -- (dlll);
   \path [line] (dlll) -- (ell);
\end{tikzpicture}
\end{center}
\caption{Diagram illustrating the relationships between different classes of operators considered in this paper. A line from $\mathcal A$ to $\mathcal B$ (where $\mathcal A$ is higher than $\mathcal B$) means that $\mathcal A$ is a proper superset of $\mathcal B$. The ellipses represent the infinite towers of classes corresponding to different levels of the respective Clifford hierarchies.}
\label{fig:subsetdiagram}
\end{figure}

\subsubsection{Characterization of various rebit simulators} \label{sec:characterizationOf}
First, 
we find $\bbR$-$\texttt S$ for the following subsets $\texttt S$: (1) linear operators, (2) unitary operators, (3) Pauli operators, (4) operators in the $k$th level of the Clifford hierarchy. The following theorem summarizes our results.

\begin{theorem} \label{thm:mainTheorem1}
Let $\Gamma$ be an operator on $n$ qubits.
\begin{enumerate}
\item (Theorem~\ref{thm:equiv_defRlinear}) $\Gamma$ is $\bbR$-linear if and only if there exist complex linear operators $A$ and $B$ such that $\Gamma$ can be written as
\begin{equation}
\Gamma = A + BK,
\end{equation}
where $K$ denotes the complex conjugation operator, $K:\ket{\psi}\mapsto\Re\ket{\psi}-i\Im\ket{\psi}$.
\item (Theorem~\ref{thm:OrthogonalABK2}) $\Gamma$ is $\bbR$-unitary if and only if $\Gamma=A+BK$ is $\bbR$-linear and $A$ and $B$ are complex linear operators satisfying the `unitarity' constraints
\begin{eqnarray}
A^\dag A + B^T \bar B = I  \nonumber
\\ 
A^\dag B + B^T \bar A = 0.
\end{eqnarray} 
\item (Theorem \ref{thm:encoded_RPaulis}) $\Gamma$ is $\bbR$-Pauli if and only if it can be written as 
\begin{equation}
\Gamma = PK^b,
\end{equation}
where $P = i^c p_1 \otimes \ldots \otimes p_n$ is a Pauli operator (where $c \in \{0,1,2,3\}$ and $p_j \in \{I,X,Y,Z\}$), and $b\in\{0,1\}$.
\item (Theorem \ref{thm:encoded_hierarchy}) $\Gamma$ is in the $k$th level of the $\mathbb{R}$-Clifford hierarchy $\bbR\cC_k$ if and only if $\Gamma (\bbR \cC_1) \Gamma^\dag \subseteq  \bbR \cC_{k-1}$, where $\bbR \cC_1$ is the set of $\bbR$-Paulis.
\end{enumerate}
\end{theorem}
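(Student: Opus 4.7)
For part 1, I would use the standard decomposition of an $\bbR$-linear map on a complex vector space into its $\bbC$-linear and $\bbC$-antilinear parts. Define $A\ket\psi := \tfrac{1}{2}(\Gamma\ket\psi - i\,\Gamma(i\ket\psi))$ and $C\ket\psi := \tfrac{1}{2}(\Gamma\ket\psi + i\,\Gamma(i\ket\psi))$. A direct check gives $A(i\ket\psi) = iA\ket\psi$ and $C(i\ket\psi) = -iC\ket\psi$, so $A$ is $\bbC$-linear, $C$ is $\bbC$-antilinear, and $A+C=\Gamma$. Setting $B := CK$ (which is $\bbC$-linear since $K^2=I$) gives $\Gamma = A+BK$. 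The converse is immediate because $A$ and $BK$ are manifestly $\bbR$-linear.

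For part 2, use part 1 to write $\Gamma = A+BK$. Since the rebit inner product of encoded states equals $\Re\braket\cdot\cdot$, an $n$-qubit operator is $\bbR$-unitary iff it preserves $\Re\braket\phi\psi$. Expand $\braket{\Gamma\phi}{\Gamma\psi}$ using $\Gamma = A+BK$ together with the identity $\bra{K\phi}M\ket{K\psi} = \overline{\bra\phi\bar M\ket\psi}$ (and its variant) to group the result into one piece bilinear in $(\bra\phi,\ket\psi)$ with coefficient $A^\dag A + B^T\bar B$ plus one bilinear in $(\bra\phi,\ket{K\psi})$ with coefficient $A^\dag B + B^T\bar A$. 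Replacing $\phi\to i\phi$ promotes the real-part equation to a full complex one; then replacing $\psi\to i\psi$ separates the two pieces (one is $\bbC$-linear, the other $\bbC$-antilinear in $\psi$). The arbitrariness of $\phi,\psi$ forces each operator coefficient to vanish, yielding the two stated constraints, and the converse follows by direct substitution. The main care required is in tracking conjugations cleanly.

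For part 3, enumerate the real Paulis on $n+1$ qubits by their ancilla action $R\in\{I,X,Y,Z\}$. Writing such a Pauli as $Q\otimes R$, the reality constraint forces $Q$ to be a real Pauli when $R\in\{I,X,Z\}$ and an imaginary Pauli (i.e., $i$ times a real Pauli) when $R=Y$. Computing $\mathcal L((Q\otimes R)\mathcal P\ket\psi)$ directly in each case yields encoded operators of the form $PK^b$: $R=I$ produces real $P$ with $b=0$, $R=Z$ produces real $P$ with $b=1$, $R=X$ produces imaginary $P$ with $b=1$, and $R=Y$ produces imaginary $P$ with $b=0$. Every $PK^b$ is thus realized, and matching counts ($2\cdot 4^{n+1}$ elements on both sides) confirm the correspondence is bijective.

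For part 4, adopt the natural definition $\bbR\cC_k := \mathcal L(\cC_k^{\bbR})$, where $\cC_k^{\bbR} = \cC_k \cap R_{n+1}$ is the real part of the $(n+1)$-qubit Clifford hierarchy. The key ingredient is that $\mathcal L$ (on operators) is a group homomorphism on invertible real operators, so $\mathcal L(BPB^\dag) = \Gamma\mathcal L(P)\Gamma^\dag$ whenever $\Gamma = \mathcal L(B)$ with $B$ orthogonal. Forward direction: if $B\in\cC_k^{\bbR}$ then $BPB^\dag \in \cC_{k-1}^{\bbR}$ for every real Pauli $P$, which decodes to $\Gamma\bbR\cC_1\Gamma^\dag \subseteq \bbR\cC_{k-1}$. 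Converse: injectivity of $\mathcal L$ on real operators converts the hypothesis into the statement that $B$ conjugates every real $(n+1)$-qubit Pauli into $\cC_{k-1}^{\bbR}$; since every $(n+1)$-qubit Pauli is a global phase times a real Pauli, and $\cC_{k-1}$ is closed under global phases (which commute with $B,B^\dag$), this extends to $B\cC_1 B^\dag \subseteq \cC_{k-1}$, placing $B$ in $\cC_k^{\bbR}$. I expect the main obstacle to be precisely this last lifting step, from real to complex Paulis via closure under global phases.
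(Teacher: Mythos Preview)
Parts~3 and~4 of your proposal match the paper's approach closely. In Part~3 the paper uses a generating set $\{X_i,iY_i,Z_i,X_a,iY_a,Z_a\}$ for the real Paulis and checks both inclusions, rather than your case split on the ancilla factor plus a cardinality count, but the content is the same. In Part~4 your ``lifting from real to complex Paulis via phase closure'' is exactly the paper's lemma that the orthogonal hierarchy $\cD_{n+1}(k)$ coincides with $\cC_{n+1}(k)\cap T_{n+1}$; the rest of your argument is the same induction via the homomorphism property of $\cL$. One point both you and the paper elide in the converse: you need $B=\cP(\Gamma)$ to be orthogonal before placing it in $\cC_k$, which is by definition a subset of unitaries. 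This is recoverable: applying the hypothesis to $I\in\bbR\cC_1$ gives $BB^T\in\cC_{n+1}^\bbR(k-1)\subseteq T_{n+1}$, and a real symmetric positive-semidefinite orthogonal matrix must equal $I$.

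In Part~1 you prove the equivalence between the \emph{algebraic} $\bbR$-linearity condition and the $A+BK$ form; this is the paper's Appendix~C argument verbatim. But in this paper the primary meaning of ``$\Gamma$ is $\bbR$-linear'' is $\Gamma\in\cL(R_{n+1})$, i.e., $\Gamma$ arises by decoding a real linear operator on $n{+}1$ rebits. The paper therefore also computes explicit formulas for $\cL(W)$ and $\cP(A+BK)$ to close the loop. Your argument is correct but incomplete for the statement as posed unless you supply that connection---either via those computations, or by the one-line observation that $\cP(\Gamma)=\cP\circ\Gamma\circ\cL$ is an $\bbR$-linear endomorphism of the real vector space $\cH_{n+1}(\bbR)$ and is therefore represented by a real matrix.

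Part~2 is a genuinely different route. The paper expands $\cP(A+BK)^T\cP(A+BK)=I$ directly using the formula $\cP(A+BK)=\Re A\otimes I+\Im A\otimes XZ+\Re B\otimes Z+\Im B\otimes X$ and reads off the four ancilla-Pauli components. You instead use $\langle\cP(\phi),\cP(\psi)\rangle=\Re\langle\phi,\psi\rangle$ to recast orthogonality of $\cP(\Gamma)$ as preservation of $\Re\langle\cdot,\cdot\rangle$ by $\Gamma$, then separate the $\bbC$-linear and $\bbC$-antilinear contributions by the $\phi\to i\phi$, $\psi\to i\psi$ substitutions. Yours is coordinate-free and cleaner; the paper's approach has the compensating advantage of producing the explicit encoding formula as a byproduct, which it reuses heavily elsewhere (for the partial-antiunitary analysis and the gate-level encodings).
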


The reader might recognize that, in operator theory and linear algebra\footnote{See, for example, \cite{huhtanen2011real, huhtanen2012function, eisele2012finding}. Note that our definition of $\bbR$-linearity coincides with their definition of real linearity.}, the term $\bbR$-linear is also used to describe a map $f:V\rightarrow V'$, where $V$ and $V'$ are complex vector spaces, that satisfies
\begin{equation}\label{eq:linealg_rlinear}
f(ax+by) = af(x) + bf(y)
\end{equation}
for all $x,y\in V$ and $a,b\in \bbR$.
It turns out\footnote{More accurately, we chose the terminology so that the two definitions of $\bbR$-linearity coincide.} that this definition is equivalent to that in Part 1 of Theorem \ref{thm:mainTheorem1} (see Theorem~\ref{thm:equiv_defRlinear}). It may not be surprising that $\bbR$-linearity turns out to be the defining characteristic of the simulated system; after all, the encoding map $\mathcal{P}$ defined in Eq.~\eqref{eq:rebit_encoding} is blatantly $\bbR$-linear in the linear algebraic sense of Eq.~\eqref{eq:linealg_rlinear}. More detail on the linear algebraic definition of $\bbR$-linearity is included for reference in Appendices~\ref{app:rlinear} and \ref{app:algebraicProperties}.

The $\bbR$-unitary operators (part 2 of Theorem~\ref{thm:mainTheorem1}) are of special importance. Since they can be simulated by (real) unitary operators acting on rebits, they correspond to the set of operators that can be simulated by physical systems using the rebit encoding. This is an example of bottom-up simulation: we start with the set of operators that we can perform on rebits (i.e. real unitary operators) and derive the set of operators (i.e. $\bbR$-unitaries) that we can simulate on qubits. The rest of our results will concern various properties of the $\bbR$-unitary operators.

\subsubsection{ $\bbR$-unitaries as products of partial antiunitaries}

Our second result concerns the class of partial antiunitary operators, which are a subset (but not a subgroup) of the $\bbR$-unitary operators. In \cite{mckague2009simulating}, McKague, Mosca, and Gisin note that partial antiunitaries -- described as operators ``which act only on a subspace" -- as well as products of partial antiunitaries, can be simulated using the rebit encoding.  However, a precise definition of partial antiunitarity was not provided in \cite{mckague2009simulating}. In this paper, we propose the following definition:

\begin{definition} \label{def1:partialAntiunitary_intro}
Let $S\subseteq \mathcal H$ be a subspace of a complex (finite-dimensional) vector space $\mathcal H$. An operator $\Gamma$ on $\mathcal H$ is a \textit{partial antiunitary} operator with respect to $S$ if 
\begin{enumerate}[(i)]
\item $\Gamma$ is additive, i.e.\ $\Gamma(\psi+\phi) = \Gamma(\psi)+\Gamma(\phi)$ for all $\psi,\phi \in \mathcal H$.
\item $\Gamma$ is unitary on $S^\perp$, i.e.\ $\langle \Gamma(\psi),\Gamma(\phi)\rangle = \langle \psi,\phi\rangle$ for all $\psi, \phi \in S^\perp$.
\item $\Gamma$ is antiunitary on $S$, i.e.\ $\langle \Gamma(\psi),\Gamma(\phi)\rangle = \langle \phi,\psi\rangle$ for all $\psi, \phi \in S$.
\item $\langle \Gamma(\psi),\Gamma(\phi)\rangle = 0$ for all $\psi \in S, \phi \in S^\perp$.	
\end{enumerate}
\end{definition}

Perhaps surprisingly, this definition leads to the following relation between partial antiunitaries and $\bbR$-unitaries. 
\begin{theorem} \label{thm:Runitarypartial}
(Theorem~\ref{thm:logicalActionOfOrthogonals}) For any $\bbR$-unitary operator $\Gamma$, there exists an integer $0<k \leq (n+1)2^n(2^n-1)/2$ and partial antiunitary operators $\Gamma_1,\ldots, \Gamma_{k-1},\Gamma_k$ such that $\Gamma=\Gamma_k\Gamma_{k-1}\dots \Gamma_1$.
\end{theorem}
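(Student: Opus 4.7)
The strategy is to work through the group isomorphism $\bbR U_n \cong O(2^{n+1})$ induced by the encoding/decoding pair $(\mathcal P,\mathcal L)$: every $\bbR$-unitary $\Gamma$ corresponds bijectively (and multiplicatively, since $\mathcal P\mathcal L$ is the identity on the image of $\mathcal P$) to an orthogonal matrix $R=\mathcal P\Gamma\mathcal L$ acting on the $(n+1)$-rebit real Hilbert space $\bbR^{2^{n+1}}$. It therefore suffices to write $R$ as a product of orthogonal factors, each of which pulls back through $\mathcal L$ to a partial antiunitary in the sense of Definition~\ref{def1:partialAntiunitary_intro}, and then to count.

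\textbf{Step 1 (generators).} First I would pinpoint a convenient family of elementary partial antiunitaries whose images in $O(2^{n+1})$ generate the orthogonal group. Natural candidates are 2D Givens-style rotations of the rebit space in planes of the form $\mathrm{span}\{\ket{x,a},\ket{y,b}\}$ for $x,y\in\{0,1\}^n$ and $a,b\in\{0,1\}$, together with reflections of the form $I-2\ketbra{x,1}{x,1}$, which pull back to the ``partial complex conjugation'' on the complex 1D subspace $\mathrm{span}\{\ket x\}$. For each chosen generator I would verify the four axioms of Definition~\ref{def1:partialAntiunitary_intro} by identifying an explicit subspace $S$ on which $\Gamma$ acts antiunitarily and a complement $S^\perp$ on which it acts unitarily, checking additivity and the orthogonality condition $\langle\Gamma(S),\Gamma(S^\perp)\rangle=0$.

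\textbf{Step 2 (structured decomposition).} Next I would decompose an arbitrary $R\in O(2^{n+1})$ as an ordered product of the generators from Step 1, matching the bound $(n+1)\cdot\binom{2^n}{2}$. The factorized form of the bound strongly suggests an $(n+1)$-pass scheme: in pass $j$ (for $j=1,\dots,n+1$) one eliminates a structured set of off-diagonal entries of the residual matrix using at most $\binom{2^n}{2}$ partial antiunitaries — one for each unordered pair of $n$-qubit computational basis labels $\{x,y\}\subset\{0,1\}^n$ — so that after all $n+1$ passes the residual matrix is the identity. This is most naturally organized by exploiting the block structure $R=\bigl(\begin{smallmatrix}A&B\\C&D\end{smallmatrix}\bigr)$ (the outer $2\times 2$ block indexed by the ancilla rebit, each block of size $2^n\times 2^n$ as computed from the $(A,B)$ data of Theorem~\ref{thm:OrthogonalABK2}), processing the four $2^n\times 2^n$ blocks in a systematic order.

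\textbf{Main obstacle.} The heart of the proof will be matching the precise bound $(n+1)\binom{2^n}{2}$ while guaranteeing that every factor is actually a partial antiunitary and not merely an $\bbR$-unitary. Standard orthogonal decompositions are not tailored to this: Cartan--Dieudonn\'e gives only $2^{n+1}$ reflections with no explicit partial-antiunitarity structure, and a generic Givens QR gives $\binom{2^{n+1}}{2}$ rotations which neither has the right shape nor automatically produces factors satisfying Definition~\ref{def1:partialAntiunitary_intro}. Hitting the stated bound will require exploiting the tensor factorization $\bbR^{2^{n+1}}\cong\bbR^{2^n}\otimes\bbR^2$ in conjunction with the additive/antiunitary structure of partial antiunitaries, so that each elementary clean-up step collapses a structured set of entries using a single partial antiunitary on a suitably chosen complex subspace $S\subseteq\bbC^{2^n}$, rather than a single Givens rotation on the rebit side.
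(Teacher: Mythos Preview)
Your high-level framework is exactly the paper's: pass to the orthogonal group on $n+1$ rebits via $\mathcal P$, decompose there, and decode via $\mathcal L$.  Where your plan goes wrong is Step~1.  Generic two-level Givens rotations of the rebit space do \emph{not} decode to partial antiunitaries.  Take $x\neq y$ and the rotation by angle $\theta$ in the plane $\mathrm{span}\{\ket{x,0},\ket{y,0}\}$.  A direct computation of $\mathcal L$ of this gate gives, on the $\{x,y\}$ block,
\[
A=\tfrac12\begin{pmatrix}c+1 & s\\ -s & c+1\end{pmatrix},\qquad
B=\tfrac12\begin{pmatrix}c-1 & s\\ -s & c-1\end{pmatrix}\quad(c=\cos\theta,\ s=\sin\theta),
\]
and then $(A^\dag B)_{12}=s/2\neq 0$ for generic $\theta$.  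By the paper's characterization (Theorem~\ref{thm:characterizationfourth}), $A^\dag B=0$ is necessary for partial antiunitarity, so this factor is $\bbR$-unitary but not partial antiunitary.  Thus your proposed generating family does not live inside the partial antiunitaries, and the ``$(n+1)$-pass Givens'' scheme cannot work as stated.

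The missing ingredient is the paper's compiling lemma (Lemma~\ref{lem:orthog_compiling}): every orthogonal operator on $n+1$ qubits can be written as a product of \emph{single-qubit} orthogonal gates and multiply-controlled $Z$ gates $C^hZ$.  This is obtained by first doing the standard two-level decomposition (as you suggest) and then further compiling each two-level orthogonal into these tensor-structured gates via the circuit in Fig.~\ref{fig:CS}.  The point of this extra step is precisely that these gates decode nicely: a single-qubit orthogonal on a data rebit, or any $C^hZ$ not touching the ancilla, decodes to itself (unitary, hence partial antiunitary with $\Xi=\{0\}$); a single-qubit orthogonal on the ancilla decodes to a global phase (Proposition~\ref{prop:globalphase}); and a $C^hZ$ with the ancilla as one leg decodes to a $C^hK$, i.e.\ a partial complex conjugation $K_L$ (Proposition~\ref{prop:examplesNonlinear}), which is partial antiunitary by Proposition~\ref{prop:KLispartial}.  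Counting the $C^hZ$ gates produced by this two-stage compilation (Appendix~\ref{app:orthog_compiling}) is what yields the bound in the theorem; your $(n+1)\binom{2^n}{2}$ heuristic was pointing at the right shape but not at the right mechanism.
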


Theorem \ref{thm:Runitarypartial} tells us that in order to simulate an $\bbR$-unitary, it suffices to just simulate sequences of partial antiunitary operators. Indeed, we can find universal gate sets for the $\mathbb{R}$-linear operators that consist of only finitely many partial antiunitary gates. Some of the simplest are presented here in terms of the partial antiunitaries $CK_i$ (a single-qubit gate on qubit $i$) and $CCK_{ij}$ (a two-qubit gate on qubits $i,j$) defined by (i.e.~they conjugate only the amplitudes of basis states in which the ``control" qubits are 1)
\begin{eqnarray}
CK_i:\sum_{x\in\{0,1\}^n}(a_x+ib_x)\ket{x}&\mapsto&\sum_{\stackrel{x\in\{0,1\}^n}{x_i=0}}(a_x+ib_x)\ket{x}+\sum_{\stackrel{x\in\{0,1\}^n}{x_i=1}}(a_x-ib_x)\ket{x},\\
CCK_{ij}:\sum_{x\in\{0,1\}^n}(a_x+ib_x)\ket{x}&\mapsto&\sum_{\stackrel{x\in\{0,1\}^n}{x_ix_j=0}}(a_x+ib_x)\ket{x}+\sum_{\stackrel{x\in\{0,1\}^n}{x_ix_j=1}}(a_x-ib_x)\ket{x},
\end{eqnarray}
as well as the unitary operators controlled-controlled-$Z$ (CCZ), global phase $G(\theta)\ket{\psi}=e^{i\theta}\ket{\psi}$, and Hadamard gate $H$.
\begin{theorem}
(Proposition~\ref{prop:finitegatesets}) The gate sets $\{H,CCK,G(\pi/4)\}$ and $\{H,CCZ,CK,G(\pi/4)\}$ are able to approximately\footnote{We have a specific notion of approximation in mind. In principle, any metric on $P$ (the simulator space) implies a metric on $L$ (the simulated space), and likewise metrics can be defined for operators on those spaces, even if those operators on $L$ are nonlinear. To make an operator norm on $\bbR$-linears for instance, one instead evaluates the operator norm of their simulations. See the definitions at the end of Section~\ref{sec:R-linear} for more rigor on the $\bbR$-linear operator norm.} simulate any $\mathbb{R}$-unitary operator.
\end{theorem}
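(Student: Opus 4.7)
The plan is to reduce the question to universality of an orthogonal gate set on the $(n+1)$-rebit simulator. By Theorem~\ref{thm:OrthogonalABK2}, the $\bbR$-unitaries on $n$ qubits are in bijection with the orthogonal operators on $n+1$ rebits, so approximating an arbitrary $\bbR$-unitary on $n$ qubits is the same as approximating an arbitrary real orthogonal operator on $n+1$ rebits using the rebit realizations of the listed gates. My first step is therefore to translate each gate into its rebit action. Using Eq.~\eqref{eq:rebit_encoding}, a direct check of the real and imaginary parts yields $H \mapsto H\otimes I_a$, $CCZ \mapsto CCZ\otimes I_a$, $CK_i \mapsto CZ_{i,a}$, $CCK_{ij} \mapsto CCZ_{ij,a}$, and $\Gp \mapsto R_a(\pi/4)$, where $R_a(\pi/4)=\bigl(\begin{smallmatrix}\cos(\pi/4)&-\sin(\pi/4)\\ \sin(\pi/4)&\cos(\pi/4)\end{smallmatrix}\bigr)$ is a planar rotation of the ancilla.

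Next I would invoke the Aharonov--Shi universality theorem: on $N\geq 3$ qubits, $\{H_i,\, CCZ_{ijk}\}$ densely generates $SO(2^N)$. Hence it suffices to synthesize Hadamard on every one of the $n+1$ rebits and $CCZ$ on every triple of them. For the second gate set $\{H,CCZ,CK,\Gp\}$, the bootstrap is straightforward:
\begin{align*}
CX_{i,a} &= R_a(\pi/4)\,CZ_{i,a}\,R_a(\pi/4)^{-1},\\
CX_{a,i} &= H_i\,CZ_{i,a}\,H_i,\\
SWAP_{i,a} &= CX_{i,a}\,CX_{a,i}\,CX_{i,a},\\
H_a &= SWAP_{i,a}\,H_i\,SWAP_{i,a},\\
CCZ_{ij,a} &= SWAP_{k,a}\,CCZ_{ijk}\,SWAP_{k,a}\quad\text{for any }k\in\{1,\dots,n\}\setminus\{i,j\}.
\end{align*}
The first identity rests on $R_a(\pi/4)\,Z_a\,R_a(\pi/4)^{-1}=X_a$ (a 4-line $2\times 2$ matrix calculation); the second uses the symmetry $CZ_{i,a}=CZ_{a,i}$; the rest are routine.

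For the first gate set $\{H,CCK,\Gp\}$ a longer bootstrap is needed because neither $CZ_{i,a}$ nor $CCZ$ on three non-ancilla qubits is directly available. I would first derive $CCX_{ij,a} = R_a(\pi/4)\,CCZ_{ij,a}\,R_a(\pi/4)^{-1}$ (a Toffoli with ancilla as target), and then extract a $CZ$ on non-ancilla pairs from the identity $CZ_{ij}=\bigl[CCZ_{ij,a}\,R_a(\pi/4)\,CCZ_{ij,a}\,R_a(\pi/4)^{-1}\bigr]^2$, which follows from a short block-diagonal calculation showing that the squared conjugator equals $I-2\,P^i_1 P^j_1$ on the $(i,j)$-register and identity on the ancilla. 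This delivers the real Clifford group on the non-ancilla qubits, and in particular the single-qubit Paulis $X_j$. The two ``$X$-tricks'' $CX_{a,i}=CCX_{ja,i}\,X_j\,CCX_{ja,i}\,X_j$ and $CX_{i,a}=CCX_{ij,a}\,X_j\,CCX_{ij,a}\,X_j$ (both easily verified by computing the action on basis states $|x_i,x_j,x_a\rangle$) then produce the $CX$ gates crossing into the ancilla, and from there the argument reduces exactly to the gate-set-2 case.

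The main obstacle is the gate-set-1 bootstrap: one must carefully verify that the subgroup generated by $\{H_i,CZ_{ij}\}$ on the non-ancilla qubits really does contain the single-qubit Paulis needed for the $X$-tricks, and small-$n$ corner cases (where a SWAP partner $k\in\{1,\dots,n\}\setminus\{i,j\}$ does not exist) require an ad hoc treatment. A minor additional subtlety is that every translated gate lies in $SO(2^{n+1})$ for $n\geq 2$, so the density statement is literally for the identity component; the remaining $\det=-1$ component of $O(2^{n+1})$ corresponds to $\bbR$-unitaries differing from those in the identity component only by a global $-1$, so no physically observable $\bbR$-unitary is missed.
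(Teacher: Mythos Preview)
Your overall strategy---pushing the question to the $(n+1)$-rebit simulator via $\mathcal P$ and then arguing orthogonal universality there---is the same starting point the paper uses (Theorem~\ref{thm:universalgateset}). From that point on, however, your route diverges from the paper's.

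The paper never bootstraps at the rebit level. Instead, having applied $\mathcal L$ once to Shi's $\{H,CCZ\}$ to obtain the base set $\mathbb G=\{H,CCZ,CCK,G(\pi/4)K\}$, it works entirely with short gate identities at the \emph{logical} level (Lemma~\ref{lem:gatesetsimexamples}). For $\{H,CCK,G(\pi/4)\}$ the key move is the identity $CS_{ij}=G(\pi/4)\,K\,CCK_{ij}\,G(\pi/4)\,K\,CCK_{ij}$ (obtained by decoding item~(\ref{item:CSij}) of Proposition~\ref{prop:exampleslinear}), together with $K\leq CK\leq CCK$ via ancillas; then Kitaev's universality of $\{H,CS\}$ yields $CCZ$ and hence everything. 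For $\{H,CCZ,CK,G(\pi/4)\}$ the paper simply uses $CCK\leq\{CCZ,H,CK\}$ (Lemma~\ref{lem:gatesetsimexamples}(v)) and $CK\leq CCK$. No SWAP constructions, no $X$-tricks, no case analysis on $n$.

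Your direct rebit bootstrap is a legitimate alternative and the gate-set-2 part is clean. The gate-set-1 part, however, has a real gap where you flag it: the group $\langle H_i,\,CZ_{ij}\rangle$ on the non-ancilla rebits does \emph{not} contain single-qubit Paulis without help---on one qubit $\langle H\rangle=\{I,H\}$, and more generally one needs an ancilla in state $|1\rangle$ to extract $Z_j=\langle 1|_b\,CZ_{jb}\,|1\rangle_b$ (and then $X_j=H_jZ_jH_j$). The paper's simulation convention explicitly allows such ancillas, so the gap is fixable, but you should invoke them rather than claim the Paulis come from $\{H,CZ\}$ alone. Once you do, the much shorter route is simply $CK_i\leq CCK$ via a $|1\rangle$-ancilla (exactly as in Lemma~\ref{lem:gatesetsimexamples}(ii)), reducing gate set~1 to gate set~2 immediately. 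The determinant remark is also slightly off: for $n=2$ the gate $CCZ$ on three rebits has determinant $-1$, so the density statement for $SO$ needs $n\ge 3$ (or the ancilla trick again).
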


\subsubsection{Computational complexity and a generalization of the Gottesman-Knill Theorem}

Third, we explore the efficiency of the rebit encoding with regards to universal gate sets for the top-down \emph{and} bottom-up simulations.
\begin{theorem}
Let $C$ be a depth-$d$ circuit on $n$ qubits.
\begin{enumerate}
\item (Theorem~\ref{thm:topdown_effic}) If $C$ is a unitary circuit consisting of gates from $\{H,T,\text{CNOT}\}$, then $C$ (applied to $\ket{0}^{\otimes n}$) can be simulated using either an orthogonal circuit of depth at most $dn$ on $n+1$ rebits or an orthogonal circuit of depth at most $d$ on $2n$ rebits.
\item (Theorem~\ref{thm:bottomup_effic}) If $C$ is an $\mathbb{R}$-unitary circuit consisting of gates from $\{H,CCZ,CK,G(\pi/4)\}$, then $C$ (applied to $\ket{0}^{\otimes n}$) can be simulated using either an orthogonal circuit of depth at most $dn$ on $n+1$ rebits or an orthogonal circuit of depth at most $d\lceil\log_2n\rceil$ on $2n$ rebits.
\end{enumerate}
\end{theorem}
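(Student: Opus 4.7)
The plan is to first work out the rebit (orthogonal) implementation $\tilde G$ of each gate $G$ in the two gate sets using the intertwining relation $\tilde G\circ\mathcal P = \mathcal P\circ G$, and then schedule these implementations given the allowed rebit count. Writing $G=G_R+iG_I$ with real matrices $G_R,G_I$, a short calculation gives $\tilde G = G_R\otimes I_a + G_I\otimes J$ on the $n$ qubits plus one shared ancilla, where $J=\ket{1}\!\bra{0}-\ket{0}\!\bra{1}$ represents multiplication by $i$ as a real matrix. Thus $\tilde H = H\otimes I_a$, $\widetilde{\mathrm{CNOT}} = \mathrm{CNOT}\otimes I_a$, and $\widetilde{\mathrm{CCZ}} = \mathrm{CCZ}\otimes I_a$ act trivially on the ancilla; $\tilde T$ is a 2-rebit gate on the target qubit and the ancilla; $\widetilde{CK_i}$ reduces to a $CZ$ between qubit $i$ and the ancilla (using $\tilde K = I\otimes Z_a$); and $\widetilde{G(\pi/4)}$ is a single-rebit rotation on the ancilla.

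With these implementations in hand, the $n+1$-rebit depth bounds follow from a scheduling argument. Within any single layer of the original circuit, the ``real'' gates ($H$, $\mathrm{CNOT}$, $\mathrm{CCZ}$) leave the ancilla untouched and can be run fully in parallel, while the ``imaginary'' gates ($T$, $CK$, $G(\pi/4)$) all interact with the single shared ancilla and hence must be serialized. A layer contains at most $n$ imaginary-gate occurrences, so its rebit realization has depth at most $n$, yielding total rebit depth at most $dn$ in both parts.

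For the $2n$-rebit bounds, the strategy is to use the extra ancillas as replicas of the shared ancilla so that the imaginary-gate interactions can be parallelized. For Part 2, at the start of each original layer I would fan out the current ancilla state to up to $n$ copies via a binary tree of CNOTs in depth $\lceil\log_2 n\rceil$, execute that layer's $CK$ and $G(\pi/4)$ gates simultaneously on disjoint copies, and then uncompute the copies before the next layer; interleaving the uncompute step with the next layer's fan-out brings the per-layer cost to $\lceil\log_2 n\rceil$ and the total to $d\lceil\log_2 n\rceil$. For Part 1, the narrower Clifford+$T$ gate set admits a tighter encoding in which the single ancilla is permanently ``split'' into $n$ private ancilla wires so that $\tilde T$ on qubit $i$ becomes a strictly local 2-rebit gate on qubit $i$ and its private ancilla $i$; every gate in an original layer then acts on disjoint rebits, and the rebit depth equals $d$.

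The main technical obstacle is constructing and verifying this split encoding for Part 1. Concretely, one must exhibit an isometry $V$ from the range of $\mathcal P$ on $n+1$ rebits into the $2n$-rebit space, check that $V$ intertwines the sequential realizations of $\tilde H$, $\widetilde{\mathrm{CNOT}}$, and $\tilde T$ with their fully parallel $2n$-rebit counterparts, and confirm that the initial state $\ket{0}^{\otimes 2n}$ lies in the image of $V$ and that the logical output can be read off the final rebit state. Once this isometry is in place, both depth bounds drop out immediately from the per-layer gate count.
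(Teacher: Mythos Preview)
Your treatment of the $n{+}1$-rebit bounds matches the paper: real gates act trivially on the ancilla and can run in parallel, while gates touching the ancilla must be serialized, giving at most $n$ ancilla-accesses per layer.

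For the $2n$-rebit bounds there are concrete gaps.

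\textbf{Part 1, depth $d$ on $2n$ rebits.} You correctly identify that each data qubit should get a private ancilla so that $\tilde T_i$ becomes local, but you leave the construction of the isometry $V$ as an acknowledged ``main technical obstacle.'' The paper supplies exactly this missing piece: it encodes the single ancilla rebit in an $n$-rebit stabilizer code with stabilizers $Y_{a_j}Y_{a_{j+1}}$ and logical $\bar Z = Z_{a_1}\cdots Z_{a_n}$. The key observation (which does not appear in your sketch) is that $\mathcal P(T_i)$ interacts with the ancilla only through $Y_a$, since $\mathcal P(T_i)=\tfrac12(I+Z_i)+\tfrac12(I-Z_i)\tfrac{1}{\sqrt2}(I-iY_a)$, and in this code the logical $\bar Y$ equals $Y_{a_j}$ for \emph{any} single ancilla rebit $a_j$. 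Hence $T_i$ can use $Y_{a_i}$ while $T_j$ uses $Y_{a_j}$, and parallel $T$ gates acquire disjoint supports. Since $H$ and CNOT are real they ignore the encoded ancilla entirely. This is the isometry you were looking for; without it your Part~1 argument is incomplete.

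\textbf{Part 2, depth $d\lceil\log_2 n\rceil$ on $2n$ rebits.} Your fan-out is essentially the paper's $Z$-repetition code ($\ket{\bar 0}=\ket{0}^{\otimes n}$, $\ket{\bar 1}=\ket{1}^{\otimes n}$): the fanned-out state is the encoded ancilla, and applying $CZ_{i,a_i}$ to distinct physical ancillas correctly implements parallel $CK_i$ because those gates are diagonal. But your treatment of $G(\pi/4)$ is wrong: $\mathcal P(G(\pi/4))=R(\pi/4)_a$ is a non-diagonal rotation, and applying it to one physical copy of a fanned-out (GHZ-like) ancilla does not implement the logical rotation---it leaves the codespace. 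The paper instead keeps the ancilla permanently in the repetition code, so $CK$-layers cost depth $1$, while a $G(\pi/4)$ is handled by decoding the repetition code with a depth-$\lceil\log_2 n\rceil$ CNOT tree, rotating a single rebit, and re-encoding. Note also that by the paper's definition of $\bbR$-unitary depth, $CK$ and $G(\pi/4)$ cannot occur in the same layer (they do not commute), so your phrase ``execute that layer's $CK$ and $G(\pi/4)$ gates simultaneously'' cannot arise.
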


Further exploration of the $\mathbb{R}$-Clifford circuits reveals an efficient classical simulation, enlarging the scope of the Gottesman-Knill simulation.
\begin{theorem}
If $C$ is an $\mathbb{R}$-Clifford circuit on $n$ qubits, then there is an efficient classical algorithm to sample from $C\ket{0}^{\otimes n}$ in time $O(n^2)$, when the output qubits of the circuit are measured in the computational basis.
\end{theorem}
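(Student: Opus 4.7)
The plan is to reduce the sampling task to the standard Gottesman--Knill simulation of a Clifford circuit on the $n+1$ simulator rebits. By Part~4 of Theorem~\ref{thm:mainTheorem1} and the bottom-up framework of Section~\ref{sec:bottomuprebits}, each $\bbR$-Clifford gate of $C$ is the image under $\mathcal L$ of a real (hence orthogonal) Clifford gate on $n+1$ rebits, so the physical simulating circuit $\widetilde C$ is a bona fide Clifford circuit of the same depth on $n+1$ qubits, and in particular is covered by the standard stabilizer formalism.

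First, I would observe that the encoding of the input is trivial, namely $\mathcal P(\ket{0}^{\otimes n}) = \ket 0^{\otimes n}\otimes\ket 0_a = \ket 0^{\otimes n+1}$, which is the stabilizer state generated by $Z_1,\ldots,Z_{n+1}$. Next I would verify that measuring the $n$ logical output qubits in the computational basis corresponds in the simulator to measuring the first $n$ rebits in the computational basis (discarding the ancilla). For any $\ket{\psi}=\sum_x(a_x+ib_x)\ket x$,
\begin{equation}
\mathcal P(\ket\psi) \;=\; \sum_x \ket x \otimes \bigl(a_x\ket 0_a + b_x\ket 1_a\bigr),
\end{equation}
so the marginal distribution on the first $n$ physical rebits assigns outcome $x$ the probability $a_x^2+b_x^2 = |a_x+ib_x|^2$, matching the Born-rule distribution for the logical state. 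Thus sampling $C\ket 0^{\otimes n}$ is identical to sampling the first $n$ rebits of $\widetilde C\ket 0^{\otimes n+1}$.

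Having reduced to sampling a subset of computational-basis measurements from a Clifford circuit acting on a stabilizer input, I would invoke the Aaronson--Gottesman stabilizer tableau simulation: the initial tableau is trivially written down, each Clifford gate updates it in $O(n)$ time, and sampling the $n$ computational-basis measurements can be accomplished in $O(n^2)$ total time by the usual tableau manipulation (equivalently, via the strong-simulation-to-sampling reduction cited in the footnote of Section~\ref{sec:twokinds}), yielding the claimed runtime.

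The main obstacle is essentially bookkeeping rather than a conceptual hurdle: confirming that the lift of an $\bbR$-Clifford gate is itself a Clifford gate (immediate from Theorem~\ref{thm:encoded_hierarchy}), and carefully matching the measurement statistics between the logical and physical pictures (the calculation above). No additional machinery beyond the bottom-up correspondence and the standard stabilizer simulator is required.
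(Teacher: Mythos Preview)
Your reduction is exactly the one the paper uses: push $C$ through $\mathcal P$ to get a real Clifford circuit on $n+1$ rebits, note that $\mathcal P(\ket 0^{\otimes n})=\ket 0^{\otimes n+1}$, that computational-basis measurement on the first $n$ rebits reproduces the logical Born distribution (Proposition~\ref{prop:rebitMeasurement}), and then invoke the Aaronson--Gottesman/CHP simulator.

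The one piece you are missing is what actually pins down the $O(n^2)$ figure. Your accounting gives $O(n)$ per gate update plus $O(n^2)$ for the measurements, so the total is $O(gn+n^2)$ with $g$ the number of gates in $C$; nothing in your argument bounds $g$. The paper's Corollary~\ref{cor:GKforRcircuits} closes this by proving a compiling result: any $U\in\bbR\mathcal C_n(2)$ (equivalently any orthogonal Clifford $\mathcal P(U)\in\mathcal C^{\bbR}_{n+1}(2)$) can be written using $O(n^2)$ gates from $\{H,Z,CX\}$, via a real-Clifford analogue of the standard recursive Clifford decomposition (Nielsen--Chuang, Theorem~10.6), fixing the action on one qubit at a time. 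With the gate count thus bounded by $O(n^2)$, the CHP simulation yields the stated runtime. So your sketch is correct in structure but incomplete as a proof of the precise bound; add the compilation step (or at least cite it) to finish.
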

\noindent The $\mathbb{R}$-Clifford circuits are a strictly larger set of operators than the Clifford circuits. For instance, $CK$ is an $\mathbb{R}$-Clifford gate that is not even linear, much less Clifford. 

\subsection{Related work} \label{subsec:relatedWork}

The use of rebits in quantum computation dates back to the 1990s, where it was shown that real amplitudes (or even rational amplitudes \cite{adleman1997quantum}) suffice for universal quantum computation \cite{bernstein1997quantum}. This was first proven in the quantum Turing machine model \cite{bernstein1997quantum}, before direct proofs of this result for the quantum circuit model were found \cite{kitaev1997quantum, boykin2000new, rudolph20022, shi2002both, aharonov2003simple}. These circuit-model proofs all involve proving the existence of computationally universal gate sets that consist of only gates with real coefficients. A simple example of such a gate set is the set containing Toffoli and Hadamard gates \cite{shi2002both, aharonov2003simple}.

Besides the single-ancilla rebit encoding, other rebit encodings have been proposed. An example is the subsystem-division-respecting encoding introduced by \cite{mckague2009simulating}. Unlike the single-ancilla encoding, local operators remain local under this encoding. As noted in \cite{mckague2013power}, an implication of this result is that no experiment can distinguish between quantum mechanics with real amplitudes and quantum mechanics with complex amplitudes, unless one makes assumptions about the dimensions of systems.

Rebits have also been studied in relation to several different topics in quantum information theory. For example, it was shown that states and measurements on a real Hilbert space are sufficient for the maximal violation of Bell inequalities \cite{pal2008efficiency, mckague2009simulating}. Another example is in computational complexity theory: as we discussed above, choosing to use rebits instead of qubits does not change the power of $\BQP$. It was shown in \cite{mckague2013power} that such a choice also does not change the power of many other quantum complexity classes, like $\QMA$, $\QCMA$ or $\QIP(k)$. 

Finally, we note that the techniques used in analyzing rebit circuits are closely related to that used in analyzing quaternionic circuits. It can be shown, for example, that quaternionic quantum circuits are no more powerful than complex quantum circuits  \cite{fernandez2003quaternionic}, just as complex quantum circuits are no more powerful than real ones.


\subsection{Notation}
\label{sec:notation}

Throughout this paper, all vector spaces are assumed to be finite-dimensional. The complex conjugation operator is denoted by $K$. When $K$ acts on vectors or linear operators, we assume that it acts on them with respect to the computational basis. For a scalar, vector or matrix $A$, we sometimes write $K(A) = \bar A$. The real and imaginary parts of a scalar, vector or matrix $A$ are defined in terms of $K$: the real part of $A$ is given by $\Re A = \frac 12 (A+K(A))$ and the imaginary part of $A$ is given by $\Im A = \frac 1{2i}(A-K(A))$. Hence we could write $I = \Re+ i\Im$ and $K = \Re - i\Im$. We say that $A$ is \textit{real} if $\Re A = A$, and that $A$ is \textit{imaginary} if $\Im A = -i A$. 

The identity matrix is denoted by $I$, and the Pauli matrices by $X$, $Y$ and $Z$. We denote the Hadamard gate by $H$, the phase gate by $S = \mathrm{diag}(1,i)$, and the $T$ gate by  $T = \mathrm{diag}(1,e^{i\pi/4})$. For any unitary gate $G$, we write $CG$ to refer to the controlled-$G$ gate, and $CCG$ to refer to the controlled-controlled-$G$ gate. For example, $CX$ is the CNOT gate, and $CCX$ is the Toffoli gate. The notation $CG_{ij}$ means that the control register is $i$ and the target register is $j$. Similarly, $CCG_{ijk}$ means that the control registers are $i$ and $j$ and the target register is $k$. More controls will be indicated with a superscript on the symbol $C$, i.e.~$C^{h}X_{c_1,c_2,\dots,c_h,t}$ has $h$ control qubits $c_i$ and one target qubit $t$.

We denote the $n$-qubit complex Hilbert space by $\mathcal H_n(\mathbb C)$ and the $n$-rebit real Hilbert space by $\mathcal H_n(\mathbb R)$, i.e., $\mathcal H_n(\mathbb C)$ is a $2^n$-dimensional vector space over the complex numbers and $\mathcal H_n(\mathbb R)$ is a $2^n$-dimensional vector space over the real numbers. 


We denote the set of (complex) linear operators on $\cH_n(\bbC)$ by $L_n$, and the set of real linear operators by $R_n$, i.e. 
\begin{equation}
R_n = \set{\Gamma \in L_n: \Im \Gamma = 0}.
\end{equation} 
The group of unitary operators on $\cH_n(\bbC)$ is denoted by $U_n = \set{U \in L_n: U^\dag U = I}$. The group of orthogonal operators is denoted by $T_n= \set{T \in U_n: \Im T=0}$. Technically, while it is clear that $L_n$ and $U_n$ cannot act on the rebit space $\mathcal{H}_n(\bbR)$ (due to failing closure), it is possible for the operators in $R_n$ and $T_n$ to act on either $\mathcal{H}_n(\bbC)$ or $\mathcal{H}_n(\bbR)$ (i.e.~on qubits or rebits). We generally let context sort this ambiguity out, and moreover, because all these operators can be represented as matrices independent of the vector spaces on which they act, this subtlety should not be an issue.



For operators $A, B$, we write $A\propto B$ if there exists $\theta \in \bbR$ such that $A = e^{i\theta} B$.

\section{Quantum circuits with rebits}
\label{sec:rebitEncoding}

We start by introducing the rebit encoding and decoding maps for quantum circuits. A quantum circuit is in general described in terms of its states, operators  and measurements. We will now describe each of these separately.

\subsection{Rebit encoding and decoding of states} 

We restrict our attention to pure states. Note that we do not lose any generality with this restriction since any mixed state can be purified to a pure state by adding ancilla qubits \cite{nielsen2010quantum}. 
\begin{definition} 
The \textit{rebit encoding} of a quantum state $\ket \psi \in \mathcal H_n(\mathbb C)$ is the state $\mathcal P (\ket \psi) \in \mathcal H_{n+1}(\mathbb R)$ defined by
\begin{eqnarray} \label{eq:defPstates}
\mathcal P (\ket \psi) = \Re \ket \psi \otimes \ket{0}_a + \Im \ket \psi \otimes \ket{1}_a.
\end{eqnarray}
\end{definition}
Here, the real and imaginary parts (denoted by $\Re$ and $\Im$ respectively) of $\ket\psi$ are defined with respect to the computational basis.

Following the terminology in Section \ref{sec:simulationusingrebits}, the space of qubits $\cH_n(\bbC)$ may be thought of as the \textit{logical space} $L$ and the space of rebits $\cH_{n+1}(\bbR)$ may be thought of as the \textit{physical space} $P$ that encodes states in $\cH_n(\bbC)$ through the map $\mathcal P$.

Explicitly, if 
$$\ket\psi = \sum_{i_1,\ldots, i_n \in \{0,1\}^n} \psi_{i_1,\ldots,i_n} \ket {i_1,\ldots,i_n},$$
then 
$$\mathcal P(\ket\psi) = \sum_{i_1,\ldots,i_n,j \in \{0,1\}^{n+1}} \psi_{i_1,\ldots,i_n,j} \ket {i_1,\ldots,i_n,j},$$
where we have used the notation $\psi_{i_1,\ldots,i_n,0} = \Re(\psi_{i_1,\ldots,i_n})$ and $\psi_{i_1,\ldots,i_n,1} = \Im(\psi_{i_1,\ldots,i_n})$.
The normalization condition $\sum_{i_1\ldots i_n \in \{0,1\}^n} |\psi_{i_1,\ldots,i_n}|^2 = 1$ becomes 
\begin{equation}
\sum_{i_1\ldots i_n,j \in \{0,1\}^{n+1}} \psi_{i_1,\ldots,i_n,j}^2 = 1
\end{equation}
in the encoded space. For example, when $n=1$ and each entry of the state $\ket \psi$ is written in terms of its real and imaginary parts, $\mathcal P(\cdot)$ acts on $\ket\psi$ as follows:
\begin{equation}\label{eq:eg1state}
\mathcal P: (a+ib)\ket 0 + (c+id)\ket 1 \mapsto a\ket{00} + b\ket{01} + c\ket{10} + d\ket{11}.
\end{equation}
Incidentally, this discussion shows that $\mathcal{P}$ preserves the $l_2$ norm.
\begin{prop}\label{prop:l2_norm_equal}
Denote the $l_2$ norm for $\ket{\psi}\in\mathcal{H}_n(\bbC)$ as $\|\ket{\psi}\|=|\langle\psi|\psi\rangle|^2$. Then $\|\ket{\psi}\|=\|\mathcal{P}(\ket{\psi})\|$ for all $\ket{\psi}\in\mathcal{H}_n(\bbC)$.
\end{prop}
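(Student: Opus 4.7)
The plan is to simply compute both norms in the computational basis and observe they give the same sum of squares. First I would expand $\ket{\psi}=\sum_{x\in\{0,1\}^n}\psi_x\ket{x}$ and split each amplitude into real and imaginary parts, $\psi_x = a_x + i b_x$ with $a_x,b_x\in\mathbb{R}$. Then the ordinary inner product gives
\begin{equation}
\langle\psi|\psi\rangle = \sum_{x\in\{0,1\}^n}|\psi_x|^2 = \sum_{x\in\{0,1\}^n}(a_x^2+b_x^2).
\end{equation}

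Next, I would apply the definition of $\mathcal{P}$ from Eq.~\eqref{eq:defPstates}. Since $\Re\ket{\psi}=\sum_x a_x\ket{x}$ and $\Im\ket{\psi}=\sum_x b_x\ket{x}$, the encoded state is
\begin{equation}
\mathcal{P}(\ket{\psi}) = \sum_{x} a_x\ket{x}\otimes\ket{0}_a + \sum_{x} b_x\ket{x}\otimes\ket{1}_a,
\end{equation}
which is already expressed in the computational basis of $\mathcal{H}_{n+1}(\mathbb{R})$, with the two families of basis vectors $\ket{x}\ket{0}_a$ and $\ket{x}\ket{1}_a$ mutually orthogonal. Therefore
\begin{equation}
\langle\mathcal{P}(\psi)|\mathcal{P}(\psi)\rangle = \sum_{x}a_x^2 + \sum_{x}b_x^2,
\end{equation}
which matches the previous expression exactly. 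Taking the appropriate power (or square root, depending on the convention in the stated definition) yields $\|\ket{\psi}\| = \|\mathcal{P}(\ket{\psi})\|$.

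There is really no obstacle here — this is a direct verification from the definition of $\mathcal{P}$, and in fact the body of Section~\ref{sec:rebitEncoding} preceding the proposition already displays the key identity $\sum_{x,j}\psi_{x,j}^2 = 1$ as the normalization condition in the encoded space. The only mild subtlety is that the stated formula $\|\ket\psi\|=|\langle\psi|\psi\rangle|^2$ uses a nonstandard convention, so one should just remark that whichever monotone function of $\langle\psi|\psi\rangle$ is intended, it depends only on the sum of squares of real and imaginary parts and is hence preserved under $\mathcal{P}$.
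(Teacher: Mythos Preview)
Your proposal is correct and matches the paper's approach exactly: the paper does not give a separate proof but simply points to the preceding computation that the normalization condition $\sum_x|\psi_x|^2=1$ becomes $\sum_{x,j}\psi_{x,j}^2=1$ in the encoded space, which is precisely your direct verification in the computational basis. Your remark about the nonstandard convention $\|\ket\psi\|=|\langle\psi|\psi\rangle|^2$ is also apt, as this appears to be a typo in the paper.
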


Next, we show that the map $\mathcal P$ is invertible and that its inverse $\mathcal L: \mathcal H_{n+1}(\mathbb R) \rightarrow \mathcal H_n(\mathbb C)$ is given by
\begin{equation} \label{eq:defLstates}
\mathcal L: \ket \phi \mapsto (\bra 0+i \bra 1)_a \ket\phi.
\end{equation}
We refer to $\mathcal L(\ket \phi)$ as the rebit decoding of the rebit state $\ket \phi$.

\begin{prop}\label{prop:LP_are_inverses}
The maps $\mathcal P$ and $\mathcal L$ defined in \eq{eq:defPstates} and \eq{eq:defLstates} are inverses of each other.
\end{prop}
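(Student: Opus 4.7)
The plan is to verify each of the two compositions $\mathcal{L}\circ\mathcal{P}$ and $\mathcal{P}\circ\mathcal{L}$ equals the identity by direct computation from the definitions \eqref{eq:defPstates} and \eqref{eq:defLstates}.

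For $\mathcal{L}\circ\mathcal{P} = \mathrm{id}_{\mathcal{H}_n(\mathbb{C})}$, I would take an arbitrary $\ket{\psi}\in\mathcal{H}_n(\mathbb{C})$, substitute the formula for $\mathcal{P}(\ket{\psi})$ into the formula for $\mathcal{L}$, and exploit the orthonormality of $\ket{0}_a,\ket{1}_a$ on the ancilla register to obtain
\[
(\bra{0}+i\bra{1})_a\bigl(\Re\ket{\psi}\otimes\ket{0}_a + \Im\ket{\psi}\otimes\ket{1}_a\bigr) = \Re\ket{\psi} + i\,\Im\ket{\psi}.
\]
The right-hand side is exactly $\ket{\psi}$ by the identity $I = \Re + i\,\Im$ recorded in Section~\ref{sec:notation}, which closes that direction.

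For $\mathcal{P}\circ\mathcal{L} = \mathrm{id}_{\mathcal{H}_{n+1}(\mathbb{R})}$, I would take an arbitrary $\ket{\phi}\in\mathcal{H}_{n+1}(\mathbb{R})$ and decompose it uniquely on the ancilla register as $\ket{\phi} = \ket{u}\otimes\ket{0}_a + \ket{v}\otimes\ket{1}_a$, where $\ket{u},\ket{v}\in\mathcal{H}_n(\mathbb{R})$ are real (their reality is automatic because $\ket{\phi}$ has real computational-basis coefficients). Then $\mathcal{L}(\ket{\phi}) = \ket{u} + i\ket{v}$, and applying $\mathcal{P}$ requires extracting the real and imaginary parts of $\ket{u}+i\ket{v}$; since $\ket{u},\ket{v}$ are already real, $\Re(\ket{u}+i\ket{v}) = \ket{u}$ and $\Im(\ket{u}+i\ket{v}) = \ket{v}$, so reassembling gives back $\ket{\phi}$. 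As a byproduct, this shows $\mathcal{P}$ is in fact surjective onto $\mathcal{H}_{n+1}(\mathbb{R})$, so the two maps are genuine two-sided inverses (consistent with the dimension count $2\cdot 2^n = 2^{n+1}$ over $\mathbb{R}$).

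There is essentially no obstacle here; the result is a one-line unwinding of definitions in each direction. The only subtlety to flag is that $\Re$ and $\Im$ are defined relative to the computational basis (Section~\ref{sec:notation}), which is precisely what legitimizes the decomposition of a real rebit state into a pair of real vectors on the ancilla register and guarantees that the real/imaginary parts recover those vectors exactly.
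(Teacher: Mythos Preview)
Your proposal is correct and essentially identical to the paper's own proof: both directions are handled by direct substitution, using orthonormality of $\ket{0}_a,\ket{1}_a$ for $\mathcal{L}\circ\mathcal{P}$ and the decomposition of $\ket{\phi}$ along the ancilla register (what the paper writes as $\bra{0}_a\ket{\phi}$ and $\bra{1}_a\ket{\phi}$, what you call $\ket{u}$ and $\ket{v}$) for $\mathcal{P}\circ\mathcal{L}$. The only cosmetic difference is that the paper packages the second direction as $(\ketbra{0}{0}+\ketbra{1}{1})_a\ket{\phi}=\ket{\phi}$, whereas you name the real components explicitly; the content is the same.
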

\begin{proof}
For all $\ket\psi$ and $\ket\phi$,
$$\mathcal L \circ \mathcal P \ket \psi = (\bra 0 + i \bra 1)_a (\Re \ket\psi\otimes \ket{0}_a + \Im\ket\psi \otimes \ket{1}_a) = \Re\ket\psi + i\Im\ket\psi = \ket\psi$$ and $$\mathcal P \circ \mathcal L \ket \phi = \mathcal P(\bra 0_a \ket \phi + i \bra 1_a \ket \phi) = \bra 0_a \ket \phi \otimes \ket 0_a + \bra 1_a \ket \phi \otimes \ket 1_a = (\ketbra 00 + \ketbra 11)_a \ket \phi  = \ket\phi.$$
Hence, $\mathcal L \circ \mathcal P = I$ and $\mathcal P \circ \mathcal L = I$. 
\end{proof}

We noted in Section \ref{sec:characterizationOf} that $\mathcal{P}$ is $\bbR$-linear. Here we also note that $\mathcal{L}$ is linear. We collect these observations in the following proposition.
\begin{prop}\label{prop:prlinear_llinear}
$\mathcal{P}$ is $\bbR$-linear on $\mathcal{H}_n(\bbC)$ and $\mathcal{L}$ is linear on $\mathcal{H}_{n+1}(\bbR)$. That is, for all $\ket{\psi_1},\ket{\psi_2}\in\mathcal{H}_n(\mathbb{C})$ and all $a,b\in\bbR$,
\begin{equation}
\mathcal{P}(a\ket{\psi_1}+b\ket{\psi_2})=a\mathcal{P}(\ket{\psi_1})+b\mathcal{P}(\ket{\psi_2}),
\end{equation}
and for all $\ket{\phi_1},\ket{\phi_2}\in\mathcal{H}_{n+1}(\bbR)$ and $\alpha,\beta\in\bbR$,
\begin{equation}
\mathcal{L}(\alpha\ket{\phi_1}+\beta\ket{\phi_2})=\alpha\mathcal{L}(\ket{\phi_1})+\beta\mathcal{L}(\ket{\phi_2}).
\end{equation}
\end{prop}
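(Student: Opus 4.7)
The proof is essentially a direct verification from the explicit definitions in Eq.~\eqref{eq:defPstates} and Eq.~\eqref{eq:defLstates}, so the plan is to unpack each map and use basic properties of $\Re$, $\Im$, and the inner product. The one subtlety to keep straight is why $\mathcal{P}$ is only $\bbR$-linear and not $\bbC$-linear: the reason is that the real-part and imaginary-part operators commute with scalar multiplication \emph{only} when the scalar is real. For complex $a$, $\Re(a\ket\psi) \neq a\Re\ket\psi$ in general (consider $a = i$), which is precisely what obstructs $\bbC$-linearity and forces us to restrict to $\bbR$-linear combinations.

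For the first claim, I would first note the elementary fact that for any $a,b\in\bbR$ and vectors $\ket{\psi_1},\ket{\psi_2}\in\mathcal{H}_n(\bbC)$, we have $\Re(a\ket{\psi_1}+b\ket{\psi_2})=a\Re\ket{\psi_1}+b\Re\ket{\psi_2}$, and similarly for $\Im$. This follows immediately from the definitions $\Re=\tfrac12(I+K)$ and $\Im=\tfrac{1}{2i}(I-K)$ given in the notation section, together with the fact that $K$ commutes with real scalar multiplication (since $K(a\ket\psi)=\bar a K(\ket\psi)=aK(\ket\psi)$ when $a\in\bbR$). Applying these identities to the definition $\mathcal{P}(\ket\psi)=\Re\ket\psi\otimes\ket{0}_a+\Im\ket\psi\otimes\ket{1}_a$ and distributing the tensor product yields
\begin{equation*}
\mathcal{P}(a\ket{\psi_1}+b\ket{\psi_2})=a(\Re\ket{\psi_1}\otimes\ket{0}_a+\Im\ket{\psi_1}\otimes\ket{1}_a)+b(\Re\ket{\psi_2}\otimes\ket{0}_a+\Im\ket{\psi_2}\otimes\ket{1}_a),
\end{equation*}
which is exactly $a\mathcal{P}(\ket{\psi_1})+b\mathcal{P}(\ket{\psi_2})$.

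For the second claim, I would simply use the fact that the bra vectors $\bra 0_a$ and $\bra 1_a$ act linearly on $\mathcal{H}_{n+1}(\bbR)$, since the inner product on this space is $\bbR$-bilinear. Given $\alpha,\beta\in\bbR$ and $\ket{\phi_1},\ket{\phi_2}\in\mathcal{H}_{n+1}(\bbR)$,
\begin{equation*}
\mathcal{L}(\alpha\ket{\phi_1}+\beta\ket{\phi_2})=(\bra 0+i\bra 1)_a(\alpha\ket{\phi_1}+\beta\ket{\phi_2})=\alpha(\bra 0+i\bra 1)_a\ket{\phi_1}+\beta(\bra 0+i\bra 1)_a\ket{\phi_2},
\end{equation*}
which is $\alpha\mathcal{L}(\ket{\phi_1})+\beta\mathcal{L}(\ket{\phi_2})$, as desired. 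There is no real obstacle here; the hardest part is just the conceptual remark about why we cannot upgrade the $\bbR$-linearity of $\mathcal{P}$ to $\bbC$-linearity, which is worth a brief comment in the proof so that the reader sees the link to the $\bbR$-linear framework developed earlier in the paper (cf.\ the discussion following Theorem~\ref{thm:mainTheorem1} and Eq.~\eqref{eq:linealg_rlinear}).
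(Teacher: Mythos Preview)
Your proposal is correct and is precisely the direct verification the paper has in mind; in fact the paper does not write out a proof at all, presenting the proposition as a collected observation, so your explicit unpacking of the definitions (using that $\Re$, $\Im$ commute with real scalars, and that contraction with the fixed bra $(\bra0+i\bra1)_a$ is linear) is exactly the intended argument.
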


\subsection{Rebit encoding and decoding of operators}
\label{sec:rebittransformation}

Having defined how states are encoded, we now proceed to describe how operators on qubits are encoded. Let $\Gamma: \mathcal H_n(\mathbb C) \rightarrow \mathcal H_n(\mathbb C)$ be any operator on the set of $n$-qubit states. We define the rebit encoding of $\Gamma$ to be the map $\mathcal P(\Gamma) : \mathcal H_{n+1}(\mathbb R) \rightarrow \mathcal H_{n+1}(\mathbb R)$ given by
\begin{equation}
\mathcal P(\Gamma): \ket \phi \mapsto \mathcal P(\Gamma \mathcal L(\ket \phi)).
\end{equation} 
Note that we have used the same symbol $\mathcal P$ to denote the rebit encoding of both states and operators.

Let $W: \mathcal H_{n+1}(\mathbb R) \rightarrow \mathcal H_{n+1}(\mathbb R)$ be any operator on the set of $(n+1)$-rebit states. We define the rebit decoding of $W$ to be the map $\mathcal L(W) : \mathcal H_n(\mathbb C) \rightarrow \mathcal H_n(\mathbb C)$ given by
\begin{equation}
\label{eq:LWDefn}
\mathcal L(W): \ket \psi \mapsto \mathcal L(W \mathcal P(\ket \psi)).
\end{equation} 
The above definitions are chosen so that the rebit encoding $\mathcal P$ of operators and the rebit decoding $\mathcal L$ of operators are inverses of each other. 

It is easily established that $\mathcal{P}$ and $\mathcal{L}$ for operators are group homomorphisms.
\begin{prop}\label{prop:PL_homomorphisms}
For all $\Gamma_1,\Gamma_2:\mathcal H_n(\mathbb C) \rightarrow \mathcal H_n(\mathbb C)$, we have $\mathcal{P}(\Gamma_1\Gamma_2)=\mathcal{P}(\Gamma_1)\mathcal{P}(\Gamma_2)$. Likewise, for all $W_1,W_2:\mathcal H_{n+1}(\mathbb R) \rightarrow \mathcal H_{n+1}(\mathbb R)$, we have $\mathcal{L}(W_1W_2)=\mathcal{L}(W_1)\mathcal{L}(W_2)$.
\end{prop}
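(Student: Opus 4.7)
The plan is to unpack each side of the claimed equalities using the operator definitions of $\mathcal P$ and $\mathcal L$ in Eq.~\eqref{eq:LWDefn} and its counterpart, and then reduce everything to the inversion identities already established in Proposition~\ref{prop:LP_are_inverses}, namely $\mathcal L \circ \mathcal P = I$ on $\mathcal H_n(\bbC)$ and $\mathcal P \circ \mathcal L = I$ on $\mathcal H_{n+1}(\bbR)$. The statement should essentially be a tautology: any bijective conjugation-style encoding induces a homomorphism on the operator level, so the content of the proof is really just a careful bookkeeping of compositions.

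For the first claim, I would fix an arbitrary rebit state $\ket\phi \in \mathcal H_{n+1}(\bbR)$ and evaluate $\mathcal P(\Gamma_1\Gamma_2)$ on it. By definition this equals $\mathcal P(\Gamma_1\Gamma_2\mathcal L(\ket\phi))$. The key manipulation is to insert the identity operator $\mathcal L \circ \mathcal P$ between $\Gamma_1$ and $\Gamma_2$, which is allowed by Proposition~\ref{prop:LP_are_inverses}. Reparenthesizing, this becomes $\mathcal P\bigl(\Gamma_1 \mathcal L\bigl(\mathcal P(\Gamma_2 \mathcal L(\ket\phi))\bigr)\bigr)$. Now the inner expression $\mathcal P(\Gamma_2 \mathcal L(\ket\phi))$ is, by definition, $\mathcal P(\Gamma_2)(\ket\phi)$, and the outer application is, by definition, $\mathcal P(\Gamma_1)$ acting on that rebit state. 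Hence the right-hand side of the claim appears, giving $(\mathcal P(\Gamma_1)\mathcal P(\Gamma_2))(\ket\phi)$. Since $\ket\phi$ was arbitrary, the operator identity follows.

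For the second claim I would proceed in the dual fashion: evaluate $\mathcal L(W_1 W_2)$ on an arbitrary qubit state $\ket\psi \in \mathcal H_n(\bbC)$, so that by definition this equals $\mathcal L(W_1 W_2 \mathcal P(\ket\psi))$. This time I insert $\mathcal P \circ \mathcal L = I$ (the other inversion identity from Proposition~\ref{prop:LP_are_inverses}) between $W_1$ and $W_2$, and regroup to obtain $\mathcal L\bigl(W_1 \mathcal P\bigl(\mathcal L(W_2 \mathcal P(\ket\psi))\bigr)\bigr)$. Recognizing the inner piece as $\mathcal L(W_2)(\ket\psi)$ and the outer as $\mathcal L(W_1)$ applied to that vector yields $(\mathcal L(W_1)\mathcal L(W_2))(\ket\psi)$, completing the proof.

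There is no real obstacle here, but one subtlety is worth calling out: the operators $\Gamma_i$ and $W_i$ in the statement are arbitrary --- they are not assumed to be linear, unitary, or even additive. The derivation above should therefore use only function composition and the two inverse identities from Proposition~\ref{prop:LP_are_inverses}, and must not silently invoke linearity of $\Gamma_i$ or $W_i$. The argument I sketched has this property, because the $\mathcal L \circ \mathcal P$ (or $\mathcal P \circ \mathcal L$) identity is inserted at the level of vectors already in the image of the preceding operator, not used to commute anything past the operator.
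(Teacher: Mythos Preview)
Your proof is correct and follows essentially the same approach as the paper: both evaluate the operator identity on an arbitrary state, unfold the definitions of $\mathcal P$ and $\mathcal L$ on operators, and insert the appropriate inversion identity from Proposition~\ref{prop:LP_are_inverses} between the two factors. The paper works the chain of equalities from $\mathcal P(\Gamma_1)\mathcal P(\Gamma_2)\ket\phi$ toward $\mathcal P(\Gamma_1\Gamma_2)\ket\phi$ rather than the reverse, and omits the dual argument for $\mathcal L$, but otherwise the content is identical.
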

\begin{proof}
We just prove the first statement here; the second is just as simple an application of the definitions. For all $\ket{\phi}\in\mathcal{H}_n(\bbC)$,
\begin{equation}
\mathcal{P}(\Gamma_1)\mathcal{P}(\Gamma_2)\ket{\phi}=\mathcal{P}(\Gamma_1)\mathcal{P}(\Gamma_2\mathcal{L}(\ket{\phi}))=\mathcal{P}(\Gamma_1\mathcal{L}(\mathcal{P}(\Gamma_2\mathcal{L}(\ket{\phi}))))=\mathcal{P}(\Gamma_1\Gamma_2\mathcal{L}(\ket{\phi}))=\mathcal{P}(\Gamma_1\Gamma_2)\ket{\phi}.
\end{equation}
using Proposition~\ref{prop:LP_are_inverses} in the second-to-last step.
\end{proof}
\noindent The definitions also imply that the encodings of states and operators behave naturally together, and likewise for decodings of the two,
\begin{eqnarray}
\mathcal{P}(\Gamma)\mathcal{P}(\ket{\phi})=\mathcal{P}(\Gamma\ket{\phi}),\\
\mathcal{L}(W)\mathcal{L}(\ket{\psi})=\mathcal{L}(W\ket{\psi}).
\end{eqnarray}
These relations are also easily checked.  Similarly to the case of states, the rebit encoding and decodings of operators have the same linearity properties.
\begin{prop}\label{prop:pop_rlinear_lop_linear}
$\mathcal{P}$ (for operators) is $\bbR$-linear on $L_n$ and $\mathcal{L}$ (for operators) is linear on $R_{n+1}$. That is, for all $\Gamma_1,\Gamma_2\in L_n$ and all $a,b\in\bbR$,
\begin{equation}
\mathcal{P}(a\Gamma_1+b\Gamma_2)=a\mathcal{P}(\Gamma_1)+b\mathcal{P}(\Gamma_2),
\end{equation}
and for all $W_1,W_2\in R_{n+1}$ and all $\alpha,\beta\in\mathbb{R}$,
\begin{equation}
\mathcal{L}(\alpha W_1+\beta W_2)=\alpha\mathcal{L}(W_1)+\beta\mathcal{L}(W_2).
\end{equation}
\end{prop}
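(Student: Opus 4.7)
The plan is to prove both statements by direct unpacking of the definitions of $\mathcal{P}$ and $\mathcal{L}$ on operators (from the displayed definitions preceding Proposition~\ref{prop:PL_homomorphisms} and equation \eqref{eq:LWDefn}), and then invoking the corresponding linearity of $\mathcal{P}$ and $\mathcal{L}$ on states given by Proposition~\ref{prop:prlinear_llinear}. Since both assertions are statements of equality of operators, it suffices to show they agree when applied to an arbitrary input vector.

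First I would tackle the $\bbR$-linearity of $\mathcal{P}$ on $L_n$. Fix $\Gamma_1,\Gamma_2\in L_n$, reals $a,b\in\bbR$, and an arbitrary $\ket\phi\in\mathcal{H}_{n+1}(\bbR)$. Starting from $\mathcal{P}(a\Gamma_1+b\Gamma_2)\ket\phi = \mathcal{P}\bigl((a\Gamma_1+b\Gamma_2)\mathcal{L}(\ket\phi)\bigr)$, I would use the (ordinary complex) linearity of $\Gamma_1,\Gamma_2$ to split the argument into $a\,\Gamma_1\mathcal{L}(\ket\phi)+b\,\Gamma_2\mathcal{L}(\ket\phi)$, and then pull out the real scalars $a,b$ using the $\bbR$-linearity of $\mathcal{P}$ on states. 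Re-folding each summand via the definition of $\mathcal{P}$ on operators then yields $a\,\mathcal{P}(\Gamma_1)\ket\phi+b\,\mathcal{P}(\Gamma_2)\ket\phi$, which is the desired equality.

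The second claim is essentially identical in structure, using the dual ingredients: the definition of $\mathcal{L}$ on operators (equation~\eqref{eq:LWDefn}), the linearity of $W_1,W_2\in R_{n+1}$ as operators on $\mathcal{H}_{n+1}(\bbR)$, and the linearity of $\mathcal{L}$ on states from Proposition~\ref{prop:prlinear_llinear}. Here all the scalars $\alpha,\beta$ are automatically real since we are working over $\bbR$, so no additional subtlety arises.

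There is really no main obstacle; the only thing to flag explicitly is the reason the first part is restricted to real scalars $a,b$: the state-level map $\mathcal{P}$ is only $\bbR$-linear (it fails complex linearity, e.g.\ $\mathcal{P}(i\ket\psi)\neq i\,\mathcal{P}(\ket\psi)$, and in any case $i\,\mathcal{P}(\ket\psi)$ does not live in the real Hilbert space $\mathcal{H}_{n+1}(\bbR)$), so the scalar pull-out step in the derivation above is valid only for real $a,b$. This is the correct scope of the proposition and matches the assertion as stated.
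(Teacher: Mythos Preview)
Your proposal is correct and is precisely the ``straightforward application of the definitions'' that the paper alludes to (the paper itself leaves the proof to the reader). The only ingredients needed are the operator-level definitions of $\mathcal{P}$ and $\mathcal{L}$ together with Proposition~\ref{prop:prlinear_llinear}, exactly as you outline.
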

\noindent The proof, left to the reader, is a straightforward application of the definitions.
We remark that the above treatment can also be extended in a similar way to superoperators, i.e. operators acting on operators. Such a generalization is used in, for example, Proposition \ref{prop:LofTranspose}.


The treatment so far 
deals with general operators. Since quantum mechanics is a linear (in fact, unitary) theory, we shall henceforth restrict our attention to only those operators on rebits which are linear, and adopt the terminology described in Section \ref{sec:bottomuprebits} to describe various subsets of the linear transformations.

Let $O_n \subseteq L_n$ be any subset of linear operators on $\cH_n(\bbC)$. We are interested in the following two classes:
\begin{enumerate}
\item $O^{\bbR}_n:= O_n \cap R_n$
\item $\bbR O_n = \cL(O^{\bbR}_{n+1})$
\end{enumerate}
which correspond respectively to the real $\texttt S$ and $\mathbb{R}$-$\texttt S$ sets defined in Section \ref{sec:bottomuprebits}. 

In general, the sets $O_n$ need not have additional structure. In this paper though, many of the sets $O_n$ that we consider are also groups. When this is the case, since $R_n$ is a subgroup of $L_n$ and the intersection of subgroups is a subgroup, the set $O_n^\bbR$ is also a subgroup of $L_n$.
Since $\cL$ is a group homomorphism (Proposition \ref{prop:PL_homomorphisms}), and group homomorphisms preserve subgroups, we get the following result:
\begin{prop}
\label{prop:subgroupResult}
Let $O_n \subseteq L_n$ be any subset of linear operators on $\cH_n(\bbC)$. If $O_n$ is a (proper) subgroup of $L_n$, then $\bbR O_n$ is a (proper) subgroup of $\bbR L_n$. 
\end{prop}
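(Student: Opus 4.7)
My plan is to chain three group-theoretic facts already at hand: (i) $R_{n+1}$ is a subgroup of $L_{n+1}$, (ii) the map $\mathcal{L}$ on operators is a group homomorphism (Proposition~\ref{prop:PL_homomorphisms}), and (iii) $\mathcal{L}$ and $\mathcal{P}$ (on operators) are mutual two-sided inverses, which I would record as a short preliminary observation by applying Proposition~\ref{prop:LP_are_inverses} state-by-state together with the definitions of $\mathcal{P}(\Gamma)$ and $\mathcal{L}(W)$ in Section~\ref{sec:rebittransformation}. Fact (iii) upgrades $\mathcal{L}:R_{n+1}\to\bbR L_n$ from a homomorphism to a bijection.

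The first concrete step is to unpack the relevant definitions. Since $R_{n+1}\subseteq L_{n+1}$, we have $L_{n+1}^{\bbR} = L_{n+1}\cap R_{n+1} = R_{n+1}$, so
\begin{equation}
\bbR L_n = \mathcal{L}(L_{n+1}^{\bbR}) = \mathcal{L}(R_{n+1}), \qquad \bbR O_n = \mathcal{L}(O_{n+1}\cap R_{n+1}).
\end{equation}
Reading the hypothesis as asserting that $O$ is a family $\{O_k\subseteq L_k\}$ of subgroups (so in particular $O_{n+1}$ is a subgroup of $L_{n+1}$), I would intersect with the subgroup $R_{n+1}$ to obtain a subgroup $O_{n+1}\cap R_{n+1}$ of $R_{n+1}$, and then push forward under the homomorphism $\mathcal{L}$ to conclude that $\bbR O_n$ is a subgroup of $\mathcal{L}(R_{n+1})=\bbR L_n$.

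To strengthen to \emph{proper} subgroup, I would invoke (iii): a bijection preserves strict containment, so it is enough to check $O_{n+1}\cap R_{n+1}\subsetneq R_{n+1}$, that is, to exhibit at least one real linear operator on $n+1$ qubits that is not in $O_{n+1}$. I expect this reduction to be the main obstacle, since it is not purely formal: properness of $O_n$ in $L_n$ does not by itself forbid $R_{n+1}\subseteq O_{n+1}$. I would justify it under the standing convention that $O$ names a family defined by a single uniform criterion (unitarity, Clifford membership, Pauli form, a fixed hierarchy level, etc.), noting that for every such family considered elsewhere in the paper a concrete real operator outside $O_{n+1}$ is easy to write down whenever the family is proper at level $n$; thus the properness clause specializes correctly in all the cases of interest.
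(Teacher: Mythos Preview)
Your approach is essentially the same as the paper's: the paper's argument (given in the paragraph immediately preceding the proposition) is exactly that $O_{n+1}^{\bbR}=O_{n+1}\cap R_{n+1}$ is a subgroup because intersections of subgroups are subgroups, and that $\mathcal{L}$, being a group homomorphism (Proposition~\ref{prop:PL_homomorphisms}), carries subgroups to subgroups. Your addition of bijectivity of $\mathcal{L}$ (via Proposition~\ref{prop:LP_are_inverses}) is a correct sharpening that the paper does not spell out.

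Your discussion of the ``proper'' clause is in fact more careful than the paper's. The paper does not justify the parenthetical strengthening at all, and you are right that it is not a purely formal consequence: one needs a real operator on $n{+}1$ qubits lying outside $O_{n+1}$, which does not follow from $O_n\subsetneq L_n$ alone. Your resolution---treat the statement as applying to the uniformly defined families actually used in the paper (unitaries, Paulis, Cliffords, hierarchy levels), for which such a real witness is easy to exhibit---matches how the proposition is deployed in practice. So the subgroup part is identical to the paper; the proper-subgroup part is a gap in the paper that you have correctly diagnosed and patched at the level of the intended applications.
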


In Sections \ref{sec:R-linear} and \ref{sec:Runitaryoperators}, we prove parts 1 and 2 of Theorem \ref{thm:mainTheorem1}, thereby characterizing the sets $O_n^\bbR$ and $\bbR O_n$, when (i) $O_n = L_n$ is the set of linear operators, and when (ii) $O_n = U_n$ is the set of unitary operators. We defer the proof of the rest of Theorem~\ref{thm:mainTheorem1} to Section~\ref{sec:RClifford_hierarchy} where the $\mathbb{R}$-Clifford hierarchy is discussed.

\subsubsection{$\bbR$-linear operators}\label{sec:R-linear}
Consider the set $L_n$ of linear operators. In this section, we characterize the sets (i) $L_n^\bbR$ (the real linear operators) and (ii) $\bbR L_n$ (the $\bbR$-linear operators). 
Part (i) is straightforward, since $L^{\bbR}_n=L_n \cap R_n=R_n$ is the set of real linear operators. Part (ii) is more involved. To begin, we prove the following lemma.

\begin{lemma} \label{lem:imageWunderL}
Let $W: \mathcal H_{n+1}(\mathbb R) \rightarrow \mathcal H_{n+1}(\mathbb R)$ be a real linear operator. Then 
\begin{equation}
\label{eq:imageWunderL}
\mathcal L(W) = \left(\frac 12 \tr_a [W(I_a - iX_aZ_a)]  \right) + \left(\frac 12 \tr_a [W(Z_a + i X_a)]  \right) K .
\end{equation}
\end{lemma}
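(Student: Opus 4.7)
The plan is to reduce the lemma to a direct computation by decomposing $W$ into its four ancilla blocks. Write $W = \sum_{i,j \in \{0,1\}} W_{ij} \otimes |i\rangle\langle j|_a$, where $W_{ij} = (I_n \otimes \langle i|_a)\, W\, (I_n \otimes |j\rangle_a)$ are operators on $\mathcal{H}_n$. Since $W$ is real linear (i.e.\ has real matrix entries in the computational basis), each $W_{ij}$ is also real linear, hence commutes with $K$.

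Next I would unfold the definition $\mathcal{L}(W)|\psi\rangle = \mathcal{L}(W\mathcal{P}(|\psi\rangle))$ explicitly. Using $\Re|\psi\rangle = \tfrac{1}{2}(I+K)|\psi\rangle$ and $\Im|\psi\rangle = \tfrac{1}{2i}(I-K)|\psi\rangle$, the encoding is $\mathcal{P}(|\psi\rangle) = \Re|\psi\rangle \otimes |0\rangle_a + \Im|\psi\rangle \otimes |1\rangle_a$. Applying $W$ in the block form above and then $\mathcal{L} = (\langle 0| + i\langle 1|)_a$ yields
\begin{equation}
\mathcal{L}(W)|\psi\rangle = (W_{00} + iW_{10})\Re|\psi\rangle + (W_{01} + iW_{11})\Im|\psi\rangle.
\end{equation}
Substituting the $(I \pm K)$ expressions and collecting the coefficients of $|\psi\rangle$ and $K|\psi\rangle$ separately gives $\mathcal{L}(W) = A + BK$ with
\begin{equation}
A = \tfrac{1}{2}\bigl[W_{00} + W_{11} + i(W_{10} - W_{01})\bigr], \qquad B = \tfrac{1}{2}\bigl[W_{00} - W_{11} + i(W_{10} + W_{01})\bigr].
\end{equation}

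Finally, I would verify that the right-hand side of \eqref{eq:imageWunderL} produces exactly these $A$ and $B$. For any single-qubit operator $M = \sum_{k,l} m_{kl}|k\rangle\langle l|$ acting on the ancilla, a short calculation gives $\tr_a[W(I_n \otimes M)] = \sum_{i,j} m_{ji}\, W_{ij}$. The identity $iXZ = Y$ gives $I_a - iX_aZ_a = I - Y$, whose matrix entries $(1,i,-i,1)$ reproduce $2A$ upon applying the formula; similarly the entries $(1,i,i,-1)$ of $Z_a + iX_a$ reproduce $2B$.

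There is no real conceptual obstacle here; the one thing to be careful about is sign and factor-of-$i$ bookkeeping when going between the real/imaginary and $K$-conjugate representations, and verifying the identification $iXZ = Y$ so that the operator $I_a - iX_aZ_a$ has the correct matrix entries to project onto the right linear combination of blocks $W_{ij}$.
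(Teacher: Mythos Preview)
Your proposal is correct and follows essentially the same approach as the paper: compute $\mathcal{L}\circ W\circ\mathcal{P}$ on an arbitrary $\ket\psi$, substitute $\Re=\tfrac12(I+K)$ and $\Im=\tfrac1{2i}(I-K)$, collect the $I$ and $K$ parts, and recognize the resulting linear combinations of ancilla blocks as partial traces against $I_a-iX_aZ_a$ and $Z_a+iX_a$. The only cosmetic difference is that you introduce the explicit block notation $W_{ij}$ and the trace formula $\tr_a[W(I_n\otimes M)]=\sum_{i,j}m_{ji}W_{ij}$, whereas the paper works directly with $\bra{i}_a W\ket{j}_a$ and the $\ketbra{i}{j}$ expansion; the computations are identical.
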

\begin{proof}

We consider the evolution of an arbitrary state $\ket\psi$ under $\cL(W) = \cL\circ W \circ \cP$:
\begin{eqnarray}
\ket \psi &\xrightarrow{\cP} & \Re\ket\psi\ket 0_a + \Im \ket\psi\ket 1_a \nn
&\xrightarrow{W} & W \ket 0_a \Re\ket\psi + W\ket 1_a \Im\ket\psi \nn
&\xrightarrow{\cL} &  [(\bra 0_a + i \bra 1_a) W \ket 0_a \Re +  (\bra 0_a + i \bra 1_a) W \ket 1_a \Im ] \ket\psi .
\end{eqnarray}

Hence,
\begin{eqnarray}
\cL(W) &=& (\bra 0_a + i \bra 1_a) W \ket 0_a \Re +  (\bra 0_a + i \bra 1_a) W \ket 1_a \Im \nn
&=& \frac 12 (\bra 0_a + i \bra 1_a) W \ket 0_a (I+K) + \frac 1{2i} (\bra 0_a + i \bra 1_a) W \ket 1_a (I-K) \nn
&=& \frac 12[ (\bra 0_a W \ket 0_a + i \bra 1_a W \ket 0_a - i \bra 0_a W \ket 1_a + \bra 1_a W \ket 1_a ) I \nn
&&+ (\bra 0_a W \ket 0_a + i \bra 1_a W \ket 0_a + i \bra 0_a W \ket 1_a - \bra 1_a W \ket 1_a ) K] \nn 
&=& \frac 12 \{\tr_a[W(\ketbra 00 + i\ketbra 01 -i \ketbra 10 + \ketbra 11)_a] I  \nn  
&&+  \tr_a[W(\ketbra 00 + i\ketbra 01 + i \ketbra 10 - \ketbra 11)_a] K \} \nn
&=&  \left(\frac 12 \tr_a [W(I_a -iX_aZ_a)]  \right) + \left(\frac 12 \tr_a [W(Z_a + i X_a)]  \right) K .
\end{eqnarray}

\end{proof}

The above proposition shows that the $\bbR$-linear operators can always be written in the form $A+BK$. It happens the converse is also true: any operator of the form $A+BK$ is an $\bbR$-linear operator, as the following lemma proves.
\begin{lemma} \label{lem:imageABunderK}
Let $A,B : \mathcal H_n(\mathcal C) \rightarrow \mathcal H_n(\mathcal C)$ be complex linear operators. Then,
\begin{equation} \label{eq:PABK}
\mathcal P(A+BK) = \Re A \otimes I + \Im A \otimes XZ + \Re B \otimes Z + \Im B \otimes X,
\end{equation}
which is a real linear operator.
\end{lemma}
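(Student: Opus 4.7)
The plan is to invert Lemma~\ref{lem:imageWunderL}: let $W$ denote the right-hand side of Eq.~\eqref{eq:PABK}, compute $\mathcal{L}(W)$ via that lemma, and check that it equals $A + BK$. Since $\mathcal{P}$ and $\mathcal{L}$ are inverse operations on operators (a consequence of Proposition~\ref{prop:LP_are_inverses} applied to the definition in Eq.~\eqref{eq:LWDefn}), applying $\mathcal{P}$ to both sides of $\mathcal{L}(W) = A + BK$ immediately yields $W = \mathcal{P}(A+BK)$, which is exactly Eq.~\eqref{eq:PABK}. First I would observe that $W$ really is a real linear operator, so that Lemma~\ref{lem:imageWunderL} applies: the ancilla factors $I, XZ, Z, X$ all have real matrix representations in the computational basis, and $\Re A$, $\Im A$, $\Re B$, $\Im B$ are real by construction.

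The remaining work is to evaluate the two partial traces $\frac{1}{2}\tr_a[W(I_a - iX_aZ_a)]$ and $\frac{1}{2}\tr_a[W(Z_a + iX_a)]$. Expanding $W$, each trace becomes a sum of eight contributions of the shape $(\text{real } n\text{-qubit matrix}) \cdot \tr[\sigma\sigma']$, with $\sigma, \sigma' \in \{I, X, Z, XZ\}$. Using $\tr I = 2$, $\tr X = \tr Z = \tr(XZ) = 0$, $X^2 = Z^2 = I$, and $(XZ)^2 = -I$ (which follows from $\{X,Z\}=0$), most of these pairings vanish. In the first trace, the surviving contributions are $\Re A \cdot \tr(I) = 2\Re A$ and $\Im A \cdot \tr(-i(XZ)^2) = 2i\Im A$; in the second trace, the survivors are $\Re B \cdot \tr(Z^2) = 2\Re B$ and $\Im B \cdot \tr(iX^2) = 2i\Im B$. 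Halving and feeding these into Lemma~\ref{lem:imageWunderL} gives $\mathcal{L}(W) = (\Re A + i\Im A) + (\Re B + i\Im B)K = A + BK$, as desired.

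The only real obstacle is careful bookkeeping of the Pauli-product identities (in particular the signs produced by $(XZ)^2 = -I$ and by commutations like $ZXZ = -X$), but each individual pairing is an elementary single-qubit trace. Once both traces are in hand, the conclusion is a one-line application of $\mathcal{P}\circ\mathcal{L} = \mathrm{id}$ on operators. The final real-linearity statement in the lemma is free: it was already used to invoke Lemma~\ref{lem:imageWunderL}, since $W$ is manifestly an $\bbR$-linear combination of tensor products of real matrices.
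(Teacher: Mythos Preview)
Your proof is correct but takes a genuinely different route from the paper's. The paper proves Eq.~\eqref{eq:PABK} by a direct forward computation: it tracks an arbitrary rebit state $\ket{\phi}$ through the composition $\mathcal{P}\circ(A+BK)\circ\mathcal{L}$, first rewriting $A+BK = C\Re + D\Im$ with $C=A+B$ and $D=i(A-B)$, then simplifying the resulting expression in the computational basis of the ancilla. You instead work backwards: you set $W$ equal to the claimed right-hand side, verify it is real so that Lemma~\ref{lem:imageWunderL} applies, compute $\mathcal{L}(W)$ via the partial-trace formula, and then invoke the inverse relationship $\mathcal{P}\circ\mathcal{L}=\mathrm{id}$ on operators.

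Your approach has the virtue of reusing Lemma~\ref{lem:imageWunderL} rather than redoing a state-tracking argument, which makes the two lemmas feel like a matched pair (one the inverse of the other) rather than two independent calculations. The paper's direct approach is slightly more self-contained, since it does not need to appeal to the operator-level inverse property, but in practice both involve a comparable amount of Pauli bookkeeping. Your trace computations are correct, including the sign from $(XZ)^2=-I$.
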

\begin{proof}

We consider the evolution of an arbitrary state $\ket\phi$ under $\cP(A+BK) = \cP\circ (A+BK) \circ \cL$. Writing $A+BK = C\Re + D\Im$, where $C = A+B$ and $D = i(A-B)$, we get
\begin{eqnarray}
\ket \phi & \xrightarrow{\cL} & \bra 0_a \cdot \ket \phi + i \bra 1_a \cdot \ket \phi \nn
& \xrightarrow{A+BK} & C \bra 0_a \cdot \ket \phi + D  \bra 1_a \cdot \ket\phi \nn
& \xrightarrow{\cP} & (\Re C \bra 0_a \cdot \ket \phi + \Re D \bra 1_a \cdot \ket\phi) \otimes \ket 0_a + 
(\Im C \bra 0_a \cdot \ket \phi + \Im D \bra 1_a \cdot \ket\phi) \otimes \ket 1_a \nn
&&= ( \Re C \otimes \ketbra 00 + \Im C \otimes \ketbra 10 + \Re D \otimes \ketbra 01 + \Im D \otimes \ketbra 11) \ket\phi .
\end{eqnarray}

Hence,
\begin{eqnarray}
\cP(A+BK) &=& \Re (A+B) \otimes \ketbra 00 + \Im (A+B) \otimes \ketbra 10 \nn &&+ \Re (i(A-B)) \otimes \ketbra 01 + \Im (i(A-B)) \otimes \ketbra 11 \nn
&=& \Re A \otimes (\ketbra 00 + \ketbra 11) + \Re B \otimes (\ketbra 00 - \ketbra 11) \nn &&+ \Im A \otimes (\ketbra 10 - \ketbra 01) + \Im B \otimes (\ketbra 10 + \ketbra 01) \nn
&=& \Re A \otimes I + \Im A \otimes XZ + \Re B \otimes Z + \Im B \otimes X .
\end{eqnarray}
\end{proof}
\noindent Note that \eq{eq:PABK} is a generalization of Eq.\ (9) of \cite{fernandez2003quaternionic}. Indeed, when $B=0$, we obtain an expression for the rebit encoding $\mathcal P$ of the linear operator $A$ in terms of its real and imaginary parts\footnote{See Appendix \ref{app:rebitEncodingLinear} for equivalent expressions of $\cP(A)$.}:
\begin{equation} \label{eq:rebitEncodingLinear}
\mathcal{P}(A) = \Re A \otimes I + \Im A \otimes XZ.
\end{equation}

Lemmas \ref{lem:imageWunderL} and \ref{lem:imageABunderK} tell us that an operator $\Gamma$ is $\bbR$-linear if and only if it can be written in the form $\Gamma = A+BK$, where $A$ and $B$ are complex linear operators. It turns out that operators of the form $A+BK$ have the following alternative characterization (see Appendix \ref{app:rlinear}):

\begin{theorem} \label{thm:Rlinearcharacterization}
Let $V$ and $V'$ be complex vector spaces, and $f:V\rightarrow V'$ be a function on $V$. Then,
there exist linear maps $A$ and $B$ such that $f = A+BK$ if and only if \begin{equation}
f(ax+ by) = a f(x) + bf(y)
\end{equation} 
for all $a,b \in \mathbb R$ and $x,y\in V$. 
\end{theorem}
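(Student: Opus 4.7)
The plan is to prove both implications separately. For the easy direction, suppose $f = A + BK$ where $A, B$ are complex linear. Since complex linear maps are in particular $\mathbb{R}$-linear, and since $K$ is $\mathbb{R}$-linear (for $a \in \mathbb{R}$, $K(ax) = \overline{a}\,\overline{x} = a K(x)$), we get that $f$ satisfies the claimed $\mathbb{R}$-linearity condition by direct computation.

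For the harder converse, given an $\mathbb{R}$-linear $f$, I would explicitly construct the decomposition by splitting $f$ into its complex-linear and complex-antilinear parts. Specifically, define
\begin{equation}
A(x) := \tfrac{1}{2}\bigl(f(x) - i\,f(ix)\bigr), \qquad C(x) := \tfrac{1}{2}\bigl(f(x) + i\,f(ix)\bigr),
\end{equation}
so that trivially $f = A + C$. Using only the $\mathbb{R}$-linearity of $f$ together with the expansion $\lambda = a + ib$ for $a,b \in \mathbb{R}$, I would verify by direct calculation that $A(\lambda x) = \lambda A(x)$ and $C(\lambda x) = \bar{\lambda} C(x)$; additivity in $x$ is immediate from $\mathbb{R}$-linearity. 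Thus $A$ is complex linear and $C$ is complex antilinear.

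It then remains to write the antilinear part $C$ in the form $BK$ with $B$ complex linear. Setting $B := C \circ K$ and using $K^2 = I$, I get $BK = C \circ K \circ K = C$, and $B$ is complex linear because the composition of two antilinear maps (here $C$ and $K$) is complex linear. Combining, $f = A + BK$, completing the converse.

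The only subtlety worth flagging is that $K$ depends on a choice of basis (the computational basis, as declared in Section~\ref{sec:notation}), and thus so does the decomposition $(A,B)$; however, the existence claim in the theorem is unaffected by this choice. Apart from that, the argument is entirely formal, the main content being the explicit projector-like formulas for $A$ and $C$ and the routine verification of their $\mathbb{C}$-linearity and $\mathbb{C}$-antilinearity respectively.
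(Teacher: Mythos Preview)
Your proof is correct and takes essentially the same approach as the paper: both use the projector formulas $A(x)=\tfrac12(f(x)-if(ix))$ and $B$ built from $\tfrac12(f(x)+if(ix))$, with the paper writing these column-by-column in a fixed basis while you phrase them as operators and verify complex (anti)linearity directly. The content is identical; your presentation is arguably a bit cleaner in treating $A$ and $C$ basis-free before invoking the basis only through $K$.
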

\begin{proof}
See Theorem \ref{thm:RlinearcharacterizationRestated} of Appendix \ref{app:rlinear}.
\end{proof}

Hence, we obtain the following equivalent notions of $\bbR$-linearity.

\begin{theorem}\label{thm:equiv_defRlinear}
Let $\Gamma: \cH_n(\bbC) \rightarrow \cH_n(\bbC)$ be a $n$-qubit operator. Then the following three statements are equivalent.
\begin{enumerate}
\item $\Gamma \in \bbR L_n$, i.e.~$\Gamma$ is $\bbR$-linear.
\item There exists complex linear operators $A$ and $B$ such that $\Gamma = A+BK$.
\item $\Gamma (a \ket\psi + b \ket\phi) = a \Gamma(\ket\psi) + b \Gamma(\ket\phi)$ for all $a,b \in \bbR$ and $\ket\psi, \ket\phi \in \cH_n(\bbC)$.
\end{enumerate}
\end{theorem}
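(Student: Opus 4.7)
The plan is to assemble the three equivalences entirely from machinery already developed earlier in the section. Observe first that by the definitions in Section~\ref{sec:rebittransformation}, $\bbR L_n = \cL(L_{n+1}^{\bbR}) = \cL(L_{n+1} \cap R_{n+1}) = \cL(R_{n+1})$, since every real linear operator on $\cH_{n+1}(\bbR)$ is in particular complex linear on $\cH_{n+1}(\bbC)$. So (1) is just the statement that $\Gamma = \cL(W)$ for some $W \in R_{n+1}$.

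For $(1) \Rightarrow (2)$, I would simply invoke Lemma~\ref{lem:imageWunderL}: writing $\Gamma = \cL(W)$, the lemma provides the explicit decomposition $\Gamma = A + BK$ with
\[
A = \tfrac{1}{2}\tr_a[W(I_a - iX_aZ_a)], \qquad B = \tfrac{1}{2}\tr_a[W(Z_a + iX_a)],
\]
both of which are complex linear operators on $\cH_n(\bbC)$. For the converse $(2) \Rightarrow (1)$, given $\Gamma = A + BK$, Lemma~\ref{lem:imageABunderK} shows that $\cP(A+BK) \in R_{n+1}$. Since $\cP$ and $\cL$ are mutual inverses on operators (a straightforward extension of Proposition~\ref{prop:LP_are_inverses} using the definitions in Section~\ref{sec:rebittransformation}), this yields $\Gamma = \cL(\cP(\Gamma)) \in \cL(R_{n+1}) = \bbR L_n$.

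The equivalence $(2) \Leftrightarrow (3)$ is precisely the content of Theorem~\ref{thm:Rlinearcharacterization} (proved in Appendix~\ref{app:rlinear}), applied to the function $\Gamma : \cH_n(\bbC) \to \cH_n(\bbC)$. Chaining the two equivalences completes the proof.

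I do not anticipate any real obstacle: the theorem is essentially a packaging result that consolidates the two technical lemmas in this subsection together with the linear-algebraic characterization from the appendix into a single three-way statement. The only small care needed is the bookkeeping observation $L_{n+1}^{\bbR} = R_{n+1}$, which justifies why the definition $\bbR L_n = \cL(L_{n+1}^{\bbR})$ aligns with the decomposition produced by Lemma~\ref{lem:imageWunderL}, whose hypothesis is merely real linearity of $W$.
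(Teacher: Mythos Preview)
Your proposal is correct and follows essentially the same approach as the paper: the paper also treats this theorem as a packaging result, with Lemmas~\ref{lem:imageWunderL} and~\ref{lem:imageABunderK} establishing $(1)\Leftrightarrow(2)$ and Theorem~\ref{thm:Rlinearcharacterization} establishing $(2)\Leftrightarrow(3)$. Your bookkeeping observation $L_{n+1}^{\bbR}=R_{n+1}$ is likewise noted by the paper at the start of the subsection.
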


Before moving on, we address the notion of an operator norm for $\bbR$-linear operators. We start by recalling the operator norm for linear operators.
\begin{definition}\label{defn:linear_norm}
For $\Gamma\in L_n$, the operator norm, denoted $\|\Gamma\|$ is defined as
\begin{equation}
\|\Gamma\|=\text{inf}\{\epsilon\ge0:\|\Gamma\ket{\psi}\|\le\epsilon\|\ket{\psi}\|,\forall\ket{\psi}\in\mathcal{H}_{n}(\bbC)\}.
\end{equation}
\end{definition}
\noindent Equivalently, this is the largest singular value of $\Gamma$. Recall $\|\ket{\psi}\|$ is the $l_2$ norm from Proposition~\ref{prop:l2_norm_equal}.

For $\bbR$-linear operators, we advocate a norm built from the norm for linear operators and the encoding map $\mathcal{P}$.
\begin{definition}\label{defn:rlinear_norm}
For $\Gamma\in\bbR L_n$, let $\|\Gamma\|=\|\mathcal{P}(\Gamma)\|$ be the operator norm.
\end{definition}
\noindent This norm satisfies the triangle inequality due to the $\bbR$-linearity of $\mathcal{P}$ from Proposition~\ref{prop:pop_rlinear_lop_linear}. To justify using the same notation in Definitions~\ref{defn:linear_norm} and \ref{defn:rlinear_norm}, we do have to check that they coincide when $\Gamma\in L_n$. We verify this in Proposition~\ref{prop:norms_coincide} in Appendix~\ref{app:equivalenceNorm}. With the operator norm on $\bbR$-linears, we also gain the ability to define $\epsilon$-approximations, and we follow \cite{shi2002both} in the inclusion of an ancilla system.
\begin{definition}
For $\tilde n\ge n$, we say $\tilde \Gamma\in\bbR L_{\tilde n}$ $\epsilon$-approximates $\Gamma\in\bbR L_n$ with ancilla $\ket{\psi}\in\mathcal{H}_{\tilde n-n}(\bbC)$ if
\begin{equation}
\|\bra{\psi}\tilde \Gamma\ket{\psi}-\Gamma\|=\|\mathcal{P}(\bra{\psi}\tilde \Gamma\ket{\psi})-\mathcal{P}(\Gamma)\|\le\epsilon,
\end{equation}
where the $\bbR$-linearity of $\mathcal{P}$ (Proposition~\ref{prop:pop_rlinear_lop_linear}) justifies the equality.
\end{definition}

\subsubsection{$\bbR$-unitary operators}
\label{sec:Runitaryoperators}

Consider the set $U_n$ of unitary operators on $n$-qubits. Then $U^{\bbR}_n := U_n\cap R_n = T_n$ is the set of real unitary (i.e. orthogonal) operators.
To find necessary and sufficient conditions for a operator to be $\bbR$-unitary, we compute the image of the set of orthogonal operators under $\mathcal L$:
\begin{theorem} \label{thm:OrthogonalABK}
Let $A+BK \in \bbR L_n$ be an $\mathbb R$-linear operator. Then $\mathcal P(A+BK)\in T_{n+1}$ if and only if
\begin{eqnarray} \label{eq:OrthogonalABK}
A^\dag A + B^T \bar B = I \nn
A^\dag B + B^T \bar A = 0 .
\end{eqnarray}
\end{theorem}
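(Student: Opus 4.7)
The plan is to reduce the statement to an algebraic identity obtained by directly computing $W^T W$, where $W = \mathcal{P}(A+BK)$, and then matching coefficients in a chosen basis for the ancilla algebra. Since $W$ is a real linear operator on $\mathcal{H}_{n+1}(\bbR)$ (by Lemma~\ref{lem:imageABunderK}), it lies in $T_{n+1}$ if and only if $W^T W = I$, so the entire theorem becomes a calculation with the explicit form
\begin{equation}
W = \Re A \otimes I + \Im A \otimes XZ + \Re B \otimes Z + \Im B \otimes X.
\end{equation}

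First, I would compute $W^T$ by transposing each tensor factor, using that $I, X, Z$ are symmetric while $(XZ)^T = -XZ$. I would then multiply $W^T$ and $W$ out distributively, collecting the sixteen products and simplifying the ancilla factors via the multiplication table for $\{I, X, Z, XZ\}$ (e.g.\ $(XZ)^2 = -I$, $X \cdot Z = XZ$, $Z \cdot X = -XZ$, etc.). Since $\{I, X, Z, XZ\}$ is a basis for $M_2(\bbR)$, the equation $W^T W = I$ is equivalent to four separate real matrix equations, one from the coefficient of each of these ancilla basis elements. Writing $a = \Re A$, $b = \Im A$, $c = \Re B$, $d = \Im B$, these four equations will take the form $a^T a + b^T b + c^T c + d^T d = I$ (coefficient of $I$), a vanishing condition for the coefficient of $XZ$, and similar vanishing conditions for the coefficients of $Z$ and $X$.

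Next, I would expand the two claimed complex equations by substituting $A = a + ib$ and $B = c + id$ (so $\bar A = a - ib$ and $\bar B = c - id$) and separating real and imaginary parts. The real part of $A^\dag A + B^T \bar B = I$ yields $a^T a + b^T b + c^T c + d^T d = I$, and the imaginary part yields the condition corresponding to the coefficient of $XZ$; similarly, the real and imaginary parts of $A^\dag B + B^T \bar A = 0$ yield the coefficient-of-$Z$ and coefficient-of-$X$ conditions. This matching establishes both directions of the biconditional simultaneously: the calculation shows that the conditions on $A, B$ hold if and only if $W^T W = I$, and in finite dimensions $W^T W = I$ is equivalent to $W \in T_{n+1}$.

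The main obstacle is simply the bookkeeping, since the combinatorics of the sixteen products in $W^T W$ and the sign from $(XZ)^T = -XZ$ have to be tracked carefully to ensure the $XZ$-coefficient equation acquires the correct signs (namely $a^T b - b^T a - c^T d + d^T c = 0$) so that it aligns with $\Im(A^\dag A + B^T \bar B)$. Once the ancilla-algebra conventions are fixed, everything else is routine linear algebra, and no deeper structural fact is needed beyond Lemma~\ref{lem:imageABunderK} and the $\bbR$-linearity of $\mathcal{P}$.
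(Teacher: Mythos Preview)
Your proposal is correct and follows essentially the same approach as the paper's own proof: both compute $W^T W$ from the explicit expansion $W=\Re A\otimes I+\Im A\otimes XZ+\Re B\otimes Z+\Im B\otimes X$, collect coefficients along the ancilla basis $\{I,X,Z,XZ\}$, and then recombine the four resulting real conditions into the two complex equations via $A=a+ib$, $B=c+id$. Your explicit identification of the $XZ$-coefficient with $\Im(A^\dag A+B^T\bar B)$ is exactly what the paper does (it writes this as $\Im(A^\dag A)-\Im(B^\dag B)$, which is the same expression).
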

\begin{proof}
$\mathcal P(A+BK) = \Re A \otimes I + \Im A \otimes XZ + \Re B \otimes Z + \Im B \otimes X $ is orthogonal if and only if
\begin{eqnarray}
I \otimes I &=& (\Re A \otimes I + \Im A \otimes XZ + \Re B \otimes Z + \Im B \otimes X )^T \nn
&& \qquad (\Re A \otimes I + \Im A \otimes XZ + \Re B \otimes Z + \Im B \otimes X ) \nn
&=& (\Re A^T \Re A + \Im A^T \Im A + \Re B^T \Re B + \Im B^T \Im B) \otimes I \nn
&&+ (\Re A^T \Im B - \Im A^T \Re B - \Re B^T \Im A + \Im B^T \Re A) \otimes X \nn
&&+ (\Re A^T \Im A - \Im A^T \Re A - \Re B^T \Im B + \Im B^T \Im B) \otimes XZ \nn
&&+ (\Re A^T \Re B + \Im A^T \Im B + \Re B^T \Re A + \Im B^T \Im A) \otimes Z \nn
&=& (\Re (A^\dag A) + \Re(B^\dag B)) \otimes I + (\Im(A^\dag B) - \Im(B^\dag A))\otimes X \nn
&&+ (\Im(A^\dag A) - \Im(B^\dag B))\otimes XZ + (\Re (A^\dag B)+ \Re(B^\dag A) )\otimes Z
\end{eqnarray}
which holds if and only if
\begin{eqnarray}
\Re (A^\dag A) + \Re(B^\dag B) = I\\
\Im(A^\dag B) - \Im(B^\dag A) = 0 \\
\Im(A^\dag A) - \Im(B^\dag B) = 0 \\
\Re (A^\dag B)+ \Re(B^\dag A) = 0
\end{eqnarray}
which holds if and only if
\begin{eqnarray} 
A^\dag A + B^T \bar B = I \nn
A^\dag B + B^T \bar A = 0 .
\end{eqnarray}
\end{proof}

In Appendix \ref{app:algebraicProperties} (see Proposition \ref{prop:unitaryElement}), we show that an $\bbR$-linear operator satisfies \eq{eq:OrthogonalABK} if and only if it is a unitary element of the $\bbR$-linear group with respect to the dagger operator $\dag$ defined by
\begin{equation} \label{eq:dagdef}
(A+BK)^\dag = A^\dag + B^T K.
\end{equation}
This, together with Theorem~\ref{thm:OrthogonalABK}, implies the following theorem. 
\begin{theorem} \label{thm:OrthogonalABK2}
Let $A+BK \in \bbR L_n$ be an $\mathbb R$-linear operator. Then $A+BK$ is $\bbR$-unitary  if and only if
$A+BK$ is a unitary element with respect to the $\dag$ operator defined by \eq{eq:dagdef}.
\end{theorem}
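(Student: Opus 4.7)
My plan is to show that the abstract ``unitary element with respect to $\dag$'' condition $(A+BK)^\dag(A+BK)=I$ unfolds, once expanded using the definition \eqref{eq:dagdef}, into precisely the pair of equations in \eqref{eq:OrthogonalABK}. Once that equivalence is in hand, the theorem is immediate from Theorem~\ref{thm:OrthogonalABK}, which already characterizes when $\mathcal P(A+BK)$ is orthogonal (equivalently, when $A+BK$ is $\bbR$-unitary) in terms of those same equations.

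First, I would unpack $(A+BK)^\dag(A+BK)$ directly. Using $(A+BK)^\dag = A^\dag + B^T K$ and the two basic algebraic facts about $K$ (namely $K^2 = I$ and $KM = \bar M K$ for any complex linear operator $M$), the product expands as
\begin{equation}
(A^\dag + B^T K)(A + BK) \;=\; A^\dag A + A^\dag B K + B^T K A + B^T K B K \;=\; (A^\dag A + B^T \bar B) + (A^\dag B + B^T \bar A)K.
\end{equation}
Setting this equal to $I$ and invoking uniqueness of the $A+BK$ decomposition of an $\bbR$-linear operator (which one gets, for instance, by comparing the action on $\ket\psi$ versus on $i\ket\psi$, since $A$ is linear while $BK$ is antilinear) produces exactly the two conditions $A^\dag A + B^T \bar B = I$ and $A^\dag B + B^T \bar A = 0$ of \eqref{eq:OrthogonalABK}.

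Second, I would appeal to Theorem~\ref{thm:OrthogonalABK}, which states that $\mathcal P(A+BK) \in T_{n+1}$ if and only if \eqref{eq:OrthogonalABK} holds. Since, by definition, $A+BK$ is $\bbR$-unitary precisely when $\mathcal P(A+BK)$ is orthogonal, chaining the two equivalences gives the desired biconditional. In the paper, the bulk of this argument is encapsulated by Proposition~\ref{prop:unitaryElement} in Appendix~\ref{app:algebraicProperties}, so a minimal proof would simply read ``Combine Theorem~\ref{thm:OrthogonalABK} with Proposition~\ref{prop:unitaryElement}.''

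The only place that requires care is the expansion step, specifically the two uses of antilinearity: rewriting $KA$ as $\bar A K$ and recognizing $K B K = \bar B$. There is no real obstacle here, just bookkeeping; the algebraic content of the theorem is entirely contained in Theorem~\ref{thm:OrthogonalABK}, and the present theorem is really a reformulation that repackages \eqref{eq:OrthogonalABK} in the more natural language of the $\dag$-involution on $\bbR L_n$.
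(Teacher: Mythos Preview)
Your proposal is correct and matches the paper's approach exactly: the paper states that the theorem follows by combining Theorem~\ref{thm:OrthogonalABK} with Proposition~\ref{prop:unitaryElement}, and the latter is precisely the expansion of $(A^\dag+B^TK)(A+BK)=I$ into the two equations of \eqref{eq:OrthogonalABK}. You even anticipated this, writing that a minimal proof would read ``Combine Theorem~\ref{thm:OrthogonalABK} with Proposition~\ref{prop:unitaryElement}.''
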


Since the unitary operators are a proper subgroup of the linear operators, Proposition \ref{prop:subgroupResult} implies that the $\bbR$-unitaries are a proper subgroup of the $\bbR$-linear operators. In particular, there exist $\bbR$-linear operators, like $I+K$, which are not $\bbR$-unitary. Note that since Eqs.~\eqref{eq:OrthogonalABK} are not scale invariant, any $\bbR$-unitary $\Gamma$ gives rise to a family of non-$\bbR$-unitary
elements $c\Gamma$, for $c \in \bbC$, $|c| \neq 1$.

We note that in the proof of Theorem \ref{thm:OrthogonalABK}, we used the property that a matrix $W$ is orthogonal if and only if $W^T W = I$. But this is equivalent to the condition that $W W^T = I$. 
In Appendix \ref{app:altFormulation}, we present an alternative formulation of Theorem \ref{thm:OrthogonalABK} that uses the latter criterion for orthogonality.

Finally, we notice that the group of $\bbR$-unitary operators has a center $\mathcal{Z}(\bbR U_n)=\{\pm I\}=\mathbb{R}\mathcal{Z}(U_n)$ in contrast to the center of the unitary group $\mathcal{Z}(U_n)=\{e^{i\phi}I:\phi\in\mathbb{R}\}$, consisting of all global phases. Since arbitrary global phases do not commute with the rest of the $\bbR$-unitary group, we might expect to actually observe them. After setting up a framework for encoding and decoding measurements in the next section, we show how such a measurement of the global phase works in Section~\ref{sec:bottom_up_tomography}.

\subsection{Rebit encoding and decoding of measurements} \label{sec:enc_and_dec_meas}

The final circuit ingredient is measurement. Recall that measurements of quantum systems may be described by a collection of measurement operators $\{M_m\}$ satisfying \cite{nielsen2010quantum}
\begin{equation}
\label{eq:completenessrelation}
\sum_m M_m^\dag M_m = I ,
\end{equation}
where the probability that the result $m$ occurs when a state $\ket \psi$ is measured is given by 
\begin{equation}
|| M_m\ket\psi ||^2.
\end{equation}

Let $\{M_m\}$ be a collection of measurement operators on $\cH_{n+1}(\bbR)$. 
We assume that each $M_m$ is real, i.e. $\Im M_m = 0$. Let $\{F_m\}$ be a collection of operators on $\cH_n(\bbC)$. We say that a $\{F_m\}$ is a \textit{rebit decoding} of $\{M_m\}$, or that $\{M_m\}$ is a \textit{rebit encoding} of $\{F_m\}$, if for all $\ket \phi \in \cH_n(\bbR)$, we have
\begin{equation}\label{eq:rebit_enc_meas}
||F_m \cL \ket \phi||^2 = ||M_m \ket \phi||^2 .
\end{equation}
The collection $\{F_m \}$ can hence be thought of a set of ``$\bbR$-measurement operators" on the space $\cH_n(\bbC)$. The probability that the result $m$ occurs when a state $\ket \psi \in \cH_n(\bbC)$ is measured is given by
\begin{equation}
||F_m \ket\psi||^2 .
\end{equation}

Our next proposition relates $\{M_m\}$ to a rebit decoding of it.

\begin{prop}\label{prop:rebit_decoding}
$\{\cL(M_m)\}$ is a rebit decoding of $\{M_m\}$.
\end{prop}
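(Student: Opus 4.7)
The goal is to verify equation~\eqref{eq:rebit_enc_meas} with $F_m = \mathcal{L}(M_m)$, i.e.~to show that $\|\mathcal{L}(M_m)\,\mathcal{L}\ket{\phi}\|^2 = \|M_m\ket{\phi}\|^2$ for every $\ket{\phi}\in\mathcal{H}_{n+1}(\bbR)$. The plan is to unfold the definition of $\mathcal{L}$ acting on an operator, collapse a $\mathcal{P}\circ\mathcal{L}$ cancellation, and then invoke the fact that $\mathcal{L}$ preserves the $l_2$ norm on real vectors.

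First I would apply the definition of the rebit decoding of operators (Eq.~\eqref{eq:LWDefn}) to the state $\mathcal{L}\ket{\phi}\in\mathcal{H}_n(\bbC)$, which gives
\begin{equation*}
\mathcal{L}(M_m)\,\mathcal{L}\ket{\phi} \;=\; \mathcal{L}\!\left(M_m\,\mathcal{P}(\mathcal{L}\ket{\phi})\right) \;=\; \mathcal{L}(M_m\ket{\phi}),
\end{equation*}
where the second equality uses Proposition~\ref{prop:LP_are_inverses}, namely $\mathcal{P}\circ\mathcal{L}=I$. So the problem reduces to showing $\|\mathcal{L}(M_m\ket{\phi})\|^2 = \|M_m\ket{\phi}\|^2$.

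Next I would use the assumed reality of $M_m$ (stated just before the proposition): since $\Im M_m = 0$ and $\ket{\phi}$ has real amplitudes, $M_m\ket{\phi}\in\mathcal{H}_{n+1}(\bbR)$. Therefore I can apply $\mathcal{P}$ to $\mathcal{L}(M_m\ket{\phi})$ and use $\mathcal{P}\circ\mathcal{L}=I$ again to obtain $\mathcal{P}(\mathcal{L}(M_m\ket{\phi})) = M_m\ket{\phi}$. Combining this with Proposition~\ref{prop:l2_norm_equal}, which guarantees that $\mathcal{P}$ preserves the $l_2$ norm, yields
\begin{equation*}
\|M_m\ket{\phi}\| \;=\; \|\mathcal{P}(\mathcal{L}(M_m\ket{\phi}))\| \;=\; \|\mathcal{L}(M_m\ket{\phi})\|,
\end{equation*}
which is the required identity after squaring.

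There is no real obstacle here; the argument is a bookkeeping exercise that stitches together three facts already in hand: the definition~\eqref{eq:LWDefn} of $\mathcal{L}$ on operators, the mutual-inverse relation $\mathcal{P}\circ\mathcal{L}=I$ from Proposition~\ref{prop:LP_are_inverses}, and the norm-preservation of $\mathcal{P}$ from Proposition~\ref{prop:l2_norm_equal}. The only subtlety worth flagging explicitly in the write-up is that the reality hypothesis on $M_m$ is exactly what ensures $M_m\ket{\phi}$ stays in $\mathcal{H}_{n+1}(\bbR)$, so that norm-preservation may be invoked; without this, $M_m\ket{\phi}$ would not lie in the image of $\mathcal{P}$ and the cancellation in the last step would fail. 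As a sanity check, one can also verify that $\{\mathcal{L}(M_m)\}$ satisfies an appropriate $\bbR$-analogue of the completeness relation~\eqref{eq:completenessrelation}, but this is not required for the statement as given.
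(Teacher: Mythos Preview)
Your argument is correct and follows essentially the same route as the paper's proof: the paper simply writes the chain $\|\cL(M_m)\cL\ket{\phi}\|^2 = \|\cL(M_m\ket{\phi})\|^2 = \|M_m\ket{\phi}\|^2$, while you have unpacked each equality using Eq.~\eqref{eq:LWDefn}, Proposition~\ref{prop:LP_are_inverses}, and Proposition~\ref{prop:l2_norm_equal}. Your explicit remark that the reality of $M_m$ is what keeps $M_m\ket{\phi}$ in $\cH_{n+1}(\bbR)$ is a useful clarification the paper leaves implicit.
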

\begin{proof}
This follows from noting that
\begin{equation}
||\cL(M_m) \cL \ket \phi||^2 = ||\cL( M_m \ket \phi) ||^2 = ||M_m \ket \phi||^2 .
\end{equation}
\end{proof}

It turns out that $\cL(M_m)$ also obeys the completeness relation given in \eq{eq:completenessrelation}:

\begin{prop}
Let $F_m = \cL(M_m)$. Then, 
$$\sum_m F_m^\dag F_m = I.$$
\end{prop}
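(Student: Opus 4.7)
The plan is to exploit the homomorphism and linearity properties of $\mathcal{L}$ to transport the rebit-side completeness relation to the qubit side. Since each $M_m$ is real, $M_m^\dag = M_m^T$, so the rebit completeness relation reads $\sum_m M_m^T M_m = I_{n+1}$. Applying $\mathcal{L}$ to both sides, observing that $\mathcal{L}(I_{n+1}) = I_n$ straight from the definitions, and invoking the linearity of $\mathcal{L}$ on $R_{n+1}$ (Proposition~\ref{prop:pop_rlinear_lop_linear}) together with its homomorphism property (Proposition~\ref{prop:PL_homomorphisms}), I obtain
\begin{equation}
I = \sum_m \mathcal{L}(M_m^T M_m) = \sum_m \mathcal{L}(M_m^T)\, \mathcal{L}(M_m) = \sum_m \mathcal{L}(M_m^T)\, F_m.
\end{equation}

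The remaining step is the intertwining identity $\mathcal{L}(M_m^T) = F_m^\dag$, where $\dag$ denotes the $\bbR$-linear dagger of Eq.~\eqref{eq:dagdef}. I would prove this by establishing the dual statement $\mathcal{P}(G^\dag) = \mathcal{P}(G)^T$ for every $\bbR$-linear $G = A + BK$, and then applying $\mathcal{L}$ to both sides. Using Lemma~\ref{lem:imageABunderK}, both $\mathcal{P}(A^\dag + B^T K)$ and $\mathcal{P}(A+BK)^T$ decompose into sums over the ancilla operators $I, X, Z, XZ$; matching coefficients reduces to the elementary identities $\Re A^\dag = (\Re A)^T$, $\Im A^\dag = -(\Im A)^T$, $\Re B^T = (\Re B)^T$, $\Im B^T = (\Im B)^T$, combined with the symmetries $X^T = X$, $Z^T = Z$, and $(XZ)^T = -XZ$ on the ancilla.

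Substituting $\mathcal{L}(M_m^T) = F_m^\dag$ into the first display then yields $\sum_m F_m^\dag F_m = I$, as required. The main (minor) obstacle is the sign-bookkeeping in the verification of the intertwining identity: the minus sign introduced by $\Im A^\dag = -(\Im A)^T$ is precisely cancelled by the minus sign in $(XZ)^T = -XZ$, so that the $\bbR$-linear dagger defined by separate matrix operations on $A$ and $B$ does in fact correspond to ordinary rebit transposition under $\mathcal{P}$. Everything else is a direct chase through the already-established homomorphism, linearity, and invertibility properties of $\mathcal{P}$ and $\mathcal{L}$.
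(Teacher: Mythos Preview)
Your proof is correct and follows essentially the same approach as the paper: linearity and the homomorphism property of $\mathcal{L}$ reduce the claim to the intertwining identity $\mathcal{L}(M^T) = \mathcal{L}(M)^\dag$, which you establish by computing $\mathcal{P}(G)^T$ via Lemma~\ref{lem:imageABunderK} and matching coefficients over the ancilla Paulis. The paper packages exactly this computation as Proposition~\ref{prop:LofTranspose} and then cites it, whereas you inline it; otherwise the arguments are the same.
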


\begin{proof}
To prove this, consider
\begin{eqnarray}
\sum_m F_m^\dag F_m &=& \sum_m \cL(M_m)^\dag \cL(M_m) \nn
&=& \sum_m \cL(M_m^T) \cL(M_m),\quad \mbox{by Proposition \ref{prop:LofTranspose}} \nn
&=& \cL\left(\sum_m M_m^T M_m\right) \nn
&=& \cL(I) = I.
\end{eqnarray}
\end{proof}

We now give a top-down example. How is a computational basis measurement on qubits simulated using the rebit encoding? The following proposition tells us that 
this may be done by simply performing a computational basis measurement on the first $n$ rebits in the encoded circuit, and discarding the ancilla rebit. 


\begin{prop} \label{prop:rebitMeasurement}
The rebit encoding of the computational basis measurement described by measurement operators $\set{\ketbra mm}_m$ is given by $\{\ketbra mm \otimes I_a\}_m$.
\end{prop}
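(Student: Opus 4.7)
The plan is to verify the defining equation \eqref{eq:rebit_enc_meas} for a rebit encoding directly, by expanding an arbitrary real state $\ket\phi\in\mathcal{H}_{n+1}(\mathbb R)$ in the computational basis and comparing the two sides.

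First, I would write $\ket\phi=\sum_{m,j}\phi_{m,j}\ket{m}\ket{j}_a$, with $\phi_{m,j}\in\mathbb R$, where the sum over $m$ ranges over $n$-bit strings and $j\in\{0,1\}$ indexes the ancilla. Using the definition \eqref{eq:defLstates} of $\mathcal L$, I then compute
\begin{equation}
\mathcal L\ket\phi=(\bra 0+i\bra 1)_a\ket\phi=\sum_{m}(\phi_{m,0}+i\phi_{m,1})\ket m.
\end{equation}
Applying the proposed decoded measurement operator $F_m=\ketbra mm$ to this state yields $F_m\mathcal L\ket\phi=(\phi_{m,0}+i\phi_{m,1})\ket m$, so
\begin{equation}
\|F_m\mathcal L\ket\phi\|^2=\phi_{m,0}^2+\phi_{m,1}^2.
\end{equation}

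Second, I would compute the rebit-side probability. Applying $M_m=\ketbra mm\otimes I_a$ to the expansion of $\ket\phi$ gives $M_m\ket\phi=\phi_{m,0}\ket m\ket 0_a+\phi_{m,1}\ket m\ket 1_a$, and hence
\begin{equation}
\|M_m\ket\phi\|^2=\phi_{m,0}^2+\phi_{m,1}^2.
\end{equation}
Comparing the two expressions shows $\|F_m\mathcal L\ket\phi\|^2=\|M_m\ket\phi\|^2$ for every $\ket\phi\in\mathcal{H}_{n+1}(\mathbb R)$, which is exactly the defining condition \eqref{eq:rebit_enc_meas} for $\{M_m\}$ to be a rebit encoding of $\{\ketbra mm\}$. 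The operators $M_m$ are manifestly real and satisfy the completeness relation because $\sum_m\ketbra mm\otimes I_a=I$, so nothing else needs to be checked.

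There is essentially no obstacle here: the statement is really a direct unpacking of the definitions, with the key observation being that storing real and imaginary amplitudes in the $\ket 0_a$ and $\ket 1_a$ sectors of the ancilla makes the squared modulus $|\phi_{m,0}+i\phi_{m,1}|^2$ on the logical side coincide with the marginal probability $\phi_{m,0}^2+\phi_{m,1}^2$ of observing outcome $m$ on the first $n$ rebits while summing (tracing) over the ancilla. An equivalent one-line derivation is available via Proposition~\ref{prop:rebit_decoding}: one checks that $\mathcal L(\ketbra mm\otimes I_a)=\ketbra mm$ using Lemma~\ref{lem:imageWunderL}, since $\tfrac12\mathrm{tr}_a[(\ketbra mm\otimes I_a)(I_a-iX_aZ_a)]=\ketbra mm$ and $\tfrac12\mathrm{tr}_a[(\ketbra mm\otimes I_a)(Z_a+iX_a)]=0$, so the $K$-part vanishes and the linear part reproduces $\ketbra mm$.
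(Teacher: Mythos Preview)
Your proposal is correct and mirrors the paper's proof closely: the paper gives exactly your two arguments, first computing $\mathcal L(\ketbra mm\otimes I_a)=\ketbra mm$ via Lemma~\ref{lem:imageWunderL} and invoking Proposition~\ref{prop:rebit_decoding}, and then, as an alternative, verifying $\|(\ketbra mm\otimes I_a)\mathcal P(\ket\psi)\|^2=|\braket m\psi|^2$ directly. The only cosmetic difference is that the paper parametrizes the direct check by an arbitrary logical state $\ket\psi$ and applies $\mathcal P$, whereas you parametrize by an arbitrary rebit state $\ket\phi$ and apply $\mathcal L$; since $\mathcal P$ and $\mathcal L$ are inverse bijections, these are equivalent.
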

\begin{proof}

Substituting $W = \ketbra mm \otimes I_a$ into
\eq{eq:imageWunderL} and using the fact that the Pauli matrices $X$, $Y$ and $Z$ are traceless, we obtain 
\begin{equation}
\cL(\ketbra mm \otimes I_a) = \ketbra mm.
\end{equation}
Using Proposition~\ref{prop:rebit_decoding} completes the proof.
Alternatively, we may verify directly that
\begin{eqnarray*}
||(\ketbra mm \otimes I_a)\mathcal P(\ket\psi)||^2 &=& ||(\bra m \otimes I_a) (\Re \ket \psi \otimes \ket{0}_a + \Im \ket \psi \otimes \ket{1}_a)   ||^2 \\
&=& || \bra m \Re \ket \psi \ket{0}_a + \bra m \Im \ket \psi \ket{1}_a||^2 \\
&=& \bra m \Re \ket \psi^2 + \bra m \Im \ket \psi^2 \\
&=& |\bra m \Re \ket \psi + i \bra m \Im \ket \psi|^2 \\
&=& |\braket m \psi|^2 \\
&=& ||(\ketbra mm) \ket\psi||^2.
\end{eqnarray*}
\end{proof}

We now consider the following bottom-up example.
What does it mean to measure the ancilla qubit in the computational basis? The following proposition reveals that this corresponds to measuring the eigenvalue of the complex conjugation operator $K$, thereby projecting an $n$-qubit quantum state to its real or imaginary part.
\begin{prop}
The rebit decoding of the set of measurement operators $\{(I+Z_a)/2,(I-Z_a)/2\} = \{ \ketbra 00_a , \ketbra 11_a \}$ is
$\{(I+K)/2,(I-K)/2\} = \{\Re, \Im\}$.
\end{prop}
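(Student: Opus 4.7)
The natural approach is to directly apply Lemma~\ref{lem:imageWunderL} to each of the two measurement operators, treating them as operators on the full $(n{+}1)$-rebit space of the form $I_n \otimes \ket{b}\bra{b}_a$ for $b\in\{0,1\}$. Taking $W = I_n \otimes \ket{0}\bra{0}_a$ and $W = I_n \otimes \ket{1}\bra{1}_a$, the partial trace over the ancilla factors out the identity on the first $n$ qubits, leaving scalar matrix elements $\bra{b}(I - iXZ)\ket{b}$ and $\bra{b}(Z + iX)\ket{b}$ on the ancilla. These are trivially evaluated using $\bra{b}X\ket{b} = \bra{b}XZ\ket{b}=0$ and $\bra{b}Z\ket{b} = (-1)^b$.

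Carrying this out, I expect to obtain $\mathcal{L}(\ket{0}\bra{0}_a) = \tfrac{1}{2}(I + K) = \Re$ and $\mathcal{L}(\ket{1}\bra{1}_a) = \tfrac{1}{2}(I - K)$. By Proposition~\ref{prop:rebit_decoding}, each of these is a rebit decoding of the corresponding ancilla projector. To complete the identification with the set $\{\Re,\Im\}$, I would then invoke the definition of rebit decoding in Eq.~\eqref{eq:rebit_enc_meas}, which only constrains the decoded operators through $\|F_m\mathcal{L}\ket{\phi}\|^2$: that is, rebit decodings are determined only up to an overall phase on each operator. Since $\tfrac{1}{2}(I - K) = i\,\Im$ differs from $\Im$ only by the global phase $i$, $\{\Re,\Im\}$ is also a valid rebit decoding of $\{\ket{0}\bra{0}_a,\ket{1}\bra{1}_a\}$.

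If one prefers a direct verification avoiding Lemma~\ref{lem:imageWunderL}, an alternative route is to check Eq.~\eqref{eq:rebit_enc_meas} by hand: for an arbitrary $\ket{\psi}\in\mathcal{H}_n(\bbC)$, compute $\|\tfrac{1}{2}(I\pm Z_a)\mathcal{P}(\ket{\psi})\|^2$ using Eq.~\eqref{eq:defPstates}, which immediately gives $\|\Re\ket{\psi}\|^2$ and $\|\Im\ket{\psi}\|^2$ by the orthogonality of the ancilla basis states. Either approach delivers the claim essentially in one calculation.

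I do not anticipate any real obstacle; the only subtlety is the global-phase mismatch between $\tfrac{1}{2}(I-K)$ and $\Im$, which is harmless because measurement decoding is defined only up to phase. The computation is a straightforward specialization of Lemma~\ref{lem:imageWunderL}, and it parallels the top-down computation already carried out in Proposition~\ref{prop:rebitMeasurement}.
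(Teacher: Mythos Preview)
Your proposal is correct and follows essentially the same route as the paper: the paper applies Lemma~\ref{lem:imageWunderL} to compute $\mathcal{L}(Z_a)=K$ (using tracelessness of the Paulis), then invokes Proposition~\ref{prop:rebit_decoding} --- your version simply applies the lemma to each projector $\ket{b}\bra{b}_a$ directly rather than to $Z_a$ followed by linearity. If anything, you are slightly more careful than the paper in explicitly addressing the phase discrepancy between $(I-K)/2 = i\,\Im$ and $\Im$, which the paper's proof leaves implicit.
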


\begin{proof}
By \eq{eq:imageWunderL},
\begin{equation}
\label{eq:LZa}
\mathcal L(Z_a) = \left(\frac 12 \tr_a [Z_a(I_a - iX_aZ_a)]  \right) + \left(\frac 12 \tr_a [Z_a(Z_a + i X_a)]  \right) K = K ,
\end{equation}
since the Pauli matrices $X$, $Y$ and $Z$ are all traceless. Using Proposition~\ref{prop:rebit_decoding} completes the proof.

\end{proof}

\subsection{Bottom-up tomography}\label{sec:bottom_up_tomography}
Given a pure $n+1$ rebit state $\ket{\phi}$, we sketch a procedure to find the amplitudes up to a global $\pm1$ sign using only orthogonal operators and single-rebit measurements in the computational basis. This is the natural form of tomography on the simulation space $P$ and translates through the decoding map $\mathcal{L}$ to a tomographic procedure on $n$-qubit states that makes use of $\mathbb{R}$-unitary operators, computational basis measurements, and projection onto the eigenspaces of complex conjugation $K$.

\begin{theorem}\label{thm:rebit_tomog}
An $n$-rebit density matrix is determined exactly by measurement of the $(4^n+2^n)/2$ observables 
\begin{equation}
\mathcal{O}=\{p=p_1\otimes p_2\otimes\dots\otimes p_{n}:p_j\in\{I,X,Y,Z\},p=\bar p\}.
\end{equation}
Moreover, any given $p\in\mathcal{O}$ can be measured using single-rebit computational basis measurements, along with $H$ and $CZ$ gates.
\end{theorem}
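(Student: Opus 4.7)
The plan is to handle the two assertions separately, both reducing to standard Pauli-basis manipulations. For the first, I would expand any operator on $\cH_n(\bbC)$ in the Pauli basis as $\rho = 2^{-n}\sum_P \tr(P\rho) P$, with $P$ ranging over all $4^n$ tensor products of single-qubit Paulis. Complex conjugation acts diagonally in this basis by $\bar P = (-1)^{\#Y(P)} P$, because $\bar I = I$, $\bar X = X$, $\bar Z = Z$, and $\bar Y = -Y$. Hence $\rho$ is real iff $\tr(P\rho) = 0$ for every $P$ with an odd number of $Y$ factors, i.e.\ iff $\rho$ is a real linear combination of the elements of $\mathcal O$. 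The cardinality $|\mathcal O| = (4^n + 2^n)/2$ then follows from the binomial identity $\sum_{k \text{ even}} \binom{n}{k} 3^{n-k} = \tfrac{1}{2}((3+1)^n + (3-1)^n)$.

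For the second assertion, the key ingredient is the identity $CZ(Y\otimes Y)CZ = X\otimes X$, which I would derive from $CZ\cdot Y_1\cdot CZ = Y_1 Z_2$ together with $YZ = iX$ and $ZY = -iX$, whose two factors of $i$ cancel across qubits: $(Y\otimes Z)(Z\otimes Y) = (iX)\otimes(-iX) = X\otimes X$. Given $p \in \mathcal O$, I would partition its (even-sized) set of $Y$-positions into arbitrary disjoint pairs $(j_1, k_1), \dots, (j_m, k_m)$ and apply the gates $CZ_{j_\ell k_\ell}$. Since the pairs are disjoint these $CZ$'s mutually commute, and each $CZ_{j_\ell k_\ell}$ commutes with any Pauli factor on the qubits outside $\{j_\ell, k_\ell\}$, so each $Y$-pair is independently turned into $X\otimes X$ and all other entries of $p$ are untouched. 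Following this with $H_k$ on every qubit with $p_k \in \{X,Y\}$ conjugates each $X$ to $Z$, leaving the overall operator as $\bigotimes_{k : p_k \neq I} Z_k$, which is diagonal in the computational basis. Finally I would measure every rebit: the product $\prod_{k : p_k \neq I}(-1)^{b_k}$ of outcomes is the $\pm 1$ eigenvalue of $p$ on the pre-protocol state, and averaging over runs yields $\tr(p\rho)$. All gates used ($H$, $CZ$) are orthogonal and the measurement is a standard single-rebit one, so the protocol lives entirely within the rebit simulator.

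The step I expect to require the most care is verifying that no stray phase appears in the conjugation when the $Y$-pairs interleave with non-identity Paulis on other qubits, or when a non-adjacent pairing is chosen. The potential worry is that the two $i$'s from $YZ = iX$ and $ZY = -iX$ could fail to cancel. To dispel this I would argue that $CZ_{j_\ell k_\ell}$ commutes with every single-qubit operator on qubits outside $\{j_\ell, k_\ell\}$ and with every $CZ$ on a disjoint pair, so the conjugation of each $Y$-pair is isolated to its own two qubits and reduces to the already-verified two-qubit identity $CZ(Y\otimes Y)CZ = X\otimes X$. Once this independence is established, the rest is a routine reordering of commuting tensor factors.
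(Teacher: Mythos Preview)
Your proposal is correct and follows essentially the same route as the paper: the Pauli expansion with the even-$Y$ reality constraint and the binomial count for $|\mathcal{O}|$, followed by the reduction of each $p\in\mathcal{O}$ to a diagonal observable via $CZ$ on $Y$-pairs (using $CZ(Y\otimes Y)CZ=X\otimes X$) and Hadamards on the $X/Y$ positions. The paper presents the second part as three local circuit templates for $Z$, $X$, and $Y\otimes Y$ (Fig.~\ref{fig:tomo_2loc}), which is exactly your protocol described block-by-block; your explicit phase-cancellation check and disjointness argument are the careful version of what the paper leaves implicit.
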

\begin{proof}
Let $\rho=\ket{\phi}\bra{\phi}$ be the real density matrix corresponding to $\ket{\phi}$, a generic $n$-rebit state. Then, because $\rho=\rho^\dag=\bar\rho=\rho^T$, there is a decomposition
\begin{equation}
\rho=\sum_{p\in\mathcal{O}}a_pp
\end{equation}
for some real coefficients $a_p\in\mathbb{R}$. Since $\tr(\rho)=1$ and all $p\neq I^{\otimes n}$ satisfy $\tr(p)=0$, we have $a_{I^{\otimes n}}=2^{-n}$. Evidently then, learning the values of $a_p=\tr(p\rho)$ by repeatedly measuring the observable $p$ effectively learns $\rho$. Given $\rho$, $\ket{\phi}$ is determined up to a sign $\pm1$.

The size of $\mathcal{O}$ can be calculated by noticing that any $p\in\mathcal{O}$ must have an even number of $Y$s in its tensor product due to the reality condition $p=\bar p$. Thus,
\begin{equation}
|\mathcal{O}|=3^n+3^{n-2}\binom{n}{2}+3^{n-4}\binom{n}{4}+\dots+3^{n-2\lfloor n/2\rfloor}\binom{n}{2\lfloor n/2\rfloor}=(4^n+2^n)/2.
\end{equation}
Notice that this is strictly greater than the number of rebit measurements that measure a subset of rebits each in the $X$-basis and another disjoint subset each in the $Z$-basis -- there are $3^n$ such measurements, corresponding to observables in
\begin{equation}
\mathcal{O}'=\{p_1\otimes p_2\otimes\dots\otimes p_n:p_j\in\{I,X,Z\}\}.
\end{equation}
Since $|\mathcal{O}'|=3^n<(4^n+2^n)/2=|\mathcal{O}|$ for all $n>1$, there is no way to satisfy tomographic locality \cite{hardy2011reformulating} in rebit tomography. That is, there is no way to completely learn a rebit state $\ket{\phi}$ using only single-rebit orthogonal gates and computational basis measurements. We must therefore use at least a two-rebit gate, and it turns out this (in particular, the $CZ$ gate) is sufficient, as we show next. Therefore, real quantum mechanics satisfies a slightly looser axiom of tomographic \emph{2-locality}.

To show tomographic 2-locality, we need only show how observables in $\mathcal{O}$ can be measured using the gates $H$ and $CZ$ and single-rebit measurements in the computational basis (i.e.~measurements of $Z$ on a single rebit). Since any element of $\mathcal{O}$ is made up of a tensor product of $X,Z,$ and $Y\otimes Y$, we need only show how to measure these using the components given. The circuits in Fig.~\ref{fig:tomo_2loc} demonstrate this.
\end{proof}

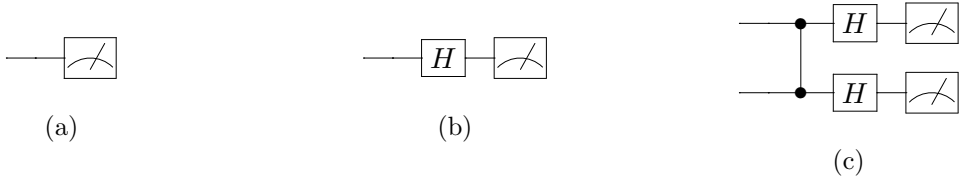
\begin{figure}
\centering
\begin{subfigure}{0.3\textwidth}
\begin{equation*}
\Qcircuit @C=1em @R=1em {
& \qw & \meter  \\
} 
\end{equation*}
\caption{}
\end{subfigure}
\begin{subfigure}{0.3\textwidth}
\begin{equation*}
\Qcircuit @C=1em @R=1em {
& \qw   & \gate H    &  \meter \\
} 
\end{equation*}
\caption{}
\end{subfigure}
\begin{subfigure}{0.3\textwidth}
\begin{equation*}
\Qcircuit @C=1em @R=1em {
& \qw  & \ctrl 1 & \gate H & \meter \\ 
& \qw & \control \qw & \gate H & \meter \\
} 
\end{equation*}
\caption{}
\end{subfigure}

\caption{\label{fig:tomo_2loc} The proof of 2-locality of rebit tomography using $H$, $CZ$, and single-rebit computational basis measurements. (a) Measuring $Z$, (b) Measuring $X$, and (c) Measuring $Y\otimes Y$.}
\end{figure}

Because Theorem~\ref{thm:rebit_tomog} allows us to learn an $(n+1)$-rebit state $\ket{\phi}$ up to a $\pm$ sign, we can also determine $\ket{\psi}=\mathcal{L}(\ket{\phi})$ up to a $\pm$ sign, including the (typically unobservable) global phase of $\ket{\psi}$. Mapped to the simulated space, the operators used for rebit tomography in the proof of Theorem~\ref{thm:rebit_tomog} become $\bbR$-unitaries and measurements of observables $Z$ and $K$ (using the encoding and decoding of measurements results in Section~\ref{sec:enc_and_dec_meas}).

\section{Partial antiunitarity} \label{sec:partialAntiunitarity}
In the previous section, we characterized the set of operators that a bottom-up rebit simulation can simulate as the $\mathbb{R}$-unitary operators. In this section, we pay special attention to a subset of the $\mathbb{R}$-unitaries that we call \textit{partial antiunitary}. These operators are mathematically an interpolation between unitary and antiunitary operators, which are each special cases. Partial antiunitaries, as we show in the Section~\ref{sec:universalgateset}, are also sufficient to densely generate the entire group of $\mathbb{R}$-unitaries. We find it convenient to start by defining the partial complex conjugation operator, which is partial antiunitary itself and is central to the study of the entire set of partial antiunitaries.

\subsection{Partial complex conjugation} 

Let $L\subseteq \{0,1\}^n$ be a language. We define the \textit{partial complex conjugation} operator with respect to $L$ to be $K_L: \mathcal H_n(\bbC) \rightarrow \mathcal H_n(\bbC)$, where
\begin{equation} \label{eq:KLdef}
K_L : \ket\psi \mapsto \sum_{x\notin L} \braket x\psi \ket x + \sum_{x\in L} \braket \psi x \ket x.
\end{equation}
This generalizes the notion of complex conjugation. Indeed, when $L = \{0,1\}^n$, we get $K_L = K$. Note that the identity operator $I$ is another special case of $K_L$: when $L =\emptyset$, we obtain $K_L = I$. Two other examples of $K_L$ that will be important in this paper are the \textit{controlled complex conjugation} operator $CK$ and the \textit{controlled-controlled complex conjugation} operator $CCK$, which are defined as follows:
The $CK$ operator on the $j$th register, denoted $CK_j$, is the partial complex conjugation operator with $L = \{x\in \{0,1\}^n|x_j = 1\}$, i.e.,
\begin{equation}
CK_j : \ket\psi \mapsto \sum_{x:x_j=0} \braket x\psi \ket x + \sum_{x:x_j=1} \braket \psi x \ket x,
\end{equation}
and the CCK operator on the $i$th and $j$th registers, where $i \neq j$, denoted $CCK_{ij}$ is the partial complex conjugation operator with $L = \{x\in \{0,1\}^n|x_i = x_j = 1\} = \{x\in \{0,1\}^n|x_i x_j = 1\}$, i.e.,
\begin{equation}
CCK_{ij} : \ket\psi \mapsto \sum_{x: x_i x_j=0} \braket x\psi \ket x + \sum_{x:x_i x_j = 1} \braket \psi x \ket x.
\end{equation}

We can also express $K_L$ in terms of orthogonal projections (see Appendix \ref{app:OrthogonalProjections}). By denoting 
\begin{eqnarray} \label{def:SL}
\bbH(L) = \span\set{\ket x: x\in L}, \quad \bbH^\perp(L) = \span\set{\ket x: x\notin L},
\end{eqnarray}
and
\begin{equation}
\Pi_L=\proj_{\bbH(L)}=\sum_{x\in L} \ketbra xx, \quad \Theta_L =\proj_{\bbH^\perp(L)}=\sum_{x\notin L} \ketbra xx,
\end{equation}
we may express $K_L$ as follows:

\begin{prop} \label{prop:KL}
\begin{equation} \label{eq:KL}
K_L = \Theta_L + K \Pi_L = \Theta_L + \Pi_L K .
\end{equation}
\end{prop}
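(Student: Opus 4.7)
The plan is to verify both equalities directly from the definition of $K_L$ in \eqref{eq:KLdef}, by unpacking how the right-hand side acts on an arbitrary state $\ket\psi = \sum_x \braket x\psi \ket x$ and matching coefficients basis element by basis element.

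First I would replace the inner products $\braket\psi x$ appearing in \eqref{eq:KLdef} by $\overline{\braket x\psi}$, so that
\begin{equation}
K_L\ket\psi = \sum_{x\notin L}\braket x\psi\,\ket x + \sum_{x\in L}\overline{\braket x\psi}\,\ket x.
\end{equation}
Since $K$ acts in the computational basis by complex conjugation of amplitudes (leaving basis kets fixed), we have $K\ket\psi=\sum_x\overline{\braket x\psi}\ket x$. Combined with the obvious identities $\Theta_L\ket x = [x\notin L]\,\ket x$ and $\Pi_L\ket x = [x\in L]\,\ket x$, the first sum in the display is precisely $\Theta_L\ket\psi$, and applying $\Pi_L$ to $K\ket\psi$ gives the second sum. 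This establishes $K_L = \Theta_L+\Pi_L K$.

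For the second equality, I would observe that $\Pi_L = \sum_{x\in L}\ketbra xx$ has real matrix entries in the computational basis, so $\Pi_L$ commutes with $K$; equivalently, $K\Pi_L\ket\psi = K\sum_{x\in L}\braket x\psi\ket x = \sum_{x\in L}\overline{\braket x\psi}\ket x = \Pi_L K\ket\psi$. Hence $\Theta_L + \Pi_L K = \Theta_L + K\Pi_L$, completing the chain of equalities.

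There isn't really a serious obstacle here; the only point that requires any care is keeping track of the fact that $K$ is antilinear and is defined with respect to the computational basis, which is exactly what makes it commute with the real projectors $\Pi_L$ and $\Theta_L$. Once that is noted, the proof reduces to matching coefficients.
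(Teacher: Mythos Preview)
Your proof is correct and follows essentially the same approach as the paper: both verify the identity by acting on an arbitrary state in the computational basis and both invoke the reality of $\Pi_L$ to establish that it commutes with $K$. The only cosmetic difference is the order---you prove $K_L=\Theta_L+\Pi_L K$ first and then commute, while the paper notes the commutation first and then verifies $K_L=\Theta_L+K\Pi_L$.
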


\begin{proof}
First, we note that since $\Pi_L$ is a matrix with only real entries, $K\Pi_L = \Pi_L K$. Now,
\begin{eqnarray}
(\Theta_L + K \Pi_L) \ket\psi &=& 
\sum_{x\not\in L} \ket x \braket x \psi + K \sum_{x\in L} \ket x \braket x \psi \nn
&=& \sum_{x\notin L} \braket x\psi \ket x + \sum_{x\in L} \braket \psi x \ket x = K_L\ket\psi,
\end{eqnarray}
which completes the proof.
\end{proof}

Its image under $\cP$ is given by

\begin{theorem}
\label{thm:PKLeqn}
\begin{equation} \label{eq:PKLeq}
\cP (K_L) = \Theta_L \otimes I_a + \Pi_L \otimes Z_a .
\end{equation}
\end{theorem}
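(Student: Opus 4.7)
The plan is to apply Lemma~\ref{lem:imageABunderK} directly, after writing $K_L$ in the standard $A+BK$ form. By Proposition~\ref{prop:KL}, we already have the decomposition
\begin{equation}
K_L = \Theta_L + \Pi_L K,
\end{equation}
so we can identify $A = \Theta_L$ and $B = \Pi_L$. Both operators are diagonal in the computational basis with $\{0,1\}$-entries, hence they are real, giving $\Re A = \Theta_L$, $\Im A = 0$, $\Re B = \Pi_L$, and $\Im B = 0$.

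First, I would explicitly invoke Lemma~\ref{lem:imageABunderK} with the above $A$ and $B$. The lemma provides
\begin{equation}
\cP(A+BK) = \Re A \otimes I + \Im A \otimes XZ + \Re B \otimes Z + \Im B \otimes X.
\end{equation}
Substituting the values above, the $XZ$ and $X$ terms vanish, leaving exactly $\Theta_L \otimes I_a + \Pi_L \otimes Z_a$, which is the claimed identity.

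There is essentially no obstacle here; the work has been done in establishing Proposition~\ref{prop:KL} and Lemma~\ref{lem:imageABunderK}. The only thing to verify (and it is immediate from the definitions in \eqref{def:SL}) is that $\Pi_L$ and $\Theta_L$ are real, which ensures the imaginary-part contributions drop out.
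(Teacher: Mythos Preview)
Your proposal is correct and is essentially identical to the paper's own proof: the paper also applies Eq.~\eqref{eq:PABK} from Lemma~\ref{lem:imageABunderK} with $A=\Theta_L$ and $B=\Pi_L$ (coming from Proposition~\ref{prop:KL}), then uses the reality of $\Theta_L$ and $\Pi_L$ to drop the $XZ$ and $X$ terms.
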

\begin{proof}
By using \eq{eq:PABK},
\begin{eqnarray}
\mathcal P(K_L) &=& \Re \Theta_L \otimes I_a + \Im \Theta_L \otimes X_aZ_a + \Re \Pi_L \otimes Z_a + \Im \Pi_L \otimes X_a \nn
&=& \Theta_L \otimes I_a + \Pi_L \otimes Z_a .
\end{eqnarray}
\end{proof}

\begin{remark}
\label{rem:KLunitary}
An immediate consequence of \eq{eq:PKLeq} is that $K_L$ is not just $\bbR$-linear but also $\bbR$-unitary, as the RHS of \eq{eq:PKLeq} is orthogonal. 
\end{remark}


\subsection{Partial antiunitary operators} 

Let $\mathcal H$ be a complex vector space with inner product $\langle \cdot, \cdot \rangle$. A \textit{unitary} operator on $\mathcal H$ is a function $f:\mathcal H\rightarrow \mathcal H$ for which 
$$\langle f(x),f(y)\rangle = \langle x,y \rangle, \quad \mbox{ for all } x,y\in \mathcal H.$$ An \textit{antiunitary} operator on $V$ is a function $g:\mathcal H\rightarrow \mathcal H$ for which $$\langle g(x),g(y)\rangle = \langle y,x \rangle, \quad \mbox{ for all } x,y\in \mathcal H.$$ Note that a consequence of these definitions is that unitary operators are linear and antiunitary operators are antilinear\footnote{Let $\mathcal H$ be a complex vector space. A function $\mathcal H\rightarrow \mathcal H$ is an antilinear operator if 
$$f(ax+by) = \bar a f(x) +\bar b f(y)$$ 
for all $x,y\in \mathcal H$ and $a,b\in \mathbb C$. To show that unitarity implies linearity, consider the expression $$||f(ax + by) - a f(x) -b f(y)||^2 = \langle f(ax + by) - a f(x) -b f(y),f(ax + by) - a f(x) -b f(y) \rangle$$ and show that it is equal to zero. By definition of a norm, this implies that $f(ax + by) = a f(x) +b f(y)$. The proof that antiunitarity implies antilinearity proceeds similarly.}.


We shall now generalize the above definitions of unitarity and antiunitarity. Let $\Xi\subseteq \mathcal H$ be a subspace of $\mathcal H$. Let $f,g :\mathcal H \rightarrow \mathcal H$ be operators on $\mathcal H$. We say that $f$ is \textit{unitary on} $\Xi$ if 
$$\langle f(x),f(y)\rangle = \langle x,y \rangle, \quad \mbox{ for all } x,y\in \Xi.$$ We say that $g$ is \textit{antiunitary on} $\Xi$ if $$\langle g(x),g(y)\rangle = \langle y,x \rangle, \quad \mbox{ for all } x,y\in \Xi.$$ Hence, a unitary (antiunitary) operator is one that is unitary (antiunitary) on $\mathcal H$. We may now define a partial antiunitary operator\footnote{The term \textit{partial antiunitary} appears on Page 2 of \cite{mckague2009simulating}, where it was used as an example of an operator that can be simulated by the rebit encoding. However, an exact definition was not given in \cite{mckague2009simulating}. Also, \cite{mckague2009simulating} does not address whether the rebit encoding allows for the simulation of more than just products of partial antiunitaries. In this paper, we give a precise definition of partial antiunitarity and address the above question.} as follows:
\begin{definition} \label{def1:partialAntiunitary}
Let $\Xi \subseteq \mathcal H$ be a subspace of a complex (finite-dimensional) vector space $\mathcal H$. An operator $\Gamma$ on $\mathcal H$ is a \textit{partial antiunitary} operator with respect to $\Xi$ if 
\begin{enumerate}[(i)]
\item $\Gamma$ is additive, i.e.\ $\Gamma(\psi+\phi) = \Gamma(\psi)+\Gamma(\phi)$ for all $\psi,\phi \in \mathcal H$.
\item $\Gamma$ is unitary on $\Xi^\perp$, i.e.\ $\langle \Gamma(\psi),\Gamma(\phi)\rangle = \langle \psi,\phi\rangle$ for all $\psi, \phi \in \Xi^\perp$.
\item $\Gamma$ is antiunitary on $\Xi$, i.e.\ $\langle \Gamma(\psi),\Gamma(\phi)\rangle = \langle \phi,\psi\rangle$ for all $\psi, \phi \in \Xi$.
\item $\langle \Gamma(\psi),\Gamma(\phi)\rangle = 0$ for all $\psi \in \Xi$ and $\phi \in \Xi^\perp$.	
\end{enumerate}
\end{definition}
The partial antiunitary operators include unitary and antiunitary operators as special cases: a unitary operator on $\mathcal H$ is a partial antiunitary operator with respect to $\set{0}$, and an antiunitary operator on $\mathcal H$ is a partial antiunitary operator with respect to $\mathcal H$.

The above definition of partial antiunitarity could also be phrased in terms of orthogonal projections:
\begin{prop} \label{def2:partialAntiunitary}
An operator $\Gamma$ on $\mathcal H$ is a partial antiunitary operator with respect to a subspace $\Xi \subseteq \cH$ if and only if for all $\psi,\phi\in \mathcal H$,
\begin{enumerate}[(i')]
\item $\Gamma(\psi+\phi) = \Gamma(\psi)+\Gamma(\phi)$.
\item $\langle \Gamma\Theta\psi,\Gamma\Theta\phi \rangle = \langle \Theta\psi, \Theta\phi\rangle$.
\item $\langle \Gamma\Pi\psi,\Gamma\Pi\phi\rangle = \langle \Pi\phi,\Pi\psi\rangle$.
\item $\langle \Gamma\Pi\psi,\Gamma\Theta\phi\rangle = 0$.	
\end{enumerate}
where $\Pi=\proj_{\Xi}$ and $\Theta=\proj_{\Xi^\perp}$.
\end{prop}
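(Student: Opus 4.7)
The plan is to show the two characterizations are equivalent by exploiting the elementary properties of the orthogonal projections $\Pi$ onto $\Xi$ and $\Theta$ onto $\Xi^\perp$: for every $\psi \in \mathcal H$ we have $\Pi \psi \in \Xi$ and $\Theta \psi \in \Xi^\perp$, while $\Pi$ fixes vectors in $\Xi$ and annihilates vectors in $\Xi^\perp$ (and symmetrically for $\Theta$). Since (i) and (i') are literally the same additivity condition, only conditions (ii)--(iv) need to be matched with (ii')--(iv').

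For the forward direction (Definition $\Rightarrow$ Proposition), I would fix arbitrary $\psi, \phi \in \mathcal H$ and feed the specific vectors $\Theta \psi, \Theta \phi \in \Xi^\perp$ into (ii), the vectors $\Pi \psi, \Pi \phi \in \Xi$ into (iii), and the mixed pair $\Pi \psi \in \Xi$, $\Theta \phi \in \Xi^\perp$ into (iv). Because the projected vectors automatically lie in the appropriate subspaces, each of (ii'), (iii'), and (iv') then follows immediately.

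For the reverse direction (Proposition $\Rightarrow$ Definition), I would specialize (ii'), (iii'), (iv') to vectors already lying in the relevant subspace. Concretely, if $\psi, \phi \in \Xi^\perp$ then $\Theta \psi = \psi$ and $\Theta \phi = \phi$, so (ii') collapses to (ii); if $\psi, \phi \in \Xi$ then $\Pi \psi = \psi$ and $\Pi \phi = \phi$, so (iii') collapses to (iii); and for $\psi \in \Xi, \phi \in \Xi^\perp$ we have $\Pi \psi = \psi$ and $\Theta \phi = \phi$, so (iv') collapses to (iv).

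There is no genuine obstacle here; the proof is essentially a repackaging using the orthogonal decomposition $\mathcal H = \Xi \oplus \Xi^\perp$ and the defining properties of $\Pi, \Theta$ as the associated projections. The only nuance worth flagging is that the additivity hypothesis (i)/(i') is not actually invoked in passing between the inner-product conditions; it plays no role beyond appearing identically on both sides.
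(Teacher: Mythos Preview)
Your proposal is correct and follows essentially the same approach as the paper's proof, which is a one-line remark that equivalence holds because $\psi=\Theta\psi$, $\phi=\Theta\phi$ for $\psi,\phi\in\Xi^\perp$ and $\psi=\Pi\psi$, $\phi=\Pi\phi$ for $\psi,\phi\in\Xi$. You have simply spelled out both directions more carefully, including the forward direction (feeding projected vectors into the original conditions) that the paper leaves implicit.
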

\begin{proof}
Equivalence holds since $\psi = \Theta \psi$ and $\phi = \Theta \phi$ for all $\psi, \phi \in \Xi^\perp$, and $\psi = \Pi \psi$ and $\phi = \Pi \phi$ for all $\psi, \phi \in \Xi$.
\end{proof}

We will now give a third characterization of partial antiunitary operators.
\begin{theorem} \label{thm3:partialAntiunitary}
An operator $\Gamma$ on $\mathcal H$ is a partial antiunitary operator with respect to a subspace $\Xi \subseteq \cH$ if and only if for all $\psi,\phi\in \mathcal H$,
\begin{equation} \label{def3:partialAntiunitary}
\langle \Gamma\psi,\Gamma\phi\rangle = \langle \Theta\psi,\Theta\phi\rangle + \langle \Pi\phi,\Pi\psi\rangle,
\end{equation}
where $\Pi=\proj_{\Xi}$ and $\Theta=\proj_{\Xi^\perp}$.
\end{theorem}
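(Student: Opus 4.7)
The plan is to prove both directions using the equivalent formulation (i')--(iv') from Proposition~\ref{def2:partialAntiunitary}. Write $\Pi = \proj_{\Xi}$ and $\Theta = \proj_{\Xi^\perp}$, so that $\Pi + \Theta = I$, $\Pi^2 = \Pi$, $\Theta^2 = \Theta$, and $\Pi\Theta = \Theta\Pi = 0$; crucially, both are complex linear.

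For the forward direction, additivity (i') together with the orthogonal decomposition $\psi = \Pi\psi + \Theta\psi$ (and likewise for $\phi$) gives $\Gamma\psi = \Gamma\Pi\psi + \Gamma\Theta\psi$ and $\Gamma\phi = \Gamma\Pi\phi + \Gamma\Theta\phi$. Expanding the inner product yields
\begin{equation*}
\langle \Gamma\psi,\Gamma\phi\rangle = \langle \Gamma\Pi\psi,\Gamma\Pi\phi\rangle + \langle \Gamma\Theta\psi,\Gamma\Theta\phi\rangle + \langle \Gamma\Pi\psi,\Gamma\Theta\phi\rangle + \langle \Gamma\Theta\psi,\Gamma\Pi\phi\rangle.
\end{equation*}
Condition (iii') reduces the first term to $\langle \Pi\phi,\Pi\psi\rangle$, condition (ii') reduces the second to $\langle \Theta\psi,\Theta\phi\rangle$, and the last two cross terms vanish by (iv') combined with the conjugate symmetry of the inner product. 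This is exactly \eq{def3:partialAntiunitary}.

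For the backward direction, assume \eq{def3:partialAntiunitary}. The main obstacle is recovering additivity, since the hypothesis only constrains inner products of images. My approach is to show $\|\Gamma(\psi+\phi) - \Gamma\psi - \Gamma\phi\|^2 = 0$ by expanding it into nine inner products and applying \eq{def3:partialAntiunitary} to each. Because $\Theta$ and $\Pi$ are complex linear, each term unfolds into a real combination of $\|\Theta\psi\|^2$, $\|\Theta\phi\|^2$, $\|\Pi\psi\|^2$, $\|\Pi\phi\|^2$, $\Re\langle \Theta\psi,\Theta\phi\rangle$, and $\Re\langle \Pi\psi,\Pi\phi\rangle$. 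A routine but careful tally shows that for each of these six scalars the net coefficient is $0$ (concretely, the norm-squared terms contribute $2-2$ from the diagonal and cross pieces, while each real-part cross term contributes $2-4+2$). Hence the norm vanishes and additivity (i') follows.

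Once additivity is in hand, conditions (ii'), (iii'), (iv') fall out by specializing \eq{def3:partialAntiunitary}: substituting $(\Theta\psi,\Theta\phi)$ for $(\psi,\phi)$ and using $\Pi\Theta = 0$ yields (ii'); substituting $(\Pi\psi,\Pi\phi)$ and using $\Theta\Pi = 0$ yields (iii'); substituting $(\Pi\psi,\Theta\phi)$ yields (iv'). The genuine work is concentrated in the additivity calculation; everything after that step is formal bookkeeping.
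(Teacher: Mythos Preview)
Your proof is correct and follows essentially the same route as the paper. The only cosmetic difference is in the additivity step of the backward direction: the paper reorganizes the nine terms back into $\|\Theta(\psi+\phi)-\Theta\psi-\Theta\phi\|^2 + \|\Pi(\psi+\phi)-\Pi\psi-\Pi\phi\|^2$ (each vanishing by linearity of $\Theta,\Pi$), whereas you carry out the equivalent coefficient tally explicitly.
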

\begin{proof} \hfill
\begin{itemize}
\item[$(\implies)$] Assume that $\Gamma$ is a parital antiunitary with respect to $\Xi$. Let $\psi,\phi\in \mathcal H$. Then we could write $\psi = \Pi\psi + \Theta\psi$ and $\phi = \Pi\phi + \Theta\phi$. This implies that
\begin{eqnarray*}
\langle \Gamma\psi,\Gamma\phi\rangle &=& \langle \Gamma(\Pi\psi + \Theta\psi), \Gamma( \Pi\phi + \Theta\phi)\rangle \\
&=& \langle \Gamma\Pi\psi + \Gamma\Theta\psi, \Gamma \Pi\phi + \Gamma\Theta\phi\rangle, \qquad\mbox{by additivity} \\
&=& \langle \Gamma\Pi\psi, \Gamma\Pi\phi \rangle + \langle \Gamma\Pi\psi, \Gamma\Theta\phi \rangle + \langle \Gamma\Theta\psi, \Gamma\Pi\phi \rangle + \langle \Gamma\Theta\psi, \Gamma\Theta\phi \rangle  \\
&=&  \langle \Pi\phi,\Pi\psi\rangle + 0+ 0 + \langle \Theta\psi,\Theta\phi\rangle, \qquad\mbox{by (ii'), (iii'), (iv') of Proposition \ref{def2:partialAntiunitary}}\\
&=& \langle \Theta\psi,\Theta\phi\rangle + \langle \Pi\phi,\Pi\psi\rangle.
\end{eqnarray*}
\item[$(\impliedby)$] Assume that \eq{def3:partialAntiunitary} holds.
\begin{enumerate}[(i')]
\item We first show that \eq{def3:partialAntiunitary} implies that $\Gamma$ is additive. Consider the expression
\begin{eqnarray*}
&&||\Gamma(\psi+\phi)-\Gamma(\psi)-\Gamma(\phi)||^2 \\ &=& \langle \Gamma(\psi+\phi)-\Gamma(\psi)-\Gamma(\phi), \Gamma(\psi+\phi)-\Gamma(\psi)-\Gamma(\phi)\rangle \\
&=& \langle \Gamma(\psi+\phi),\Gamma(\psi+\phi)\rangle + \langle \Gamma(\psi),\Gamma(\psi)\rangle + \langle \Gamma(\phi),\Gamma(\phi)\rangle \\ &&+ [- \langle \Gamma(\psi+\phi),\Gamma(\psi)\rangle - \langle \Gamma(\psi+\phi),\Gamma(\phi)\rangle + \langle \Gamma(\psi),\Gamma(\phi)\rangle + c.c.] \\
&=& \langle \Theta(\psi+\phi),\Theta(\psi+\phi)\rangle + \langle \Theta(\psi),\Theta(\psi)\rangle + \langle \Theta(\phi),\Theta(\phi)\rangle \\ &&+ [- \langle \Theta(\psi+\phi),\Theta(\psi)\rangle - \langle \Theta(\psi+\phi),\Theta(\phi)\rangle + \langle \Theta(\psi),\Theta(\phi)\rangle + c.c.] + (\Theta\leftrightarrow\Pi) \\
&=& \langle \Theta(\psi+\phi)-\Theta(\psi)-\Theta(\phi), \Theta(\psi+\phi)-\Theta(\psi)-\Theta(\phi)\rangle + (\Theta\leftrightarrow\Pi) \\  &=& 0,
\end{eqnarray*}
where $c.c.$ stands for complex conjugate, and where the last line follows because both $\Pi$ and $\Theta$ are linear. By the properties of a norm, $\Gamma(\psi+\phi)=\Gamma(\psi)+\Gamma(\phi)$.
\item $$\langle \Gamma\Theta\psi, \Gamma\Theta\phi\rangle = \langle \Theta^2 \psi, \Theta^2 \phi \rangle + \langle\Pi\Theta\phi, \Pi\Theta \psi\rangle = \langle \Theta \psi, \Theta \phi \rangle.$$
\item $$\langle \Gamma\Pi\psi, \Gamma\Pi\phi\rangle = \langle\Theta\Pi \psi, \Theta\Pi \phi \rangle + \langle \Pi^2\phi, \Pi^2 \psi\rangle = \langle \Pi \phi, \Pi \psi \rangle.$$
\item $$\langle \Gamma\Pi\psi, \Gamma\Theta\phi\rangle = \langle \Theta\Pi \psi, \Theta^2 \phi \rangle + \langle \Pi\Theta\phi, \Pi^2 \psi\rangle = 0.$$
\end{enumerate}
Note that we used the facts that $\Pi^2 = \Pi$, $\Theta^2 = \Theta$ and $\Theta\Pi = \Pi\Theta = 0$.
\end{itemize}
\end{proof}
Note that we could have defined a notion of \textit{partial unitarity} analogously. It would then follow that an operator $\Gamma$ is a partial unitary with respect to $\Xi$ if and only if $\Gamma$ is a partial antiunitary with respect to $\Xi^\perp$, since $(\Xi^\perp)^\perp = \Xi$ for finite-dimensional vector spaces. Since we could easily relate the two notions, we will (somewhat arbitrarily) choose to phrase all subsequent results in terms of partial antiunitaries.

We now give an example of a partial antiunitary operator that is neither unitary nor antiunitary in general: 
\begin{prop} \label{prop:KLispartial}
$K_L$ is a partial antiunitary operator with respect to $\bbH(L)$.
\end{prop}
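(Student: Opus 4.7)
The plan is to verify the four conditions of Definition~\ref{def1:partialAntiunitary} directly, working from the orthogonal-projection expression $K_L = \Theta_L + \Pi_L K = \Theta_L + K\Pi_L$ given by Proposition~\ref{prop:KL}, with $\Xi = \bbH(L)$ (so $\Xi^\perp = \bbH^\perp(L)$). Additivity (i) is immediate: both $\Theta_L$ and $\Pi_L$ are complex linear (hence additive), $K$ is antilinear (hence additive), and a sum of additive maps is additive.

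For (ii) and (iii), I exploit the very simple action of $K_L$ on each summand. If $\psi\in\bbH^\perp(L)$ then $\Pi_L\psi=0$ and $\Theta_L\psi=\psi$, so $K_L\psi=\psi$; hence $K_L$ is literally the identity on $\bbH^\perp(L)$, which gives (ii) trivially. Dually, if $\psi\in\bbH(L)$ then $\Theta_L\psi=0$ and $\Pi_L\psi=\psi$, so $K_L\psi = K\psi$; since $K$ (complex conjugation in the computational basis) is antiunitary on all of $\cH_n(\bbC)$, it is certainly antiunitary when restricted to $\bbH(L)$, yielding (iii).

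The only condition that needs a brief extra observation is (iv). For $\psi\in\bbH(L)$ and $\phi\in\bbH^\perp(L)$, I must show $\langle K_L\psi, K_L\phi\rangle = 0$. By the computations above, $K_L\psi = K\psi$ and $K_L\phi = \phi$, so it suffices to check $\langle K\psi,\phi\rangle=0$. The key point is that $\bbH(L)$ has a basis $\{\ket x : x\in L\}$ of \emph{real} computational-basis vectors, so $K$ preserves $\bbH(L)$: writing $\psi = \sum_{x\in L} c_x\ket x$ gives $K\psi = \sum_{x\in L}\bar c_x\ket x \in \bbH(L)$. Since $\bbH(L)\perp\bbH^\perp(L)$, the desired inner product vanishes.

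There is no real obstacle here; the content of the proposition is essentially that the projection decomposition of $K_L$ splits the space exactly along the $\bbH(L)/\bbH^\perp(L)$ partition. As an alternative one-shot route I could instead invoke Theorem~\ref{thm3:partialAntiunitary}: the ranges of the two summands $\Theta_L$ and $K\Pi_L$ lie in $\bbH^\perp(L)$ and $\bbH(L)$ respectively, so all cross terms in $\langle(\Theta_L+K\Pi_L)\psi,(\Theta_L+K\Pi_L)\phi\rangle$ vanish, leaving $\langle\Theta_L\psi,\Theta_L\phi\rangle + \langle K\Pi_L\psi,K\Pi_L\phi\rangle = \langle\Theta_L\psi,\Theta_L\phi\rangle + \langle\Pi_L\phi,\Pi_L\psi\rangle$, which is exactly condition \eqref{def3:partialAntiunitary}.
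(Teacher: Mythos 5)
Your proof is correct, but your primary argument takes a genuinely more elementary route than the paper's. The paper proves this proposition in one shot via the inner-product characterization of Theorem~\ref{thm3:partialAntiunitary}: it expands $\langle K_L\psi, K_L\phi\rangle$ using $K_L=\Theta_L+K\Pi_L$, kills the cross terms with the identities $\Pi_L K=K\Pi_L$, $\Pi_L=\Pi_L^\dag\Pi_L$, $\Pi_L\Theta_L=0$, and reads off $\langle\Theta_L\psi,\Theta_L\phi\rangle+\langle\Pi_L\phi,\Pi_L\psi\rangle$ --- which is exactly the ``alternative one-shot route'' you sketch in your last sentence. Your main argument instead verifies the four clauses of Definition~\ref{def1:partialAntiunitary} directly, using the observation that $K_L$ acts as the identity on $\bbH^\perp(L)$ and as $K$ on $\bbH(L)$; the only non-trivial point, condition (iv), you handle correctly by noting that $K$ preserves $\bbH(L)$ because that subspace has a real (computational-basis) spanning set, so $K\psi\in\bbH(L)\perp\bbH^\perp(L)$. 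What your route buys is transparency: it makes visible \emph{why} $K_L$ is partial antiunitary (it is literally unitary on one summand and antiunitary on the other, and the two summands stay orthogonal). What the paper's route buys is uniformity: the single identity $\langle K_L\psi,K_L\phi\rangle=\langle\Theta_L\psi,\Theta_L\phi\rangle+\langle\Pi_L\phi,\Pi_L\psi\rangle$ for \emph{arbitrary} $\psi,\phi\in\cH_n(\bbC)$ is the form reused later (e.g.\ in Proposition~\ref{prop:KLwithPartialUnitary}), so proving it here avoids redoing the computation. Either proof is acceptable.
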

\begin{proof}
We shall make use of Theorem \ref{thm3:partialAntiunitary}. Then,
\begin{eqnarray*}
\langle K_L \psi, K_L \phi \rangle &=& \langle (\Theta_L+K\Pi_L)\psi, (\Theta_L+K\Pi_L)\phi \rangle, \qquad\mbox{by \eq{eq:KL}} \\ &=& \langle \Theta_L\psi, \Theta_L\phi\rangle + \langle K\Pi_L\psi, \Theta_L\phi\rangle + \langle \Theta_L\psi, K\Pi_L\phi\rangle + \langle K\Pi_L\psi, K\Pi_L\phi\rangle.
\end{eqnarray*}
But $\langle K\Pi_L\psi, \Theta_L\phi\rangle = \langle \Pi_L K\psi, \Theta_L\phi\rangle = \langle \Pi_L^\dag \Pi_L K\psi, \Theta_L\phi\rangle = \langle \Pi_L K\psi, \Pi_L\Theta_L\phi\rangle = 0$, where we used the identities $\Pi_L K = K\Pi_L$, $\Pi_L = \Pi_L^\dag \Pi_L$ and $\Pi_L\Theta_L= 0$, as well as the definition of the adjoint. Likewise, $\langle \Theta_L\psi, K\Pi_L\phi\rangle = 0$. Finally, $\langle K\Pi_L\psi, K\Pi_L\phi\rangle = K(\langle \Pi_L\psi, \Pi_L\phi\rangle) = \langle \Pi_L\phi, \Pi_L\psi\rangle$. 
Hence, $\langle K_L \psi, K_L \phi \rangle = \langle \Theta_L\psi, \Theta_L\phi \rangle + \langle \Pi_L\phi,\Pi_L\psi\rangle$, which means that $K_L$ is a partial antiunitary operator with respect to $\bbH(L)$.
\end{proof}

Next, we show that multiplying a partial antiunitary with respect to $\bbH(L)$ with $K_L$ produces a unitary operator.
\begin{prop} \label{prop:KLwithPartialUnitary}
Let $\Gamma$ be a partial antiunitary (on $\mathcal H$) with respect to $\bbH(L)$. Then $\Gamma K_L$ is a unitary operator on $\mathcal H$.
\end{prop}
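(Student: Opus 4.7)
The plan is to show that $\Gamma K_L$ is inner-product preserving and additive, and then deduce linearity. The main tool is the characterization of partial antiunitarity from Theorem~\ref{thm3:partialAntiunitary}, which says that since $\Gamma$ is a partial antiunitary with respect to $\bbH(L)$,
\begin{equation*}
\langle \Gamma x, \Gamma y \rangle = \langle \Theta_L x, \Theta_L y \rangle + \langle \Pi_L y, \Pi_L x \rangle
\end{equation*}
for all $x, y \in \mathcal H$. I would apply this with $x = K_L \psi$ and $y = K_L \phi$, so the first step is to simplify $\Theta_L K_L$ and $\Pi_L K_L$.

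Using Proposition~\ref{prop:KL} to write $K_L = \Theta_L + K\Pi_L$, together with $\Theta_L \Pi_L = \Pi_L \Theta_L = 0$, $\Theta_L^2 = \Theta_L$, $\Pi_L^2 = \Pi_L$, and the reality of the projectors (so $K$ commutes with $\Pi_L$ and $\Theta_L$), I would obtain $\Theta_L K_L \psi = \Theta_L \psi$ and $\Pi_L K_L \psi = K \Pi_L \psi$ for every $\psi$. Plugging these into the identity above and using the antiunitarity of $K$ to flip $\langle K\Pi_L \phi, K\Pi_L \psi\rangle = \langle \Pi_L \psi, \Pi_L \phi\rangle$, I get
\begin{equation*}
\langle \Gamma K_L \psi, \Gamma K_L \phi \rangle = \langle \Theta_L \psi, \Theta_L \phi \rangle + \langle \Pi_L \psi, \Pi_L \phi \rangle = \langle \psi, \phi\rangle,
\end{equation*}
where the last equality uses $\Theta_L + \Pi_L = I$ together with orthogonality $\Theta_L \Pi_L = 0$ to expand $\langle \psi,\phi\rangle = \langle(\Theta_L + \Pi_L)\psi,(\Theta_L + \Pi_L)\phi\rangle$.

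At this point $\Gamma K_L$ is additive (composition of additive maps) and preserves inner products. The remaining step is to upgrade additivity to complex linearity: for any scalar $a \in \bbC$, a standard expansion of $\|\Gamma K_L(a\psi) - a\,\Gamma K_L(\psi)\|^2$ via inner-product preservation shows it vanishes, hence $\Gamma K_L(a\psi) = a\,\Gamma K_L(\psi)$. Combined with additivity this gives linearity, and combined with inner-product preservation this gives unitarity. The only place one has to be careful is the bookkeeping in the first step, where the antilinearity of $K_L$ makes it tempting to misapply the identity $K_L = \Theta_L + K\Pi_L$ as if it were a composition of linear operators; treating $K_L$ strictly as an $\bbR$-linear map applied before the projections avoids this pitfall.
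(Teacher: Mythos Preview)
Your proof is correct and follows essentially the same approach as the paper: both apply the inner-product characterization of partial antiunitarity (Theorem~\ref{thm3:partialAntiunitary}) to $K_L\psi$ and $K_L\phi$, simplify via $\Theta_L K_L=\Theta_L$ and $\Pi_L K_L=K\Pi_L$, and use the antiunitarity of $K$ to conclude $\langle\Gamma K_L\psi,\Gamma K_L\phi\rangle=\langle\psi,\phi\rangle$. Your extra step deducing linearity is fine but, in the paper's conventions, unnecessary: unitarity is \emph{defined} there as inner-product preservation (with linearity deduced in a footnote), so the proof ends once the inner product identity is established.
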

\begin{proof}
Assume that $\Gamma$ is a partial antiunitary with respect to $\bbH(L)$. Then for all $\psi,\phi \in \mathcal H$,  
\begin{eqnarray*}
\langle \Gamma K_L \psi, \Gamma K_L \phi \rangle &=& \langle \Theta_L K_L \psi, \Theta_L K_L \psi\rangle +\langle \Pi_L K_L \phi, \Pi_L K_L \phi\rangle \\
&=& \langle \Theta_L(\Theta_L+\Pi_L K) \psi, \Theta_L(\Theta_L+\Pi_L K) \psi\rangle +\langle \Pi_L(\Theta_L+\Pi_L K) \phi, \Pi_L(\Theta_L+\Pi_L K) \phi\rangle \\
&=& \langle \Theta_L\psi, \Theta_L \phi \rangle + \langle \Pi_L K\phi, \Pi_L K \psi\rangle.
\end{eqnarray*}

But the latter term in the sum is equal to $\langle \Pi_L K\phi, \Pi_L K \psi\rangle = \langle K\Pi_L\phi, K\Pi_L \psi\rangle = K(\langle \Pi_L\phi, \Pi_L \psi\rangle) = \langle \Pi_L\psi, \Pi_L\phi\rangle$.
Hence, 
\begin{eqnarray*}
\langle \Gamma K_L \psi, \Gamma K_L \phi \rangle &=& \langle \Theta_L \psi ,\Theta_L \phi\rangle +\langle \Pi_L \psi ,\Pi_L \phi\rangle  \\
&=& \langle \Theta_L \psi ,\Theta_L\phi\rangle +\langle \Theta_L \psi ,\Pi_L\phi\rangle+\langle \Pi_L\psi ,\Theta_L\phi\rangle +\langle \Pi_L\psi ,\Pi_L\phi\rangle \\
&=&\langle \Theta_L\psi + \Pi_L\psi ,\Theta_L\phi + \Pi_L\phi \rangle = \langle \psi,\phi \rangle.
\end{eqnarray*}
Since this holds for all $\phi, \psi \in \mathcal H$, $\Gamma K_L$ is unitary.
\end{proof}

While the product\footnote{We define the product of two partial antiunitary operators $A$ and $B$ to be their composition, i.e. $AB := A\circ B$.} of two partial antiunitary operators may not be partial antiunitary (we show this later in Theorem~\ref{thm:partial_antiunitaries_not_group}), the product of a partial antiunitary operator with either a unitary or antiunitary operator is always partial antiunitary, as the following theorem shows.

\begin{theorem}
\label{thm:prodPartialAntiunitary}
Let $\Xi \subseteq \mathcal H$ be a subspace of a complex vector space $\mathcal H$. Let $U$ be a unitary operator on $\mathcal H$ and $V$ be an antiunitary operator on $\mathcal H$. Then the following statements are equivalent.\footnote{Here, we use $U^\dag (\Xi)$ to denote the set $\set{U^\dag \phi |\phi \in \Xi}$.}
\begin{enumerate}[(I)]
\item $\Gamma$ is a partial antiunitary with respect to $\Xi$.
\item $\Gamma U$ is a partial antiunitary with respect to $U^\dag(\Xi)$.
\item $U \Gamma$ is a partial antiunitary with respect to $\Xi$.
\item $\Gamma V$ is a partial antiunitary with respect to $(V^\dag(\Xi))^\perp$.
\item $V\Gamma$ is a partial antiunitary with respect to $\Xi^\perp$.
\end{enumerate}
\end{theorem}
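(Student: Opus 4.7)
The plan is to exploit the compact characterization given in Theorem~\ref{thm3:partialAntiunitary}, which reduces partial antiunitarity with respect to a subspace $\Xi$ to the single inner-product identity
$$\langle \Gamma\psi, \Gamma\phi\rangle = \langle \Theta\psi, \Theta\phi\rangle + \langle \Pi\phi, \Pi\psi\rangle$$
together with additivity. Each of the four equivalences then reduces to substituting the new operator ($\Gamma U$, $U\Gamma$, $\Gamma V$, or $V\Gamma$) into this identity with the projections appropriate to the transformed subspace, and checking that everything lines up.

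For the unitary cases (II) and (III), the preparatory step is to note that the projections onto $U^\dag(\Xi)$ and $U^\dag(\Xi^\perp)$ are $U^\dag \Pi U$ and $U^\dag \Theta U$ respectively, which follows from $U$ being a linear isometry that preserves orthogonality. For (II) I would then substitute $U\psi$ and $U\phi$ into Theorem~\ref{thm3:partialAntiunitary} applied to $\Gamma$, using $U^\dag U = I$ to collapse the inner $U^\dag U$ factors inside the right-hand projections; this reproduces exactly the characterizing identity for $\Gamma U$ with respect to $U^\dag(\Xi)$. For (III), the identity is even cleaner: unitarity of $U$ preserves the left-hand inner product, so $\langle U\Gamma\psi, U\Gamma\phi\rangle = \langle \Gamma\psi, \Gamma\phi\rangle$ and the characterization for $\Gamma$ transfers directly to $U\Gamma$.

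The antiunitary cases (IV) and (V) are where the real bookkeeping happens, and I expect this to be the main obstacle. The defining identity $\langle Vx, Vy\rangle = \langle y, x\rangle$ swaps its two arguments, and this swap must be matched by the swap that takes $\Xi$ to $\Xi^\perp$ in (V) (and analogously $V^\dag(\Xi)$ to $(V^\dag(\Xi))^\perp$ in (IV)). For (V), the plan is to compute $\langle V\Gamma\psi, V\Gamma\phi\rangle = \langle \Gamma\phi, \Gamma\psi\rangle$ and then apply Theorem~\ref{thm3:partialAntiunitary} with the roles of $\psi$ and $\phi$ interchanged; the resulting sum $\langle \Theta\phi, \Theta\psi\rangle + \langle \Pi\psi, \Pi\phi\rangle$ is exactly what the characterization for $V\Gamma$ with respect to $\Xi^\perp$ demands, since the new projections satisfy $\tilde\Theta = \Pi$ and $\tilde\Pi = \Theta$. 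For (IV), the projections onto $V^\dag(\Xi)$ and $(V^\dag(\Xi))^\perp = V^\dag(\Xi^\perp)$ are $V^\dag \Pi V$ and $V^\dag \Theta V$, both linear because each is a composition of two antilinear maps with one linear map; applying $\langle V^\dag x, V^\dag y\rangle = \langle y, x\rangle$ inside these expressions then reduces the candidate identity for $\Gamma V$ back to Theorem~\ref{thm3:partialAntiunitary} for $\Gamma$ evaluated at $V\psi, V\phi$.

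Additivity in every case is immediate: unitary $U$ is linear and antiunitary $V$ is antilinear, so both are additive, and additivity is preserved by composition with the additive map $\Gamma$. The converse of each implication then follows by applying the already-established forward direction to the operator in question, with $U$ replaced by $U^\dag$ or $V$ replaced by $V^{-1}$, using $\Gamma = (\Gamma U)U^\dag = U^\dag(U\Gamma)$ and similarly on the antiunitary side, together with the subspace identities $(U^\dag(\Xi^\perp))^\perp = U^\dag(\Xi)$ and $((V^\dag(\Xi))^\perp)^\perp = V^\dag(\Xi)$. The most delicate part throughout will simply be keeping careful track of the conjugate-linearity conventions for $V$ and $V^\dag$ and verifying that taking orthogonal complements and applying $V^\dag$ commute as needed.
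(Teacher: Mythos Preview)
Your proposal is correct and follows essentially the same approach as the paper: both reduce everything to the characterization of Theorem~\ref{thm3:partialAntiunitary} and the intertwining relation $\proj_\Xi\, U = U\,\proj_{U^\dag(\Xi)}$, which the paper states once as a single formula valid for both unitary and antiunitary $U$ and then applies to each of the four cases. One minor simplification: you need not verify additivity separately, since Theorem~\ref{thm3:partialAntiunitary} already shows that the inner-product identity alone implies it, and likewise the converses come for free because each step is an ``if and only if'' at the level of that identity.
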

\begin{proof}
We first show that if $U$ is either unitary or antiunitary, then 
\begin{equation} \label{eq:commutprojU}
\proj_\Xi U = U \proj_{U^\dag(\Xi)}.
\end{equation}
To see this, let $\set{\ket{a_i}}_{i=1}^s$ be a basis for $\Xi$. Then $\set{U^\dag\ket{a_i}}_{i=1}^s$ is a basis for $U^\dag(\Xi)$. Hence, by Proposition \ref{prop:orthogonalprojection} in Appendix \ref{app:OrthogonalProjections},
\begin{eqnarray*}
\proj_\Xi U &=& \sum_{i=1}^s \ketbra{a_i}{a_i} U \\
&=& U \sum_{i=1}^s U^\dag \ketbra{a_i}{a_i} U \\
&=& U \proj_{U^\dag (\Xi)}.
\end{eqnarray*}

Now, (I) holds if and only if for all $\psi,\phi \in \mathcal H$,
\begin{equation}
\label{eq:Gammastar}
\langle \Gamma\psi,\Gamma\phi\rangle = \langle \proj_{\Xi^\perp}\psi,\proj_{\Xi^\perp} \phi\rangle + \langle \proj_\Xi \phi, \proj_\Xi \psi\rangle. \qquad 
\end{equation}

Since
\begin{eqnarray*}
(*) \iff  \langle \Gamma U\psi,\Gamma U\phi\rangle &=& \langle \proj_{\Xi^\perp}U\psi,\proj_{\Xi^\perp} U\phi\rangle + \langle \proj_\Xi U\phi, \proj_\Xi U\psi\rangle \\ &=& \langle U \proj_{U^\dag(\Xi^\perp)}\psi, U\proj_{U^\dag(\Xi^\perp)} \phi\rangle + \langle U\proj_{U^\dag(\Xi)} \phi,  U\proj_{U^\dag(\Xi)} \psi\rangle \\ 
&=& \langle  \proj_{U^\dag(\Xi)^\perp}\psi, \proj_{U^\dag(\Xi)^\perp} \phi\rangle + \langle \proj_{U^\dag(\Xi)} \phi,  \proj_{U^\dag(\Xi)} \psi\rangle.
\end{eqnarray*}
Therefore, (I) is equivalent to (II).

Since $\langle U\Gamma\psi,U\Gamma \phi\rangle = \langle V\Gamma \phi, V\Gamma\psi\rangle = \langle \Gamma\psi,\Gamma\phi\rangle$, (I) is equivalent to (III) and (V), by using \eq{eq:Gammastar}.

Finally, since
\begin{eqnarray*}
\mathrm{Eq.}\, \eqref{eq:Gammastar} \iff  \langle \Gamma V\psi,\Gamma V\phi\rangle &=& \langle \proj_{\Xi^\perp}V\psi,\proj_{\Xi^\perp} V\phi\rangle + \langle \proj_\Xi V\phi, \proj_\Xi V\psi\rangle \\ &=& \langle V \proj_{V^\dag(\Xi^\perp)}\psi, V\proj_{V^\dag(\Xi^\perp)} \phi\rangle + \langle V\proj_{V^\dag(\Xi)} \phi,  V\proj_{V^\dag(\Xi)} \psi\rangle \\ 
&=& \langle  \proj_{V^\dag(\Xi)^\perp}\phi, \proj_{V^\dag(\Xi)^\perp} \psi\rangle + \langle \proj_{V^\dag(\Xi)} \psi,  \proj_{V^\dag(\Xi)} \phi\rangle,
\end{eqnarray*} 
(I) is equivalent to (IV).
\end{proof}

We now use Theorem \ref{thm:prodPartialAntiunitary} to prove the following corollary. 
\begin{corollary} \label{cor:partialExistenceUnitary}
Let $\Xi$ be a subspace of the $n$-qubit Hilbert space $\mathcal H_n(\bbC)$. If $\Gamma$ is a partial antiunitary with respect to $\Xi$, then there exists a language $L\subseteq \set{0,1}^n$, with $|L| = \dim \Xi$, and a unitary operator $U$ mapping $\bbH(L)$ to $\Xi$, such that $\Gamma U$ is a partial antiunitary with respect to $\bbH(L)$.
\end{corollary}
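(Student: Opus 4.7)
The plan is to leverage Theorem~\ref{thm:prodPartialAntiunitary}, part (II), which tells us that for any unitary $U$, the product $\Gamma U$ is a partial antiunitary with respect to $U^\dag(\Xi)$. It therefore suffices to exhibit a language $L\subseteq \{0,1\}^n$ with $|L|=\dim\Xi$, together with a unitary $U$, such that $U^\dag(\Xi)=\bbH(L)$, or equivalently $U(\bbH(L))=\Xi$.

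First, I would pick $L$ to be any subset of $\{0,1\}^n$ of cardinality $\dim\Xi$ (for instance, the lexicographically first $\dim\Xi$ bit strings). Since $|L|\le 2^n=\dim\mathcal{H}_n(\bbC)$, this is well defined, and by construction $\dim\bbH(L)=|L|=\dim\Xi$, so also $\dim\bbH^\perp(L)=\dim\Xi^\perp$. Next, I would construct the unitary $U$ directly: choose an orthonormal basis $\{\ket{\xi_i}\}_{i=1}^{\dim\Xi}$ of $\Xi$ and an orthonormal basis $\{\ket{\eta_j}\}_{j=1}^{\dim\Xi^\perp}$ of $\Xi^\perp$, enumerate $L=\{x_1,\dots,x_{|L|}\}$ and $\{0,1\}^n\setminus L=\{y_1,\dots,y_{2^n-|L|}\}$, and define
\begin{equation}
U=\sum_{i=1}^{\dim\Xi}\ket{\xi_i}\bra{x_i}+\sum_{j=1}^{\dim\Xi^\perp}\ket{\eta_j}\bra{y_j}.
\end{equation}
Because $U$ sends one orthonormal basis of $\mathcal{H}_n(\bbC)$ to another, $U$ is unitary; by construction $U(\bbH(L))=\Xi$ and $U(\bbH^\perp(L))=\Xi^\perp$, so in particular $U^\dag(\Xi)=\bbH(L)$.

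Finally, applying Theorem~\ref{thm:prodPartialAntiunitary}(I)$\Rightarrow$(II) to $\Gamma$ and this $U$ yields that $\Gamma U$ is a partial antiunitary with respect to $U^\dag(\Xi)=\bbH(L)$, which is exactly the claim. I do not foresee any real obstacle: the only content of the corollary, beyond Theorem~\ref{thm:prodPartialAntiunitary}, is the elementary linear-algebraic fact that any two subspaces of $\mathcal{H}_n(\bbC)$ of equal dimension are related by a unitary, and that one of them can always be chosen to be a computational-basis coordinate subspace $\bbH(L)$.
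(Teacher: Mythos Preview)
Your proposal is correct and follows essentially the same approach as the paper: both apply Theorem~\ref{thm:prodPartialAntiunitary} (I)$\Leftrightarrow$(II) once a suitable $L$ and unitary $U$ with $U^\dag(\Xi)=\bbH(L)$ are in hand. The paper's proof simply takes the existence of such $L$ and $U$ as given, whereas you spell out the elementary construction explicitly; this extra detail is harmless and, if anything, makes the argument more self-contained.
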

\begin{proof}
We are given that $\bbH(L) = U^\dag(\Xi)$. By (I)$\iff$(II) of Theorem~\ref{thm:prodPartialAntiunitary}, $\Gamma U$ is a partial antiunitary with respect to $U^\dag(\Xi) = \bbH(L)$ since $\Gamma$ is a partial antiunitary with respect to $\Xi$.
\end{proof}

We are now ready to combine the result above from Corollary~\ref{cor:partialExistenceUnitary} with Proposition~\ref{prop:KLwithPartialUnitary} to show that any partial antiunitary operator can be written as a product of $K_L$ with unitary operators.
\begin{theorem} \label{thm:main1}
Let $\Xi$ be a subspace of the $n$-qubit Hilbert space $\mathcal H_n(\bbC)$. If $\Gamma$ is a partial antiunitary operator with respect to $\Xi$, then there exists a language $L\subseteq \set{0,1}^n$, with $|L| = \dim \Xi$, and unitary operators $U$ and $V$, with $V$ mapping $\Xi$ to $\bbH(L)$, such that 
\begin{equation}
\Gamma = U K_L V.
\end{equation}
\end{theorem}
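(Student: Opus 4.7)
The plan is to assemble Theorem~\ref{thm:main1} by chaining together the two immediately preceding results: Corollary~\ref{cor:partialExistenceUnitary}, which lets us convert a partial antiunitary with respect to an arbitrary subspace $\Xi$ into one with respect to a computational-basis subspace $\bbH(L)$, and Proposition~\ref{prop:KLwithPartialUnitary}, which then converts a partial antiunitary with respect to $\bbH(L)$ into a genuine unitary by right-multiplication by $K_L$. So the structural idea is: ``peel off'' the subspace on one side using a unitary change of basis, then ``peel off'' the antiunitarity on the other side using $K_L$.

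Concretely, I would first apply Corollary~\ref{cor:partialExistenceUnitary} to $\Gamma$ to obtain a language $L\subseteq\{0,1\}^n$ with $|L|=\dim\Xi$ and a unitary $W$ on $\mathcal H_n(\bbC)$ that maps $\bbH(L)$ onto $\Xi$, such that $\Gamma W$ is a partial antiunitary with respect to $\bbH(L)$. Then I would apply Proposition~\ref{prop:KLwithPartialUnitary} to $\Gamma W$, which yields that
\begin{equation}
U := \Gamma W K_L
\end{equation}
is a unitary operator on $\mathcal H_n(\bbC)$.

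The only remaining algebraic fact I need is that $K_L$ is an involution, i.e.~$K_L^2=I$. This is a one-line check from the definition in Eq.~\eqref{eq:KLdef}: on basis states $\ket x$ with $x\notin L$, $K_L$ acts as the identity, so $K_L^2\ket x=\ket x$; on basis states $\ket x$ with $x\in L$, $K_L$ coincides with $K$, which squares to the identity. By $\bbR$-linearity (Proposition~\ref{prop:KLispartial} combined with Theorem~\ref{thm:equiv_defRlinear}), this extends to all of $\mathcal H_n(\bbC)$. Given $K_L^{-1}=K_L$ and $W^{-1}=W^\dag$, rearranging $U=\Gamma W K_L$ gives
\begin{equation}
\Gamma = U K_L W^\dag,
\end{equation}
so setting $V:=W^\dag$ finishes the proof. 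Since $W$ is unitary and takes $\bbH(L)$ to $\Xi$, $V$ is unitary and takes $\Xi$ to $\bbH(L)$, as required.

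There is no real obstacle in this argument: all the conceptual work has already been done in Corollary~\ref{cor:partialExistenceUnitary} and Proposition~\ref{prop:KLwithPartialUnitary}. The only subtle bookkeeping point is to make sure the $K_L^{-1}=K_L$ identity is stated (it is worth noting explicitly, since it is used implicitly when passing from $U=\Gamma W K_L$ to $\Gamma = U K_L V$), and to track the direction of the isomorphism carefully so that $V$ maps $\Xi$ to $\bbH(L)$ rather than the other way around.
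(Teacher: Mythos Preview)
Your proposal is correct and follows essentially the same route as the paper: invoke Corollary~\ref{cor:partialExistenceUnitary} to reduce to a partial antiunitary with respect to $\bbH(L)$, then apply Proposition~\ref{prop:KLwithPartialUnitary} and the involutivity of $K_L$ to solve for $\Gamma$. The only cosmetic difference is that the paper names the unitary $V^\dag$ from the start (so that the final factorization reads $\Gamma=UK_LV$ directly), whereas you call it $W$ and set $V:=W^\dag$ at the end.
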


\begin{proof}
By Corollary~\ref{cor:partialExistenceUnitary}, $\Gamma$ being a partial antiunitary with respect to $\Xi$ implies that $\Gamma V^\dag$ is a partial antiunitary with respect to $V(\Xi) = \bbH(L)$. By Proposition~\ref{prop:KLwithPartialUnitary},  $\Gamma V^\dag K_L = U$ for some unitary $U$. Since $K_L$ is its own inverse, this implies that $\Gamma= U K_L V$, which completes the proof of the theorem.
\end{proof}

Theorem \ref{thm:main1} tells us that if we wanted to simulate any arbitrary partial antiunitary operator, it would suffice to just use products of unitary operators and partial complex conjugation. Next, we show in the next two theorems that partial antiunitary operators are a special case of $\bbR$-linear operators, and that not every $\bbR$-linear operator is partial antiunitary. In fact, the partial antiunitaries, unlike the $\bbR$-unitaries, are not a subgroup of the $\bbR$-linear operators.

\begin{theorem}\label{thm:partialAntiunitaryIsUnitaryElement}
If $\Gamma$ is a partial antiunitary operator, then it is $\mathbb{R}$-unitary.
\end{theorem}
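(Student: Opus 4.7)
The plan is to leverage the structural decomposition of partial antiunitaries established in Theorem~\ref{thm:main1} together with the group structure of the $\bbR$-unitaries. Specifically, I would first invoke Theorem~\ref{thm:main1} to write $\Gamma = U K_L V$ for some unitaries $U,V$ and some language $L \subseteq \{0,1\}^n$, and then show each of the three factors is $\bbR$-unitary.

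First, I would argue that every unitary operator is $\bbR$-unitary. Writing $U = U + 0\cdot K$, the operator is trivially $\bbR$-linear (in the form $A+BK$ with $A=U$, $B=0$), and the ``unitarity'' constraints of Theorem~\ref{thm:OrthogonalABK2} reduce to $U^\dag U = I$ and $0 = 0$, both of which hold. Next, $K_L$ is $\bbR$-unitary by Remark~\ref{rem:KLunitary}, which observed that $\mathcal P(K_L) = \Theta_L \otimes I_a + \Pi_L \otimes Z_a$ is orthogonal.

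The final step is closure under composition. By Proposition~\ref{prop:subgroupResult}, since $U_n$ is a subgroup of $L_n$, the set $\bbR U_n$ is a subgroup of $\bbR L_n$. Therefore the product $\Gamma = U K_L V$ of three $\bbR$-unitary operators is again $\bbR$-unitary, completing the proof.

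I do not expect a serious obstacle here: all the heavy lifting happens in Theorem~\ref{thm:main1} (decomposition into $U K_L V$) and in Remark~\ref{rem:KLunitary} (the orthogonality of $\mathcal P(K_L)$). One mild subtlety to check is that embedding an ordinary unitary $U$ into the $\bbR$-linear framework as $U + 0\cdot K$ genuinely satisfies the constraints of Theorem~\ref{thm:OrthogonalABK2}; this is a one-line verification. After that, the argument is just assembling the pieces and invoking the group-homomorphism property of $\mathcal{L}$ (via Proposition~\ref{prop:subgroupResult}).
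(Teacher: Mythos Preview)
Your proposal is correct and follows essentially the same approach as the paper: decompose $\Gamma = U K_L V$ via Theorem~\ref{thm:main1}, note that $K_L$ is $\bbR$-unitary by Remark~\ref{rem:KLunitary}, and conclude by closure of the $\bbR$-unitaries under multiplication. The only difference is that you spell out explicitly why a unitary $U$ is $\bbR$-unitary, which the paper leaves implicit.
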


\begin{proof}
By Theorem \ref{thm:main1}, we could write $\Gamma = UK_L V$, where $L \subseteq \{0,1\}^n$ and  $U$ and $V$ are unitaries. By Remark \ref{rem:KLunitary}, $K_L$ is $\bbR$-unitary. Since the $\bbR$-unitaries are closed under multiplication, $\Gamma = UK_L V$ is also $\bbR$-unitary.

\end{proof}



Our characterizations of partial antiunitary operators thus far (Definition \ref{def1:partialAntiunitary}, 
Proposition \ref{def2:partialAntiunitary}, and Theorem \ref{thm3:partialAntiunitary}) all involve universal quantifiers and do not provide us with an algorithm to decide if a given $\bbR$-linear operator is partial antiunitary. In the following theorem -- our fourth characterization of partial antiunitary operators -- we give necessary and sufficient conditions, which can be checked efficiently, that $A$ and $B$ must satisfy, in order for $\Gamma=A+BK$ to be partial antiunitary.

\begin{theorem}
\label{thm:characterizationfourth}
An operator $\Gamma$ on $\mathcal H$ is a partial antiunitary operator with respect to a subspace $\Xi \subseteq \cH$ if and only if $\Gamma = A+BK$ for complex operators $A$ and $B$ satisfying
\begin{eqnarray} \label{eq:fourthCharac1}
A^\dag B &=& 0 \\
\label{eq:fourthCharac2}
A^\dag A &=& \Theta \\
\label{eq:fourthCharac3}
B^\dag B &=& \bar\Pi
\end{eqnarray}
where $\Theta = \proj_{\Xi^\perp}$ and $\Pi = \proj_{\Xi}$.
\end{theorem}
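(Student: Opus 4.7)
The plan is to exploit the characterization in Theorem \ref{thm3:partialAntiunitary}, which reduces partial antiunitarity with respect to $\Xi$ to the single identity $\langle \Gamma\psi,\Gamma\phi\rangle = \langle \Theta\psi,\Theta\phi\rangle + \langle \Pi\phi,\Pi\psi\rangle$ holding for all $\psi,\phi\in\mathcal H$. For both directions we will write $\Gamma = A+BK$ and expand this identity term by term, matching pieces of different sesquilinearity type.

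For the forward direction, if $\Gamma$ is partial antiunitary then Theorem \ref{thm:partialAntiunitaryIsUnitaryElement} gives that $\Gamma$ is $\bbR$-unitary, hence $\bbR$-linear, hence of the form $A+BK$ by Theorem \ref{thm:equiv_defRlinear}. I would then expand the left-hand side of the identity into the four pieces $\langle A\psi,A\phi\rangle$, $\langle A\psi, BK\phi\rangle$, $\langle BK\psi, A\phi\rangle$, and $\langle BK\psi, BK\phi\rangle$. To isolate these, I would substitute $\psi\mapsto e^{i\alpha}\psi$ and $\phi\mapsto e^{i\beta}\phi$: the four terms acquire the distinct phase factors $e^{i(\beta-\alpha)}$, $e^{-i(\alpha+\beta)}$, $e^{i(\alpha+\beta)}$, $e^{i(\alpha-\beta)}$, while the two right-hand-side terms carry only $e^{i(\beta-\alpha)}$ and $e^{i(\alpha-\beta)}$. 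Fourier-separating in $(\alpha,\beta)$ then forces the four equalities $\langle A\psi,A\phi\rangle = \langle \Theta\psi,\Theta\phi\rangle$, $\langle A\psi,BK\phi\rangle = 0$, $\langle BK\psi,A\phi\rangle = 0$, and $\langle BK\psi,BK\phi\rangle = \langle \Pi\phi,\Pi\psi\rangle$ to hold for all $\psi,\phi$. The first yields $A^\dag A = \Theta$; the second yields $A^\dag B = 0$ (and the third is its adjoint, so is redundant); the fourth, after carefully unwinding $K$ in coordinates, yields $B^\dag B = \bar\Pi$.

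For the reverse direction, assume $\Gamma = A+BK$ with the three stated conditions. I would expand $\langle \Gamma\psi,\Gamma\phi\rangle$ into the same four pieces: the cross terms vanish by $A^\dag B = 0$ (and its adjoint), while the $AA$ piece equals $\langle\psi,A^\dag A\phi\rangle = \langle\psi,\Theta\phi\rangle = \langle \Theta\psi,\Theta\phi\rangle$, and the $BB$ piece equals $\langle \Pi\phi,\Pi\psi\rangle$ by the $B^\dag B = \bar\Pi$ identity. The resulting identity is precisely the hypothesis of Theorem \ref{thm3:partialAntiunitary}, which then delivers partial antiunitarity with respect to $\Xi$.

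The main obstacle is the careful bookkeeping of complex conjugation introduced by $K$ inside the Hermitian inner product, and specifically seeing why the condition on $B$ must be $B^\dag B = \bar\Pi$ rather than $B^\dag B = \Pi$. I would confirm this directly in components: writing $\langle BK\psi,BK\phi\rangle = \sum_{i,j}\psi_i(B^\dag B)_{ij}\bar\phi_j$ and $\langle \Pi\phi,\Pi\psi\rangle = \sum_{i,j}\psi_i\Pi_{ji}\bar\phi_j$, one matches coefficients using Hermiticity of $\Pi$ (so $\Pi_{ji} = \bar\Pi_{ij}$) to obtain $B^\dag B = \bar\Pi$. Everything else is straightforward sesquilinear algebra.
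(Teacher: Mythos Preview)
Your proposal is correct, and the backward direction matches the paper's argument essentially verbatim. The forward direction, however, is genuinely different from the paper's route. The paper invokes the structural decomposition of Theorem~\ref{thm:main1}, writing $\Gamma = UK_LV$ for unitaries $U,V$ and a language $L$ with $V(\Xi)=\bbH(L)$; from this it reads off $A = U\Theta_L V$ and $B = U\Pi_L\bar V$ explicitly and then verifies the three conditions by direct computation using $\Theta_L\Pi_L=0$, $V^\dag\Theta_L V=\Theta$, $V^\dag\Pi_L V=\Pi$. Your argument instead stays at the level of the inner-product identity from Theorem~\ref{thm3:partialAntiunitary} and uses the phase substitution $\psi\mapsto e^{i\alpha}\psi$, $\phi\mapsto e^{i\beta}\phi$ to Fourier-separate the four sesquilinear pieces. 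This is a clean and self-contained trick that avoids relying on the $UK_LV$ normal form; the paper's approach, by contrast, is more constructive (it actually exhibits $A$ and $B$) but leans on the earlier structure theorem. Both are short and legitimate; your component check that the $BB$ piece forces $B^\dag B=\bar\Pi$ rather than $\Pi$ is exactly the right place to be careful, and your computation there is correct.
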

\begin{proof}
We first prove the forward direction. Let $\Gamma$ be a partial antiunitary operator with respect to $\Xi$. Then by Theorem \ref{thm:main1}, there exist unitaries $U$ and $V$, and a language $L$ satisfying $V(\Xi) = \bbH(L)$ such that 
\begin{eqnarray}
\Gamma &=& UK_L V \nn
&=& U(\Theta_L + \Pi_L K)V \nn
&=& (U\Theta_L V) + (U \Pi_L \bar V) K \nn
&\equiv & A+BK ,
\end{eqnarray}
where $A = U \Theta_L V$ and $B = U \Pi_L \bar V$.

Next, we check that the conditions \eq{eq:fourthCharac1}--\eqref{eq:fourthCharac3} are satisfied.

\begin{eqnarray}
A^\dag B &=& (U \Theta_L V)^\dag (U \Pi_L \bar V) \nn
&=& 
V^\dag \Theta_L \Pi_L \bar V \nn
&=& 0 .
\end{eqnarray}
\begin{eqnarray}
A^\dag A &=& (U \Theta_L V)^\dag (U \Theta_L V) \nn
&=& 
V^\dag \Theta_L V \nn
&=& 
\Theta .
\end{eqnarray}
\begin{eqnarray}
B^\dag B &=& (U \Pi_L \bar V)^\dag (U \Pi_L \bar V) \nn
&=& 
V^T \Pi_L \bar V \nn
&=& 
\bar{V^\dag \Pi_L V} \nn
&=& \bar\Pi .
\end{eqnarray}
where we used the following identities that follow from \eq{eq:commutprojU}:
\begin{equation}
\Theta_L V = \proj_{\bbH^\perp(L)} V = V\proj_{V^\dag(\bbH^\perp(L))} = V \proj_{\xi^\perp} = V\Theta . 
\end{equation}
and
\begin{equation}
\Pi_L V = \proj_{\bbH(L)} V = V\proj_{V^\dag(\bbH(L))} = V \proj_{\xi} = V\Pi .
\end{equation}

We now prove the backward direction. Let $\Gamma = A+BK$, with $A$ and $B$ satisfying \eq{eq:fourthCharac1}--\eqref{eq:fourthCharac3}. Let $\psi$ and $\phi$ be arbitrary. Then,
\begin{eqnarray} 
\langle \Gamma \psi, \Gamma \phi \rangle &=& \langle (A+BK) \psi, (A+BK) \phi \rangle \nn
&=& \langle A \psi, A \phi\rangle + \langle A \psi, B \bar\phi \rangle + \langle B \bar \psi, A\phi \rangle + \langle B \bar \psi, B \bar \phi\rangle \nn
&=& \langle  \psi, A^\dag A \phi\rangle + \langle  \psi, A^\dag B \bar\phi \rangle + \langle  \bar \psi, B^\dag A\phi \rangle + \langle \psi, B^\dag B \bar \phi\rangle \nn 
&=& \langle \psi, \Theta \phi\rangle + 0+0+ \langle \bar \psi, \bar \Pi \bar \phi \rangle \nn
&=& \langle \Theta \psi, \Theta \phi \rangle + \langle \Pi \phi, \Pi \psi \rangle.
\end{eqnarray}
\end{proof}

Theorem \ref{thm:characterizationfourth} gives us an efficient algorithm (in terms of the dimensions of the matrices $A$ and $B$) for deciding if a given $\bbR$-linear operator $A+BK$ is partial antiunitary. We simply need to compute the matrices $s_1 = A^\dag B$, $s_2 = A^\dag A$ and $s_3= B^\dag B$, and check that $s_1=0$, $s_2 = s_2^2 = s_2^\dag$ (which is the definition of an orthogonal projection) and $s_3 = \overline{1-s_2}$ (since $\Theta + \Pi = I$). The theorem also gives us the following corollary.

\begin{corollary}
Let $\Gamma = A+BK$ be $\bbR$-unitary. Then $\Gamma$ is partial antiunitary with respect to a subspace $\Xi$ if and only if 

\begin{equation}
\label{eq:condcorr}
A^\dag A = \Theta \quad \mbox{and} \quad  A^\dag B = 0 ,
\end{equation}
where $\Theta = \proj_{\Xi^\perp}$.
\end{corollary}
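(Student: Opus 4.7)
The plan is to derive this corollary directly from Theorem~\ref{thm:characterizationfourth}, observing that under the additional assumption of $\bbR$-unitarity, the condition $B^\dag B = \bar\Pi$ becomes redundant and is implied by the other two.

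The forward direction is immediate: if $\Gamma = A + BK$ is partial antiunitary with respect to $\Xi$, then Theorem~\ref{thm:characterizationfourth} already yields $A^\dag B = 0$ and $A^\dag A = \Theta$ (along with $B^\dag B = \bar\Pi$, which we simply drop from the statement).

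For the backward direction, assume $\Gamma = A+BK$ is $\bbR$-unitary and satisfies \eqref{eq:condcorr}. By Theorem~\ref{thm:OrthogonalABK2} (or equivalently part 2 of Theorem~\ref{thm:mainTheorem1}), the matrices $A,B$ obey the $\bbR$-unitarity constraints
\begin{equation}
A^\dag A + B^T \bar B = I, \qquad A^\dag B + B^T \bar A = 0.
\end{equation}
Substituting $A^\dag A = \Theta$ into the first constraint gives $B^T \bar B = I - \Theta = \Pi$. Taking the complex conjugate of both sides (and using $\overline{B^T \bar B} = B^\dag B$ together with $\bar I = I$ and the fact that $\Pi$ may fail to be real) yields $B^\dag B = \bar\Pi$. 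Thus all three hypotheses of Theorem~\ref{thm:characterizationfourth} are in force, so $\Gamma$ is partial antiunitary with respect to $\Xi$.

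There is no serious obstacle here; the only point that warrants a moment's care is the manipulation $\overline{B^T \bar B} = B^\dag B$ when conjugating the $\bbR$-unitarity relation, and the observation that the second $\bbR$-unitarity relation is automatically consistent with $A^\dag B = 0$ (since $B^T \bar A = \overline{B^\dag A} = \overline{(A^\dag B)^\dag} = 0$), so it imposes no additional constraint.
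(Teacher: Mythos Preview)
Your proof is correct and follows essentially the same route as the paper: both directions invoke Theorem~\ref{thm:characterizationfourth}, and for the backward direction you substitute $A^\dag A = \Theta$ into the first $\bbR$-unitarity constraint and conjugate to recover $B^\dag B = \bar\Pi$. Your added remark that the second $\bbR$-unitarity relation is automatically consistent with $A^\dag B = 0$ is a nice sanity check but not needed for the argument.
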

\begin{proof}
The forward direction follows immediately from \eq{eq:fourthCharac1} and \eq{eq:fourthCharac2}. Next, we prove the backward direction. Assume that \eq{eq:condcorr} holds. Then, \eq{eq:fourthCharac1} and \eq{eq:fourthCharac2} are immediately satisfied. Since $\Gamma$ is $\bbR$-unitary,
\begin{equation}
A^\dag A + B^T \bar B = I.
\end{equation}
Hence,
\begin{equation}
B^\dag B = \overline{(B^T \bar B)} = \overline{I - A^\dag A} = \overline{I-\Theta} = \bar \Pi.
\end{equation}

\end{proof}

We conclude this section by proving some closure properties of the set of partial antiunitaries. First, we show that the partial antiunitaries are closed under inverses (i.e. under $\dag$).
\begin{theorem}
If $\Gamma$ is partial antiunitary, then  $\Gamma^\dag$ is also partial antiunitary.
\end{theorem}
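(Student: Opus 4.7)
The plan is to leverage the decomposition provided by Theorem~\ref{thm:main1} together with the product rules for partial antiunitaries in Theorem~\ref{thm:prodPartialAntiunitary}. Suppose $\Gamma$ is a partial antiunitary with respect to a subspace $\Xi\subseteq\cH$. By Theorem~\ref{thm:main1}, there exist unitaries $U,V$ and a language $L\subseteq\{0,1\}^n$ with $V(\Xi)=\bbH(L)$ such that $\Gamma = U K_L V$. Taking $\dag$ and using that $\dag$ reverses products of $\bbR$-linear operators (a short direct computation from \eq{eq:dagdef} and the product rule $(A+BK)(C+DK)=(AC+B\bar D)+(AD+B\bar C)K$, verifying that $((A+BK)(C+DK))^\dag = (C+DK)^\dag(A+BK)^\dag$ using $Y^\dag=\bar{Y}^T$), we obtain
\begin{equation*}
\Gamma^\dag = V^\dag\, K_L^\dag\, U^\dag.
\end{equation*}

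Next I would verify that $K_L^\dag = K_L$. Writing $K_L = \Theta_L + \Pi_L K$ (Proposition~\ref{prop:KL}), the definition of $\dag$ in \eq{eq:dagdef} gives $K_L^\dag = \Theta_L^\dag + \Pi_L^T K$. Since $\Pi_L$ and $\Theta_L$ are real diagonal (hence self-adjoint and symmetric), this equals $\Theta_L+\Pi_L K = K_L$. Consequently $\Gamma^\dag = V^\dag K_L U^\dag$.

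Now I would invoke Theorem~\ref{thm:prodPartialAntiunitary}. By Proposition~\ref{prop:KLispartial}, $K_L$ is a partial antiunitary with respect to $\bbH(L)$. Applying (I)$\Leftrightarrow$(II) to the unitary $U^\dag$ shows that $K_L U^\dag$ is a partial antiunitary with respect to $(U^\dag)^\dag(\bbH(L)) = U(\bbH(L))$. Applying (I)$\Leftrightarrow$(III) to the unitary $V^\dag$ then shows that $V^\dag K_L U^\dag$ is partial antiunitary with respect to the same subspace $U(\bbH(L))$. Thus $\Gamma^\dag$ is partial antiunitary (with respect to $U(\bbH(L))$), completing the proof.

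The only real bookkeeping to be careful with is step two: the identity $K_L^\dag=K_L$ and the fact that $\dag$ on $\bbR$-linear operators is an anti-homomorphism. Both follow routinely from \eq{eq:dagdef} and the $\bbR$-linear product rule; alternatively, one can bypass these entirely by verifying the characterization in Theorem~\ref{thm:characterizationfourth} directly for $\Gamma^\dag = A^\dag + B^T K$ using the conditions \eqref{eq:fourthCharac1}--\eqref{eq:fourthCharac3} for $\Gamma=A+BK$, which would give an alternative route should the product-rule verification prove delicate.
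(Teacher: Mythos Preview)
Your proof is correct and follows essentially the same route as the paper: decompose $\Gamma = U K_L V$ via Theorem~\ref{thm:main1}, take the dagger to get $\Gamma^\dag = V^\dag K_L U^\dag$, and then invoke Theorem~\ref{thm:prodPartialAntiunitary}. You are simply more explicit than the paper in verifying the intermediate facts $K_L^\dag = K_L$ and the anti-homomorphism property of $\dag$, and you additionally identify the subspace $U(\bbH(L))$ with respect to which $\Gamma^\dag$ is partial antiunitary.
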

\begin{proof}
If $\Gamma$ is partial antiunitary, then by Theorem \ref{thm:main1}, there exists a language $L$ and unitaries $U$ and $V$ such that $\Gamma = U K_L V$. Hence, $\Gamma^\dag = V^\dag K_L U^\dag$. Since partial antiunitaries are closed under multiplication by unitaries (by Theorem \ref{thm:prodPartialAntiunitary}), $\Gamma^\dag$ is partial antiunitary.\footnote{If $\Gamma$ is partial antiunitary with respect to $\Xi$ and $\Gamma^\dag$ is partial antiunitary with respect to $\Sigma$, one may wonder about the relation between $\Xi$ and $\Sigma$. Since, by Theorem~\ref{thm:characterizationfourth}, $\text{proj}_{\Xi^\perp}=A^\dag A$ and $\text{proj}_{\Sigma^\perp}=AA^\dag$, it is clear that at the very least $\dim\Xi=\dim\Sigma$, and thus these spaces are related by a unitary. To actually find the unitary, perform the polar decomposition of $A=UP$ into a unitary $U$ and positive semi-definite $P$. Then $\text{proj}_{\Xi^\perp}=P^2$ and $\text{proj}_{\Sigma^\perp}=U\text{proj}_{\Xi^\perp}U^\dag$.}
\end{proof}

Next, we show that the partial antiunitaries are not closed under multiplication. As a consequence, they are not a subgroup of the $\bbR$-unitaries, i.e. there exist $\bbR$-unitary operators that are not partial antiunitary.

\begin{theorem}\label{thm:partial_antiunitaries_not_group}
There exist partial antiunitaries $\Gamma$ and $\Delta$ such that $\Gamma \Delta$ is not partial antiunitary.
\end{theorem}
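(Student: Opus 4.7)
The plan is to exhibit an explicit single-qubit counterexample of the form $\Gamma = K_L H$ and $\Delta = K_L$, where $L = \{1\}$ so that $K_L$ is the partial complex conjugation acting as $a\ket 0 + b\ket 1\mapsto a\ket 0 + \bar b\ket 1$, and $H$ is the Hadamard gate. That both $\Gamma$ and $\Delta$ are partial antiunitary is immediate: $\Delta = K_{\{1\}}$ is partial antiunitary by Proposition~\ref{prop:KLispartial}, and $\Gamma$ is a product of a partial antiunitary and a unitary, which is partial antiunitary by Theorem~\ref{thm:prodPartialAntiunitary} (specifically (I)$\Rightarrow$(II)).

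The main step is to compute $\Gamma\Delta = K_L H K_L$ in the canonical form $A + BK$. Writing $K_L = \Theta_L + \Pi_L K$ with $\Theta_L = \ketbra 00$ and $\Pi_L = \ketbra 11$, and using $\bar H = H$, a short direct multiplication (or, equivalently, the composition rule $(A_1+B_1K)(A_2+B_2K) = (A_1A_2 + B_1\bar B_2) + (A_1B_2 + B_1\bar A_2)K$) gives
\begin{equation}
\Gamma\Delta = \frac{1}{\sqrt 2}(Z + XK),
\end{equation}
so that $A = Z/\sqrt 2$ and $B = X/\sqrt 2$. As a sanity check, one can verify the $\bbR$-unitarity constraints of Theorem~\ref{thm:OrthogonalABK}: $A^\dag A + B^T\bar B = \tfrac12 I + \tfrac12 I = I$ and $A^\dag B + B^T\bar A = \tfrac12 ZX + \tfrac12 XZ = 0$.

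To conclude, I invoke the fourth characterization of partial antiunitarity (Theorem~\ref{thm:characterizationfourth}): if $\Gamma\Delta$ were partial antiunitary with respect to some subspace $\Xi$, then $A^\dag A$ would equal the orthogonal projector $\Theta = \mathrm{proj}_{\Xi^\perp}$. But $A^\dag A = \tfrac12 I$ has every eigenvalue equal to $\tfrac12$, so it is not a projector. Hence $\Gamma\Delta$ is not partial antiunitary, completing the proof.

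The only mildly delicate step is picking the right pair: one needs the two partial antiunitaries to involve a genuine non-diagonal unitary between the two $K_L$ factors, since products of partial complex conjugations alone (like $CK_1$ and $CK_2$) just produce another $K_{L'}$ and would fail as a counterexample. Once a non-diagonal unitary like $H$ is sandwiched between the two $K_L$'s, the computation is routine.
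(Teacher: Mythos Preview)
Your proof is correct and follows essentially the same strategy as the paper's: exhibit an explicit single-qubit counterexample built from $K_L$ (with $L=\{1\}$) and a unitary, then verify that the resulting $A+BK$ fails one of the algebraic conditions in Theorem~\ref{thm:characterizationfourth}. The paper uses the slightly more elaborate pair $\Gamma=K_1H$, $\Delta=SHK_1$ and checks that $A^\dag B\neq 0$, whereas you use the simpler pair $\Gamma=K_LH$, $\Delta=K_L$ and check that $A^\dag A=\tfrac12 I$ is not a projector; both routes are valid, and your example is marginally cleaner.
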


\begin{proof}
On a single-qubit, let $K_1$ be the partial complex  conjugation operator with respect to the language $L = \{1\}$ (equivalently, $K_1 = CK_1$ is the controlled complex conjugation operator on the first (and only) register) and consider single-qubit operators $\Gamma = K_1 H$ and $\Delta = SH K_1$. These are both partial antiunitary by Theorem \ref{thm:prodPartialAntiunitary}. 
Their product, however, is
\begin{equation}
\Gamma \Delta=K_1 HSH K_1=\frac{1}{\sqrt2}\left(\begin{array}{cc}1&0\\0&k\end{array}\right)\left(\begin{array}{cc}e^{i\pi/4}&e^{-i\pi/4}\\e^{-i\pi/4}&e^{i\pi/4}\end{array}\right)\left(\begin{array}{cc}1&0\\0&k\end{array}\right)=\frac{1}{\sqrt2}e^{i\pi/4}\left(\begin{array}{cc}1&-ik\\k&-i\end{array}\right),
\end{equation}
where $k$ is the complex conjugation operator\footnote{In this proof, we represented $\bbR$-linear operators by matrices. Note that the elements of these matrices are themselves $\bbR$-linear operators, and do not belong to a field. We discuss matrix representations of $\bbR$-linear operators further in Appendix \ref{app:matrixrep}.}.

Hence, $\Gamma \Delta = A+BK$, where 
\begin{equation}
A = \frac 1{\sqrt 2} e^{i \pi/4} \mat{1 & 0\\ 0& -i}, \quad B = \frac 1{\sqrt 2} e^{i \pi/4} \mat{0& -i\\ 1& 0} ,
\end{equation}
from which it follows that
\begin{equation}
A^\dag A = \frac 12 I, \quad A^\dag B = \frac 12 Y \neq 0 ,
\end{equation}
which does not satisfy \eq{eq:condcorr}. Hence, $\Gamma \Delta$ is not partial antiunitary.\footnote{Alternatively, we can show more generally that any $\Omega=\left(\begin{smallmatrix}a&bk\\ck&d\end{smallmatrix}\right)$ is not partial antiunitary when either $a$ and $b$ are both nonzero or $c$ and $d$ are both nonzero. This is because setting $\Omega=A+BK$ and calculating $A^\dag B=\left(\begin{smallmatrix}\bar a&0\\0&\bar d\end{smallmatrix}\right)\left(\begin{smallmatrix}0&b\\c&0\end{smallmatrix}\right)=\left(\begin{smallmatrix}0&\bar ab\\\bar dc&0\end{smallmatrix}\right)\neq\left(\begin{smallmatrix}0&0\\0&0\end{smallmatrix}\right)$ in contradiction with Theorem~\ref{thm:characterizationfourth} and in particular Eq.~\eqref{eq:fourthCharac1}.}

\end{proof}

\section{Simulating partial antiunitary operators} \label{sec:rebitPartial}

In the previous section, we studied various properties of the set of partial antiunitary operators. In this section, we consider examples of these operators and
study their rebit simulation. We do this in two parts. First, we consider partial antiunitaries which are also unitary. This is in line with the top-down approach taken by the initial use of rebit simulation \cite{bennett1997strengths}, where it is shown that real-amplitude quantum computers are just as powerful as those with complex amplitudes. Second, we consider partial antiunitaries which are neither unitary nor linear -- operators for which a bottom-up simulation is necessary.

\subsection{Rebit simulation of unitaries: top-down perspective}\label{sec:top_down}

We study the rebit encoding of unitary operators from a circuit point of view. The encoding function $\mathcal P: U_n\rightarrow T_{n+1}$ is a group homomorphism (see Proposition~\ref{prop:PL_homomorphisms}), and hence $\mathcal P(UV) = \mathcal P(U) \mathcal P(V)$ for all unitary operators $U,V \in U_n$. This means that if $U$ is a unitary operator that is implemented by a circuit $C$ consisting of unitary gates $G_1,\ldots,G_k$, then $\mathcal P(U)$ is an orthogonal operator that can be implemented by the circuit $\mathcal P(C)$, where $\mathcal P(C)$ is the circuit formed from $C$ by replacing each gate $G_i$ in $C$ with $\mathcal P(G_i)$.

We make use of the following conventions. Let $G$ be a gate contained in an $n$-qubit circuit. We write $G_{i_1,\ldots,i_s}$ to mean that $G$ acts on registers $i_1,\ldots,i_s \in \set{1,\ldots,n}$. The encoded gate $\mathcal P(G)$ is now a gate in an $(n+1)$-rebit circuit. We will continue to use the labels $1,\ldots,n$ to denote the first $n$ registers, and will use the subscript $a$ to denote the last (ancilla) register. 


We will now give examples of various common gates $G_i$ and their rebit encodings $\mathcal P(G_i)$. 
\begin{prop}  \label{prop:exampleslinear}
Under the encoding $\mathcal P(\cdot)$, the gates below transform as follows
\begin{enumerate}[(1)]
\item $X_i \mapsto X_i$
\item $Y_i \mapsto X_i \cdot X_a \cdot Z_i \cdot Z_a=-Y_i\cdot Y_a$
\item $Z_i \mapsto Z_i$
\item $H_i \mapsto H_i$
\item $S_i \mapsto CX_{ia} \cdot CZ_{ia} = H_a \cdot CZ_{ia}\cdot H_a\cdot CZ_{ia}$
\item $T_i \mapsto CH_{ia}\cdot CZ_{ia}$
\item $CX_{ij} \mapsto CX_{ij}$
\item $CZ_{ij} \mapsto CZ_{ij}$
\item $CS_{ij} \mapsto CCX_{ija} \cdot CCZ_{ija} = H_a \cdot CCZ_{ija} \cdot H_a \cdot CCZ_{ija}$ \label{item:CSij}
\item $CCZ_{ijk} \mapsto CCZ_{ijk}$
\item $Y(\theta)_i:=\cos(\theta/2)I-i\sin(\theta/2)Y_i \mapsto Y(\theta)_i$
\item $e^{i\theta/2}Z(\theta)_i:=e^{i\theta/2}\left(\cos(\theta/2)I-i\sin(\theta/2)Z_i\right) \mapsto CY(2\theta)_{ia}$
\end{enumerate}
\end{prop}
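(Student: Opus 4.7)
The main tool is Eq.~\eqref{eq:rebitEncodingLinear}, namely $\mathcal{P}(A) = \Re A \otimes I_a + \Im A \otimes (XZ)_a$ for any linear operator $A$; this was derived as the $B=0$ specialization of Lemma~\ref{lem:imageABunderK}. The plan is to verify each item of the proposition by computing $\Re G$ and $\Im G$ for the gate $G$ at hand, plugging into this formula, and then matching $\mathcal{P}(G)$ against the claimed ancilla-extended circuit on the right-hand side. Throughout, I will use the identity $Y = iXZ$ (equivalently $-iY = XZ$), the fact that operators on distinct registers commute, and the Hadamard conjugation $HZH = X$.

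Items (1), (3), (4), (7), (8), (10), and (11) are all cases where the gate $G$ is already real. For (11), observe that $\cos(\theta/2)I - i\sin(\theta/2)Y = \cos(\theta/2)I + \sin(\theta/2)XZ$ has vanishing imaginary part. For each of these, Eq.~\eqref{eq:rebitEncodingLinear} collapses to $\mathcal{P}(G) = G \otimes I_a$, i.e.\ the logical gate is unchanged and acts trivially on the ancilla, as claimed.

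The remaining items each require a short computation. For (2), $\Re Y = 0$ and $\Im Y = XZ$ yield $\mathcal{P}(Y_i) = (XZ)_i \otimes (XZ)_a = X_i X_a Z_i Z_a$, and the identity $Y = iXZ$ rewrites this as $-Y_i Y_a$. For (5), $\Re S = |0\rangle\langle 0|$ and $\Im S = |1\rangle\langle 1|$, so $\mathcal{P}(S_i) = |0\rangle\langle 0|_i \otimes I_a + |1\rangle\langle 1|_i \otimes X_a Z_a = CX_{ia}\cdot CZ_{ia}$; the alternative form $H_a \cdot CZ_{ia} \cdot H_a \cdot CZ_{ia}$ follows from $H_a CZ_{ia} H_a = CX_{ia}$. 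Item (9) is analogous with one extra control, using $H_a CCZ_{ija} H_a = CCX_{ija}$. For (6), the $|1\rangle\langle 1|_i$ branch contributes $\cos(\pi/4)I_a + \sin(\pi/4)(XZ)_a$, which a direct $2\times 2$ computation shows equals $H_a Z_a$, giving $\mathcal{P}(T_i) = CH_{ia}\cdot CZ_{ia}$. For (12), one first rewrites $e^{i\theta/2}Z(\theta) = \mathrm{diag}(1,e^{i\theta})$; the $|1\rangle\langle 1|_i$ branch then contributes $\cos\theta\,I_a + \sin\theta\,(XZ)_a = Y(2\theta)_a$, so the encoding is $CY(2\theta)_{ia}$.

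No step is a substantial obstacle --- the entire proof is a catalogue of short computations anchored on Eq.~\eqref{eq:rebitEncodingLinear}. The only checks that go beyond pure bookkeeping are the two trigonometric identifications $\cos(\pi/4)I + \sin(\pi/4)XZ = HZ$ and $\cos\theta\,I + \sin\theta\,XZ = Y(2\theta)$, both of which follow by recognizing that $XZ$ generates a real planar rotation (equivalently, $XZ = -iY$), combined with the definition $Y(\phi) = \cos(\phi/2)I - i\sin(\phi/2)Y$.
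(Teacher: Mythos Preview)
Your proposal is correct and follows essentially the same approach as the paper: both identify the real gates (items (1), (3), (4), (7), (8), (10), (11)) as fixed points of $\mathcal P$, and for the remaining items both compute $\Re G$ and $\Im G$ and plug into Eq.~\eqref{eq:rebitEncodingLinear}, arriving at the same controlled-ancilla expressions via the same algebraic identifications.
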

\begin{proof}
Gates $G_i$ in (1), (3), (4), (7), (8), (10), and (11) have only real entries, and hence they map to themselves under $\mathcal P$. For the other cases, we make use of \eq{eq:rebitEncodingLinear}. 
\begin{itemize}
\item For (2), $Y_i = i X_i Z_i \mapsto (XZ)_i \otimes (XZ)_a = X_i X_a Z_i Z_a$. 
\item For (5), $S_i = \ketbra 00 + i \ketbra 11 \mapsto \ketbra 00_i \otimes I_a + \ketbra 11_1\otimes X_a Z_a = C(XZ)_{ia} = CX_{ia} CZ_{ia} = H_a CZ_{ia} H_a CZ_{ia}$. 
\item For (6), $T = \ketbra 00 + e^{i \pi/4} \ketbra 11 = \ketbra 00 + \frac 1{\sqrt 2} \ketbra 11 + \frac 1{\sqrt 2} i \ketbra 11 \mapsto 
( \ketbra 00 + \frac 1{\sqrt 2} \ketbra 11)_i \otimes I_a + \frac 1{\sqrt 2} \ketbra 11_i \otimes X_a Z_a \mapsto  \ketbra 00_i  \otimes I_a + \frac 1{\sqrt 2} \ketbra 11_i \otimes (I+X Z)_a = C(\frac 1{\sqrt 2} (I+XZ))_{ia} = C(HZ)_{ia} = CH_{ia} CZ_{ia}$. 
\item For (9), $CS_{ij} = \ketbra 00_i \otimes I_j + \ketbra 11_i \otimes S_j = (\ketbra {00}{00} + \ketbra {01}{01}+\ketbra {10}{10} + i \ketbra {11}{11})_{ij} \mapsto (\ketbra {00}{00} + \ketbra {01}{01}+\ketbra {10}{10})_{ij}\otimes I_a + \ketbra {11}{11}_{ij} \otimes X_a Z_a = CC(XZ)_{ija} = CCX_{ija} CCZ_{ija} = H_a CCZ_{ija} H_a CCZ_{ija}$.
\item For (12), $e^{i\theta/2}Z(\theta)_i=\ket{0}\bra{0}_i+e^{i\theta}\ket{1}\bra{1}_i\mapsto (\ket{0}\bra{0}+\cos(\theta)\ket{1}\bra{1})_i\otimes I_a-i\sin(\theta)\ket{1}\bra{1}_i\otimes Y_a=\ket{0}\bra{0}_i\otimes I_a+\ket{1}\bra{1}_i\otimes(\cos(\theta)I_a-i\sin(\theta)Y_a)=CY(2\theta)_{ia}$.
\end{itemize}
\end{proof}

In quantum mechanics, the global phase of a quantum state does not play any physical role. Therefore, two unitary operators $U_1$ and $U_2$ that differ by a global phase (i.e. $U_1 = e^{i\theta} U_2$ for some $\theta \in \mathbb R$) are physically equivalent. How does this equivalence manifest in the rebit encoding? To start answering this question, we first define $G(\theta) = e^{i\theta} I$ to be the global phase operator with angle $\theta$. Since $G(\theta)$ is unitary, we could find its image under the rebit encoding $\mathcal P(\cdot)$. As we shall now see, the image of $G(\theta)$ under $\mathcal P$ is the rotation matrix $R(\theta) = \left(\begin{matrix}
\cos \theta & -\sin\theta \\ \sin\theta & \cos\theta
\end{matrix}\right)$ acting on the ancilla register.

\begin{prop} \label{prop:globalphase}
$\mathcal P:G(\theta) \mapsto R(\theta)_a$.
\end{prop}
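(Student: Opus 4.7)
The proof is essentially a direct computation using the formula from Lemma~\ref{lem:imageABunderK} (or equivalently Eq.~\eqref{eq:rebitEncodingLinear}), which gives $\mathcal P(A) = \Re A \otimes I + \Im A \otimes XZ$ for any complex-linear $A$. The plan is to apply this to $A = G(\theta) = e^{i\theta} I$ and then identify the resulting $2\times 2$ block on the ancilla as $R(\theta)$.

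Concretely, first write $G(\theta) = \cos\theta\, I + i\sin\theta\, I$, so that $\Re G(\theta) = \cos\theta\, I$ and $\Im G(\theta) = \sin\theta\, I$. Substituting into Eq.~\eqref{eq:rebitEncodingLinear} yields
\begin{equation}
\mathcal P(G(\theta)) = \cos\theta\, I \otimes I_a + \sin\theta\, I \otimes (XZ)_a = I \otimes \bigl(\cos\theta\, I_a + \sin\theta\, (XZ)_a\bigr).
\end{equation}

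Next, I would verify by a one-line matrix computation that
\begin{equation}
\cos\theta\, I + \sin\theta\, XZ = \cos\theta \begin{pmatrix} 1 & 0 \\ 0 & 1 \end{pmatrix} + \sin\theta \begin{pmatrix} 0 & -1 \\ 1 & 0 \end{pmatrix} = \begin{pmatrix} \cos\theta & -\sin\theta \\ \sin\theta & \cos\theta \end{pmatrix} = R(\theta),
\end{equation}
using $XZ = \left(\begin{smallmatrix} 0 & -1 \\ 1 & 0 \end{smallmatrix}\right)$. Hence $\mathcal P(G(\theta)) = I \otimes R(\theta)_a$, which is $R(\theta)$ acting nontrivially only on the ancilla, as claimed.

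There is no real obstacle here; the only thing one has to be a bit careful about is the sign convention for $XZ$ (as opposed to $ZX$), since the rebit encoding in Eq.~\eqref{eq:rebitEncodingLinear} specifically pairs $\Im A$ with $XZ$. Once that is fixed, the identification with the standard rotation matrix $R(\theta)$ is immediate. As a sanity check, one can confirm that $R(\theta)$ is orthogonal (consistent with $G(\theta)$ being unitary, hence $\mathcal P(G(\theta))$ being orthogonal by Theorem~\ref{thm:OrthogonalABK}) and that $R(\theta_1)R(\theta_2) = R(\theta_1+\theta_2)$ matches the homomorphism property of $\mathcal P$ applied to $G(\theta_1)G(\theta_2) = G(\theta_1+\theta_2)$.
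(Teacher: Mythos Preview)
Your proof is correct and follows essentially the same approach as the paper: both apply the formula $\mathcal P(A) = \Re A \otimes I + \Im A \otimes XZ$ to $A = e^{i\theta}I$, split into real and imaginary parts, and identify the resulting ancilla operator as $R(\theta)$. Your version is slightly more explicit in citing Eq.~\eqref{eq:rebitEncodingLinear} and includes some sanity checks, but the argument is the same.
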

\begin{proof} We compute the action of $\mathcal P$ on $G(\theta)$.
\begin{eqnarray*}
\mathcal P:G(\theta) = (\cos\theta + i \sin\theta)I &\mapsto & (\cos\theta) I  \otimes \left(\begin{matrix}
1 & 0 \\ 0 & 1
\end{matrix}\right)_a + (\sin\theta) I \otimes \left(\begin{matrix}
0 & -1 \\ 1 & 0
\end{matrix}\right)_a  \\
&=& I \otimes \left(\begin{matrix}
\cos \theta & -\sin\theta \\ \sin\theta & \cos\theta
\end{matrix}\right)_a = R(\theta)_a.
\end{eqnarray*} 
\end{proof}

Since the encoding of the global phase operator is restricted to the ancilla, it follows that measurements on the other $n$ rebits will not give information about the global phase. By Proposition~\ref{prop:rebitMeasurement} these are exactly the measurements of rebits in $P$ that correspond to qubit measurements in the simulated space $L$. On the other hand, measurements on the ancilla can yield information about the global phase of the simulated state, as we described in Section~\ref{sec:bottom_up_tomography}.



\subsection{Rebit simulation of non-unitaries: bottom-up perspective}

We now give examples of the rebit simulation of various partial antiunitaries that are not unitary (and not linear).
As noted earlier, these are therefore operators for which a bottom-up simulation is necessary. We then discuss the partial complex conjugation operator $K_L$ for an arbitrary language $L\subseteq\{0,1\}^n$ and make a connection between the complexity of deciding $L$ and the complexity of simulating $K_L$.

As before, we use the indices $i, j, k$ to refer to any of the first $n$ registers, and $a$ to refer to the ancilla register. Given that we are allowed to perform any orthogonal gate from $T_{n+1}$ on the $n+1$ rebits in the simulator, to find non-unitary simulations we can just try various orthogonal gates and see if any decode to non-unitary operators on the $n$ qubit system being simulated. But notice, if an orthogonal gate $T\in T_n$ does not act on the ancilla register, its image under $\mathcal L$ is itself, since $\mathcal L (T \otimes I_a) = T$. Hence, any gate not in $\mathcal P(U_n)$ must necessarily act nontrivially on the $a$th register. For instance,
\begin{prop}  \label{prop:examplesNonlinear}
Under the rebit decoding $\mathcal L(\cdot)$, the gates below transform as follows
\begin{enumerate}[(I)]
\item \label{enum:Ha} $H_a \mapsto G(\frac \pi 4) K$
\item \label{enum:Za} $Z_a \mapsto K$
\item \label{enum:CZia} $CZ_{ia} \mapsto CK_i$
\item \label{enum:CCZija} $CCZ_{ija} \mapsto CCK_{ij}$
\item \label{enum:ChZ} $C^{h}Z_{c_1,c_2,\dots,c_h,a}\mapsto C^hK_{c_1,c_2,\dots,c_h}$ .
\end{enumerate}
\end{prop}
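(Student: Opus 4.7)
The plan is to verify each item by leveraging Lemma~\ref{lem:imageWunderL}, which provides an explicit formula for $\mathcal{L}(W)$ in terms of partial traces over the ancilla. The key observation is that items (II)--(V) all fit directly into the framework of partial complex conjugation established in Theorem~\ref{thm:PKLeqn}, while (I) requires a short direct calculation.

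For items (III)--(V), the cleanest approach is to recognize each gate as the image under $\mathcal{P}$ of a partial complex conjugation operator. Recall from Theorem~\ref{thm:PKLeqn} that for any language $L \subseteq \{0,1\}^n$, $\mathcal{P}(K_L) = \Theta_L \otimes I_a + \Pi_L \otimes Z_a$. In (III), taking $L = \{x : x_i = 1\}$ makes the right-hand side exactly $CZ_{ia}$, and $K_L$ by definition equals $CK_i$. In (IV), setting $L = \{x : x_i x_j = 1\}$ gives $\mathcal{P}(CCK_{ij}) = CCZ_{ija}$. In (V), taking $L = \{x : x_{c_1}\cdots x_{c_h} = 1\}$ yields $\mathcal{P}(C^h K_{c_1,\ldots,c_h}) = C^h Z_{c_1,\ldots,c_h,a}$. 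Since $\mathcal{L}$ and $\mathcal{P}$ are mutual inverses (Proposition~\ref{prop:LP_are_inverses}), applying $\mathcal{L}$ to both sides of each identity gives the claims. Item (II) is then either the $L = \{0,1\}^n$ instance of the same argument (so $\Pi_L = I$ and $\Theta_L = 0$), or can be read off directly from Eq.~\eqref{eq:LZa}.

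Only item (I) demands computation. I would apply Lemma~\ref{lem:imageWunderL} to $W = I \otimes H_a$, writing $H_a = (X_a + Z_a)/\sqrt{2}$ and using tracelessness of the single-qubit Pauli operators $X_a$, $Z_a$, $X_a Z_a$. The first partial trace $\tfrac{1}{2}\tr_a[H_a(I_a - iX_aZ_a)]$ reduces to $\tfrac{1}{2\sqrt{2}}(\tr X_a + \tr Z_a - i\tr(X_a X_a Z_a) - i\tr(Z_a X_a Z_a))$, all of whose summands vanish. The second, $\tfrac{1}{2}\tr_a[H_a(Z_a + iX_a)]$, contains the only nonvanishing contributions $\tr(Z_a^2)$ and $\tr(X_a^2)$, yielding $e^{i\pi/4}$ after the factor $\tfrac{1}{2\sqrt{2}}(1+i)\cdot 2$. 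Thus $\mathcal{L}(H_a) = e^{i\pi/4} I \cdot K = G(\tfrac{\pi}{4}) K$.

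No step is genuinely difficult; the whole proposition is a bookkeeping exercise. The mild obstacle, if any, is ensuring the phase factor in (I) comes out correctly as $e^{i\pi/4}$ rather than some $\pm 1$ or $\pm i$ variant, so tracking signs and the $1/\sqrt{2}$ in $H$ carefully is the most error-prone part.
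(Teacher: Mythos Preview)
Your proposal is correct. It diverges from the paper's argument in two places, and the differences are worth noting.

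For items (III)--(V), the paper proves (III) by a direct state-level computation: it writes $\ket{\psi} = \sum_x \psi_x \ket{x}$, tracks $\mathcal{P}(\ket{\psi})$ through $CZ_{ia}$, then applies $\mathcal{L}$ and reads off $CK_i$. Items (IV)--(V) are dismissed as analogous. Your route via Theorem~\ref{thm:PKLeqn} is cleaner and more uniform: once one recognizes that $\Theta_L \otimes I_a + \Pi_L \otimes Z_a$ is exactly a controlled-$Z$ onto the ancilla with control condition $x \in L$, all of (II)--(V) fall out at once by choosing $L$ appropriately and invoking $\mathcal{L} = \mathcal{P}^{-1}$. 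This buys you a single unified argument rather than a repeated calculation.

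For item (I), the roles are reversed. The paper avoids trace computations entirely by observing the matrix factorization $H = R(\tfrac{\pi}{4})Z$, so that $\mathcal{L}(H_a) = \mathcal{L}(R(\tfrac{\pi}{4})_a)\,\mathcal{L}(Z_a) = G(\tfrac{\pi}{4})K$ follows immediately from Proposition~\ref{prop:globalphase}, item (II), and the homomorphism property of $\mathcal{L}$. Your direct application of Lemma~\ref{lem:imageWunderL} is perfectly valid and your bookkeeping is right, but the paper's factorization sidesteps exactly the sign-and-phase tracking you flagged as the most error-prone step.
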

\begin{proof}

Note that (\ref{enum:Za}) was proved in \eq{eq:LZa}. So next,
we prove (\ref{enum:Ha}): First, note that 
$$ H = \frac 1{\sqrt 2} \left(\begin{matrix} 1 & 1 \\ 1 & -1 \end{matrix}\right) = \left(\begin{matrix} \cos\frac\pi 4 & -\sin\frac\pi 4 \\ \sin\frac\pi 4 & \cos\frac\pi 4 \end{matrix}\right)\left(\begin{matrix} 1 & 0 \\ 0 & -1 \end{matrix}\right) = R\left(\frac\pi 4\right) Z,$$
which implies that
\begin{equation}
\mathcal L(H_a) = \mathcal L\left(R\left(\frac \pi 4\right) Z_a\right) = \mathcal L\left(R\left(\frac \pi 4\right)\right)\mathcal L(Z_a) = G\left(\frac \pi 4\right) K.
\end{equation}
where we used the fact that $\mathcal L$ is a homomorphism for the second equality, and \eq{prop:globalphase} and \eq{eq:LZa} for the third equality. \newline
For (\ref{enum:CZia}), we make use of \eq{eq:LWDefn}. By denoting $\psi_{x0} = \Re \psi_x$ and $\psi_{x1} = \Im \psi_x$, we get
\begin{eqnarray*}
\mathcal L\left(CZ_{ia}\right) \left(\sum_x \psi_x \ket x\right) 
&=& \mathcal L\left( CZ_{ia} \sum_{xa} \psi_{xa} \ket{xa} \right) \\
&=& \mathcal L \left( \sum_{xa} (-1)^{x_i a} \psi_{xa} \ket{xa}\right) \\
&=& \mathcal L \left( \sum_x\ket x \left(\psi_{x0} \ket 0 + (-1)^{x_i} \psi_{x1}\ket 1\right) \right) \\
&=& \sum_x \ket x (\psi_{x0} + i (-1)^{x_i} \psi_{x1}) \\
&=& \sum_{x:x_i = 0} (\psi_{x0}+i \psi_{x1})\ket x + \sum_{x:x_i = 1} (\psi_{x0} - i \psi_{x1})\ket x \\
&=& \sum_{x:x_i=0} \psi_x \ket x + \sum_{x:x_i=1} \bar{\psi}_x \ket x \\
&=& CK_i  \left(\sum_x \psi_x \ket x\right).
\end{eqnarray*}
Hence, $\mathcal L(CZ_{ia}) = CK_i$. For (\ref{enum:CCZija}) and (\ref{enum:ChZ}) the arguments are analogous and straightforward.
\end{proof}

\subsubsection{Simulation of the partial complex conjugation operator}
Next, we consider the partial complex conjugation operator $K_L$ defined in \eq{eq:KLdef}, where $L \subseteq \set{0,1}^n$ is some language. We start by expressing the image of $K_L$ under $\mathcal P$ (characterized, for example, in Theorem \ref{thm:PKLeqn}) in terms of the indicator function of the language $L$.

\begin{prop}
Let $L\subseteq \set{0,1}^n$. Then,
\begin{equation}
\label{eq:outputofPKL}
\mathcal P(K_L) = \sum_{x\in \set{0,1}^n}\sum_{a\in \set{0,1}} (-1)^{aL(x)} \ketbra{xa}{xa},
\end{equation}
where $L(x)$ is the indicator function of $L$.
\end{prop}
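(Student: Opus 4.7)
The plan is to derive this identity directly from Theorem~\ref{thm:PKLeqn}, which already gives the compact tensor-product form $\mathcal P(K_L) = \Theta_L \otimes I_a + \Pi_L \otimes Z_a$. Since the claimed expression is just a single-sum rewrite of this, no real new content is needed — only an unpacking of notation and a check that the sign pattern $(-1)^{a L(x)}$ correctly captures the two cases $x \in L$ and $x \notin L$.

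First I would write out $\Pi_L = \sum_{x \in L} \ketbra{x}{x}$ and $\Theta_L = \sum_{x \notin L} \ketbra{x}{x}$ by their definitions. Next, I would expand the ancilla factors in the computational basis: $I_a = \ketbra{0}{0}_a + \ketbra{1}{1}_a$ and $Z_a = \ketbra{0}{0}_a - \ketbra{1}{1}_a$. Substituting these into the expression from Theorem~\ref{thm:PKLeqn} and distributing produces
\begin{equation}
\mathcal P(K_L) = \sum_{x \notin L} \sum_{a \in \{0,1\}} \ketbra{xa}{xa} + \sum_{x \in L} \sum_{a \in \{0,1\}} (-1)^{a} \ketbra{xa}{xa}.
\end{equation}

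Finally, I would merge the two outer sums over $x \in \{0,1\}^n$ by observing that the sign in front of $\ketbra{xa}{xa}$ is exactly $(-1)^{a L(x)}$: when $x \notin L$ we have $L(x) = 0$ so the sign equals $(-1)^0 = 1$, and when $x \in L$ we have $L(x) = 1$ so the sign equals $(-1)^a$. This matches both terms above and yields the desired expression \eqref{eq:outputofPKL}. There is no real obstacle here — the only thing to be careful about is aligning the indicator convention ($L(x) = 1$ iff $x \in L$) with the $\Pi_L$-versus-$\Theta_L$ split inherited from Theorem~\ref{thm:PKLeqn}.
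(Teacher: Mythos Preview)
Your proposal is correct and follows essentially the same approach as the paper's own proof: both start from Theorem~\ref{thm:PKLeqn}, expand $\Theta_L$, $\Pi_L$, $I_a$, and $Z_a$ in the computational basis, and then merge the two cases via the indicator $L(x)$ to obtain the single-sum form.
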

\begin{proof}
We make use of Theorem \ref{thm:PKLeqn}.

\begin{eqnarray}
\mathcal P(K_L) &=& \Theta_L \otimes I + \Pi_L \otimes Z \nn
&=& \sum_{x\not\in L} \ketbra xx \otimes \sum_{a\in \set{0,1}} \ketbra aa + \sum_{x\in L} \ketbra xx \otimes \sum_{a\in \set{0,1}} (-1)^a \ketbra aa \nn
&=& \sum_{x\in \set{0,1}^n}\sum_{a\in \set{0,1}} (-1)^{aL(x)} \ketbra{xa}{xa} .
\end{eqnarray}

\end{proof}

We now show that if $L\subseteq \set{0,1}^n$ is decidable by some quantum circuit $C_L$, then we can construct a quantum circuit that implements the operator $\mathcal P(K_L)$. Here, we say that $C_L$ decides $L$ if when given input $\ket x$, $C_L$ outputs $\ket{L(x)}$. More precisely, taking into account all ancilla registers, the action of $C_L$ may be described by
\begin{equation} \label{eq:actionofCL}
C_L  \ket x_{1,\ldots,n} \ket 0_\alpha = \ket{L(x)}_1\ket{j(x)}_{2,\ldots,n,\alpha},
\end{equation}
where $j(x)$ are junk bits.
\begin{prop} \label{prop:PKL1}
Let $L\subseteq \set{0,1}^n$. Let $C_L=(C_L)_{1,\ldots,n,a}$ be a quantum circuit that acts on basis states according to \eq{eq:actionofCL}. Then 
\begin{equation} \label{eq:PKL1}
\mathcal P(K_L)_{1,\ldots,n,a} = \bra 0_\alpha (C_L^\dag)_{1,\ldots,n,a} CZ_{1a} (C_L)_{1,\ldots,n,a} \ket 0_\alpha.
\end{equation}
\end{prop}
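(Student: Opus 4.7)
The plan is to verify the identity by evaluating both sides on an arbitrary computational basis state of the $n+1$ rebits and showing that they agree. Since $\mathcal{P}(K_L)$ is diagonal in the computational basis by Eq.~\eqref{eq:outputofPKL}, it suffices to show that the RHS of \eqref{eq:PKL1} is also diagonal and picks up the same phase $(-1)^{y L(x)}$ on the basis state $\ket{x}_{1,\ldots,n}\ket{y}_a$ (with $x\in\{0,1\}^n$ and $y\in\{0,1\}$). Note that here $\alpha$ denotes the (junk) ancilla register of the classical oracle circuit $C_L$, distinct from the rebit ancilla $a$.

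First I would trace the action of the RHS step by step. Starting from $\ket{x}_{1,\ldots,n}\ket{y}_a\ket{0}_\alpha$, applying $C_L$ on registers $1,\ldots,n,\alpha$ yields $\ket{L(x)}_1\ket{j(x)}_{2,\ldots,n,\alpha}\ket{y}_a$ by Eq.~\eqref{eq:actionofCL}. Next, $CZ_{1a}$ multiplies by $(-1)^{L(x)\cdot y}$, since register $1$ now holds $L(x)$ and register $a$ holds $y$. Then $C_L^\dag$ uncomputes and restores $\ket{x}_{1,\ldots,n}\ket{0}_\alpha$, leaving the overall state $(-1)^{y L(x)}\ket{x}_{1,\ldots,n}\ket{y}_a\ket{0}_\alpha$. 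Finally, sandwiching with $\bra{0}_\alpha\cdots\ket{0}_\alpha$ projects out the $\alpha$ register cleanly, giving $(-1)^{y L(x)}\ket{x}_{1,\ldots,n}\ket{y}_a$.

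Comparing with Eq.~\eqref{eq:outputofPKL}, this is exactly $\mathcal{P}(K_L)\ket{x}_{1,\ldots,n}\ket{y}_a$. Since computational basis states of the $n+1$ rebits span $\mathcal{H}_{n+1}(\mathbb R)$ and both sides are $\mathbb R$-linear (indeed, orthogonal), agreement on basis states extends to agreement on all states, proving \eqref{eq:PKL1}.

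I do not anticipate a serious obstacle: the argument is essentially the standard phase kickback / uncompute pattern, and the main thing to be careful about is bookkeeping of registers, in particular distinguishing the rebit ancilla $a$ from the oracle ancilla $\alpha$, and confirming that the $CZ_{1a}$ acts on the first output register of $C_L$ (which stores $L(x)$) rather than on the input register. Provided this register assignment is stated cleanly, the computation above completes the proof.
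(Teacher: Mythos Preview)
Your proposal is correct and follows essentially the same approach as the paper: both trace the action of $C_L$, $CZ_{1a}$, and $C_L^\dag$ on states (you on computational basis states and then extend by linearity, the paper directly on an arbitrary superposition $\sum_{xa}\psi_{xa}\ket{xa}$) and compare the resulting phase $(-1)^{aL(x)}$ with Eq.~\eqref{eq:outputofPKL}. Your careful bookkeeping distinguishing the rebit ancilla $a$ from the oracle ancilla $\alpha$ matches the paper's treatment.
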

\begin{proof}
Starting with the input state $\ket\phi = \sum_{xa} \psi_{xa}\ket{xa}_{1,\ldots,n,a}$, and appending an ancilla register $\alpha$ initialized to $\ket 0$, the system evolves as follows:
\begin{eqnarray*}
\sum_{xa} \psi_{xa} \ket{xa}_{1,\ldots,n,a}\ket 0_\alpha &\xrightarrow{C_L} & \sum_{xa} \psi_{xa} \ket{L(x)}_1 \ket{j(x)}_{2,\ldots,n,\alpha} \ket a_a \\
&\xrightarrow{CZ_{1a}} & \sum_{xa} \psi_{xa} (-1)^{a L(x)} \ket{L(x)}_1 \ket{j(x)}_{2,\ldots,n,\alpha} \ket a_a \\
&\xrightarrow{C_L^\dag} & \sum_{xa} \psi_{xa} (-1)^{a L(x)} \ket{x}_{1,\ldots,n} \ket a_a \ket 0_\alpha.
\end{eqnarray*}
Hence, 
\begin{eqnarray}
\bra 0_\alpha (C_L^\dag)_{1,\ldots,n,a} CZ_{1a} (C_L)_{1,\ldots,n,a} \ket 0_\alpha  \left(\sum_{xa} \psi_{xa} \ket{xa}\right) &=& \sum_{xa} \psi_{xa} (-1)^{a L(x)} \ket{x}_{1,\ldots,n} \ket a_a \\ &=& \mathcal P(K_L) \sum_{xa} \psi_{xa} \ket {xa},
\end{eqnarray}
where the last line follows from \eq{eq:outputofPKL}. Therefore, \eq{eq:PKL1} is true since the above equality holds for all $\ket \phi$.
\end{proof}

So far, our discussion has dealt with the case where $n$ is fixed. We now consider the case where $n$ is allowed to grow arbitrarily. 
\begin{corollary} \label{cor:uniformfamily}
Let $L \in \P$. Then $\set{\mathcal P(K_L)_{1,\ldots,n,a}}_n$ can be implemented by a uniform family of polynomial-sized quantum circuits $\set{Q_n}_n$ that comprise only orthogonal gates.
\end{corollary}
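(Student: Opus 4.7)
The plan is to leverage Proposition~\ref{prop:PKL1}, which already expresses $\mathcal{P}(K_L)$ as a short sandwich involving a circuit $C_L$ that decides $L$ together with a single $CZ$ gate. So the entire task reduces to producing, for every $n$, a uniform polynomial-sized quantum circuit $C_L$ built from orthogonal gates that implements the action in Eq.~\eqref{eq:actionofCL}.

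First, since $L\in\P$, there is a deterministic polynomial-time Turing machine deciding $L$. By the standard Cook--Levin style construction, this yields a uniform family $\{C^{\text{cl}}_n\}_n$ of polynomial-sized classical Boolean circuits (over AND, OR, NOT) deciding $L$ on $n$-bit inputs. Next I would apply Bennett's reversible simulation to convert $C^{\text{cl}}_n$ into a reversible circuit using only $X$, $CX$, and $CCX$ gates with a polynomial number of ancilla bits. Since these gates are permutation matrices on the computational basis, they are real (in fact orthogonal) and their matrix entries lie in $\{0,1\}$. Structuring the output so that the answer bit $L(x)$ is written on register $1$ and all junk ends up on registers $2,\dots,n,\alpha$ matches the form required by Eq.~\eqref{eq:actionofCL}; this is a trivial swap/relabeling that does not affect the gate count.

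Third, by Proposition~\ref{prop:PKL1},
\begin{equation}
Q_n \;:=\; (C_L^\dag)_{1,\ldots,n,a}\,CZ_{1a}\,(C_L)_{1,\ldots,n,a}
\end{equation}
realizes $\mathcal{P}(K_L)_{1,\ldots,n,a}$ when acted upon states with the ancilla register $\alpha$ initialized to $\ket 0$ (and returned to $\ket 0$ by the uncomputation $C_L^\dag$). Because $CZ$ is orthogonal, and $C_L, C_L^\dag$ are composed entirely of the orthogonal gates $\{X, CX, CCX\}$, the full circuit $Q_n$ is orthogonal. The size of $Q_n$ is at most twice the size of $C_L$ plus one, hence polynomial in $n$.

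Finally, uniformity of $\{Q_n\}$ follows from uniformity of $\{C^{\text{cl}}_n\}$ (a direct output of the $\P$-machine) together with the fact that Bennett's reversible conversion and the wrapping with $CZ$ can each be carried out by a polynomial-time Turing machine. There is no real obstacle here; the only point worth double-checking is that the reversible conversion keeps all gates in the orthogonal (in fact permutation) set, which Bennett's construction does by design.
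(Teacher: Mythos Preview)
Your proposal is correct and follows essentially the same approach as the paper's proof: both use the $\P$ assumption to produce a uniform polynomial-sized reversible circuit (built from Toffoli and NOT gates, which are orthogonal) computing $L(x)$, and then invoke Proposition~\ref{prop:PKL1} to sandwich a single $CZ_{1a}$ between $C_L$ and $C_L^\dag$. Your version simply makes the intermediate steps (Cook--Levin circuit synthesis and Bennett's reversible simulation) more explicit than the paper does.
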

\begin{proof}
$L\in \P$ means that there exists a uniform family of polynomial-sized classical circuits $\set{C_n}_n$, where each $C_n$ comprises only reversible gates, say Toffoli gates and NOT gates, such that $C_n(x,0) = (L(x),j(x))$. Let $\tilde C_n$ be a quantum circuit formed from $C_n$ by replacing each classical gate by its quantum counterpart so that $\tilde C_n\ket x_{1,\ldots,n} \ket 0_\alpha = \ket{L(x)}_1\ket{j(x)}_{2,\ldots,n,\alpha}$. We then construct the circuit $(C_L^\dag)_{1,\ldots,n,a} CZ_{1a} (C_L)_{1,\ldots,n,a}$. By Proposition \ref{prop:PKL1}, the circuit $Q_n = (\tilde C_L^\dag)_{1,\ldots,n,a} CZ_{1a} (\tilde C_L)_{1,\ldots,n,a}$, after discarding ancilla rebits, implements $\mathcal P(K_L)$. Furthermore, since $C_n$ is polynomial-sized, $\tilde C_n$ and hence $Q_n$ are also polynomial-sized. Finally, both the quantum Toffoli gate and the quantum NOT gate (i.e. the $X$ gate) are orthogonal gates. Hence, the set $\set{\mathcal P(K_L)_{1,\ldots,n,a}}_n$ has the desired properties.
\end{proof}

\begin{corollary} \label{cor:uniformfamily2}
Let $L \in \P$. Then $\set{\mathcal (K_L)_{1,\ldots,n,a}}_n$ can be implemented by a uniform family of polynomial-sized $\mathbb{R}$-unitary quantum circuits $\set{Q_n}_n$ that comprise only $CCZ$, $H$ gates and exactly one $CK$ gate.
\end{corollary}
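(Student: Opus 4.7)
The plan is to decode the orthogonal circuit of Corollary \ref{cor:uniformfamily} via the map $\mathcal{L}$ and then rewrite the result in the target gate set. By Corollary \ref{cor:uniformfamily}, there is a polynomial-sized circuit $Q_n = \tilde C_L^\dag \cdot CZ_{1a} \cdot \tilde C_L$ on $n+1$ rebits implementing $\mathcal{P}(K_L)$, where $\tilde C_L$ acts only on registers $1,\dots,n$ together with auxiliary ancillas (never touching the rebit-encoding ancilla $a$) and is built from quantum Toffoli and NOT gates. Since $\mathcal{L}$ is a group homomorphism (Proposition \ref{prop:PL_homomorphisms}) that fixes any gate not acting on register $a$, while $\mathcal{L}(CZ_{1a}) = CK_1$ by Proposition \ref{prop:examplesNonlinear}, the decoded circuit $\mathcal{L}(Q_n) = \tilde C_L^\dag \cdot CK_1 \cdot \tilde C_L$ implements $K_L$ and contains \emph{exactly one} $CK$ gate, sandwiched between the computation and uncomputation of $L(x)$.

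The remaining task is to re-express $\tilde C_L$ (and its inverse) in the gate set $\{CCZ, H\}$ without introducing additional $CK$ gates. Each quantum Toffoli decomposes cleanly as $\text{Toffoli}_{ijk} = H_k \cdot CCZ_{ijk} \cdot H_k$, so this part is immediate. The quantum NOT gates present the main obstacle: starting from $\ket{0}$ ancillas, $\{CCZ, H\}$ alone cannot produce a pure $\ket{1}$ basis state, so $X$ cannot be decomposed naively. The standard workaround is to expand the ancilla workspace and reformulate the classical reversible circuit $C_n$ so that NOT is emulated by Toffoli gates whose controls are ancilla qubits prepared once (and reused) in the $\ket{1}$ state, or by adopting a dual-rail encoding of each classical bit in which logical NOT becomes a permutation implementable through Toffoli. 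Either approach adds only polynomial overhead and keeps every resulting gate within $\{CCZ, H\}$; care must be taken that any auxiliary qubits introduced are cleanly uncomputed by $\tilde C_L^\dag$ on the other side of $CK_1$, so that the overall circuit acts as $K_L$ on the designated $n$ qubits.

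Uniformity and polynomial size follow from the uniformity of the classical circuit family $\{C_n\}_n$ for $L \in \P$ together with the polynomial-time computability of each translation step (classical-to-reversible-quantum, Toffoli-to-$\{CCZ, H\}$, and the NOT-gate handling described above). Thus $\{Q_n\}_n$ is a uniform family of polynomial-sized $\mathbb{R}$-unitary circuits over $\{CCZ, H\}$ with exactly one $CK$ gate, as required. The main hard part is the second paragraph: producing an exact, ancilla-clean implementation of the NOT operation using only $\{CCZ, H\}$ gates, since the obvious decomposition $X=HZH$ is unavailable within this gate set and any workaround must avoid polluting the uncomputation.
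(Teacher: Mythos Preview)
Your approach is essentially the same as the paper's: decode the circuit of Corollary~\ref{cor:uniformfamily} via $\mathcal{L}$, observe that $\tilde C_L$ is fixed (it avoids register $a$) while $CZ_{1a}\mapsto CK_1$, and then rewrite Toffoli and NOT using $H$ and $CCZ$. The paper's proof is terser and simply asserts ``the Toffoli and $X$ gate can be simulated by $H$ and $CCZ$'' without further comment.

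Your second paragraph, however, manufactures a difficulty that is not present under the paper's conventions. In Section~\ref{sec:universalgateset} the paper defines exact simulation to allow ancilla registers ``that can be initialized to any computational basis state''. With two ancillas $b,c$ initialized to $\ket{1}$, one has $X_i = CCX_{bci} = H_i\,CCZ_{bci}\,H_i$ exactly, and these ancillas are untouched by the computation (so uncomputation is automatic). Your proposed workarounds (dual-rail, or preparing reusable $\ket{1}$ ancillas) are therefore unnecessary machinery; the ``hard part'' you flag dissolves once you use the paper's notion of simulation. Apart from this over-engineering, your argument is correct.
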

\begin{proof}
The circuit $Q_n$ constructed in the proof of Corollary \ref{cor:uniformfamily} is $(\tilde C_L^\dag)_{1,\ldots,n,a} CZ_{1a} (\tilde C_L)_{1,\ldots,n,a}$, and it implements the operator $\mathcal P(K_L)$. By applying the rebit decoding operator $\mathcal L$ to this, we find that $K_L = \mathcal L(\tilde C_L^\dag)_{1,\ldots,n,a} \mathcal L(CZ_{1a}) \mathcal L(\tilde C_L)_{1,\ldots,n,a} = L(\tilde C_L^\dag)_{1,\ldots,n,a} CK_1 (\tilde C_L)_{1,\ldots,n,a}$, where we used the fact that the circuits $\tilde C_L$ and its inverse are unchanged by the unencoding (this holds since they do not act on the ancilla register). Now, the Toffoli and $X$ gate can be simulated by $H$ and $CCZ$. Hence, $\set{Q_n}_n$ can be simulated by only the gates $CCZ$ and $H$ and a single $CK$ gate.
\end{proof}
Let us now specialize \eq{eq:PKL1} to the controlled-controlled complex conjugation operator $CCK$, which is a special case of $K_L$. While we know from Proposition~(\ref{prop:examplesNonlinear}) that $CCK$ can be simulated using one $CCZ$ gate, the following equivalent simulation makes use of the previous discussion and shows that the ancilla need only be operated on by a single two-qubit gate.
\begin{prop}
\begin{equation} \label{eq:PKL2}
\mathcal P(CCK_{ij}) = \bra 0_\alpha CCX_{ij \alpha} CZ_{\alpha a} CCX_{ij\alpha} \ket 0_\alpha.
\end{equation}
\end{prop}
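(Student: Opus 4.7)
The plan is to view this statement as a direct specialization of Proposition~\ref{prop:PKL1} to the particular language $L=\{x\in\{0,1\}^n : x_ix_j=1\}$, for which $CCK_{ij}=K_L$. The key observation is that the indicator function $L(x)=x_ix_j$ is computed exactly by a single Toffoli gate $CCX_{ij\alpha}$ with $\alpha$ initialized in $\ket{0}$: it maps $\ket{x}_{1,\ldots,n}\ket{0}_\alpha \mapsto \ket{x}_{1,\ldots,n}\ket{x_ix_j}_\alpha$. This fits the template of Eq.~\eqref{eq:actionofCL}, with the ``output bit'' living in $\alpha$ rather than register $1$, and with no nontrivial junk bits since the Toffoli leaves $x$ untouched on the first $n$ registers.

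Next I would apply (a trivial relabeling of) Proposition~\ref{prop:PKL1}: the proposition is stated with the output of $C_L$ landing in register $1$, but inspecting its proof shows that what is actually used is only that $C_L$ copies $L(x)$ coherently onto some designated register, so that a $CZ$ between that register and $a$ applies the phase $(-1)^{aL(x)}$, and $C_L^\dag$ then uncomputes. Here the designated register is $\alpha$, so the required phase is produced by $CZ_{\alpha a}$, giving
\begin{equation}
\mathcal P(K_L) = \bra{0}_\alpha\, (CCX_{ij\alpha})^\dagger\, CZ_{\alpha a}\, CCX_{ij\alpha}\, \ket{0}_\alpha.
\end{equation}
Using that the Toffoli is self-inverse, $(CCX_{ij\alpha})^\dagger = CCX_{ij\alpha}$, and $K_L=CCK_{ij}$, this is precisely the claimed identity.

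If one prefers a self-contained verification rather than quoting Proposition~\ref{prop:PKL1}, the quickest route is to apply both sides to a generic basis input $\ket{xa}_{1,\ldots,n,a}\ket{0}_\alpha$ and track the evolution: $CCX_{ij\alpha}$ writes $x_ix_j$ into $\alpha$, $CZ_{\alpha a}$ multiplies by $(-1)^{ax_ix_j}$, and the second $CCX_{ij\alpha}$ resets $\alpha$ to $\ket{0}$; the projector $\bra{0}_\alpha\cdots\ket{0}_\alpha$ then removes the ancilla and yields the diagonal action $\sum_{x,a}(-1)^{ax_ix_j}\ketbra{xa}{xa}$. This matches $\mathcal P(CCK_{ij})$ as computed from Eq.~\eqref{eq:outputofPKL} with $L(x)=x_ix_j$.

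There is no real obstacle; the only point requiring a line of care is the placement of the output bit of $C_L$ (register $\alpha$ here versus register $1$ in Proposition~\ref{prop:PKL1}), which amounts to nothing more than choosing the $CZ$ to act between the register actually carrying $L(x)$ and the ancilla $a$.
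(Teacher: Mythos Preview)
Your proposal is correct and follows essentially the same approach as the paper: specialize Proposition~\ref{prop:PKL1} to the language $L=\{x:x_ix_j=1\}$, take $C_L=CCX_{ij\alpha}$ as the circuit computing $L(x)$, and observe that the only adjustment needed is relabeling the output register from $1$ to $\alpha$ (so $CZ_{1a}$ becomes $CZ_{\alpha a}$). Your added remark that $(CCX_{ij\alpha})^\dagger=CCX_{ij\alpha}$ and your optional direct basis-state verification are fine elaborations but not required.
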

\begin{proof}
We shall use the construction of the circuit described in Proposition \ref{prop:PKL1}, with a few modifications to the labels. Recall that $CCK$ corresponds to the language $L = \set{x: x_i x_j = 1}$, i.e. $L(x) = L(x_i x_j)$. Note that the circuit $C_L = CCX_{ij\alpha}$ maps $\ket x_{1,\ldots, n }\ket 0_\alpha \mapsto \ket{x_i x_j}_\alpha \ket x_{1,\ldots, n } = \ket{L(x)}_\alpha \ket x_{1,\ldots, n }$, which is of the form given in \eq{eq:actionofCL}, except that the labels 1 and $\alpha$ are switched. Hence, by replacing the labels $1a$ in the $CZ$ gate in \eq{eq:PKL1} with $\alpha a$, we obtain \eq{eq:PKL2}.
\end{proof}

\section{Universal gate sets for $\mathbb{R}$-unitaries} \label{sec:universalgateset}

In Section \ref{sec:rebitPartial}, we found several examples of nonunitary operators, like $K$, $CK$ and $CCK$, that can be simulated by orthogonal quantum circuits via the rebit encoding. The goal of this section is to find universal sets of gates for the $\bbR$-unitaries. We first introduce some definitions. 
\begin{definition}
Let $\mathcal G_1$ and $\mathcal G_2$ be two gate sets\footnote{gate set here refers to a set of gates, which may be either finite or infinite.}. We say that $\mathcal G_2$ \textit{exactly simulates} $\mathcal G_1$ if for all $G \in \mathcal G_1$, there exists a circuit $C$ formed using gates in $\mathcal G_2$ and ancilla registers (that can be initialized to any computational basis state) such that $\bra a G\ket b = \bra a C \ket b$ for all vectors $a,b$.  We say that $\mathcal G_2$ \textit{approximately simulates} $\mathcal G_1$ if any gate in $\mathcal G_1$ can be approximated in the operator norm (e.g.~Definition~\ref{defn:rlinear_norm}) to within arbitrary accuracy by a sequence of gates from $\mathcal G_2$. 
\end{definition}


Write $\mathcal G_1 \leq \mathcal G_2$ if $\mathcal G_2$ exactly simulates $\mathcal G_1$; and $\mathcal G_1 \lesssim \mathcal G_2$ if $\mathcal G_2$ approximately simulates $\mathcal G_1$. If $\mathcal G_1 \leq \mathcal  G_2$ and $\mathcal G_2 \leq \mathcal G_1$, i.e.\ if the gate sets exactly simulate each other, we write $\mathcal G_1 \equiv \mathcal G_2$, and say that $\mathcal G_1$ are $\mathcal G_2$ are \textit{exact-simulation equivalent} to each other. If $\mathcal G_1 \lesssim \mathcal  G_2$ and $\mathcal G_2 \lesssim \mathcal G_1$, i.e.\ if the gate sets approximately simulate each other, we write $\mathcal G_1 \cong \mathcal G_2$, and say that $\mathcal G_1$ are $\mathcal  G_2$ are \textit{approximate-simulation equivalent} to each other. Note that exact-simulation equivalence is a special case of approximate-simulation equivalence, and that approximate-simulation is equivalent to strict universality defined in \cite{aharonov2003simple}. Note also that $\leq$ is a transitive relation, i.e.\ if $\mathcal G_1 \leq \mathcal G_2$ and $\mathcal G_2 \leq \mathcal G_3$, then $\mathcal G_1 \leq \mathcal G_3$. If either one or both of the first two $\leq$ signs in the previous sentence is changed to $\lesssim$, then $\mathcal G_1 \lesssim \mathcal G_3$. 

For a gate set $\mathcal G = \set{g_1,\ldots,g_s}$, we say that $g_1,\ldots,g_s$ generate $\mathfrak G$ if $\mathfrak G$ is the set of all operators that can be written as circuits using gates from $\mathcal G$. We emphasize that we choose to specify generators $g_j$ independently from their support, which can be chosen arbitrarily. That is, we view each $g_j$ as a gate that may act on any set of qubits in the circuit. We denote the generated set as $\mathfrak G=\langle \mathcal G\rangle = \langle g_1,\ldots,g_s\rangle$.

Examples of gate sets that approximately simulate the special unitary group on $n$ qubits $SU(2^n)$ include the Clifford+T set $\set{H, CZ, T}$ \cite{boykin1999universal}, and Kitaev's gate set $\set{H,CS}$ \cite{kitaev1997quantum} (also, see Theorem 1 of \cite{aharonov2003simple}). An example of gate set that approximately simulates the orthogonal operators $T_n$ is $\set{CCX,H}$ (Theorem 3.2 of \cite{shi2002both}). Since $CCZ_{ijk} = H_k CCX_{ijk} H_k$, it follows that $CCZ$ and $H$ can simulate $CCX$. Hence, $\set{CCZ,H}$ is also capable of approximately simulating $T_n$.

Other orthogonal gate sets, or even single gates, can simulate all the unitaries $U_n$ via the rebit encoding. Rudolph and Grover \cite{rudolph20022} provide an example: the controlled-Y rotation $CY(\theta)$ for any $\theta$ that is an irrational multiple of $\pi$ (e.g.~$\theta=\pi/e$). More specifically, $CY$ can be approximated to any desired accuracy by some power of $CY(\theta)$ and is orthogonal, so it can be applied in the rebit encoding without the ancilla. Single-qubit unitaries can be compiled (to any accuracy) from $CY(\theta)$ using parts (11) and (12) of Proposition~\ref{prop:exampleslinear}.

Our main interest in this section is simulating $\bbR U_n$ using the rebit encoding. We now show that the set $\mathbb G$ defined by 
\begin{equation}
\mathbb G := \set{H,CCZ,CCK,G(\textstyle\frac \pi 4)K}
\end{equation} 
can approximately simulate $\bbR U_n$.

\begin{theorem} \label{thm:universalgateset}
$\mathbb G$ approximately simulates $\bbR U_n$.
\end{theorem}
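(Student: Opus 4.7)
The plan is to reduce the problem to the known approximate universality of $\{H,CCZ\}$ for the orthogonal group on $n+1$ rebits, and then transport the approximating circuits back through the decoding map $\mathcal L$, using the fact that the four generators in $\mathbb G$ were engineered to be exactly the $\mathcal L$-images of the rebit gates $H$ and $CCZ$ acting on the ancilla and non-ancilla registers respectively.

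First I would establish that $\{H,CCZ\}$ approximately simulates $T_{n+1}$. Theorem 3.2 of \cite{shi2002both} gives this for $\{H,CCX\}$, and since $CCZ_{ijk}=H_k\cdot CCX_{ijk}\cdot H_k$ (and conversely), the two gate sets are exact-simulation equivalent. So, given any target $\Gamma\in\bbR U_n$ and any $\epsilon>0$, I set $W:=\cP(\Gamma)\in T_{n+1}$ (well-defined since $\cP$ bijects $\bbR U_n$ onto $T_{n+1}$ by Theorem \ref{thm:OrthogonalABK}), and choose a finite circuit $C$ over $\{H,CCZ\}$ on $n+1$ rebits with $\|C-W\|\le\epsilon$.

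Next I would decode $C$ factor by factor. Because $\cL$ is a group homomorphism on operators (Proposition \ref{prop:PL_homomorphisms}), $\cL(C)$ equals the composition of the $\cL$-images of the individual gates of $C$. Each such gate is either $H_i$ or $CCZ_{ijk}$ with all indices in $\{1,\dots,n\}$ -- decoding trivially to itself by Proposition \ref{prop:exampleslinear} -- or else touches the ancilla, in which case Proposition \ref{prop:examplesNonlinear} supplies
\begin{equation*}
\cL(H_a)=G(\tfrac{\pi}{4})K,\qquad \cL(CCZ_{ija})=CCK_{ij}.
\end{equation*}
All four resulting gate types lie in $\mathbb G$, so $\cL(C)$ is a circuit over $\mathbb G$ acting on the $n$ simulated qubits.

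Finally I would verify that the approximation quality is preserved under decoding. Using Definition \ref{defn:rlinear_norm}, the $\bbR$-linearity of $\cP$ (Proposition \ref{prop:pop_rlinear_lop_linear}), and the fact that $\cP\circ\cL$ is the identity on operators, one computes
\begin{equation*}
\|\cL(C)-\Gamma\|=\|\cP(\cL(C)-\Gamma)\|=\|\cP(\cL(C))-\cP(\Gamma)\|=\|C-W\|\le\epsilon.
\end{equation*}
Since $\epsilon$ and $\Gamma$ were arbitrary, this shows $\mathbb G$ approximately simulates $\bbR U_n$. I do not anticipate a serious obstacle: the decoding identities were already computed in Propositions \ref{prop:exampleslinear} and \ref{prop:examplesNonlinear}, and the rebit-side universality is standard. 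The one point requiring care is the last norm identity, where one must notice that the operator norm on $\bbR$-linears was defined precisely to make $\cP$ an isometry, so that approximation transports losslessly across the $\cP$/$\cL$ correspondence.
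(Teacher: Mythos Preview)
Your proposal is correct and follows essentially the same route as the paper: both arguments reduce to the approximate universality of $\{H,CCX\}$ (equivalently $\{H,CCZ\}$) for $T_{n+1}$, then decode each gate via $\mathcal L$ to obtain the four generators in $\mathbb G$. You are in fact more explicit than the paper about why the approximation quality survives decoding, invoking Definition~\ref{defn:rlinear_norm} to show $\|\cL(C)-\Gamma\|=\|C-W\|$; the paper simply asserts that the image under $\mathcal L$ ``gives an approximate simulation'' without spelling this out.
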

\begin{proof}
Theorem~3.2 of \cite{shi2002both} shows that $\set{CCX,H}$ approximately simulates $T_{n+1}$. Hence, the image of this set under $\mathcal L$ (which is bijective) gives an approximate simulation for $\bbR U_n$. There are four cases corresponding to the gates acting on different sets of wires that we need to consider, namely, (i) $CCZ$ gate acting on three numbered registers, (ii) $H$ gate acting on a numbered register, (iii) $CCZ$ gate acting on 2 numbered registers and the ancilla register, (iv) $H$ gate acting on the ancilla register. From Propositions \ref{prop:exampleslinear} and \ref{prop:examplesNonlinear}, the corresponding gates under $\mathcal L$ are as follows:
\begin{eqnarray}
CCZ_{ijk} &\mapsto & CCZ_{ijk} \\
H_i &\mapsto & H_i \\
CCZ_{ija} &\mapsto & CCK_{ij} \\
H_a &\mapsto & G(\textstyle\frac \pi 4)K.
\end{eqnarray}
Hence, the set $\mathbb G = \set{H,CCZ,CCK,G(\textstyle\frac \pi 4)K}$ approximately simulates $\bbR U_n$.
\end{proof}

Since the gates in $\mathbb G$ are contained in $\bbR U_n$, Theorem \ref{thm:universalgateset} tells us that $\mathbb G \cong \bbR U_n$. We now give some more examples\footnote{Whenever a gate set consists of exactly one gate, we drop the curly braces and denote the set by the element it contains. For example, if $G$ is a gate such that $\set{G}$ is exactly simulated by a gate set $\mathcal G$, we write $G \leq \mathcal G$ rather than $\set{G} \leq \mathcal G$.} of exact and approximate simulation with the goal of using these relations to find other universal gates sets starting from $\mathbb G$.
\begin{lemma}\label{lem:gatesetsimexamples}
\begin{eqnarray}
(i) && CCX \leq \set{H,CCZ} \label{eq:CCXHCCZ} \\
(ii) && K \leq CK \leq CCK \label{eq:KleqCK} \\
(iii) && CS \leq \set{G(\textstyle\frac \pi 4) , CCK} \label{eq:CSGCCK}\\
(iv) && \set{H,CCZ} \lesssim \set{H,CS} \leq \set{H,G(\textstyle\frac \pi 4),CCK}  \label{eq:HCCZ} \\
(v) && CCK \leq \set{CCZ, H, CK} \label{eq:CCKCCZHCK}\\
(vi) && \mbox{If } L \in \P, \mbox{ then } K_L \lesssim \set{H,CCZ,CK} \lesssim \set{H,CCK,\Gp}. \label{eq:KLHCCZ}
\end{eqnarray}
\end{lemma}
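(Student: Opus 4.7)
The six claims split into three groups: parts (i), (ii), and (v) are direct circuit identities with ancillas; part (iii) is the only genuine calculation; parts (iv) and (vi) follow by chaining. For (i), I would use $CCX_{ijk}=H_k\cdot CCZ_{ijk}\cdot H_k$. For (ii), I would introduce a single ancilla rebit prepared in $\ket{1}$ and employ it as an extra control: $CK$ with a $\ket{1}$-control conjugates every amplitude (reproducing $K$ on the other qubits), and $CCK$ with the ancilla as one of its two controls reduces to $CK$ on the remaining registers; in both cases the ancilla is left untouched. For (v), I would use a compute/uncompute sandwich: a fresh ancilla $\alpha$ prepared in $\ket{0}$, then $CCX_{ij\alpha}$ to place $x_ix_j$ into $\alpha$, then $CK_\alpha$ to conjugate exactly those basis amplitudes with $x_ix_j=1$, and finally a second $CCX_{ij\alpha}$ to restore $\alpha$ to $\ket{0}$; the $CCX$ is available by (i).

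The core step, which I expect to be the main obstacle, is (iii). The identity I would prove is
\begin{equation}
CS_{ij}\;=\;G(\tfrac{\pi}{4})\cdot CCK_{ij}\cdot G(\tfrac{\pi}{4})^{7}\cdot CCK_{ij},
\end{equation}
noting that $G(\pi/4)^{7}=G(-\pi/4)$ so that only the single generator $G(\pi/4)$ is used. The underlying idea is that conjugating the global phase $G(-\pi/4)$ by the antilinear operator $CCK_{ij}$ inverts the phase on the $\ket{11}_{ij}$ subspace (because $CCK$ is antilinear there) while commuting through on the complementary subspace; the leading $G(\pi/4)$ then cancels the residual phase on the $x_ix_j=0$ sector and doubles it to $e^{i\pi/2}=i$ on $\ket{11}_{ij}$. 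I would verify this by a basis-state case analysis splitting on $x_ix_j\in\{0,1\}$.

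The remaining parts are bookkeeping. For (iv), the first half uses Kitaev's universality of $\set{H,CS}$ for $U_n$ (cited earlier in the paper) together with $H,CCZ\in U_n$, and the second half combines $H\in\set{H,G(\pi/4),CCK}$ with (iii). For (vi), the first $\lesssim$ is actually an exact simulation, supplied directly by Corollary~\ref{cor:uniformfamily2}: for $L\in\P$, that corollary constructs a polynomial-size $\bbR$-unitary circuit realizing $K_L$ over $\set{H,CCZ,CK}$. The second $\lesssim$ follows by approximating each generator of $\set{H,CCZ,CK}$ over $\set{H,CCK,G(\pi/4)}$: $H$ lies in the target set, $CK\leq CCK$ by (ii), and $CCZ\lesssim\set{H,G(\pi/4),CCK}$ by combining (iv) with (iii). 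Transitivity of $\lesssim$ then yields the composite inequality, completing the lemma.
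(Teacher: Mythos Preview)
Your proof is correct, and for parts (i), (ii), (iv), (v), (vi) it matches the paper's argument essentially verbatim: the same circuit identities, the same ancilla constructions, and the same appeals to Kitaev universality and Corollary~\ref{cor:uniformfamily2}.

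The one place you diverge is part (iii). The paper obtains its identity by decoding the rebit-side relation $\mathcal P(CS_{ij})=H_a\cdot CCZ_{ija}\cdot H_a\cdot CCZ_{ija}$ (Item~\ref{item:CSij} of Proposition~\ref{prop:exampleslinear}) via $\mathcal L$, yielding
\[
CS_{ij}=G(\tfrac{\pi}{4})\cdot K\cdot CCK_{ij}\cdot G(\tfrac{\pi}{4})\cdot K\cdot CCK_{ij},
\]
and then invokes $K\leq CCK$ from (ii) to eliminate the stray $K$'s. Your identity
\[
CS_{ij}=G(\tfrac{\pi}{4})\cdot CCK_{ij}\cdot G(\tfrac{\pi}{4})^{7}\cdot CCK_{ij}
\]
is instead verified directly on the qubit side by the antilinearity-of-$CCK$ case split you describe, and it checks out: $CCK_{ij}\cdot G(-\pi/4)\cdot CCK_{ij}=e^{-i\pi/4}\Theta_L+e^{i\pi/4}\Pi_L$, so the leading $G(\pi/4)$ gives $\Theta_L+i\Pi_L=CS_{ij}$. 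Your route avoids the detour through the rebit encoding and produces a shorter circuit (four factors rather than six, with no auxiliary $K$), at the cost of not exhibiting the systematic ``decode a known rebit identity'' mechanism that the paper reuses elsewhere (e.g.\ in (v)). Both approaches are clean; yours is the more self-contained computation, the paper's is the more uniform application of the $\mathcal L$/$\mathcal P$ machinery.
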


\begin{proof} 
To show that $G_1 \leq G_2$ or $G_1 \lesssim G_2$, it suffices to show that each gate in $G_1$ can be approximately simulated by a circuit consisting of gates in $G_2$. (i) follows from the fact that
\begin{equation}
CCX_{ijk} = H_k CCZ_{ijk} H_k.
\end{equation}
(ii) follows from the facts that
\begin{equation}
K = \bra 1_b CK_b \ket 1_b, \quad CK_i = \bra 1_b CCK_{ib} \ket 1_b.
\end{equation}
For (iii), taking $\mathcal L$ on both sides of Item \ref{item:CSij} of Proposition \ref{prop:exampleslinear} gives:
\begin{equation}
CS_{ij} = \mathcal L(H_a \cdot CCZ_{ija} \cdot H_a \cdot CCZ_{ija}) = \Gp \cdot K \cdot CCK_{ij} \cdot \Gp \cdot K \cdot CCK_{ij} .
\end{equation}
Hence, $CS \leq \set{\Gp,K,CCK}$. But $K\leq CCK$ from \eq{eq:KleqCK}. Hence, $CS \leq \set{\Gp,CCK}$. \newline
For (iv), $\set{H,CS}$ can approximately simulate any unitary \cite{kitaev1997quantum} including $\set{H,CCZ}$. Hence, $\set{H,CCZ} \lesssim \set{H,CS}$. Combining this result with \eq{eq:CSGCCK} produces (iii). \newline
For (v), by taking $\mathcal L$ on both sides of \eq{eq:PKL2}, we obtain 
\begin{eqnarray}
CCK_{ij} &=& \mathcal L(\bra 0_\alpha CCX_{ij \alpha} CZ_{\alpha a} CCX_{ij\alpha} \ket 0_\alpha) \\
&=& \bra 0_\alpha CCX_{ij\alpha} CK_\alpha CCX_{ij\alpha} \ket 0_\alpha.
\end{eqnarray}
Hence $CCK \leq \set{CCX,CK}$. By combining this result with \eq{eq:CCXHCCZ}, we get \eq{eq:CCKCCZHCK}. \newline
For (vi), we use Corollary \ref{cor:uniformfamily2}, which says that if $L \in \P$, then $K_L$ can be implemented by a uniform family of polynomial-sized quantum circuits that comprise only $CCZ$, $H$ gates and exactly one $CK$ gate. Hence, $K_L \lesssim \set{H,CCZ,CK}$. By using \eq{eq:KleqCK} and \eq{eq:HCCZ}, $\set{H,CCZ,CK} \lesssim \set{H,CCK,\Gp}$.
\end{proof}

Using Lemma~\ref{lem:gatesetsimexamples}, we now give examples of various finite gate sets which are approximate-simulation equivalent to $\mathbb G$ and therefore also to $\bbR U_n$. We start with finite gate sets.
\begin{prop} \label{prop:finitegatesets}
The following finite gates sets are all exact-simulation equivalent or approximate-simulation equivalent to one another. Hence, they are all approximate-simulation equivalent to $\bbR U_n$.
\begin{enumerate}[(i)]
\item $\set{H,CCZ,CCK, \Gp K, K}$
\item $\set{H,CCZ,CCK, K,\Gp}$
\item $\set{H,CCZ,CCK,\Gp}$
\item $\set{H,CS,CCK,\Gp}$
\item $\set{H,CCK,\Gp}$
\item $\set{H,CCZ,CK,\Gp}$.
\end{enumerate}
\end{prop}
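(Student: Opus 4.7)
The plan is to show each of the six sets is approximate-simulation equivalent to $\mathbb G=\set{H,CCZ,CCK,\Gp K}$, which we already know is $\cong \bbR U_n$ by Theorem~\ref{thm:universalgateset}. Since $\lesssim$ is transitive and $\leq$ implies $\lesssim$, it suffices to walk through the list (i)--(vi) in order and argue each neighbor pair is equivalent. Every equivalence will use only the primitives collected in Lemma~\ref{lem:gatesetsimexamples}, so the argument reduces to a straightforward ``swap one generator at a time'' bookkeeping exercise.

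First I would handle (i) $\cong$ $\mathbb G$: (i) only differs from $\mathbb G$ by the extra generator $K$, which is exactly-simulated by $CCK$ via \eqref{eq:KleqCK}, so adding or removing $K$ is free. For (i) $\equiv$ (ii), note that $\Gp K=\Gp\cdot K$ and $\Gp=(\Gp K)\cdot K$; since both $K$ and either of $\Gp, \Gp K$ appear in both sets, the two generators are interderivable over the common $CCK$. For (ii) $\equiv$ (iii), again drop or restore $K$ using $K\leq CCK$ from \eqref{eq:KleqCK}.

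Next, for (iii) $\equiv$ (iv) I would swap $CCZ\leftrightarrow CS$: $CS\leq\set{\Gp,CCK}$ is \eqref{eq:CSGCCK}, and the reverse inclusion $\set{H,CCZ}\lesssim\set{H,CS}$ is the first part of \eqref{eq:HCCZ} (via Kitaev's universality of $\set{H,CS}$ for unitaries). Then (iv) $\equiv$ (v) is another application of \eqref{eq:CSGCCK}: (v) is obtained from (iv) by dropping $CS$, and $CS\leq\set{\Gp,CCK}\subseteq$ (v) shows that nothing is lost. Finally, (v) $\equiv$ (vi) swaps $CCK\leftrightarrow CK$: \eqref{eq:CCKCCZHCK} gives $CCK\leq\set{CCZ,H,CK}$, so (v) $\leq$ (vi), and the reverse direction uses $CK\leq CCK$ from \eqref{eq:KleqCK} together with the fact that $CCZ\leq\set{H,CCK,\Gp}$ by composing \eqref{eq:HCCZ} backwards (i.e.\ (v) already approximately simulates $\set{H,CCZ}$ by the chain (v) $\gtrsim$ (iv) $\gtrsim$ (iii)).

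None of these steps is particularly hard; the only mild subtlety is that a couple of the swaps are $\lesssim$ rather than $\leq$ (specifically the $CCZ\lesssim\set{H,CS}$ direction via Solovay--Kitaev-style compilation), so the overall equivalence among the six sets is $\cong$ rather than $\equiv$. The main bookkeeping obstacle is just making sure that at each swap the ``other'' generators of the ambient set are still available to power the simulation we invoke; laying out the six sets in the order given above makes each step involve exactly one generator change and keeps the cited lemma parts directly applicable. Combining with $\mathbb G\cong\bbR U_n$ then yields the conclusion.
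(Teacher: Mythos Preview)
Your proposal is correct and follows essentially the same route as the paper's proof: both walk through the list swapping one generator at a time, invoking exactly the same parts of Lemma~\ref{lem:gatesetsimexamples} (namely $K\leq CK\leq CCK$, $CS\leq\{\Gp,CCK\}$, $\{H,CCZ\}\lesssim\{H,CS\}$, and $CCK\leq\{CCZ,H,CK\}$). The only cosmetic differences are that the paper groups (iii), (iv), (v) into a single three-cycle argument, and that you make explicit the link (i)~$\equiv\mathbb G$ which the paper leaves implicit.
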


\begin{proof} \hfill
\begin{itemize}
\item (i) $\equiv$ (ii): Clearly, $\Gp K \leq \set{K,\Gp}$, so (i) $\leq$ (ii). Also, $\Gp = \Gp K\cdot K$, $\Gp \leq \set{\Gp K, K}$, so (ii) $\leq$ (i). 
\item (ii) $\equiv$ (iii): Since $K\leq CCK$, by \eq{eq:KleqCK}, so (ii) $\leq$ (iii). Conversely, (iii) $\subset$ (ii), so (iii) $\leq$ (ii)
\item (iii) $\cong $ (iv) $\cong$ (v):  By \eq{eq:HCCZ}, $\set{H,CCZ} \lesssim \set{H,\Gp,CCK}$. Hence, (iii) $\lesssim$ (iv). By \eq{eq:CSGCCK}, $CS \leq \set{\Gp , CCK}$, so (iv) $\leq$ (v). But (v) $\subset$ (iii), so (v) $\leq$ (iii).
\item (v) $\cong $ (vi): By \eq{eq:CCKCCZHCK},  $CCK \leq \set{CCZ, H, CK} $, so (v) $\leq$ (vi). By \eq{eq:HCCZ}, $\set{H,CCZ} \lesssim \set{H,\Gp,CCK}$, and by \eq{eq:KleqCK}, $CK \leq CCK$, so (vi) $\leq$ (v).
\end{itemize}
\end{proof}

We can also find various infinite gate sets which that are approximate-simulation equivalent to $\bbR U_n$.
\begin{prop} \label{prop:infinitegatesets}
The following infinite gates sets are approximate-simulation equivalent to $\bbR U_n$.
\begin{enumerate}
\item[(vii)] $\set{H,\Gp} \cup \set{K_L:L\subseteq\P}$
\item[(viii)] $\set{H,CS,\Gp} \cup \set{K_L:L\subseteq\P}$
\item[(ix)] all the above gate sets in this list as well as in Proposition \ref{prop:finitegatesets} with $\Gp$ replaced by $\set{G(\theta): \theta \in [0,2\pi)}$.
\end{enumerate}
\end{prop}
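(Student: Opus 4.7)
The plan is to prove all three items by reducing to Proposition~\ref{prop:finitegatesets} and the $\bbR$-unitarity of $K_L$. For each listed set $\mathcal G$ one must establish both directions $\mathcal G \lesssim \bbR U_n$ and $\bbR U_n \lesssim \mathcal G$. The first direction is uniform and essentially trivial; the second follows by exhibiting a known universal finite subset inside the given (generally infinite) set.

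For the easy direction I would first observe that every generator listed in (vii), (viii), (ix) lies in $\bbR U_n$. The gates $H$, $CS$, $\Gp$, and each $G(\theta)$ are ordinary unitaries, hence $\bbR$-unitary. Each $K_L$ is $\bbR$-unitary by Remark~\ref{rem:KLunitary}, and $K = K_L$ with $L = \set{0,1}^n$ is the special case. Therefore every listed gate set is a subset of $\bbR U_n$, which certainly simulates its own elements.

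For the universality direction I would exploit the key observation that the two-qubit gate $CCK_{ij}$ is precisely $K_L$ for the language $L = \set{x : x_i x_j = 1}$, and this $L$ is trivially in $\P$. Consequently $CCK$ lies in the collection $\set{K_L : L \in \P}$, so the finite universal set $\set{H,CCK,\Gp}$ from Proposition~\ref{prop:finitegatesets}(v) is contained in the set (vii). Thus
\begin{equation}
\bbR U_n \;\lesssim\; \set{H,CCK,\Gp} \;\leq\; \text{(vii)},
\end{equation}
which together with the reverse inclusion already established gives (vii)$\cong \bbR U_n$. Since (viii) contains (vii), the same conclusion holds for (viii). For (ix), any of the earlier sets is contained in its $\Gp$-replaced version because $\Gp = G(\pi/4) \in \set{G(\theta) : \theta \in [0,2\pi)}$, so universality is preserved by inclusion; conversely each $G(\theta)$ is unitary, so the replaced set still sits inside $\bbR U_n$.

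The main potential obstacle is only bookkeeping: namely, making precise what ``replacing $\Gp$'' in item (ix) means for sets like (i) whose generator is $\Gp K$ rather than $\Gp$ alone. The natural interpretation is to let $\Gp K$ range over $\set{G(\theta)K : \theta \in [0,2\pi)}$, and $\bbR$-unitarity of $G(\theta)K$ follows because $G(\theta)$ is unitary and $K$ is $\bbR$-unitary. With this interpretation each replaced set is still sandwiched between a known finite universal set and $\bbR U_n$, so no new approximation argument is needed beyond those packaged in Proposition~\ref{prop:finitegatesets} and Lemma~\ref{lem:gatesetsimexamples}.
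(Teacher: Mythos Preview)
Your proof is correct and follows essentially the same logic as the paper's. Both arguments hinge on the observation that $CCK$ is itself a $K_L$ for a language trivially in $\P$, so the finite universal set (v) of Proposition~\ref{prop:finitegatesets} embeds into (vii), and that $\Gp\in\{G(\theta):\theta\in[0,2\pi)\}$ handles (ix). The only minor organizational difference is that the paper chains equivalences (showing (vii)${}\cong{}$(v), (vii)${}\equiv{}$(viii), etc.) whereas you argue each direction against $\bbR U_n$ directly; relatedly, for the containment direction you invoke Remark~\ref{rem:KLunitary} to conclude $K_L\in\bbR U_n$ immediately, while the paper takes the slightly longer route through Lemma~\ref{lem:gatesetsimexamples}(vi) to get (vii)${}\lesssim{}$(v). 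Your version is a bit cleaner but not substantively different.
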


\begin{proof} We'll continue the numbering from Proposition \ref{prop:finitegatesets}.
\begin{itemize}
\item (vii) $\cong$ (v): By \eq{eq:KLHCCZ}, $K_L \lesssim \set{H,CCK,\Gp}$. Hence, (vii) $\lesssim$ (v). Also, (v) $\subseteq$ (vii), so (v) $\leq$ (vii).
\item (vii) $\equiv$ (viii): (vii) $\subseteq$ (viii), so (vii) $\leq$ (viii). Using $CS \leq \set{G(\textstyle\frac \pi 4) , CCK}$ (from \eq{eq:CSGCCK}) and $CCK \leq K_L$, we get (viii) $\leq$ (vii).
\item We first show that $LHS:=\bbR U_n \cong RHS:=\set{H,CCZ,CCK,K} \cup \set{G(\theta):\theta\in [0,2\pi)}$. Since $\Gp\in \set{G(\theta):\theta\in [0,2\pi)}$, $LHS = \bbR U_n \cong \set{H,CCZ,CCK, K,\Gp} \leq RHS$. Conversely, the gates in RHS are all either unitary or are $K$ or $CCK$. By Proposition \ref{prop:examplesNonlinear}, these are all images of orthogonal matrices under $\mathcal L$. Hence, $RHS \leq LHS$, which completes the proof of $LHS \cong RHS$. Next, notice that the set RHS is identical to the set (ii) in Proposition \ref{prop:finitegatesets}, except that $\Gp$ is replaced by $G(\theta)$. Hence, making this replacement in all the above proofs, we get (ix).
\end{itemize}
\end{proof}


Finally, we give a set of operators that exactly simulates $\bbR U_n$. Denote the set of operators which can be expressed as a product of partial antiunitary operators by 
\begin{eqnarray} \label{eq:partialantiunitaries}
 \langle \mbox{partial antiunitaries}\rangle &:=& \langle V: V \mbox{ is a partial antiunitary operator}\rangle \nonumber \\
 &=& \set{W: \exists \mbox{ partial antiunitaries } W_1,\ldots, W_k \nonumber\\ &&\qquad\mbox{ such that } 
 W = W_1\ldots W_k}.
\end{eqnarray}
We show that the set in Eq.~\eqref{eq:partialantiunitaries} is exact-simulation equivalent to the $\bbR$-unitary operators $\bbR U_n$. That is, any operator $\Gamma\in\bbR U_n$ can be written as the product $W_kW_{k-1}\dots W_1$ of some partial antiunitaries $W_j$. Indeed, we can even take $W_j\in\bbR U_n$ as well, implying we need no extra ancilla qubits. Since the $\bbR$-unitaries are the image of the orthogonal operators under the decoding map $\mathcal{L}$, it is useful to have a lemma relating to the compiling of orthogonal operators.
\begin{lemma}\label{lem:orthog_compiling}
Let $W\in T_n$ be an $n$-qubit orthogonal operator. Then $W$ can be written as the product of single-qubit orthogonal operators and multiply-controlled $Z$ operators $C^{h}Z$ on the same $n$-qubits.
\end{lemma}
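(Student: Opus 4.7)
The plan is to adapt the standard exact decomposition of a general unitary into single-qubit and CNOT gates (\cite{nielsen2010quantum}, \S4.5) to the real-orthogonal setting. The argument proceeds in three stages.

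\emph{Stage 1 (Two-level reduction).} First I would perform a QR-style decomposition of $W$ using real Givens rotations, obtaining an exact factorization $W = G_1 G_2 \cdots G_N$ in which each $G_j$ is a \emph{two-level} real orthogonal operator: it acts on $\span\{\ket{x_j},\ket{y_j}\}$ as some $2\times 2$ element of $O(2)$ for a pair of distinct basis strings $x_j,y_j\in\{0,1\}^n$, and fixes every other computational basis vector. This step goes through over $\bbR$ without change, because the Givens matrices used to triangularize a real orthogonal matrix are themselves real.

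\emph{Stage 2 (Conjugation to a standard pair).} Fix one factor $G=G_j$ acting on the pair $\ket{x},\ket{y}$. Following the Gray-code construction, I would choose a sequence $x=g_0,g_1,\dots,g_m=y$ of basis strings with consecutive strings differing in a single bit, and use this sequence to build a permutation $P$ of the computational basis that sends $\{\ket{x},\ket{y}\}$ to the adjacent pair $\{\ket{1^{n-1}0},\ket{1^{n-1}1}\}$ while fixing the two-level structure of $G$. Concretely, $P$ is a product of $X$ gates and multiply-controlled $X$ gates; since every $C^{h}X$ equals $H\cdot C^{h}Z\cdot H$ on the target qubit, $P$ is already realized by single-qubit orthogonals and $C^{h}Z$ gates. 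After conjugating, $G = P^{-1}\tilde G P$ with $\tilde G$ a two-level orthogonal on the standard adjacent pair.

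\emph{Stage 3 (Implementing $\tilde G$).} The gate $\tilde G$ has the form $C^{n-1}O$ for some $O\in O(2)$ acting on the last qubit. Since every element of $O(2)$ is either a rotation $R_y(\theta)$ or a reflection $Z R_y(\theta)$, and since $X R_y(\phi) X = R_y(-\phi)$, I obtain the exact identity
\begin{equation}
C^{n-1}R_y(\theta) \;=\; R_y(\theta/2)_t \cdot (C^{n-1}X) \cdot R_y(-\theta/2)_t \cdot (C^{n-1}X),
\end{equation}
where $t$ is the target qubit; the reflection case is handled by prepending an additional $C^{n-1}Z$. Replacing each $C^{n-1}X$ by $H\cdot C^{n-1}Z\cdot H$ gives $\tilde G$, and therefore $G$ and $W$, as an explicit product of single-qubit orthogonals ($H$, $X$, $Z$, $R_y(\theta)$) and multiply-controlled $Z$ gates $C^{h}Z$.

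The bulk of the work sits in Stage 2: verifying that the Gray-code permutation $P$ really sends an arbitrary pair $\{\ket{x},\ket{y}\}$ onto the standard adjacent pair while leaving the remaining basis vectors in a configuration compatible with $G$'s two-level action. This is the same combinatorial bookkeeping as in the usual decomposition of unitaries, and it adapts to the present setting without modification because every ingredient we introduce (permutations built from $X$ and $C^{h}X$, and the rotation identity above) is already orthogonal and exact.
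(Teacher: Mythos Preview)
Your proposal is correct and follows essentially the same route as the paper's proof: two-level (Givens) reduction over $\bbR$, Gray-code conjugation by $X$ and $C^{h}X=H\,C^{h}Z\,H$ to a standard adjacent pair, and the rotation identity $R_y(\theta/2)\cdot C^{n-1}X\cdot R_y(-\theta/2)\cdot C^{n-1}X=C^{n-1}R_y(\theta)$ for the controlled single-qubit orthogonal. You are in fact slightly more careful than the paper in explicitly handling the reflection component of $O(2)$ via an extra $C^{n-1}Z$.
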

\begin{proof}
The proof is essentially the realization that the compilation scheme for unitaries in Chapter 4 of \cite{nielsen2010quantum} works for compiling orthogonal gates into products of the claimed orthogonal gates as well. We complete the proof in Appendix~\ref{app:orthog_compiling}.
\end{proof}

\begin{theorem}\label{thm:logicalActionOfOrthogonals}
\begin{equation}
\bbR U_n \equiv \langle \mathrm{partial} \mbox{ } \mathrm{antiunitaries}\rangle .
\end{equation}
\end{theorem}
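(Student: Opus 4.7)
The plan is to verify both containments of the exact-simulation equivalence. The direction $\langle\mbox{partial antiunitaries}\rangle \leq \bbR U_n$ is immediate: Theorem~\ref{thm:partialAntiunitaryIsUnitaryElement} puts every partial antiunitary inside $\bbR U_n$, and Proposition~\ref{prop:subgroupResult} (applied to the subgroup $U_{n+1}\leq L_{n+1}$) makes $\bbR U_n$ a group, so any finite product of partial antiunitaries is already $\bbR$-unitary and requires no further construction.

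For the reverse direction $\bbR U_n \leq \langle\mbox{partial antiunitaries}\rangle$, I would start from an arbitrary $\Gamma \in \bbR U_n$ and pick $W\in T_{n+1}$ with $\Gamma = \mathcal L(W)$. Lemma~\ref{lem:orthog_compiling} decomposes $W = W_k\cdots W_1$ as a product of single-rebit orthogonals and multiply-controlled $Z$ gates $C^h Z$ on the $n+1$ rebits. Since $\mathcal L$ is a group homomorphism (Proposition~\ref{prop:PL_homomorphisms}), $\Gamma = \mathcal L(W_k)\cdots \mathcal L(W_1)$, reducing the problem to checking that each factor $\mathcal L(W_j)$ is partial antiunitary, which I would do by cases on whether the ancilla rebit $a$ lies in the support of $W_j$. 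When $W_j$ avoids the ancilla, it is already a real unitary and $\mathcal L(W_j) = W_j$ is unitary, hence partial antiunitary with respect to $\{0\}$. When $W_j$ is a single-rebit orthogonal on $a$, it lies in $O(2)$ and equals either $R(\theta)_a$ or $R(\theta)_a Z_a$ (the two components of $O(2)$); Proposition~\ref{prop:globalphase} and Proposition~\ref{prop:examplesNonlinear}\,(\ref{enum:Za}) then give $\mathcal L(W_j)$ equal to $G(\theta)$ or $G(\theta)K$, respectively unitary and antiunitary, both of which are partial antiunitary. When $W_j$ is a $C^h Z$ whose support includes the ancilla, the symmetry of $C^h Z$ under permutations of its qubits lets me place the ancilla as the target, and Proposition~\ref{prop:examplesNonlinear}\,(\ref{enum:ChZ}) yields $\mathcal L(W_j) = C^h K_{c_1,\dots,c_h} = K_L$ for $L = \{x:x_{c_1}=\cdots=x_{c_h}=1\}$; by Proposition~\ref{prop:KLispartial} this is partial antiunitary with respect to $\bbH(L)$. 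Multiplying the factors realizes $\Gamma$ as a product of partial antiunitaries on the same $n$-qubit space.

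The hard part has really already been absorbed into Lemma~\ref{lem:orthog_compiling}; once an arbitrary orthogonal on $n+1$ rebits has been broken into the standard family above, all that remains is the case analysis, each case of which is already handled by a result from Section~\ref{sec:rebitEncoding} or Section~\ref{sec:partialAntiunitarity}. The only small thing to watch is keeping the orthogonal compilation on exactly $n+1$ rebits so that the resulting decomposition needs no additional ancillas, which is what lets me conclude exact (rather than merely ancilla-assisted) simulation equivalence.
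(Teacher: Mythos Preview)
Your proposal is correct and follows essentially the same route as the paper: both directions invoke Theorem~\ref{thm:partialAntiunitaryIsUnitaryElement} and Lemma~\ref{lem:orthog_compiling}, then perform the same case analysis on whether each factor touches the ancilla. Your treatment of the single-rebit orthogonal on the ancilla is in fact slightly more careful than the paper's, which simply says ``$\mathcal L(W_j)$ is a global phase gate'' without explicitly separating the rotation and reflection components of $O(2)$ as you do.
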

\begin{proof}\hfill
\begin{itemize}
\item[$(\leq)$] 
This direction follows from the compiling lemma, Lemma~\ref{lem:orthog_compiling}. Let $\Gamma\in\bbR U_n$ and $\mathcal{P}(\Gamma)=W\in T_{n+1}$. Then the lemma provides us with a sequence of orthogonal gates $V_j\in T_{n+1}$ such that $V_kV_{k-1}\dots V_1=W$. Thus, $\mathcal{L}(V_k)\mathcal{L}(V_{k-1})\dots\mathcal{L}(V_1)=U$. Since $V_j$ is orthogonal, if it is not supported on the rebit ancilla then $\mathcal{L}(V_j)=V_j$. If the ancilla is in its support there are two cases: either (1) $V_j$ is a $C^{h}Z$ gate and so $\mathcal{L}(W_j)$ is a $C^{h}K$ gate or (2) $V_j$ is a single-qubit orthogonal gate and so $\mathcal{L}(W_j)$ is a global phase gate. Therefore, $\Gamma=\mathcal{L}(V_k)\mathcal{L}(V_{k-1})\dots\mathcal{L}(V_1)$ is indeed a sequence of unitaries alternating with partial complex conjugation operators, i.e.
\begin{equation}\label{eq:compile_rlinear}
\Gamma=U_{k'}K_{L_{k'}}U_{k'-1}K_{L_{k'-1}}U_{k'-2}\dots U_1K_{L_1}U_0.
\end{equation}
Since $U_j$ and $K_{L_j}$ are both partial antiunitary for all $j$, we have found a product of partial antiunitaries making $\Gamma$.
\item[$(\geq)$] This direction follows from Theorem~\ref{thm:partialAntiunitaryIsUnitaryElement}, which says that all partial antiunitaries are $\mathbb{R}$-unitary.
Products of partial antiunitaries, like those found in $\langle \mathrm{partial} \mbox{ } \mathrm{antiunitaries}\rangle$, must also be $\mathbb{R}$-unitary because it is a group.
\end{itemize}
\end{proof}

To conclude this section, we discuss how efficient the rebit encoding is for simulating (1) an arbitrary unitary circuit (top-down simulation) and (2) an arbitrary $\bbR$-unitary circuit (bottom-up simulation). In the top-down case, we consider the universal gate set $\{H,T,CX\}$ \cite{nielsen2010quantum}. In the bottom-up case, using Proposition~\ref{prop:finitegatesets}, we focus on the universal gate set $\{H,CCZ,CK,G(\pi/4)\}$ because it is relatively simple. Similar analyses could be done with any other universal sets of gates.

Let us start with the top-down case. Say we have a depth-$d$ circuit consisting of gates from $\{H,T,CX\}$ on $n$-qubits. What is the depth and width of a rebit circuit required to simulate it? We provide two approaches trading off depth and width.
\begin{theorem}\label{thm:topdown_effic}
Let $C$ be an $n$-qubit, depth-$d$ unitary circuit using gates from $\{H,T,CX\}$. Then $C$ (applied to $\ket{0}^{\otimes n}$) can be simulated (i.e.~we can make $\mathcal{P}(C\ket{0}^{\otimes n})$) using either
\begin{enumerate}
\item an orthogonal circuit of depth at most $dn$ on $n+1$ rebits
\item or an orthogonal circuit of depth at most $d$ on $2n$ rebits.
\end{enumerate}
\end{theorem}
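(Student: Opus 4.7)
The proof splits naturally along the two bullets of the statement. For (1), the plan is to encode $C$ gate-by-gate under the single-ancilla rebit encoding $\mathcal{P}$ and track the bottleneck at the ancilla $a$. By Proposition~\ref{prop:exampleslinear}, $\mathcal{P}(H_i)=H_i$ and $\mathcal{P}(CX_{ij})=CX_{ij}$ act only on qubit rebits and leave $a$ untouched, while $\mathcal{P}(T_i)=CH_{ia}\cdot CZ_{ia}$ is a single controlled-$(HZ)$ two-rebit orthogonal gate on rebits $i$ and $a$, hence of depth $1$. In any layer of $C$, the encoded $H$ and $CX$ gates act on disjoint qubit rebits and can remain parallel, but the encoded $T$ gates all contend for the ancilla $a$ and must be serialized. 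Since a layer of $C$ contains at most $n$ gates, it contains at most $n$ $T$ gates, so the encoded layer has depth at most $n$. Composing the $d$ layers (using the homomorphism property of $\mathcal{P}$ from Proposition~\ref{prop:PL_homomorphisms}) yields an orthogonal circuit of depth at most $dn$ on $n+1$ rebits that, when applied to $\ket{0}^{\otimes n+1}=\mathcal{P}(\ket{0}^{\otimes n})$, outputs $\mathcal{P}(C\ket{0}^{\otimes n})$.

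For (2), the plan is to remove the ancilla bottleneck by giving each qubit its own ancilla rebit, using $2n$ rebits in total. The natural route is the subsystem-division-respecting rebit encoding of \cite{mckague2009simulating}, which assigns two rebits per qubit and under which local qubit gates remain local rebit gates. There, each $H_i$ and $T_i$ encodes to a constant-depth two-rebit orthogonal operation on the pair of rebits associated with qubit $i$, and each $CX_{ij}$ encodes to a constant-depth four-rebit orthogonal operation on the four rebits associated with qubits $i$ and $j$. Disjoint-support gates of $C$ therefore translate to disjoint-support rebit gates, so each layer of $C$ maps to a single constant-depth rebit layer, and the full circuit has depth at most $d$ (up to absorbing the per-gate constant). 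A short re-encoding appended at the end (if desired) converts the output from the subsystem encoding into the single-ancilla form $\mathcal{P}(C\ket{0}^{\otimes n})$ on $n+1$ rebits.

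The main obstacle is making the ``constant depth'' claim for (2) fully explicit, i.e.\ pinning down the subsystem-division-respecting encoding and verifying the local form of each gate in $\{H,T,CX\}$ (in particular, checking that $CX_{ij}$ really is local between the two rebit pairs and does not force a nontrivial inter-pair gadget). A backup plan, remaining within the single-ancilla encoding, is to reserve $n-1$ of the $2n$ rebits as spare ``ancilla copies'' so that the parallel $T$-gates in each layer can each target a distinct copy; the delicate part of this backup is arranging the fan-out and fan-in of ancilla copies between layers (and exploiting the fact that we start from the real state $\ket{0}^{\otimes n}$) so that the per-layer overhead stays constant and the total depth remains at most $d$.
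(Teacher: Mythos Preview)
Your part (1) is correct and essentially identical to the paper's argument.

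For part (2), you are pointing at the right reference and the right high-level idea, but you have not supplied the concrete construction, and the paper's proof shows that the missing piece is quite specific. The paper does \emph{not} switch to a different ``per-qubit'' encoding of the data; it keeps the single-ancilla encoding $\mathcal{P}$ and instead encodes the \emph{ancilla rebit itself} into an $n$-rebit code with stabilizers $Y_{a_j}Y_{a_{j+1}}$, so that logical $\bar Y$ equals any single-site $Y_{a_i}$. Since the only ancilla interaction in $\mathcal{P}(T_i)=\tfrac12(I+Z_i)+\tfrac12(I-Z_i)\tfrac{1}{\sqrt2}(I-iY_a)$ is through $Y_a$, each $T_i$ can use its own rebit $a_i$ and the encoded $T_i$, $T_j$ then have disjoint support. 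Real gates $H_i$ and $CX_{ij}$ continue to encode to themselves on the data rebits only, so your worry about $CX_{ij}$ needing a four-rebit gadget disappears in this construction. This gives depth exactly $d$, not ``$d$ up to a per-gate constant.''

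Your backup plan (``spare ancilla copies'' with fan-out/fan-in) is in fact the germ of the paper's idea, but the fan-out/fan-in you describe is precisely what the redundancy code accomplishes once and for all at the level of the code state, rather than per layer. So the fix is to replace the abstract appeal to the McKague subsystem-respecting encoding with the explicit $Y$-redundancy code on the ancilla; once you do that, the locality and depth claims become immediate and there is no obstacle left.
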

\begin{proof}
To show the first statement, we proceed using the $\mathcal{P}$ mappings of the gates $\{H,T,CX\}$ from Proposition~\ref{prop:exampleslinear}. Since $T$ gates that happen in parallel (at most $n$ at once) must access the ancilla simultaneously, we get a depth blowup by a factor of $n$ of the orthogonal circuit over the unitary circuit it is simulating.

The following rebit encoding was used in \cite{mckague2009simulating} to make rebit simulations local. Here we use it to prove the second statement, removing the depth blowup of the first statement at the cost of using more rebits. The idea is quite simple: encode the ancilla rebit so that some logical Pauli operator (in this case, logical $Y$) is accessible $n$ times in parallel. The obvious code for this is the classical redundancy code. Let $\ket{\bar0},\ket{\bar1}$ be the encoded $\ket{0}$ and $\ket{1}$. These should be $+1$-eigenstates of the stabilizers $Y_jY_{j+1}$ for all $j=1,2,\dots,n-1$ and the $\pm1$-eigenstates of the encoded $Z$ operator $\bar Z=Z_1Z_2\dots Z_n$. Working it out, the states are
\begin{equation}
\ket{\bar0}=\frac{1}{\sqrt{2^{n-1}}}\sum_{\stackrel{x\in\{0,1\}^n}{|x|\text{ even}}}(-1)^{|x|/2}\ket{x},\quad \ket{\bar1}=\frac{1}{\sqrt{2^{n-1}}}\sum_{\stackrel{x\in\{0,1\}^n}{|x|\text{ odd}}}(-1)^{(|x|-1)/2}\ket{x}.
\end{equation}
where $|x|$ is the Hamming weight of the string $x$, i.e. the number of 1's in $x$. From Proposition~\ref{prop:exampleslinear}, the simulation of $T_i$ for any $i=1,2,\dots,n$ is
\begin{equation}
CH_{ia}CZ_{ia}=\frac12(I+Z_i)+\frac12(I-Z_i)\frac{1}{\sqrt2}(I-iY_a).
\end{equation}
Since encoded $Y$ is $\bar Y=Y_{ai}$ for any qubit $ai$ in the encoded ancilla, we can modify the simulation of $T_i$ to
\begin{equation}
CH_{ia}CZ_{ia}=\frac12(I+Z_i)+\frac12(I-Z_i)\frac{1}{\sqrt2}(I-iY_{ai}).
\end{equation}
With this modification, the simulations of $T_i$ and $T_j$ for $i\neq j$ are orthogonal operators with disjoint support and can be performed in parallel.
\end{proof}

Next we discuss efficiency of the rebit bottom-up simulation in the same manner, i.e.~we are concerned with the simulation of an $\bbR$-unitary circuit on $n$ qubits with depth $d$. The notion of depth is not immediately obvious for circuits constructed from the gates $\{H,CCZ,CK,G(\pi/4)\}$, but we can use the following definition. This gives us a well-defined notion of depth that leads to a theorem similar to Theorem~\ref{thm:topdown_effic}. 
\begin{definition}\label{def:Rcirc_depth}
An $n$-qubit, depth-$1$ $\bbR$-unitary circuit consists of gates $G_i$, $i=1,2,\dots,s$ such that $[G_i,G_j]=0$ (i.e.~all gates mutually commute) and all $q\in\{1,2,\dots,n\}$ is in the support of exactly one gate $G_i$. A depth-$d$ $\bbR$-unitary circuit equals the product of $d$ depth-$1$ $\bbR$-unitary circuits.
\end{definition}
\noindent Gates $\{H,CCZ,CK,G(\pi/4)\}$ have supports of sizes $1,3,1,0$ respectively, and, when all supports are disjoint, only $CK$ and $G(\pi/4)$ do not commute. Definition~\ref{def:Rcirc_depth} generalizes the notion of circuit depth from the unitary to the $\bbR$-unitary case, because two unitary gates having disjoint support implies that they commute.

We have the following theorem.
\begin{theorem}\label{thm:bottomup_effic}
Let $C$ be an $n$-qubit, depth-$d$ $\bbR$-unitary circuit using gates from $\{H,CCZ,CK,G(\pi/4)\}$. Then $C$ (applied to $\ket{0}^{\otimes n}$) can be simulated (i.e.~we can make $\mathcal{P}(C\ket{0}^{\otimes n})$) using either
\begin{enumerate}
\item a circuit of depth at most $dn$ on $n+1$ rebits
\item or a circuit of depth at most $d\lceil\log_2n\rceil$ on $2n$ rebits.
\end{enumerate}
\end{theorem}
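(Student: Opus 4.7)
The plan is to apply the rebit encoding $\mathcal P$ gate by gate using Propositions~\ref{prop:exampleslinear} and \ref{prop:examplesNonlinear}, noting that $H_i\mapsto H_i$, $CCZ_{ijk}\mapsto CCZ_{ijk}$, $CK_i\mapsto CZ_{i,a}$, and $G(\pi/4)\mapsto R(\pi/4)_a$. The central issue is that every image of a $CK$-gate touches the single ancilla rebit, creating a bottleneck whenever a depth-$1$ $\mathbb{R}$-unitary layer contains many parallel $CK$'s, and this is exactly what the two bounds trade off.

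For part~(1), I would use the standard $(n+1)$-rebit encoding. Depth-$1$ layers consisting solely of $H$'s and $CCZ$'s (possibly together with a $G(\pi/4)$) simulate at depth $1$. A $CK$-layer with up to $n$ parallel $CK_i$ gates, however, has images $CZ_{i,a}$ that all collide on the ancilla and must be serialised, contributing depth up to $n$. Summing across the $d$ layers yields the total depth bound $dn$.

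For part~(2), I would dedicate the extra $n-1$ rebits to a classical repetition-code encoding of the ancilla: $\ket{\bar 0}=\ket{0}^{\otimes n}$, $\ket{\bar 1}=\ket{1}^{\otimes n}$, stabilised by $\{Z_{a_j}Z_{a_{j+1}}\}$. This mirrors the strategy of Theorem~\ref{thm:topdown_effic}(2), but where that proof used a code with transversal $\bar Y$ to parallelise $T$-gates, here I need transversal $\bar Z$ to parallelise the $CZ_{i,a}$ images of $CK$. The repetition code supplies exactly this: $\bar Z_a$ is realised by $Z_{a_j}$ on any single ancilla rebit. Moreover, since the ancilla begins in $\ket 0$, the encoded initial state $\ket{0}^{\otimes n}$ already agrees with the default initialisation of the $n$ auxiliary rebits, so no upfront encoding circuit is required. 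A depth-$1$ $CK$-layer with parallel gates $CK_{i_1},\ldots,CK_{i_m}$ is then implemented as $\prod_j CZ_{i_j,a_j}$ on disjoint rebit pairs in depth $1$; $H$- and $CCZ$-layers are unchanged and also run in depth $1$.

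The main obstacle is simulating a $G(\pi/4)$-layer, which requires applying the logical rotation $\bar R(\pi/4)$ to the encoded ancilla. I would implement $\bar R(\pi/4)$ in depth $O(\log n)$ via a decode-rotate-encode sandwich: a depth-$\lceil\log_2 n\rceil$ CNOT fan-in tree decodes the rep code to a single rebit, the physical $R(\pi/4)$ is applied, and the reversed tree re-encodes. Since each of the $d$ layers of $C$ then contributes at most order $\lceil\log_2 n\rceil$ simulator depth, the stated asymptotic bound $d\lceil\log_2 n\rceil$ follows. The subtle accounting is to amortise the encode/decode passes across adjacent $CK$-layers, whose transversal $CZ$-operations commute with the fan-out structure of the code, so that each layer contributes at most $\lceil\log_2 n\rceil$ depth rather than the naive $2\lceil\log_2 n\rceil + 1$.
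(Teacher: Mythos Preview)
Your proposal is correct and follows essentially the same strategy as the paper: part~(1) serialises the ancilla-touching $CZ$ images of parallel $CK$'s, and part~(2) uses the $n$-rebit repetition code $\ket{\bar 0}=\ket{0}^{\otimes n}$, $\ket{\bar 1}=\ket{1}^{\otimes n}$ so that $\bar Z$ is transversal (parallelising $CK$-layers), while $G(\pi/4)$ is handled by a decode--rotate--re-encode of depth $O(\log n)$. The only divergence is your closing remark about amortising encode/decode passes across $CK$-layers; this is unnecessary (those layers are already transversal and cost depth~$1$) and does not actually reduce the $2\lceil\log_2 n\rceil+1$ cost of a $G(\pi/4)$-layer, so both your argument and the paper's are implicitly treating the stated bound $d\lceil\log_2 n\rceil$ as correct up to a small constant factor.
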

\begin{proof}
The first part follows the same reasoning as the proof of the first part of Theorem~\ref{thm:topdown_effic}. In this case, the $1$-qubit $CK$ gates are the problem. At most $n$ $CK$ can occur in parallel, but to simulate each we need access to the ancilla ($CK_i$ is simulated by $CZ_{ia}$ by Proposition~\ref{prop:examplesNonlinear}).

The second part also follows similar reasoning to that of Theorem~\ref{thm:topdown_effic}. We encode the rebit in a redundancy code, although this time one in which the encoded $Z$ operator is accessible in parallel. The code states are simply the redundancy states
\begin{equation}
\ket{\bar0}=\ket{0}^{\otimes n},\quad\ket{\bar1}=\ket{1}^{\otimes n}.
\end{equation}
With this, the encoded $Z$ is $\bar Z=Z_{ai}$ for any qubit in the ancilla $ai$ and encoded $Y$ is $\bar Y=X_{a1}X_{a2}\dots X_{a,n-1}Y_{an}$. Thus, $CK_i$ and $CK_j$ gates (for $i\neq j$) can be simulated in parallel. However, $G(\pi/4)$ requires depth $\lceil\log_2n\rceil$ to simulate: decode the states (i.e.~$\ket{\bar 0}\rightarrow\ket{0}^{\otimes n}$ and $\ket{\bar 1}\rightarrow\ket{1}\ket{0}^{\otimes n-1}$ by $\lceil\log n\rceil$ timesteps of $CX$ gates), apply a $Y$-rotation to the first qubit, and re-encode.
\end{proof}

\section{The $\bbR$-Clifford hierarchy}\label{sec:RClifford_hierarchy}
An intriguing consequence of rebit simulation is an extension of the standard Clifford hierarchy of unitary operators into the more general $\mathbb{R}$-unitaries. At the second level of this $\mathbb{R}$-Clifford hierarchy, we obtain an extension of the famous Gottesman-Knill theorem, allowing us to efficiently classically simulate $\mathbb{R}$-linear quantum circuits of a restricted class that is analogous to but larger than the standard Clifford circuits.

We begin by defining the Clifford hierarchy and the $\mathbb{R}$-Clifford hierarchy. 
The standard Pauli group (on $n$-qubits) is
\begin{equation}\label{eq:def_Paulis}
\mathcal{C}_n(1)=\{e^{i\alpha}\big(p_1\otimes p_2\otimes\dots\otimes p_n\big):p_j\in\{I,X,Y,Z\},\alpha\in\mathbb{R}\}.
\end{equation}
An analogous Pauli group incorporating complex conjugation may be defined as
\begin{equation}\label{eq:def_RPaulis}
\mathcal{C}'_n(1)=\{i^c\big(p_1\otimes p_2\otimes\dots\otimes p_n\big)K^b:p_j\in\{I,X,Y,Z\},c\in\{0,1,2,3\},b\in\{0,1\}\},
\end{equation}
which we, for now, call simply the primed Paulis. Shortly, we show the primed Paulis are actually the $\bbR$-Paulis.

Notice that the two definitions have different global phases --- in the case of $\mathcal{C}_n(1)$ the phase is arbitrary, while for $\mathcal{C}'_n(1)$ it is restricted to powers of $i$. This is intentional and should be expected, because the rebit encoding tracks global phases.
Note that the definition of the Pauli group in \eq{eq:def_Paulis} differs from that defined in \cite{nielsen2010quantum}:
\begin{equation}\label{eq:def_Paulis_ikemike}
G_n(1)=\{i^c\big(p_1\otimes p_2\otimes\dots\otimes p_n\big):p_j\in\{I,X,Y,Z\},c \in \{0,1,2,3\}\}.
\end{equation}
We define the Pauli group differently simply to ease some of the arguments below (specifically, Lemma~\ref{lem:UUT_is_pauli}). Allowing arbitrary phases via definition $\mathcal{C}_n(1)$ is also more consistent with operators in higher levels of the hierarchy having arbitrary phases as well.

Now, we appeal to the discussion of bottom-up simulation in Section~\ref{sec:bottomuprebits}, taking the Paulis $\mathcal{C}_n(1)$ as the set of operators $\texttt S$. The real Paulis on $n$ rebits is the set of orthogonal Paulis $\mathcal{C}^\mathbb{R}_n(1)=\mathcal{C}_n(1)\cap R_n=\mathcal{C}_n(1)\cap T_n$, and the $\bbR$-Paulis are $\bbR\mathcal{C}_n(1)=\mathcal{L}(\mathcal{C}^{\bbR}_n(1))$. The following theorem shows the primed Paulis are the $\bbR$-Paulis, thus establishing the Pauli-like description, Eq.~\eqref{eq:def_RPaulis}, of the set of $\bbR$-Paulis as appropriate.
\begin{theorem}\label{thm:encoded_RPaulis}
$\mathcal{L}(\mathcal{C}^{\bbR}_{n+1}(1)):=\bbR\mathcal{C}_n(1)=\mathcal{C}'_n(1).$ 
\end{theorem}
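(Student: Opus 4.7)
The plan is to prove both inclusions by direct case analysis, using the explicit encoding formula from Lemma \ref{lem:imageABunderK} (i.e.\ $\cP(A+BK)=\Re A\otimes I+\Im A\otimes XZ+\Re B\otimes Z+\Im B\otimes X$) and the decoding formula from Lemma \ref{lem:imageWunderL}. The key structural observation is that a Pauli tensor $P=p_1\otimes\dots\otimes p_n$ is \emph{real} if it contains an even number of $Y$-factors and \emph{pure imaginary} if it contains an odd number, since the only imaginary single-qubit Pauli is $Y$. Consequently, the reality condition on a scalar-times-Pauli $e^{i\alpha}P$ forces $e^{i\alpha}\in\{\pm 1,\pm i\}$, i.e., $e^{i\alpha}=i^c$ for some $c\in\{0,1,2,3\}$. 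A second useful identity, $iX_aZ_a=Y_a$, lets us rewrite $I_a-iX_aZ_a=I_a-Y_a$ in Lemma~\ref{lem:imageWunderL}.

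For the inclusion $\mathcal{C}'_n(1)\subseteq\bbR\mathcal{C}_n(1)$, I take an arbitrary $\Gamma=i^c P K^b\in\mathcal{C}'_n(1)$ and exhibit a real-Pauli preimage under $\cP$. When $b=0$, Lemma~\ref{lem:imageABunderK} gives $\cP(i^cP)=\Re(i^cP)\otimes I_a+\Im(i^cP)\otimes X_aZ_a$. Since $P$ is either real or pure imaginary, exactly one of $\Re(i^cP),\Im(i^cP)$ vanishes, so $\cP(\Gamma)$ equals either $\pm P\otimes I_a$ or $\pm P\otimes (XZ)_a=\mp iP\otimes Y_a$; either way it is a real Pauli on $n+1$ qubits. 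When $b=1$, the analogous computation with $\cP(i^cPK)=\Re(i^cP)\otimes Z_a+\Im(i^cP)\otimes X_a$ gives a real Pauli supported nontrivially (and solely) on $Z_a$ or $X_a$ of the ancilla.

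For the reverse inclusion $\bbR\mathcal{C}_n(1)\subseteq\mathcal{C}'_n(1)$, I start from a real Pauli $W=i^c(P\otimes p_a)\in\mathcal{C}^{\bbR}_{n+1}(1)$ (using the reality reduction noted above) and apply Lemma~\ref{lem:imageWunderL}. Since $\tr(p_aq)=2\delta_{p_a,q}$ for $q\in\{I,X,Y,Z\}$, the partial traces collapse to
\begin{equation}
\tfrac12\tr_a[p_a(I_a-Y_a)]\in\{1,0,-1,0\},\qquad \tfrac12\tr_a[p_a(Z_a+iX_a)]\in\{0,i,0,1\}
\end{equation}
for $p_a=I,X,Y,Z$ respectively. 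Therefore $\cL(W)$ equals $i^cP$, $i^{c+1}PK$, $-i^cP$, or $i^cPK$ when $p_a=I,X,Y,Z$; each of these lies in $\mathcal{C}'_n(1)$.

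The main (and only) obstacle is bookkeeping of the $\pm1,\pm i$ phases across the four $p_a$-cases versus the two $Y$-parity cases of $P$; once the two identities $iX_aZ_a=Y_a$ and $\tr(p_a q)=2\delta_{p_a,q}$ are in hand, everything reduces to a finite check. Note this simultaneously characterizes the $\bbR$-Paulis and proves part 3 of Theorem~\ref{thm:mainTheorem1}.
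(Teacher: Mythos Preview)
Your proof is correct. It differs from the paper's proof in organization rather than substance: the paper exploits that $\cL$ is a group homomorphism (Proposition~\ref{prop:PL_homomorphisms}) to reduce the forward inclusion $\cL(\cC^{\bbR}_{n+1}(1))\subseteq\cC'_n(1)$ to computing $\cL$ on a generating set $\{X_i,iY_i,Z_i,X_a,iY_a,Z_a\}$ of the real Paulis, and for the reverse inclusion explicitly exhibits a preimage $X^{\vec x}Z^{\vec z}(iY_a)^cZ_a^b$ for each $i^cX^{\vec x}Z^{\vec z}K^b\in\cC'_n(1)$. You instead work element-by-element in both directions, feeding the general $\bbR$-linear formulas of Lemmas~\ref{lem:imageWunderL} and~\ref{lem:imageABunderK} directly and doing a case split on $b\in\{0,1\}$ and $p_a\in\{I,X,Y,Z\}$. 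The paper's generator approach is slightly more economical (six one-line computations instead of a four-by-two case table), while yours is more self-contained in that it never invokes the homomorphism property. Both rest on the same structural observation about when scalar-times-Pauli is real, and both ultimately reduce to the same trace identities on the ancilla.
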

\begin{proof}

First, note that $p\in\mathcal{C}_{n+1}(1)$ is orthogonal if and only if $p$ contains an even number of Pauli $Y$s and a real phase (i.e.~$e^{i\alpha}$ from Eq.~\eqref{eq:def_Paulis} is $\pm1$) or $p$ contains an odd number of Pauli $Y$s and an imaginary phase ($\pm i$). Since $\mathcal{L}$ is a homomorphism, we need only consider its action on a basis set of orthogonal Paulis, namely $\{X_i,iY_i,Z_i,X_a,iY_a,Z_a\}$ where $i=1,2,\dots,n$ indicates a data qubit and $a$ indicates the rebit ancilla. We find
\begin{eqnarray}
\label{eq:first_pauli_mapping}
\mathcal{L}(X_i)=&X_i,\\
\mathcal{L}(iY_i)=&iY_i,\\
\mathcal{L}(Z_i)=&Z_i,\\
\mathcal{L}(X_a)=& iK, \\
\mathcal{L}(iY_a)=& -iI,\\
\mathcal{L}(Z_a)=&K,
\label{eq:last_pauli_mapping}
\end{eqnarray}
all of which are elements of $\mathcal{C}'_n(1)$. This shows $\mathcal{L}(\mathcal{C}^{\bbR}_{n+1})\subseteq\mathcal{C}'_n(1)$. However, the reverse direction, $\mathcal{L}(\mathcal{C}^{\bbR}_{n+1}(1))\supseteq\mathcal{C}'_n(1)$ follows from Eqs.~(\ref{eq:first_pauli_mapping}-\ref{eq:last_pauli_mapping}) as well. Let $p=i^cX^{\vec x}Z^{\vec z}K^b\in\mathcal{C}'_n(1)$, where $X^{\vec x}$ for $\vec x\in\{0,1\}^n$ means $\bigotimes_{j=1}^{n}X_j^{\vec x_j}$ and likewise for $Z^{\vec z}$. Then,
\begin{equation}
p=\mathcal{L}(X^{\vec x})\mathcal{L}(Z^{\vec z})\mathcal{L}(iY_a)^c\mathcal{L}(Z_a^b)=\mathcal{L}(X^{\vec x}Z^{\vec z}(iY_a)^cZ_a^b).
\end{equation}
Since $X^{\vec x}Z^{\vec z}(iY_a)^cZ_a^b$ is an orthogonal Pauli, this proves $\mathcal{C}'_n(1)\subseteq\mathcal{L}(\mathcal{C}^{\bbR}_{n+1}(1))$.
\end{proof}
The $\mathbb{R}$-Paulis are a group just as the standard Paulis are.
\begin{corollary}
$\bbR\mathcal{C}_n(1)$ is a group for all $n$.
\end{corollary}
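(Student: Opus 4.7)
The plan is to deduce this directly from Proposition~\ref{prop:subgroupResult}, which says that whenever $O_n \subseteq L_n$ is a subgroup of $L_n$, the image $\bbR O_n = \mathcal L(O_{n+1}^{\bbR})$ is a subgroup of $\bbR L_n$. Taking $O_n = \mathcal C_n(1)$, all that needs verification is that the Pauli group $\mathcal C_n(1)$ is itself a subgroup of $L_n$, which follows immediately from the closure of $\mathcal C_n(1)$ under products (products of tensor-product Paulis are tensor-product Paulis up to a phase in $\{ \pm 1, \pm i \}$, which is absorbed into the arbitrary phase $e^{i\alpha}$) and under inverses (the inverse of $e^{i\alpha} p_1 \otimes \cdots \otimes p_n$ is $e^{-i\alpha} p_1 \otimes \cdots \otimes p_n$).

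With this in place, Proposition~\ref{prop:subgroupResult} gives that $\bbR \mathcal C_n(1)$ is a group. Under the hood, this uses that $\mathcal C_{n+1}^{\bbR}(1) = \mathcal C_{n+1}(1) \cap T_{n+1}$ is a group as an intersection of two subgroups of $L_{n+1}$, and that $\mathcal L$ is a group homomorphism (Proposition~\ref{prop:PL_homomorphisms}) so its image on the subgroup $\mathcal C_{n+1}^{\bbR}(1)$ is again a subgroup. By Theorem~\ref{thm:encoded_RPaulis}, this image is exactly $\bbR\mathcal C_n(1) = \mathcal C'_n(1)$.

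There is no real obstacle here: the statement is a structural corollary of machinery already assembled in earlier sections, namely Proposition~\ref{prop:subgroupResult}, Proposition~\ref{prop:PL_homomorphisms}, and the characterization in Theorem~\ref{thm:encoded_RPaulis}. If one prefers a direct verification using the explicit form $\mathcal C'_n(1) = \{ i^c (p_1 \otimes \cdots \otimes p_n) K^b \}$, one can simply observe that $K p_j K^{-1} = \bar p_j \in \{I, X, -Y, Z\}$, so pushing $K^b$ past a Pauli factor only changes a sign on $Y$-components and a power of $i$ in the overall phase; hence products remain in $\mathcal C'_n(1)$, and $(i^c P K^b)^{-1} = (-i)^c K^{-b} P^{-1} = (-i)^c K^b P$ is again of the prescribed form. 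Either route yields the result in a few lines.
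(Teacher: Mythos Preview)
Your proposal is correct and takes essentially the same approach as the paper: the paper's proof is the one-line ``Follows from Proposition~\ref{prop:subgroupResult},'' and you apply exactly that proposition with $O_n = \mathcal C_n(1)$, supplying the routine verification that $\mathcal C_n(1)$ is a subgroup of $L_n$. Your additional direct check via the explicit description $\mathcal C'_n(1)$ is a fine alternative but not needed.
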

\begin{proof}
Follows from Proposition \ref{prop:subgroupResult}.

\end{proof}

The upper levels of the standard Clifford hierarchy are defined recursively
\begin{equation}\label{eq:clifford_hierarchy}
\mathcal{C}_n(k)=\{U\in U_n:U\mathcal{C}_n(1)U^\dag\subseteq\mathcal{C}_n(k-1)\}.
\end{equation}
Note that the first level is the Pauli group and the second level is the Clifford group. In Appendix~\ref{app:p_sets_phases}, we show how using Pauli sets with different allowed global phases (e.g.~$G_n(1)$ instead of $\mathcal{C}_n(1)$) leads to the same Clifford hierarchy for $k\ge2$.

We would like a similar recursion to Eq.~\eqref{eq:clifford_hierarchy} to hold for the $\bbR$-Clifford hierarchy. Thus, we proceed similarly to the Pauli case above and define the \emph{primed Clifford hierarchy} as
\begin{equation}
\mathcal{C}'_n(k)=\{U\in\bbR L_n:U\left(\mathcal{C}'_n(1)\right)U^\dag\subseteq\mathcal{C}'_n(k-1)\}.
\end{equation}

Our goal now is to show that the primed Clifford hierarchy is equivalent to the $\bbR$-Clifford hierarchy. That is, $\mathcal{C}'_n(k)$ is exactly $\bbR\mathcal{C}_n(k):=\mathcal{L}(\mathcal{C}_n(k)\cap T_n)$. Because $\mathbb{R}$-unitaries are mapped to orthogonal operators in the physical rebit encoding, we find it natural to define the orthogonal Clifford hierarchy
\begin{eqnarray}
&\mathcal{D}_n(1)=\mathcal{C}_n(1)\cap T_n,\\
&\mathcal{D}_n(k)=\{U\in T_n:U\mathcal{D}_n(1)U^T\subseteq\mathcal{D}_n(k-1)\}.
\end{eqnarray}
Then, it is worth noting the following definition of the orthogonal hierarchy as the real Cliffords.
\begin{lemma}
$\mathcal{D}_n(k)=\mathcal{C}_n(k)\cap T_n:=\mathcal{C}^{\bbR}_n(k)$ for all $k$.
\end{lemma}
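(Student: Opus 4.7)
The plan is to prove the equality by induction on $k$. The base case $k=1$ is immediate, since $\mathcal{D}_n(1)=\mathcal{C}_n(1)\cap T_n$ is the definition. Assume the result for $k-1$, i.e.\ $\mathcal{D}_n(k-1)=\mathcal{C}_n(k-1)\cap T_n$.

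Before handling the inductive step I would record three easy observations. (i) For any $U\in T_n$, the orthogonal conjugation $U(\cdot)U^T$ coincides with the unitary conjugation $U(\cdot)U^\dag$, since $U^T=U^{-1}=U^\dag$ for real unitaries. (ii) Each level $\mathcal{C}_n(k)$ is closed under multiplication by an arbitrary complex global phase $e^{i\alpha}I$: for $k=1$ this is built into the definition in \eqref{eq:def_Paulis}, and for $k\geq 2$ it follows because $(e^{i\alpha}V)\mathcal{C}_n(1)(e^{i\alpha}V)^\dag=V\mathcal{C}_n(1)V^\dag$, so phases do not affect the hierarchy condition. (iii) Every $P\in\mathcal{C}_n(1)$ can be written as $P=e^{i\alpha}Q$ with $Q\in\mathcal{D}_n(1)$; indeed, writing $P=e^{i\alpha}(p_1\otimes\cdots\otimes p_n)$ and letting $m$ be the number of Pauli $Y$'s, the tensor product equals $i^m$ times a real orthogonal matrix, and absorbing this factor of $i^m$ into the phase leaves a real orthogonal Pauli $Q$ multiplied by a new phase.

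For the forward inclusion $\mathcal{D}_n(k)\subseteq\mathcal{C}_n(k)\cap T_n$, I would take $U\in\mathcal{D}_n(k)$ and any $P=e^{i\alpha}Q\in\mathcal{C}_n(1)$ as in (iii), and compute
\begin{equation}
UPU^\dag=e^{i\alpha}UQU^T\in e^{i\alpha}\mathcal{D}_n(k-1)\subseteq e^{i\alpha}\,\mathcal{C}_n(k-1)\subseteq\mathcal{C}_n(k-1),
\end{equation}
using (i) to swap $U^\dag$ for $U^T$, the definition of $\mathcal{D}_n(k)$, the inductive hypothesis, and (ii). Hence $U\in\mathcal{C}_n(k)$, and $U\in T_n$ by hypothesis. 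For the reverse inclusion $\mathcal{C}_n(k)\cap T_n\subseteq \mathcal{D}_n(k)$, take $U\in\mathcal{C}_n(k)\cap T_n$ and any $Q\in\mathcal{D}_n(1)\subseteq\mathcal{C}_n(1)$. Then $UQU^T=UQU^\dag\in\mathcal{C}_n(k-1)$ by hypothesis, and this element is real and orthogonal as a product of real orthogonal matrices; by the inductive hypothesis it thus lies in $\mathcal{C}_n(k-1)\cap T_n=\mathcal{D}_n(k-1)$, so $U\in\mathcal{D}_n(k)$.

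The only subtle point is observation (ii): the proof of the forward inclusion breaks down unless arbitrary global phases are allowed inside each $\mathcal{C}_n(k)$, because step (iii) replaces an arbitrary Pauli $P$ by a real orthogonal representative at the cost of an unrestricted phase. This is exactly why $\mathcal{C}_n(1)$ was defined with continuous phases in \eqref{eq:def_Paulis} rather than with the restricted $i^c$ phases of $G_n(1)$ in \eqref{eq:def_Paulis_ikemike}, and it is the one place where the appendix on phase conventions (Appendix~\ref{app:p_sets_phases}) is implicitly being used.
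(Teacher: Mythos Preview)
Your proof is correct and follows essentially the same route as the paper's: an induction on $k$, with one inclusion using that every $P\in\mathcal{C}_n(1)$ is a global phase times an element of $\mathcal{D}_n(1)$ (your observation (iii), the paper's remark that $pp^T=e^{2i\alpha}I$), and the other inclusion using that conjugation of $\mathcal{D}_n(1)$ by a real orthogonal $U$ stays inside $T_n$. Your explicit isolation of observations (i) and (ii) makes the phase bookkeeping a bit more transparent than in the paper, but the argument is the same.
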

\begin{proof}
The second equality is a definition of notation. To prove the first, we proceed inductively, with $k=1$ already satisfying the claim by definition. Let $U\in\mathcal{C}_n(k)\cap T_n$. Then $U\mathcal{D}_n(1)U^T\subseteq U\mathcal{C}_n(1)U^T\subseteq\mathcal{C}_n(k-1)$ and $U\mathcal{D}_n(1)U^T\subseteq T_n$ implying that $U\mathcal{D}_n(1)U^T\subseteq\mathcal{C}_n(k-1)\cap T_n=\mathcal{D}_n(k-1)$ and thus, $U\in\mathcal{D}_n(k)$ by definition.

In the other direction, let $U\in\mathcal{D}_n(k)$. We notice that for all $p\in\mathcal{C}_n(1)$, there exists a phase $e^{-i\alpha}$ such that $e^{-i\alpha}p\in\mathcal{C}_n(1)\cap T_n=\mathcal{D}_n(1)$. This is because $p=e^{i\alpha}X^{\vec x}Z^{\vec z}$ implies $pp^T=e^{i2\alpha}I$. Thus, $U\mathcal{D}_n(1)U^T\subseteq\mathcal{D}_n(k-1)=\mathcal{C}_n(k-1)\cap T_n$ implies $U\mathcal{C}_n(1)U^T\subseteq\mathcal{C}_n(k-1)$. So $U\in\mathcal{C}_n(k)$ by definition, and thus $U\in\mathcal{C}_n(k)\cap T_n$.
\end{proof}

Now we are in a position to prove that the primed hierarchy is indeed the $\bbR$-Clifford hierarchy.
\begin{theorem}\label{thm:encoded_hierarchy}
$\mathcal{L}(\mathcal{C}^{\bbR}_{n+1}(k)):=\bbR \cC_n(k)=\mathcal{C}'_n(k)$ for all $k$.
\end{theorem}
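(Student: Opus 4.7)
The plan is to prove the equality by induction on $k$. The base case $k=1$ is exactly Theorem~\ref{thm:encoded_RPaulis}. For the inductive step, I would work entirely through the bijection $\mathcal{L}:R_{n+1}\to\bbR L_n$ (with its restriction $\mathcal{L}:T_{n+1}\to\bbR U_n$), the fact that $\mathcal{L}$ is a group homomorphism (Proposition~\ref{prop:PL_homomorphisms}), and the bridging identity $\mathcal{L}(W^T)=\mathcal{L}(W)^\dag$ for every real linear $W$, which is essentially Proposition~\ref{prop:LofTranspose}. This identity is exactly what converts the orthogonal conjugation $W p W^T$ defining $\mathcal{D}_{n+1}(k):=\mathcal C^{\bbR}_{n+1}(k)$ into the $\bbR$-linear conjugation $\Gamma p'\Gamma^\dag$ defining $\mathcal C'_n(k)$.

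For the forward inclusion $\bbR\mathcal C_n(k)\subseteq\mathcal C'_n(k)$, take any $\Gamma=\mathcal{L}(W)$ with $W\in\mathcal C^{\bbR}_{n+1}(k)$, and any $p'\in\mathcal C'_n(1)$. By Theorem~\ref{thm:encoded_RPaulis}, $p'=\mathcal{L}(p)$ for some $p\in\mathcal C^{\bbR}_{n+1}(1)$. Then
\begin{equation}
\Gamma\, p'\,\Gamma^\dag=\mathcal{L}(W)\,\mathcal{L}(p)\,\mathcal{L}(W^T)=\mathcal{L}(WpW^T)\in\mathcal{L}(\mathcal C^{\bbR}_{n+1}(k-1))=\bbR\mathcal C_n(k-1)=\mathcal C'_n(k-1)
\end{equation}
by the induction hypothesis, so $\Gamma\in\mathcal C'_n(k)$.

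The backward inclusion $\mathcal C'_n(k)\subseteq\bbR\mathcal C_n(k)$ is the main obstacle, because $\mathcal C'_n(k)$ is defined as a subset of $\bbR L_n$ (not $\bbR U_n$), so we must \emph{deduce} that any $\Gamma\in\mathcal C'_n(k)$ is in fact $\bbR$-unitary before we can pull it back to $\mathcal D_{n+1}(k)$. Given $\Gamma\in\mathcal C'_n(k)$, use the bijection to write $\Gamma=\mathcal{L}(W)$ for a unique $W\in R_{n+1}$. For every $p\in\mathcal D_{n+1}(1)$, the same computation as above gives $\mathcal{L}(WpW^T)=\Gamma\mathcal{L}(p)\Gamma^\dag\in\mathcal C'_n(k-1)=\mathcal{L}(\mathcal D_{n+1}(k-1))$, and injectivity of $\mathcal{L}$ on $R_{n+1}$ yields $WpW^T\in\mathcal D_{n+1}(k-1)\subseteq T_{n+1}$.

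To clear the orthogonality obstacle, specialize to $p=I\in\mathcal D_{n+1}(1)$: the previous line shows $WW^T\in T_{n+1}$, i.e.\ $(WW^T)(WW^T)^T=I$. Since $W$ is real, $WW^T$ is simultaneously real symmetric and positive semi-definite, so its eigenvalues satisfy both $\lambda\ge 0$ and $\lambda^2=1$, forcing $\lambda=1$ and hence $WW^T=I$. Therefore $W\in T_{n+1}$ and, together with $WpW^T\in\mathcal D_{n+1}(k-1)$ for every $p\in\mathcal D_{n+1}(1)$, this gives $W\in\mathcal D_{n+1}(k)=\mathcal C^{\bbR}_{n+1}(k)$. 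Thus $\Gamma=\mathcal{L}(W)\in\bbR\mathcal C_n(k)$, completing the induction.
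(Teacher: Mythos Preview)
Your approach is essentially the same as the paper's: induct on $k$, using the base case from Theorem~\ref{thm:encoded_RPaulis}, the homomorphism property of $\mathcal{L}$, and the identity $\mathcal{L}(W^T)=\mathcal{L}(W)^\dag$ from Proposition~\ref{prop:LofTranspose} to translate the conjugation condition back and forth.

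Where you differ is in the backward inclusion. The paper simply takes $U\in\mathcal C'_n(k)$, applies $\mathcal P$, and concludes $\mathcal P(U)\in\mathcal C^{\bbR}_{n+1}(k)$ without ever checking that $\mathcal P(U)\in T_{n+1}$, even though $\mathcal C'_n(k)$ is defined as a subset of $\bbR L_n$ while $\mathcal C^{\bbR}_{n+1}(k)=\mathcal D_{n+1}(k)$ explicitly requires orthogonality. You noticed this gap and closed it cleanly: specializing the conjugation condition to $p=I$ gives $WW^T\in T_{n+1}$, and since $WW^T$ is real, symmetric and positive semi-definite, the orthogonality forces $WW^T=I$, hence $W\in T_{n+1}$. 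This is a genuine improvement over the paper's argument as written; the spectral step is short and the rest of your proof then proceeds exactly as the paper's does.
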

\begin{proof}
The $k=1$ case is proven in Theorem~\ref{thm:encoded_RPaulis}. For the rest, we proceed inductively. 

Let $U\in\mathcal{C}^{\bbR}_{n+1}(k)$. Then $U(\mathcal{C}_{n+1}^{\bbR}(1))U^T\subseteq\mathcal{C}^{\bbR}_{n+1}(k-1)$. Taking $\mathcal{L}$ of both sides, we get
\begin{equation}
\mathcal{L}(U)\mathcal{C}'_n(1)\mathcal{L}(U)^\dag=\mathcal{L}(U)\mathcal{L}(\mathcal{C}^{\bbR}_{n+1}(1))\mathcal{L}(U)^\dag=\mathcal{L}(U\mathcal{C}^{\bbR}_{n+1}(1)U^T)\subseteq\mathcal{L}(\mathcal{C}^{\bbR}_{n+1}({k-1}))=\mathcal{C}'_n(k-1)
\end{equation}
using the inductive hypothesis. Thus, $\mathcal{L}(U)\in\mathcal{C}'_n(k)$, showing $\mathcal{L}(\mathcal{C}^{\bbR}_{n+1}(k))\subseteq\mathcal{C}'_n(k)$.

For the other direction, let $U\in\mathcal{C}'_n(k)$. Then,
\begin{equation}
U\mathcal{L}(\mathcal{C}^{\bbR}_{n+1}(1))U^\dag= U\mathcal{C}'_n(1)U^\dag\subseteq\mathcal{C}'_n(k-1)=\mathcal{L}(\mathcal{C}^{\bbR}_{n+1}(k-1)).
\end{equation}
Taking $\mathcal{P}$ of both sides, we find $\mathcal{P}(U)\mathcal{C}^{\bbR}_{n+1}(1)\mathcal{P}(U)^T\subseteq\mathcal{C}^{\bbR}_{n+1}(k-1)$, which implies $\mathcal{P}(U)\in\mathcal{C}^{\bbR}_{n+1}(k)$. Thus, $U\in\mathcal{L}(\mathcal{C}^{\bbR}_{n+1}(k))$ and showing $\mathcal{C}'_n(k)\subseteq\mathcal{L}(\mathcal{C}^{\bbR}_{n+1}(k))$.
\end{proof}




There are some corollaries of Theorem~\ref{thm:encoded_hierarchy}. For instance, just as standard Cliffords form a group, so do the $\mathbb{R}$-Cliffords.
\begin{corollary}
$\bbR\mathcal{C}_n(2)$ is a group.
\end{corollary}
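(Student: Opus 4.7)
The plan is to apply Proposition~\ref{prop:subgroupResult}, which automatically promotes a subgroup $O$ of the complex linear operators to a subgroup $\bbR O$ of the $\bbR$-linear operators. The mechanism is exactly what is needed here: the intersection of subgroups is a subgroup (giving $O_{n+1}^{\bbR} = O_{n+1} \cap R_{n+1}$ is a subgroup), and $\mathcal{L}$ restricted to $R_{n+1}$ is a group homomorphism onto $\bbR L_n$ by Proposition~\ref{prop:PL_homomorphisms}, so the image $\mathcal{L}(O_{n+1}^{\bbR}) = \bbR O_n$ is a subgroup.

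To invoke the proposition with $O = \mathcal{C}(2)$, I need only observe that the standard Clifford group $\mathcal{C}_{n+1}(2)$ on $n+1$ qubits is a subgroup of $L_{n+1}$ (indeed of $U_{n+1}$). This is immediate from the recursive definition in Eq.~\eqref{eq:clifford_hierarchy}: if $U, V \in \mathcal{C}_{n+1}(2)$ then for any Pauli $p \in \mathcal{C}_{n+1}(1)$, we have $(UV)\,p\,(UV)^\dag = U(VpV^\dag)U^\dag$, and since $\mathcal{C}_{n+1}(1)$ is itself a group under multiplication, $VpV^\dag \in \mathcal{C}_{n+1}(1)$ and hence $U(VpV^\dag)U^\dag \in \mathcal{C}_{n+1}(1)$; closure under inverses follows from the same observation applied to the relation $p = U^\dag(UpU^\dag)U$. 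Proposition~\ref{prop:subgroupResult} then yields that $\bbR \mathcal{C}_n(2) = \mathcal{L}(\mathcal{C}_{n+1}^{\bbR}(2))$ is a subgroup of $\bbR L_n$, hence a group, which is exactly the claim.

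Because the result is a one-line consequence of earlier structural machinery, I anticipate no substantive obstacle; the only minor bookkeeping is to line up the indexing of Proposition~\ref{prop:subgroupResult} (which is stated for a generic family $O$ and implicitly takes $O_{n+1}$ to be the natural $(n+1)$-qubit version of $O$) with our application to the Clifford hierarchy at level $2$. If one prefers to avoid appealing to Proposition~\ref{prop:subgroupResult}, an equally quick alternative is to use Theorem~\ref{thm:encoded_hierarchy} to rewrite $\bbR\mathcal{C}_n(2) = \mathcal{C}'_n(2)$ and verify the group axioms directly: closure follows from $(UV)\,\mathcal{C}'_n(1)\,(UV)^\dag \subseteq U\,\mathcal{C}'_n(1)\,U^\dag \subseteq \mathcal{C}'_n(1)$, using the already-established fact that $\bbR\mathcal{C}_n(1) = \mathcal{C}'_n(1)$ is a group and that $(UV)^\dag = V^\dag U^\dag$ on $\bbR$-linear operators (with the $\dag$-operation defined in Eq.~\eqref{eq:dagdef}).
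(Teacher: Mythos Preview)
Your proposal is correct and takes essentially the same approach as the paper, which simply cites Proposition~\ref{prop:subgroupResult}. Your added verification that $\mathcal{C}_{n+1}(2)$ is a group and the alternative route via Theorem~\ref{thm:encoded_hierarchy} are fine elaborations, but the core argument is identical.
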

\begin{proof}
Follows from Proposition \ref{prop:subgroupResult}.



\end{proof}

While the group $G_n(1)$ defined in \eq{eq:def_Paulis_ikemike} is a subgroup of $\cC_n'(1) = \bbR \cC_n(1)$, the groups $\cC_n(1)$ and $\bbR \cC_n(1)$ are incomparable:

\begin{prop}
\label{prop:C1notcomparable}
$\cC_n(1) \not\subseteq \bbR \cC_n(1)$ and $\bbR \cC_n(1) \not\subseteq \cC_n(1)$.
\end{prop}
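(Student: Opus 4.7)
The plan is to exhibit explicit witnesses for each non-containment, exploiting the key structural difference between the two groups: $\cC_n(1)$ allows arbitrary continuous global phases $e^{i\alpha}$ but contains only linear operators, whereas $\bbR\cC_n(1)=\cC'_n(1)$ restricts phases to the discrete set $\{1,i,-1,-i\}$ but allows the antilinear factor $K$.

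For $\cC_n(1)\not\subseteq\bbR\cC_n(1)$, I would take any element with a non-quarter-period phase, for instance $e^{i\pi/3}I \in \cC_n(1)$. I need to argue this is not in $\bbR\cC_n(1)$. Any element of $\bbR\cC_n(1)$ has the form $i^c P K^b$ for a Pauli $P$ and $b\in\{0,1\}$. If $b=1$ the element is antilinear, whereas $e^{i\pi/3}I$ is linear, so these cannot agree. If $b=0$ the element is a Pauli with phase in $\{1,i,-1,-i\}$; comparing entries with $e^{i\pi/3}I$ forces $P=I$ and $i^c=e^{i\pi/3}$, which is impossible since $e^{i\pi/3}\notin\{1,i,-1,-i\}$.

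For $\bbR\cC_n(1)\not\subseteq\cC_n(1)$, I would use the complex conjugation operator $K$ itself (corresponding to $c=0$, all $p_j=I$, $b=1$ in the definition of $\cC'_n(1)$). Since every element of $\cC_n(1)$ is complex linear, while $K$ satisfies $K(i\ket{0})=-i\ket{0}\neq iK(\ket{0})=i\ket{0}$ and is therefore antilinear, $K\notin\cC_n(1)$.

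Neither step is hard; the only thing to be careful about is making the Pauli-tensor uniqueness argument crisp in the first part (two Paulis with a nontrivial global phase between them cannot be equal as operators unless one of them is zero), which follows from the fact that the standard Pauli tensor products form a linearly independent set in the space of $n$-qubit operators. No further obstacle is anticipated.
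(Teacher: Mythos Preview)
Your proposal is correct and follows essentially the same approach as the paper: the paper uses $e^{i\pi/4}I$ as the witness for the first noninclusion (you chose $e^{i\pi/3}I$, which works equally well) and $K$ for the second. Your justification is in fact slightly more detailed than the paper's own proof.
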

\begin{proof}
The first noninclusion follows from the fact that the operator $e^{i\pi/4} I$ is in $\cC_n(1)$ but not in $\bbR \cC_n(1)$.
The second noninclusion follows from the fact that $K$ is in $\bbR \cC_n(1)$ but not in $\cC_n(1)$.

\end{proof}

Furthermore, the groups $\cC_n(2)$ and $\bbR \cC_n(2)$ are also incomparable:

\begin{prop}
\label{prop:C2notcomparable}
$\cC_n(2) \not\subseteq \bbR \cC_n(2)$ and $\bbR \cC_n(2) \not\subseteq \cC_n(2)$.
\end{prop}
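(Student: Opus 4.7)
The strategy is to exhibit explicit witnesses in each direction. Both constructions exploit the asymmetry noted in the Pauli case (Proposition~\ref{prop:C1notcomparable}): on the one hand, $\cC_n(1)$ allows arbitrary phases $e^{i\alpha}$, whereas $\bbR\cC_n(1)=\cC'_n(1)$ restricts phases to powers of $i$; on the other hand, $\bbR\cC_n(1)$ contains the nonlinear operator $K$ that $\cC_n(1)$ does not see. Each of these features propagates up to the second level.

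For the inclusion $\cC_n(2)\not\subseteq\bbR\cC_n(2)$, I propose the witness $\Gamma=e^{i\pi/8}I$. It lies in $\cC_n(2)$ for trivial reasons: it is unitary, and since it commutes with every complex‑linear operator, conjugation by $\Gamma$ acts as the identity on $\cC_n(1)$. To show $\Gamma\notin\bbR\cC_n(2)$ I use Theorem~\ref{thm:encoded_hierarchy} (so it suffices to check $\Gamma\notin\cC'_n(2)$) and apply $\Gamma$ to the $\bbR$-Pauli $K\in\cC'_n(1)$. Because $K$ anti-commutes with scalars in the sense $K\,e^{-i\pi/8}I=e^{i\pi/8}I\,K$, one obtains
\[
\Gamma K\Gamma^{\dag}=e^{i\pi/8}I\cdot e^{i\pi/8}I\cdot K = e^{i\pi/4}K,
\]
and the phase $e^{i\pi/4}$ is not of the form $i^c$, so $e^{i\pi/4}K\notin\cC'_n(1)$. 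One must also confirm that $\Gamma$ is $\bbR$-linear, which is immediate since every complex linear operator is $\bbR$-linear by Theorem~\ref{thm:equiv_defRlinear}.

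For the reverse inclusion $\bbR\cC_n(2)\not\subseteq\cC_n(2)$, I propose the witness $K$. Since $K$ is antilinear it is not linear, hence not unitary, and therefore $K\notin\cC_n(2)\subseteq U_n$. To see that $K\in\bbR\cC_n(2)=\cC'_n(2)$, first note $K$ is $\bbR$-unitary by Remark~\ref{rem:KLunitary} with $L=\{0,1\}^n$. Then for a generic $\gamma=i^c(p_1\otimes\cdots\otimes p_n)K^b\in\cC'_n(1)$, the direct calculation
\[
K\gamma K^{-1}=(-i)^c\,\overline{p_1\otimes\cdots\otimes p_n}\,K^{b}=(-1)^{m}(-i)^c(p_1\otimes\cdots\otimes p_n)K^{b},
\]
where $m$ is the number of $Y$ tensor factors, produces another element of $\cC'_n(1)$ since $-1=i^2$ and $-i=i^3$.

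The main obstacle is keeping the bookkeeping straight when a phase passes through $K$ in the first part; once one carefully observes that $K$ flips complex phases, both arguments reduce to one-line computations.
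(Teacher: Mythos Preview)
Your proof is correct and uses exactly the same witnesses as the paper: $e^{i\pi/8}I$ for the first noninclusion and $K$ for the second. The only difference is cosmetic --- the paper asserts $K\in\bbR\cC_n(2)$ without further comment (it already lies in $\bbR\cC_n(1)$ by Theorem~\ref{thm:encoded_RPaulis}), whereas you verify the conjugation action explicitly; both are fine.
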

\begin{proof}
The first noninclusion follows from the fact that the operator $e^{i\pi/8} I$ is in $\cC_n(2)$ but not in $\bbR \cC_n(2)$. To see the latter, note that $e^{i\pi/8} K e^{-i\pi/8} = e^{i\pi/4} K \not\in \bbR \cC_n(1)$.
The second noninclusion follows from the fact that $K$ is in $\bbR \cC_n(2)$ but not in $\cC_n(2)$.

\end{proof}

However, Propositions \ref{prop:C1notcomparable} and \ref{prop:C2notcomparable} notwithstanding, it turns out that if we disregard global phases, then the groups $\cC_n(k)$ and $\bbR \cC_n(k)$ (for $k=1,2$) are no longer incomparable. More precisely, for at least the first two levels, 
the $\mathbb{R}$-Clifford hierarchy $\bbR \cC_n(k)$ is strictly larger than the standard hierarchy $\cC_n(k)$. A lemma regarding some structure of the standard Pauli and Clifford groups helps us show this.

\begin{lemma}\label{lem:UUT_is_pauli}
$\{UU^T:U\in\mathcal{C}_n(2)\}=\{p:p=p^T\in\mathcal{C}_n(1)\}\subseteq\mathcal{C}_n(1).$
\end{lemma}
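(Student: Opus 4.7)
The plan is to prove the two inclusions of the first equality separately; the final $\subseteq\mathcal{C}_n(1)$ is immediate from the definition, so I focus on showing $\{UU^T:U\in\mathcal{C}_n(2)\}=\{p:p=p^T\in\mathcal{C}_n(1)\}$.

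\textbf{Forward inclusion} $\{UU^T:U\in\mathcal{C}_n(2)\}\subseteq\{p:p=p^T\in\mathcal{C}_n(1)\}$. Any $UU^T$ is automatically symmetric since $(UU^T)^T = UU^T$, so I need only show it lies in $\mathcal{C}_n(1)$. I induct on the word length of $U$ in a standard generating set for $\mathcal{C}_n(2)$ such as $\{H_i, S_i, CX_{ij}, e^{i\alpha}I\}$. Each of these generators $g$ is a symmetric matrix in the computational basis, so $gg^T = g^2$ is computed directly: $H^2=I$, $S^2=Z$, $(CX)^2=I$, and $(e^{i\alpha}I)^2=e^{2i\alpha}I$, all in $\mathcal{C}_n(1)$. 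For the inductive step the key identity is
\[
(UV)(UV)^T \;=\; U(VV^T)U^T \;=\; \bigl(U(VV^T)U^\dag\bigr)\,(UU^T),
\]
obtained by inserting $U^\dag U = I$ between $U$ and $U^T$ on the middle expression. By the inductive hypothesis $VV^T$ is a Pauli; by the Clifford property $U(VV^T)U^\dag$ is again a Pauli; and $UU^T$ is a Pauli by the hypothesis for $U$. Hence $(UV)(UV)^T$ is a product of Paulis, which lies in $\mathcal{C}_n(1)$ (closed under multiplication and permitting arbitrary global phases).

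\textbf{Reverse inclusion} $\{p:p=p^T\in\mathcal{C}_n(1)\}\subseteq\{UU^T:U\in\mathcal{C}_n(2)\}$. Let $p=e^{i\alpha}P$ be a symmetric Pauli. Transposition fixes $I,X,Z$ but sends $Y\mapsto -Y$, so the symmetry of $P$ forces the tensor product $P$ to contain an even number of $Y$ factors (and therefore $P$ is itself a real matrix). I construct a Clifford witness $U$ with $UU^T=p$ as follows. Let $\vec z\in\{0,1\}^n$ have $z_i=1$ precisely where $P$ is not $I$, pair up the $Y$-positions of $P$ arbitrarily, and define
\[
V \;=\; \Biggl(\prod_{i:\,P_i=X} H_i\Biggr)\,\prod_{(i,j)\text{ a }Y\text{-pair}} CZ_{ij}\,H_iH_j.
\]
This $V$ is a real (orthogonal) Clifford. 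A Heisenberg-picture check shows each $H_i$ takes $Z_i\mapsto X_i$, while $CZ_{ij}H_iH_j$ takes $Z_iZ_j\mapsto Y_iY_j$ (since $H_iH_j$ carries $Z_iZ_j\mapsto X_iX_j$ and then $CZ_{ij}$ carries $X_iX_j\mapsto(X_iZ_j)(Z_iX_j)=Y_iY_j$). Hence $VZ^{\vec z}V^\dag=\varepsilon P$ for some $\varepsilon\in\{\pm 1\}$. Using that $V$ is real so $V^\dag=V^T$, and that $Z^{\vec z}=(S^{\vec z})(S^{\vec z})^T$ because $S_iS_i^T=Z_i$ and the $S_i$'s commute, I rewrite
\[
\varepsilon P \;=\; VZ^{\vec z}V^T \;=\; (VS^{\vec z})(VS^{\vec z})^T.
\]
Choosing $U=e^{i\beta}VS^{\vec z}$ with $\beta=\alpha/2$ if $\varepsilon=+1$ and $\beta=\alpha/2+\pi/2$ if $\varepsilon=-1$ gives $UU^T=e^{2i\beta}\varepsilon P=e^{i\alpha}P=p$.

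\textbf{Where the difficulty lies.} The forward direction is a clean induction once the identity $UpU^T=(UpU^\dag)(UU^T)$ is spotted. The substantive content is the reverse inclusion, whose main obstacle is realizing the nontrivial symmetric Pauli $Y_iY_j$ via an \emph{orthogonal} Clifford conjugation of $Z_iZ_j$: orthogonality is exactly what converts the usual conjugation $V(\cdot)V^\dag$ into the symmetric form $V(\cdot)V^T$ needed on the right-hand side. Once the kernel identity $CZ_{ij}H_iH_j\cdot Z_iZ_j\cdot H_iH_jCZ_{ij}=Y_iY_j$ is established, assembling a general symmetric Pauli and handling the global phase and $\pm$ sign is bookkeeping.
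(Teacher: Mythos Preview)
Your proof is correct, and both directions differ from the paper's in interesting ways.

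For the inclusion $\{UU^T\}\subseteq\mathcal{C}_n(1)$, the paper argues directly: for any Pauli $p$ it shows $(UU^T)p(UU^T)^\dag\propto p$ by the chain
\[
UU^T p\bar U U^\dag \;\propto\; UU^T\bar p\,\bar U U^\dag \;=\; U\,\overline{U^\dag pU}\,U^\dag \;\propto\; U(U^\dag pU)U^\dag \;=\; p,
\]
exploiting only that Paulis satisfy $q\propto\bar q$ and that $U^\dag pU$ is again Pauli; a squaring argument then forces the proportionality constant to be $\pm1$, which pins $UU^T$ down as a Pauli. Your inductive approach via the identity $(UV)(UV)^T=(U(VV^T)U^\dag)(UU^T)$ and the observation that $H,S,CX$ are all symmetric is more elementary and arguably more transparent---it avoids the complex-conjugation trick entirely and reduces the claim to closure of the Pauli group. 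The paper's argument, on the other hand, is basis-free and does not depend on any particular generating set.

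For the reverse inclusion your construction is essentially the same strategy as the paper's (build an orthogonal Clifford $V$ with $VZ^{\vec z}V^T=\pm P$, then absorb the phase via $S$'s and a global factor), only with a different choice of two-qubit gadget: you use $CZ_{ij}H_iH_j$ to send $Z_iZ_j\mapsto Y_iY_j$, whereas the paper uses $CX$ on paired $Y$-positions together with $H$'s. Both work; yours is perhaps a bit cleaner since the $Y_iY_j$ step comes out with no sign to track.
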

\begin{proof}
That any symmetric Pauli $p=p^T\in\mathcal{C}_n(1)$ can be written as $UU^T$ for some $U\in\mathcal{C}_n(2)$ is not hard to see. Let $p=e^{i\alpha}X^{\vec x}Z^{\vec z}$ with $\vec x\cdot\vec z=0$, enforcing the symmetry of $p$. Let $J_x=\{j:\vec x_j=1-\vec z_j=1\}$, $J_y=\{j:\vec x_j=\vec z_j=1\}$, $J_z=\{j:1-\vec x_j=\vec z_j=1\}$. Since $|J_y|$ is even, we can partition it into two equal sized subsets $J_y^1$ and $J_y^2$ with a one-to-one mapping $\sigma:J_y^1\rightarrow J_y^2$. Let
\begin{equation}
U=e^{i\alpha/2}\left(\prod_{i\in J_y^1}CX_{i\sigma(i)}\right)\left(\prod_{j\in J_x\cup J_y^1} H_j\right)
\left( \prod_{k \in J_x \cup J_y \cup J_z} \hspace{-5pt} S_k \right).
\end{equation}
One can now calculate that 
 \begin{eqnarray}
 UU^T &=& e^{i\alpha}\left(\prod_{i\in J_y^1}CX_{i\sigma(i)}\right)\left(\prod_{j\in J_x\cup J_y^1} H_j\right)
 \left( \prod_{k \in J_x \cup J_y \cup J_z} \hspace{-5pt} Z_k \right) \left(\prod_{j\in J_x\cup J_y^1} H_j\right) \left(\prod_{i\in J_y^1}CX_{i\sigma(i)}\right) \nn
&=& 
 e^{i\alpha} \left(\prod_{i\in J_y^1}CX_{i\sigma(i)}\right)\left(\prod_{j\in J_x \cup J_y^1} X_j \right)
 \left(\prod_{k\in J_y^2 \cup J_z} Z_k \right)
\left(\prod_{i\in J_y^1}CX_{i\sigma(i)}\right) \nn
&=& 
e^{i\alpha} \left(\prod_{j\in J_x \cup J_y} X_j \right)
 \left(\prod_{k\in J_y \cup J_z} Z_k \right) \nn
 &=& e^{i\alpha} X^{\vec x} Z^{\vec z} = p.
\end{eqnarray}
For the other direction, we need $UU^T\in\mathcal{C}_n(1)$ for any $U\in\mathcal{C}_n(2)$, but it suffices\footnote{Here, we show that if $VpV^\dag = \pm p$ for all $p \in \cC_n(1)$, then $V \in \cC_n(1)$. Indeed, since the Paulis form a basis, we may write 
$V=\sum_{q\in P_n}\alpha_qq$ for complex coefficients $\alpha_q$, where 
$P_n \ =\{p_1\otimes p_2\otimes\dots\otimes p_n :p_j\in\{I,X,Y,Z\} \}$ is the set of Pauli operators without global phases.
Let $p\in\mathcal{C}_n(1)$ be supported on one qubit (i.e.~$p=X_i,Y_i,Z_i$ for qubit $i$). Then $VpV^\dag=\pm p$ implies that exactly half of the coefficients $\alpha_q$ are zero (e.g.~in the case of $+$, all $\alpha_q$ such that $\{p,q\}=0$ are zero). Repeating for all qubits $i$ and all $X_i,Y_i,Z_i$, all but one coefficient are zeroed, thus implying $V\in\mathcal{C}_n(1)$.} to show that for all $p\in\mathcal{C}_n(1)$, $(UU^T)p(UU^T)^\dag=\pm p$ where the choice of sign may depend on $p$.
In fact, it suffices to show that 
\begin{equation} \label{eq:toshowUUT}
(UU^T)p(UU^T)^\dag=ap
\end{equation}
for any complex number $a$. This is because, squaring both sides and using the fact that $p^2\propto I$ for all $p\in\mathcal{C}_n(1)$, we get $(UU^T)p^2(UU^T)^\dag=p^2=a^2p^2$, implying $a=\pm1$.

We now prove \eq{eq:toshowUUT}. Notice that for any $q \in \cC_n(1)$, 
\begin{equation} \label{eq:pproptoq}
q \propto \bar q.
\end{equation}
Also, notice that since $\cC_n(2)$ is closed under taking inverses, $U^\dag \in \cC_n(2)$, which implies that $U^\dag p U \in \cC_n(1)$.

Hence,
\begin{eqnarray}
(UU^T)p(UU^T)^\dag &=&
UU^T p\bar UU^\dag \nn
&\propto & UU^T \bar p\bar UU^\dag ,\quad \mbox{by applying \eq{eq:pproptoq} to $p \in \cC_n(1)$.} 
\nn
&=& U \left(\overline{U^\dag p U}\right) U^\dag \nn
& \propto & U(U^\dag p U)U^\dag ,\quad \mbox{by applying \eq{eq:pproptoq} to $U^\dag p U\in \cC_n(1)$} 
\nn
&=& p.
\end{eqnarray}

\end{proof}

We write $A\subsetsim B$ if for all $a\in A$, there exists $\alpha\in\mathbb{R}$ such that $e^{i\alpha}a\in B$.

\begin{prop}
\label{prop:inclusion_properties_Clifford}
$G_n(1)\cup G_n(1)K=\bbR\mathcal{C}_n(1)$, $\mathcal{C}_n(1)\subsetsim\bbR\mathcal{C}_n(1)$, and $\mathcal{C}_n(2)\subsetsim\bbR\mathcal{C}_n(2)$.
\end{prop}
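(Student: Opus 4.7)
The first statement is essentially unpacking definitions: Theorem~\ref{thm:encoded_RPaulis} gives $\bbR\mathcal{C}_n(1)=\mathcal{C}'_n(1)$, and the defining expression $\mathcal{C}'_n(1)=\{i^c(p_1\otimes\cdots\otimes p_n)K^b\}$ from \eq{eq:def_RPaulis} splits along $b\in\{0,1\}$ precisely into $G_n(1)\cup G_n(1)K$. The second statement is immediate: any $e^{i\alpha}q\in\mathcal{C}_n(1)$ with $q\in G_n(1)$ becomes an element of $G_n(1)\subseteq\bbR\mathcal{C}_n(1)$ after multiplication by $e^{-i\alpha}$, so $\beta=-\alpha$ works.

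For the third statement, given $U\in\mathcal{C}_n(2)$, the plan is to exhibit a single real $\beta$ such that $e^{i\beta}U$ conjugates all of $\bbR\mathcal{C}_n(1)=G_n(1)\cup G_n(1)K$ into $\bbR\mathcal{C}_n(1)$. There are two cases to check. For $g\in G_n(1)$ the conjugate $(e^{i\beta}U)g(e^{i\beta}U)^\dag=UgU^\dag$ is independent of $\beta$; a standard eigenvalue argument (the one summarized just after \eq{eq:def_Paulis_ikemike}) forces $UgU^\dag\in G_n(1)\subseteq\bbR\mathcal{C}_n(1)$, so this case is free.

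The interesting case is $p=gK\in G_n(1)K$. Using the identities $Kc=\bar c K$ for scalars $c\in\mathbb{C}$, $KV=\bar VK$ for complex linear $V$, and $\overline{U^\dag}=U^T$, I would compute
\begin{equation*}
(e^{i\beta}U)(gK)(e^{i\beta}U)^\dag
= e^{2i\beta}\,Ug\,\overline{U^\dag}\,K
= e^{2i\beta}\,UgU^T\,K
= e^{2i\beta}\,(UgU^\dag)(UU^T)\,K,
\end{equation*}
where the last equality inserts $U^\dag U=I$. By Lemma~\ref{lem:UUT_is_pauli}, $UU^T=e^{i\alpha}q$ for some $q\in G_n(1)$ and some $\alpha\in\mathbb{R}$ determined by $U$. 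Choosing $\beta=-\alpha/2$ (mod $\pi/4$ if one wants all valid choices) collapses the scalar factor to $1$, and the conjugate becomes $(UgU^\dag)\,q\,K$. Since $UgU^\dag\in G_n(1)$ by the previous case and $G_n(1)$ is a group, $(UgU^\dag)q\in G_n(1)$, hence the conjugate lies in $G_n(1)K\subseteq\bbR\mathcal{C}_n(1)$, as required. Note $e^{i\beta}U$ is complex linear and therefore automatically in $\bbR L_n$, so it is a legitimate candidate for $\mathcal{C}'_n(2)=\bbR\mathcal{C}_n(2)$.

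The main obstacle is that a priori the corrective phase could depend on the Pauli $g$ being conjugated, in which case no single $\beta$ would suffice. The crux of the argument is the factorization $UgU^T=(UgU^\dag)(UU^T)$: it separates the ``unitary'' piece $UgU^\dag$, which is already phase-clean, from the ``transpose'' piece $UU^T$, whose only offending phase depends on $U$ alone. Hence a single $\beta$ that neutralizes $UU^T$ works uniformly for every $g$.
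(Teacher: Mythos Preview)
Your proof is correct and follows essentially the same approach as the paper: both write $UU^T=e^{i\alpha}q$ via Lemma~\ref{lem:UUT_is_pauli}, choose the corrective phase $-\alpha/2$, and handle the $gK$ case via the same factorization $UgU^T=(UgU^\dag)(UU^T)$ to isolate the single $g$-independent phase. The only minor discrepancy is your pointer to ``just after \eq{eq:def_Paulis_ikemike}'' for the fact that $UgU^\dag\in G_n(1)$; the paper actually obtains this from Theorem~\ref{thm:cliffordhierarchyindependence} in Appendix~\ref{app:p_sets_phases} (the squaring argument you have in mind), but this does not affect the substance of your argument.
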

\begin{proof}
The first and second statements follow easily from definitions Eqs.~(\ref{eq:def_Paulis}), (\ref{eq:def_RPaulis}), and (\ref{eq:def_Paulis_ikemike}). It remains to prove the third inclusion. Let $U \in \cC_n(2)$. Then by 
Lemma~\ref{lem:UUT_is_pauli}, $UU^T$ is Pauli, i.e. 
\begin{equation}
UU^T = e^{i\beta} X^{\vec x}Z^{\vec z}
\end{equation}
for some $\beta \in \bbR$ and $\vec x,\vec z \in \{0,1\}^n$. To complete the proof, it suffices to find some $\alpha \in \bbR$ such that $e^{i\alpha} U \in \bbR \cC_n(2)$. To this end, we choose $\alpha = -\beta/2$, and show that $U' := e^{-i\beta/2} U \in \bbR \cC_n(2)$.

Let $\zeta = i^c X^{\vec x}Z^{\vec z} K^b \in \bbR \cC_n(1)$ be arbitrary. If $b=0$, then
\begin{eqnarray}
\label{eq:b_equal_0_case}
U' \zeta U'^\dag &=& e^{-i\beta/2} U \big(i^c X^{\vec x}Z^{\vec z}\big) e^{i\beta/2} U^\dag \nn
&=&
U \big(i^c X^{\vec x}Z^{\vec z}\big) U^\dag
\nn
&\in & G_n(1) \subseteq \bbR \cC_n(1),
\end{eqnarray}
where, since $U\in\mathcal{C}_n(2)$ is Clifford and $\zeta=i^cX^{\vec x}Z^{\vec z}\in G_n(1)$, Theorem~\ref{thm:cliffordhierarchyindependence} in Appendix~\ref{app:p_sets_phases} shows $U\zeta U^\dag\in P_n$. If $b=1$, then
\begin{eqnarray}
\label{eq:b_equal_1_case}
U' \zeta U'^\dag &=& e^{-i\beta/2} U \big(i^c X^{\vec x}Z^{\vec z} K\big) e^{i\beta/2} U^\dag \nn
&=&
e^{-i \beta} U \big(i^c X^{\vec x}Z^{\vec z}\big) U^T K
\nn
&=&
e^{-i \beta} U \big(i^c X^{\vec x}Z^{\vec z}\big) U^\dag (U U^T)K
\nn
&=&
e^{-i \beta} U \big(i^c X^{\vec x}Z^{\vec z}\big) U^\dag e^{i\beta} X^{\vec x}Z^{\vec z} K
\nn
&=&
U \big(i^c X^{\vec x}Z^{\vec z}\big) U^\dag X^{\vec x}Z^{\vec z} K
\nn
&\in & G_n(1)K \subseteq \bbR \cC_n(1)
\end{eqnarray}
where the last line follows because $U \big(i^c X^{\vec x}Z^{\vec z}\big) U^\dag \in G_n(1)$ and $X^{\vec x}Z^{\vec z} \in G_n(1)$. Since $G_n(1)$ is closed under multiplication, the expression in the second-to-last line of \eq{eq:b_equal_1_case} is of the form \eq{eq:def_RPaulis}.
\end{proof}




A final corollary of Theorem~\ref{thm:encoded_hierarchy} corollary is a Gottesman-Knill-esque efficient classical simulation of $\bbR\mathcal{C}_n(2)$ circuits and a generating set for them.
\begin{corollary}
\label{cor:GKforRcircuits}
Let $U\in\bbR\mathcal{C}_n(2)$ be an $n$-qubit $\mathbb{R}$-linear operator. Then $U$ can be constructed from $O(n^2)$ gates from $\{H,S,K,CX,CK\}$. Moreover, a classical computer can sample from $U\ket{0}^{\otimes n}$ in time $O(n^2)$.
\end{corollary}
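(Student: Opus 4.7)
The plan is to lift $U$ to the rebit encoding, apply the usual $O(n^2)$ Clifford compilation and Gottesman--Knill stabilizer tracking on $n+1$ qubits, and then decode back. By Theorem~\ref{thm:encoded_hierarchy}, $W := \mathcal{P}(U)$ lies in $\mathcal{C}^{\bbR}_{n+1}(2) \subseteq \mathcal{C}_{n+1}(2)$, so it is an orthogonal Clifford operator on $n+1$ qubits and $W\ket{0}^{\otimes(n+1)} = \mathcal{P}(U\ket{0}^{\otimes n})$ is a stabilizer state.

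First I would argue that the encoded gate set
\[
\mathcal{G}_{\text{rebit}} := \mathcal{P}(\{H,S,K,CX,CK\}) = \{H_i,\; CX_{ij},\; CX_{ia}CZ_{ia},\; CZ_{ia},\; Z_a\}
\]
(with $i,j$ ranging over the $n$ data registers and $a$ the ancilla) generates all of $\mathcal{C}^{\bbR}_{n+1}(2)$ and that every element admits an $O((n+1)^2) = O(n^2)$-long expression. The nontrivial piece is that the ancilla Hadamard $H_a$ is itself a constant-length word in $\mathcal{G}_{\text{rebit}}$: the single-qubit Clifford identity $(S_iH_i)^3 = G(\pi/4)$, together with $\mathcal{P}(G(\pi/4)) = R(\pi/4)_a$ (Proposition~\ref{prop:globalphase}) and $\mathcal{P}(K) = Z_a$ (Proposition~\ref{prop:examplesNonlinear}), gives $H_a = R(\pi/4)_a Z_a = (CX_{ia}CZ_{ia}\,H_i)^3\,Z_a$. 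With $H_a$ in hand the remaining real Clifford generators on $n+1$ qubits ($CZ_{ij} = H_j CX_{ij} H_j$, $CX_{ai} = H_a H_i CX_{ia} H_i H_a$, $Z_i = (CX_{ia}CZ_{ia})^2$, and so on) are assembled in $O(1)$ gates each, so the standard $O(m^2)$-gate compilation of Cliffords on $m = n+1$ qubits (via symplectic/Bruhat decomposition restricted to the orthogonal subgroup) expresses $W$ as an $O(n^2)$-long product in $\mathcal{G}_{\text{rebit}}$. Applying the homomorphism $\mathcal{L}$ (Proposition~\ref{prop:PL_homomorphisms}) factor-by-factor and using $\mathcal{L}(\mathcal{G}_{\text{rebit}}) = \{H,S,K,CX,CK\}$ then yields the claimed $O(n^2)$-gate construction of $U$.

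For the sampling claim, the same $O(n^2)$-gate Clifford circuit implementing $W$ on $\ket{0}^{\otimes(n+1)}$ is tracked by the Gottesman--Knill stabilizer--tableau procedure \cite{gottesman1998heisenberg} in $O(n^2)$ time, and computational-basis outcomes on any subset of the $n+1$ rebits may be sampled within the same budget. Sampling the first $n$ rebits of $W\ket{0}^{\otimes(n+1)}$ is, by Proposition~\ref{prop:rebitMeasurement}, precisely the rebit encoding of a computational-basis measurement on $U\ket{0}^{\otimes n}$, so the classical procedure outputs a sample from the correct distribution. The hard part will be keeping the Clifford compilation inside the orthogonal subgroup so that every intermediate gate decodes cleanly under $\mathcal{L}$: a naive invocation of the standard $\{H,S,CX\}$-compiler on $W$ may introduce an $S_a$ factor on the ancilla, and $\mathcal{L}(S_a)$ is neither orthogonal nor even $\bbR$-unitary. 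Avoiding this requires either a real-Clifford-only compiler or, equivalently, the $O(1)$-word expression of each orthogonal generator in $\mathcal{G}_{\text{rebit}}$ described above, which guarantees that no non-orthogonal intermediate gate is ever produced.
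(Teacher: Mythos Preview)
Your approach is correct and follows the same overall strategy as the paper: encode $U$ to an orthogonal Clifford $W=\mathcal P(U)$ on $n+1$ rebits, compile $W$ in $O(n^2)$ orthogonal Clifford gates, decode each factor via $\mathcal L$, and invoke CHP for sampling (together with Proposition~\ref{prop:rebitMeasurement} to discard the ancilla). The sampling part matches the paper exactly.

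The difference is in how the orthogonal Clifford compilation is handled. The paper works directly with the rebit gate set $\{H,Z,CX\}$ and gives a self-contained Nielsen--Chuang-style induction (their Fig.~\ref{fig:compiling_circ}): reduce qubit-by-qubit by fixing the images of $X_1,Z_1$ with controlled-Pauli circuits. This guarantees orthogonality of every intermediate gate by construction, and after decoding each of $H_i,H_a,Z_i,Z_a,CX_{ij},CX_{ia},CX_{ai}$ lands in $\{H,S,K,CX,CK\}$ as a constant-length word. You instead start from the encoded gate set $\mathcal G_{\text{rebit}}=\mathcal P(\{H,S,K,CX,CK\})$ and recover $H_a$ inside it via the identity $(SH)^3=G(\pi/4)$ together with $\mathcal P(G(\pi/4))=R(\pi/4)_a$ and $H_a=R(\pi/4)_aZ_a$; with $H_a$ in hand every orthogonal Clifford generator is an $O(1)$ word, so any $O(n^2)$ real-Clifford compiler suffices. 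Your route has the virtue of isolating the one nontrivial obstacle (producing $H_a$ without an $S_a$) and resolving it cleanly, while the paper's route is more self-contained because it supplies the real-Clifford compiler rather than invoking ``symplectic/Bruhat decomposition restricted to the orthogonal subgroup'' as a black box. Both arguments yield the same $O(n^2)$ bound.
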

\begin{proof}
The first statement can be proved by compiling $V=\mathcal{P}(U)\in\mathcal{C}^{\bbR}_{n+1}(2)$ using gates from the orthogonal Clifford group $\{H,Z,CX\}$. That this is possible with the requisite number of gates follows the same Clifford compiling argument from \cite{nielsen2010quantum} (see their Theorem 10.6). 

The first step is to argue that $\{H,Z\}$ generate all single-qubit orthogonal Cliffords. A single-qubit orthogonal Clifford is uniquely specified up to a phase (which for orthogonal operators is just $\pm1$) by its action on the Paulis $\{X,iY,Z\}$. However, since $\det(X)=\det(Z)=-1$ and $\det(iY)=1$, orthogonal Cliffords must map $\{X,Z\}$ to $\{\pm X,\pm Z\}$ and $\{iY\}$ to $\{\pm iY\}$. That appropriate sequences of $H$ and $Z$ can achieve all these mappings can be checked directly by enumeration. Moreover, the phase $\pm1$ can be provided by $(HZ)^4=-I$.

The second part of the proof is inductive on the number of rebits $n+1$. There are $O(n)$ recursive steps, each using $O(n)$ gates from our gate set $\{H,Z,CX\}$. Say $VX_1V^T=g$ and $VZ_1V^T=h$ for $g,h\in\mathcal{C}^{\bbR}_{n+1}(1)$. There are simplifications that can be made without loss of generality, however. First, we note that $g$ and $h$ anticommute, which means that on some qubit $j$, the Paulis there locally anticommute. We can apply a SWAP $P$ between qubits $1$ and $j$ (which can be constructed from three $CX$ gates), so that we get $V'X_1V'^T=p_1\otimes g'$ and $V'Z_1V'^T=p_2\otimes h'$ with $p_1,p_2\in\mathcal{C}_1(1)$, $g',h'\in\mathcal{C}_n(1)$, $\{p_1,p_2\}=0$, $[g',h']=0$, and instead compile $V'=PV$. Second, by applying $H$ before or after $V'$, we can switch the roles of $X_1$ and $Z_1$ and change $p_1,p_2$. The final result is that, without loss of generality, we have two cases to consider: either
\begin{equation}
VX_1V^T=X\otimes g,\quad VZ_1V^T=Z\otimes h
\end{equation}
or
\begin{equation}
VX_1V^T=(iY)\otimes g,\quad VZ_1V^T=Z\otimes h
\end{equation}
where in both cases $g,h\in\mathcal{C}_n(1)$ and $[g,h]=0$.

We claim the circuit in Fig.~\ref{fig:compiling_circ} implements these two cases and does so using the allowed gates $\{H,Z,CX\}$. The controlled $n$-qubit orthogonal Paulis (e.g.~controlled-$g$ from qubit $c$ to qubits $t_1,t_2,\dots,t_{n}$) can be implemented by a depth at most $2n$ circuit of $CX$ and $CZ=(I\otimes H)CX(I\otimes H)$ gates. If $g=g_1\otimes g_2\otimes\dots g_n$ where each $g_j\in\{X,iY,Z\}$, then performing controlled-$g_j$ from qubit $c$ to $t_j$ for all $j$ (in any order; they commute) implements controlled-$g$. Controlled-$X$ and controlled-$Z$ operators are simply $CX$ and $CZ$, while controlled-$iY$ is $CX$ followed by $CZ$. Finally, the circuit in Fig.~\ref{fig:compiling_circ} guarantees the correct behavior of $V$ on the first qubit and thus $\tilde V$ is an $n$ qubit orthogonal Clifford, which can be compiled using the same process. The recursion continues until the base case of 1-qubit, discussed earlier.

\begin{figure}
\begin{equation*}
\Qcircuit @C=1em @R=1em {
& \qw      & \gate H             &  \ctrl 1   & \gate H & \ctrl 1  & \qw    \\ 
& / \qw & \gate {\tilde V} & \gate h  & \qw & \gate g   & \qw &  \\
} 
\end{equation*}

\caption{\label{fig:compiling_circ} Compiling an orthogonal Clifford circuit on $n$ qubits. The top line represents $1$ qubit while the bottom represents $n-1$ qubits.}
\end{figure}
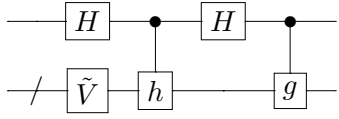

Given the described compilation, CHP simulation \cite{aaronson2004improved} of the orthogonal Clifford circuit for $V$ in the rebit encoding suffices to simulate the $\mathbb{R}$-Clifford $U$, and does so in time $O(n^2)$.
\end{proof}

\section{Discussion and open questions}\label{application}

Our bottom-up simulation paradigm provides a unified framework for realizing any arbitrary antiunitary or partial antiunitary transformations  which are otherwise non-physical and cannot be simulated directly. The rebit simulation can be applied to     measurements of a large variety of quantum mechanical properties as well as the detection and simulation of exotic phases of matters. These applications all require  either antiunitary or partial antiunitary transformations that are unphysical for  quantum mechanical systems.    

For example,   evaluating  the  entanglement of an arbitrary partition of a generic quantum system  usually takes $O(2^N)$  measurements for an $N$-qubit system~\cite{horodecki2001separability}. To avoid such a resource overhead, entanglement monotones such as concurrence~\cite{Wootters1998} and 3-tangle~\cite{Cirac2000} are proposed to provide  convex and monotonic measures that do not increase under local operations and classical communication. However, both the concurrence and 3-tangle are defined by the expectation values of an antiunitary operator which cannot be directly measured. To directly measure these entanglement  monotones,  an extra qubit is needed to simulate the complex conjugation on the original system~\cite{Solano2013}. Our result on partial antiunitary simulation thus further generalizes such approaches to larger systems where the concurrence of   only a subsystem can be measured.

As another example, time-reversal symmetry and particle-hole symmetry are two important ingredients for defining either bosonic or fermionic symmetry protected topological phases~\cite{Wen2013,Wen2014}. The system symmetry  is defined by  the invariance of the system Hamiltonian $\hat{H}$ under the conjugation of the corresponding antiunitary tranformation  $U_c$  for either the time-reversal or particle-hole symmetry as: $U_c\hat{H}U_c^\dagger=\hat{H}$. Partial-time-reversal and partial-particle-hole symmetries corresponding to the invariance under partial antiunitary transformations are also  used for constructing nonlocal order parameters in detecting fermionic symmetry protected topological phases in (1+1) dimension~\cite{Shinsei2017}.  Detection of these symmetry protected topological phases is exceedingly hard since these symmetry operators are non-physical and cannot be directly measured. Being able to   simulate both antiunitary and  partial antiunitary transformation with our rebit encoding can potentially simplify the detection procedure for topological phases proposed in~\cite{Turner2012}.

Our results add new tools to  the existing dictionary of quantum simulation gadgets using qubits. The   Majorana equation~\cite{Solano2011}, for example, is one candidate for describing the dynamics of neutrino or other particles outside the standard model. Simulating the Majorana equation in quantum systems necessitates the application of the    complex conjugation of the wave function, which is readily available in our bottom-up rebit simulation.


We conclude this section by listing a few directions that an extension of this project might take.
\newline\newline\noindent
{\bf Mixed states.} Our treatment in this paper has been restricted to just pure states, unitary transformations and projective measurements. This suffices since we could always ``go to the Church of the Larger Hilbert Space\footnote{Coined by John Smolin \cite{church}.}" by considering mixed states as being part of a larger system described by a pure state. Nevertheless, describing quantum systems using the smaller Hilbert Space has also proven to be fruitful, as it allows for the study of noisy quantum systems without any reference to a fictitious external system. Our bottom-up approach to rebits might benefit from such an approach. What is the rebit generalization of completely positive and trace preserving maps? Can they be described in terms of some generalized Kraus operators?
\newline\newline\noindent
{\bf Quaternions.} In this paper, we studied the relationship between computing using real and complex amplitudes. But the real numbers and complex numbers are just the two base levels of the Cayley-Dickson construction \cite{schafer1966introduction}. The next level of the construction are the quaternions, which was studied in the context of computation by \cite{fernandez2003quaternionic}. It would be interesting to apply  the techniques from our paper to 
study computing based on quaternions (or even other levels of the Cayley-Dickson construction) from a bottom-up perspective. 
\newline\newline\noindent
{\bf Compiling.} We showed (Theorem~\ref{thm:logicalActionOfOrthogonals}) that an arbitrary $\bbR$-unitary $\Gamma$ can always be written as products of partial antiunitaries and further noted that these products always take the form of an alternating sequence $U_NK_{L_N}U_{N-1}K_{L_{N-1}}\dots U_1K_{L_1}U_0$ of unitaries $U_j$ and partial complex conjugations $K_{L_j}$ over languages $L_j$. But given $\Gamma$ and desired accuracy $\epsilon$, how efficient is it to determine the length of the sequence required to approximate $\Gamma$ to within operator norm $\epsilon$ and also the specific unitaries and languages? In principle, applying Solovay-Kitaev \cite{dawson2005solovay} in the simulator space $P$ provides an algorithm and upper bounds, but it is well-known even in unitary compilation that Solovay-Kitaev is not optimal. The exact question $\epsilon=0$ is also interesting and leads to the definition of a minimum $N$ for which exact compilation of $\Gamma$ (call it e.g.~the ``conjugation depth" of $\Gamma$) is possible. For instance, the conjugation depth of any unitary is zero, the partial antiunitaries have conjugation depth one by definition, and Theorem~\ref{thm:partial_antiunitaries_not_group} shows that some $\bbR$-unitaries have conjugation depth at least two. Ideally, the conjugation depth of $\Gamma$ might be determined from some simple property of $\Gamma$.
\newline\newline\noindent
{\bf Clifford hierarchy.} In our discussion of the Clifford hierarchy, we have focused on the first two levels of the hierarchy. For example, in Proposition \ref{prop:inclusion_properties_Clifford}, we showed that, up to a global phase, the $\bbR$-Clifford hierarchy (for the first two levels) is bigger than the standard Clifford hierarchy. Does an analogous result hold  for higher levels of the hierarchy? Next, we see from the definitions in \eq{eq:def_Paulis} and \eq{eq:def_RPaulis} that to get from $\cC_n(1)$ to $\bbR \cC_n(1)$, we need to append the $K$ gate to the list of generators of the Pauli group.
Also, we see from Corollary \ref{cor:GKforRcircuits} that to get from $\cC_n(2)$ to $\bbR \cC_n(2)$, we need to append both the $K$ and $CK$ gate to the list of generators of the Clifford group. Can the $k$th level (for $k>2$) of the $\bbR$-Clifford hierarchy be obtained by appending gates to the corresponding level of the standard Clifford hierarchy?

\section*{Acknowledgments}
 DEK is supported by the National Science Scholarship from the Agency for Science, Technology and Research (A*STAR). MYN acknowledges support from ONR grant number N00014-13-1-0774 and AFOSR grant number FA9550-14-1-0052. TJY is supported by the Department of Defense (DoD) through a National Defense Science and Engineering Graduate (NDSEG) fellowship, and an IBM PhD fellowship award.

\bibliographystyle{ieeetr}
\bibliography{Bib}
\newpage
\appendix

\section{A simple motivating example}
\label{sec:simpleExample}

In this appendix, we present a simple motivating example to illustrate how nonunitary transformations can be simulated using the rebit encoding. Consider a general one-qubit state with complex amplitudes: $$\ket \psi = (a+ib) \ket 0 + (c+id) \ket 1,$$ where $a,b,c,d \in \mathbb R$. The single-ancilla rebit encoding is performed by introducing an additional register and encoding the state $\ket \psi$ as $$\ket {\psi' } = a \ket{00} + b \ket{01} + c \ket{10} + d \ket{11}.$$ 
To illustrate some nonlinear transformations that we can simulate using this rebit encoding, suppose we perfom $Z$ on the second qubit, getting the state $$ \ket{\chi'}=a\ket{00}-b\ket{01}+c\ket{10}-d\ket{11}, $$ which evidently is the rebit encoding of the complex conjugation of $\ket{\psi}$, $$\ket{\bar\psi}=(a-ib)\ket{0}+(c-id)\ket{1}.$$ Thus, via the rebit encoding, we have simulated the \emph{antiunitary} complex conjugation operation $K:\ket{\psi}\rightarrow\Re\ket{\psi}-i\Im\ket{\psi}$. Now, consider a more complicated example in which we perform a controlled-$Z$ operation on $\ket {\psi' }$ to get the state $$\ket{\phi'} = a \ket{00} + b \ket{01} + c \ket{10} - d \ket{11}.$$ We observe that $\ket{\phi'}$ is the rebit encoding of the state $$\ket \phi = (a+ib) \ket 0 + (c-id) \ket 1.$$ Hence, we have shown how to simulate the nonlinear transformation $$ (a+ib) \ket 0 + (c+id) \ket 1 \mapsto  (a+ib) \ket 0 + (c-id) \ket 1$$ using the rebit encoding. This transformation is an example of what we call a \textit{partial antiunitary} operator.
\newline

\section{Complex conjugation as a Gottesman-Knill simulation}\label{app:gottesman_knill}
In the introduction, we discussed how viewing Gottesman-Knill as a bottom-up simulation implies that a more ``advanced" Clifford quantum computer equipped with Pauli measurements that report entire probability distributions instead of just samples from them can be efficiently classically simulated as well. Here we expand on this bottom-up viewpoint, by showing that Gottesman-Knill also gives an efficient classical simulation of circuits consisting of Clifford gates and complex conjugation.

Gottesman-Knill is a bottom-up simulation $(L,P,\mathcal{P},O_P)$ where $L$ is the set of rank one density matrices
\begin{equation}\label{eq:GK_L}
L=\big\{\rho=\ket{\psi}\bra{\psi}:\exists p_j\in\mathcal{C}_n(1)\text{ s.t.~}\rho=\prod_{j=1}^n\frac12(I+p_j)\big\},
\end{equation}
with $\mathcal{C}_n(1)$ the Pauli group on $n$ qubits (see Eq.~\eqref{eq:def_Paulis}). The physical space $P$ is
\begin{equation}\label{eq:GK_P}
P=\{S:S\subseteq\mathcal{C}_n(1);|S|=n;e^{i\alpha}I\in\langle S\rangle\Rightarrow e^{i\alpha}=1\}
\end{equation}
where $\langle S\rangle$ for a set of Paulis $S$ is the group generated by $S$. A Pauli $p=\pm p_1\otimes p_2\otimes\dots\otimes p_n$ on $n$ qubits can be specified by $2n+1$ classical bits \cite{aaronson2004improved}. All Paulis $p\in S\in P$ must have this form (i.e.~with global phase such that $p^2=I$) by the final condition on $S$ in Eq.~\eqref{eq:GK_P}.

We next describe the map $\mathcal{P}$. Starting with a stabilizer state $\rho=\ket{\psi}\bra{\psi}$, identify $n$ linearly independent Paulis $p_i$ such that $p_i\ket{\psi}=\ket{\psi}$. These exist because $\rho$ can be written as in Eq.~\eqref{eq:GK_L}. Set $\mathcal{P}(\rho)=\{p_i:i=1,2,\dots,n\}$.



The bottom-up view stresses that we should define the set of operators $O_P$ that the simulator is capable of implementing. In this case, the most general operator the simulator can perform is any function $f:P\rightarrow P$. If we wanted to, we could also restrict to those functions $f$ that are efficiently computable (say, in polynomial time in $n$). Regardless, we do not have a characterization of $\mathcal{L}(O_P)$ for either of these choices of $O_P$. Our immediate goal is to show that complex conjugation (of density matrices) $K_{dm}:\ket{\psi}\bra{\psi}\rightarrow\Re\ket{\psi}\bra{\psi}-i\Im\ket{\psi}\bra{\psi}$ is an element of $\mathcal{L}(O_P)$ (for either choice of $O_P$) and thus that Gottesman-Knill simulation is intriguingly more powerful than typically imagined.

\begin{prop}
$K_{dm}$ is efficiently classically simulable by Gottesman-Knill.
\end{prop}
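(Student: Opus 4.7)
The plan is to exhibit an efficient classical update rule on the stabilizer tableau $S = \mathcal{P}(\rho) \in P$ that implements the action of $K_{dm}$, and then verify that the resulting set lies in $P$ and generates the stabilizer of $\bar\rho$.

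First I would observe that $K_{dm}$ is simply entrywise complex conjugation of the density matrix with respect to the computational basis, so $K_{dm}(\rho) = K\rho K$ where $K$ is the complex conjugation operator. Hence for any Pauli $p$ with $p\rho = \rho p = \rho$, we have $\bar p\, \bar\rho = \bar\rho\, \bar p = \bar\rho$. This means that if $S = \{p_1,\dots,p_n\}$ is a stabilizer generating set for $\rho$, then $\bar S := \{\bar p_1,\dots,\bar p_n\}$ is a stabilizer generating set for $K_{dm}(\rho)$. So the proposed classical update is just $\mathcal{P}(\rho) \mapsto \overline{\mathcal{P}(\rho)}$.

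Next I would compute $\bar p$ explicitly for a Pauli $p = \epsilon\, p_1\otimes\cdots\otimes p_n$ with $\epsilon \in \{\pm 1\}$ and $p_j \in \{I,X,Y,Z\}$. Since $\bar X = X$, $\bar Z = Z$, $\bar I = I$, and $\bar Y = -Y$, we get $\bar p = (-1)^{n_Y(p)} p$, where $n_Y(p)$ is the number of $Y$'s appearing in the tensor product. In the standard $(2n+1)$-bit encoding \cite{aaronson2004improved} of a Pauli, $n_Y(p)$ is computable in $O(n)$ time as the Hamming weight of the bitwise AND of the $X$-vector and $Z$-vector; updating the overall sign bit accordingly takes $O(n)$ time per generator, hence $O(n^2)$ time for the full tableau. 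This gives the efficiency claim.

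Finally I would verify membership in $P$ (Eq.~\eqref{eq:GK_P}): the map $p \mapsto \bar p$ is a group homomorphism on $\mathcal{C}_n(1)$ (indeed an involution), and it fixes any real Pauli, so $e^{i\alpha} I \in \langle \bar S \rangle$ iff $e^{i\alpha} I = \overline{e^{i\alpha'} I} \in \langle S \rangle$ for some $\alpha'$, forcing $\alpha = 0$ by the constraint on $S$. Linear independence of $\bar S$ follows from linear independence of $S$ by applying the involution. Thus $\bar S \in P$, and by the argument above $\mathcal{L}(\bar S) = \overline{\mathcal{L}(S)} = K_{dm}(\rho)$, completing the proof.

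The only subtlety, and arguably the ``main'' step to be careful with, is bookkeeping the global phase: one has to make sure the rule $\bar p = (-1)^{n_Y(p)} p$ is applied in a way consistent with the tableau convention (where the phase is typically tracked as a single bit rather than as a prefactor on a canonical-form Pauli), but this is a one-line check rather than a genuine obstacle.
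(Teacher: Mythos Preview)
Your proof is correct and follows essentially the same approach as the paper: both identify the classical update rule as $p \mapsto \bar p = (-1)^{n_Y(p)} p$ on each stabilizer generator. If anything, your version is more complete, since you explicitly verify that $\bar S \in P$ and give the $O(n^2)$ runtime, whereas the paper's proof only checks that $\bar S$ generates the stabilizer of $K_{dm}(\rho)$ via the product formula $\rho = \prod_j \tfrac12(I+p_j)$.
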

\begin{proof}
Suppose $S$ represents a density matrix $\ket{\psi}\bra{\psi}$. Let $\#Y(q)=|\{p_i=Y:i=1,2,\dots,n\}|$ for a Pauli $q=e^{i\alpha}q_1\otimes q_2\otimes\dots\otimes q_n$ and
\begin{equation}
S'=\{(-1)^{\#Y(p)}p:p\in S\}.
\end{equation}
We claim $S'=\mathcal{P}\left(K_{dm}\left(\ket{\psi}\bra{\psi}\right)\right)$. To show this, write $\ket{\psi}\bra{\psi}=\prod_{j=1}^n\frac12(I+p_j)$ where $p_j\in S$ for all $j$. Then $K_{dm}(\ket{\psi}\bra{\psi})=\prod_{j=1}^n\frac12(I+(-1)^{\#Y(p_j)}p_j)$. Thus, $(-1)^{\#Y(p_j)}p_j\in\mathcal{P}(K_{dm}(\ket{\psi}\bra{\psi}))$ for all $j$, which exactly matches the composition of $S'$.
\end{proof}

\section{$\mathbb R$-linear operators} \label{app:rlinear}


In this appendix, we consider operators of the form $A+BK$, where $A$ and $B$ are complex linear operators. Our first result is 
a proof of Theorem \ref{thm:Rlinearcharacterization} (restated here as Theorem \ref{thm:RlinearcharacterizationRestated}):
\begin{theorem} 
\label{thm:RlinearcharacterizationRestated}
(\cite{huhtanen2011real})
Let $V$ and $V'$ be complex vector spaces, and $f:V\rightarrow V'$ be a function on $V$. Then,
there exist linear maps $A$ and $B$ such that $f = A+BK$ if and only if \begin{equation}
f(ax+ by) = a f(x) + bf(y) \label{eq:Rlineareq}
\end{equation} 
for all $a,b \in \mathbb R$ and $x,y\in V$. 
\end{theorem}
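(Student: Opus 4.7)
The plan is to establish the two implications by using the identification $A = \tfrac{1}{2}(I + f\circ(iI)\circ(-iI)\cdots)$-style formulas obtained by ``extracting" the linear and antilinear parts of $f$ from its values on $x$ and $ix$.

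For the easy direction $(\Leftarrow)$, I would just compute: assuming $f = A + BK$ with $A,B$ complex linear, and fixing $a,b\in\mathbb{R}$ and $x,y\in V$, the identity $\overline{ax+by} = a\bar{x} + b\bar{y}$ (which holds precisely because $a,b$ are real) gives
\[
f(ax+by) = A(ax+by) + B\overline{ax+by} = a(Ax + B\bar{x}) + b(Ay+B\bar{y}) = af(x) + bf(y),
\]
where I have used the complex linearity of $A$ and $B$.

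For the harder direction $(\Rightarrow)$, the key is a guessed formula: motivated by the observation that if $f = A + BK$ then $f(x) - if(ix) = 2Ax$ and $f(x) + if(ix) = 2B\bar{x}$, I would define
\[
Ax := \tfrac{1}{2}\bigl(f(x) - if(ix)\bigr), \qquad Bx := \tfrac{1}{2}\bigl(f(\bar{x}) + if(i\bar{x})\bigr).
\]
The main work is to verify that the candidates $A$ and $B$ are $\mathbb{C}$-linear. Additivity in $x$ is immediate from the $\mathbb{R}$-linearity hypothesis applied to $f$. For scalar homogeneity, it suffices (by $\mathbb{R}$-linearity) to check that $A(ix) = iAx$, which is a short calculation: $A(ix) = \tfrac{1}{2}(f(ix) - if(-x)) = \tfrac{1}{2}(f(ix) + if(x)) = i \cdot \tfrac{1}{2}(f(x) - if(ix)) = iAx$, using $f(-x) = -f(x)$ from the $\mathbb{R}$-linearity assumption (with $a=-1$, $b=0$). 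An analogous check gives $B(ix) = iBx$.

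Finally, I would verify that $f = A + BK$ pointwise: for any $x \in V$,
\[
(A+BK)x = Ax + B\bar{x} = \tfrac{1}{2}\bigl(f(x) - if(ix)\bigr) + \tfrac{1}{2}\bigl(f(x) + if(ix)\bigr) = f(x),
\]
completing the proof. The main (mild) obstacle is keeping track of conjugation in the definition of $B$ so that $B$ is genuinely $\mathbb{C}$-linear rather than antilinear; apart from that, the argument is entirely computational and uses nothing beyond the $\mathbb{R}$-linearity hypothesis.
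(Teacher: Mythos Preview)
Your proof is correct and uses the same decomposition as the paper—both extract $A$ and $B$ via $\tfrac{1}{2}(f(x)\mp if(ix))$, with the paper doing so column-by-column in a fixed basis and you coordinate-free by directly verifying $\mathbb{C}$-linearity of the operators. One cosmetic slip: your labels $(\Leftarrow)$ and $(\Rightarrow)$ are swapped relative to the biconditional as stated (assuming $f=A+BK$ and deducing $\mathbb{R}$-linearity is the \emph{forward} direction).
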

\begin{proof}
The forward direction holds since $(A+BK)(ax+by) = a(A+BK)x+b(A+BK)y$ for all $x,y\in V$ and $a,b\in \mathbb R$. To prove the backward direction, assume that $f$ satisfies \eq{eq:Rlineareq}. Let $z \in V$. Let the standard basis of $V$ be $\{e_i\}_i$. Hence, we can write $z = \sum_j z_j e_j$ for some $z_j \in \mathbb C$.
Then,
\begin{eqnarray}
f(z) &=& f\left(\sum_j z_j e_j \right) \nonumber\\
&=&  f\left(\sum_i(\Re z_j + i \Im z_j) e_j\right) \nonumber\\
&=& \sum_j \Re z_j f(e_j) + \Im z_j f(i e_j), \qquad \mbox{by $\mathbb R$-linearity} \nonumber\\
&=& \sum_j \frac{z_j+ \bar z_j}2 f(e_j) + \frac{z_j - \bar z_j}{2i} f(i e_j) \nonumber \\
&=&  \sum_j \frac 12(f(e_j)- i f(i e_j)) z_j +  \sum_j \frac 12(f(e_j) + i f(i e_j)) \bar z_j \nonumber\\
&=& \sum_j A_j z_j + \sum_j B_j Kz_j \nonumber\\
&=& (A+BK) z
\end{eqnarray}
where $A$ is the (complex-valued) matrix whose $j$th column is $A_j = \frac 12(f(e_j)- i f(i e_j))$, and $B$ is the matrix whose $j$th column is $B_j = \frac 12(f(e_j) + i f(i e_j))$. Hence, $f = A+BK$, where $A$ and $B$ are (complex) linear maps on $V$.
\end{proof}

As pointed out in Section \ref{sec:characterizationOf}, in linear algebra, the term $\bbR$-linear is used to describe a map satisfying \eq{eq:Rlineareq}. Our terminology in this paper was chosen so that the two definitions of $\bbR$-linearity coincide. 

We conclude this section with a few remarks about $\bbR$-linear operators satisfying \eq{eq:Rlineareq}. First, note that linear operators and antilinear operators are both special cases of $\mathbb R$-linear operators. (A linear operator $g:V\rightarrow V$ is one that satisfies $g(ax+ by) = a g(x) + bg(y)$ for all $a,b \in \mathbb C$ and $x,y\in V$, and an antilinear operator $h$ is one that satisfies $h(ax+by) = \bar a h(x) + \bar b h(y)$ for all  $a,b \in \mathbb C$ and $x,y\in V$.)

Second, note that when $A$ and $B$ are complex linear operators, the operator $A$ is linear while the operator $BK$ is antilinear. Hence, Theorem \ref{thm:RlinearcharacterizationRestated} implies that any $\mathbb R$-linear operator can be written as a sum of a linear operator and an antilinear operator.

\section{The ring of $\mathbb R$-linear operators: algebraic properties} \label{app:algebraicProperties}
In Section \ref{app:rlinear}, we showed that every $\mathbb R$-linear operator on $\cH_n(\bbC)$ can be written as $A+BK$, where $A$ and $B$ are linear operators on $\cH_n(\bbC)$. In this appendix, we show that the set of $\mathbb R$-linear operators $\bbR L_n$ forms a ring with identity\footnote{For an introduction to ring theory, see \cite{dummit2004abstract}, for example.}, with addition $+$ given by
\begin{equation}
(A+BK) + (C+DK) = (A+C) + (B+D)K
\end{equation}
and multiplication $\star$ given by
\begin{equation} \label{eq:multiplstar}
(A+BK)\star (C+DK) = (AC + B\bar D) + (AD+B\bar C)K.
\end{equation}

\begin{prop} \label{prop:LringWithId}
Let $n \in \bbZ^+$. Then $(\bbR L_n, +, \star)$ is a ring with identity. The multiplicative identity is $I+0K$.
\end{prop}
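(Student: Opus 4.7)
The plan is to recognize that the formula for $\star$ in Eq.~\eqref{eq:multiplstar} is simply the explicit formula for composition of $\bbR$-linear operators viewed as functions on $\mathcal H_n(\bbC)$. Once this is established, the ring axioms essentially reduce to standard facts about composition and pointwise addition of functions. Concretely, I would compute
\begin{equation}
(A+BK)\big((C+DK)\ket\psi\big) = AC\ket\psi + AD\,\overline{\ket\psi} + B\,\overline{C\ket\psi} + B\,\overline{D\,\overline{\ket\psi}} = (AC+B\bar D)\ket\psi + (AD+B\bar C)\overline{\ket\psi},
\end{equation}
which matches Eq.~\eqref{eq:multiplstar}. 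Likewise $+$ as defined matches pointwise addition of functions. To make this identification rigorous, I first need a uniqueness-of-representation lemma: if $A+BK=0$ as an operator, then $A=B=0$. This follows by evaluating on real and imaginary computational-basis vectors $\ket{e_j}$ and $i\ket{e_j}$, yielding $A+B=0$ and $A-B=0$.

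Next I would verify the axioms in order. For $(\bbR L_n,+)$: since addition is pointwise and the codomain $\mathcal H_n(\bbC)$ is a complex vector space (hence an abelian group under vector addition), associativity, commutativity, existence of the zero element $0+0K$, and additive inverses $(-A)+(-B)K$ all follow immediately, with closure guaranteed by Theorem~\ref{thm:equiv_defRlinear} (since a sum of $\bbR$-linear maps is $\bbR$-linear). For multiplication: associativity of $\star$ follows from associativity of function composition, and closure follows because the composition of two $\bbR$-linear maps is $\bbR$-linear (again by Theorem~\ref{thm:equiv_defRlinear}). The two distributive laws $\Gamma_1\star(\Gamma_2+\Gamma_3)=\Gamma_1\star\Gamma_2+\Gamma_1\star\Gamma_3$ and $(\Gamma_1+\Gamma_2)\star\Gamma_3=\Gamma_1\star\Gamma_3+\Gamma_2\star\Gamma_3$ follow from $\bbR$-linearity of each $\Gamma_j$: the right distributive law is automatic from pointwise addition, while the left distributive law uses additivity $\Gamma_1(u+v)=\Gamma_1(u)+\Gamma_1(v)$ of $\Gamma_1$ from part~(3) of Theorem~\ref{thm:equiv_defRlinear}.

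Finally, for the multiplicative identity, I would check directly from Eq.~\eqref{eq:multiplstar} that
\begin{equation}
(I+0K)\star(C+DK) = (IC+0\cdot\bar D) + (ID+0\cdot\bar C)K = C+DK,
\end{equation}
and symmetrically on the right, so $I+0K$ is a two-sided identity. I do not expect a significant obstacle: the main conceptual step is the identification of $\star$ with composition (which makes every axiom reduce to a standard fact), and the only subtle point is the uniqueness of the $A+BK$ decomposition needed for addition to be a well-defined operation on coefficient pairs $(A,B)$ rather than just on the underlying operators --- though as a binary operation on operators it is of course automatically well-defined.
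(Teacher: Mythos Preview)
Your proposal is correct and in fact considerably more detailed than the paper's own proof, which simply reads ``It is straightforward to check that $\bbR L_n$ satisfies the properties of a ring with identity'' with a reference to a standard algebra text. Your identification of $\star$ with composition of $\bbR$-linear maps is exactly the observation the paper makes immediately \emph{after} the proposition (noting that $((A+BK)\star(C+DK))v = (A+BK)((C+DK)v)$), so the spirit is the same; you have just made the verification explicit and added the uniqueness-of-representation lemma, which is a nice touch the paper omits.
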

\begin{proof}
It is straightforward to check that $\bbR L_n$ satisfies the properties of a ring with identity (see Chapter 7 of \cite{dummit2004abstract}).
\end{proof}

It is easy to check that $(\bbR L_n, +, \star)$ is neither a division ring nor a commutative ring, and hence is not a field. Note that the multiplication in \eq{eq:multiplstar} was defined so that for any vector $v \in \cH_n(\bbC)$, 
\begin{equation}
((A+BK)\star (C+DK)) v = (A+BK)((C+DK)v) .
\end{equation}

More generally, the set of vectors in $\cH_n(\bbC)$ forms a module over $\bbR L_n$, as the following proposition states.
\begin{prop} \label{prop:LLeftModule}
The set $\cH_n(\bbC)$ is a left module over the ring $(\bbR L_n, +, \star)$, with the addition on 
$\cH_n(\bbR)$ being the usual addition of functions, and the module action $\circ$ of $\bbR L_n$ on $\cH_n(\bbC)$  given by
$$(A+BK)\circ v := (A+BK) v = Av + B\bar v.$$
\end{prop}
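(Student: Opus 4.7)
The plan is to verify the five module axioms in turn, noting that four of them are essentially immediate and only one carries any content. Since $\cH_n(\bbC)$ is a complex vector space, it is certainly an abelian group under addition, so axiom (1) is free. The multiplicative identity of $\bbR L_n$ is $I+0K$ (Proposition~\ref{prop:LringWithId}), and $(I+0K)\circ v = Iv + 0\bar v = v$, so axiom (5) holds trivially. Distributivity of the action in the first argument, $((A+BK)+(C+DK))\circ v = (A+BK)\circ v + (C+DK)\circ v$, reduces to $(A+C)v + (B+D)\bar v = (Av+B\bar v)+(Cv+D\bar v)$, which is immediate from linearity of $A,B,C,D$. Distributivity in the second argument, $(A+BK)\circ(v+w) = (A+BK)\circ v + (A+BK)\circ w$, reduces to $A(v+w)+B\overline{v+w} = Av+B\bar v + Aw + B\bar w$, which follows from linearity of $A$ and $B$ together with the fact that complex conjugation distributes over addition of vectors.

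The substantive step is compatibility with the ring multiplication $\star$ defined in Eq.~\eqref{eq:multiplstar}: I must show
\begin{equation}
((A+BK)\star(C+DK))\circ v = (A+BK)\circ((C+DK)\circ v)
\end{equation}
for all $v\in\cH_n(\bbC)$. Expanding the left-hand side using the definition of $\star$ gives $(AC+B\bar D)v + (AD+B\bar C)\bar v$. Expanding the right-hand side gives $(A+BK)\circ(Cv+D\bar v) = A(Cv+D\bar v) + B\overline{(Cv+D\bar v)}$. Using that $B$ is complex linear and that complex conjugation satisfies $\overline{Cv}=\bar C\bar v$ and $\overline{D\bar v}=\bar D v$, the right-hand side becomes $ACv + AD\bar v + B\bar C\bar v + B\bar D v$, which matches the left-hand side term by term.

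Thus all five module axioms hold. The only real content is the conjugation-accounting in the $(r\star s)\circ v = r\circ(s\circ v)$ check, which is precisely what motivated the definition of $\star$ in the first place: the multiplication was designed so that $\bbR L_n$ acts on $\cH_n(\bbC)$ by ordinary function composition, with $K$ contributing the complex conjugation of whatever appears to its right. No obstacles beyond careful bookkeeping of which factors get conjugated are expected.
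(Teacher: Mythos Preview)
Your proof is correct and follows the same approach as the paper, which simply states that it is straightforward to check the left-module axioms and cites a standard algebra text. Your explicit verification of each axiom, particularly the compatibility check $(r\star s)\circ v = r\circ(s\circ v)$, is more detailed than what the paper provides but is exactly the intended argument.
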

\begin{proof}
It is straightforward to check that the above satisfies the properties of a left module (see Chapter 10 of \cite{dummit2004abstract}).
\end{proof}

It is useful to equip the ring $(\bbR L_n, +, \star)$ with the operator $\dag$ defined as follows: \begin{equation}
\label{eq:dagstar}
(A+BK)^\dag = A^\dag + B^T K.
\end{equation}

We now show that the $\dagger$ (super)operator is the image of the transpose map $(\cdot)^T$ under the rebit decoding. More precisely, let $\cE$ be an operator on $L_{n+1}^\bbR$. Define $\cL(\cE)$ to be the unique operator $\tilde \cE$ such that $$\tilde \cE(A+BK) = \cL(\cE(\cP(A+BK))).$$
It then follows that
\begin{prop}
\label{prop:LofTranspose}
$\cL\big((\cdot)^T\big) = (\cdot)^\dag$.
\end{prop}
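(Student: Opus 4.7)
The plan is to show that transposing $\mathcal{P}(A+BK)$ gives exactly $\mathcal{P}(A^\dag + B^T K)$; then applying $\mathcal{L}$ and using that $\mathcal{L}\circ\mathcal{P}=\mathrm{id}$ (Proposition~\ref{prop:LP_are_inverses}) produces $A^\dag + B^T K = (A+BK)^\dag$, which by definition of the superoperator decoding is exactly $\mathcal{L}\big((\cdot)^T\big)$ applied to $A+BK$.

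First, I would expand $\mathcal{P}(A+BK)$ using Lemma~\ref{lem:imageABunderK}:
\begin{equation*}
\mathcal{P}(A+BK) = \Re A \otimes I + \Im A \otimes XZ + \Re B \otimes Z + \Im B \otimes X.
\end{equation*}
Transposition distributes over tensor products, so I only need the single-qubit transposes on the ancilla, which are $I^T=I$, $X^T=X$, $Z^T=Z$, and $(XZ)^T = Z^T X^T = ZX = -XZ$. Similarly, for the data part I use $(\Re A)^T=\Re(A^T)$ and $(\Im A)^T=\Im(A^T)$, and analogously for $B$. This yields
\begin{equation*}
\big(\mathcal{P}(A+BK)\big)^T = \Re(A^T)\otimes I \;-\; \Im(A^T)\otimes XZ \;+\; \Re(B^T)\otimes Z \;+\; \Im(B^T)\otimes X.
\end{equation*}

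The next step is the bookkeeping identity relating transpose to the adjoint: since $A^T = \overline{A^\dag}$, complex conjugation flips the sign of the imaginary part, giving $\Re(A^T)=\Re(A^\dag)$ and $\Im(A^T)=-\Im(A^\dag)$. Substituting this into the expression above, the minus sign on the $XZ$ term is absorbed and I obtain
\begin{equation*}
\big(\mathcal{P}(A+BK)\big)^T = \Re(A^\dag)\otimes I + \Im(A^\dag)\otimes XZ + \Re(B^T)\otimes Z + \Im(B^T)\otimes X,
\end{equation*}
which, by a second application of Lemma~\ref{lem:imageABunderK}, is precisely $\mathcal{P}(A^\dag + B^T K)$. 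Applying $\mathcal{L}$ to both sides and using Proposition~\ref{prop:LP_are_inverses} gives $\mathcal{L}\big((\mathcal{P}(A+BK))^T\big) = A^\dag + B^T K = (A+BK)^\dag$, completing the proof after unwinding the definition of $\mathcal{L}$ for superoperators.

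There is no real obstacle here; the computation is essentially the matrix identities $(XZ)^T = -XZ$ together with $A^T = \overline{A^\dag}$. The only subtlety worth emphasizing is the consistency check that the sign flip coming from the ancilla transpose exactly matches the sign flip coming from complex conjugation on the data side, which is what makes the rebit $\dag$ defined by $(A+BK)^\dag = A^\dag + B^T K$ the ``right'' notion of adjoint on $\mathbb{R}$-linear operators.
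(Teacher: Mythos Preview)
Your proof is correct and follows essentially the same approach as the paper: both compute $\mathcal{P}(A+BK)$ via Lemma~\ref{lem:imageABunderK}, transpose term by term using $(XZ)^T=-XZ$, and then recognize the result as $\mathcal{P}(A^\dag+B^TK)$ before applying $\mathcal{L}$. The only cosmetic difference is that the paper leaves the intermediate expression in terms of $\Re A^T$ and $\Im(-A^T)$ and reads off the decoded operator directly, whereas you make the substitution $\Re(A^T)=\Re(A^\dag)$, $\Im(A^T)=-\Im(A^\dag)$ explicit.
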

\begin{proof} 
\begin{eqnarray}
A + BK &=& (\Re A + i \Im A)+(\Re B + i \Im B)K \nn
&\xrightarrow{\cP} &
\Re A \otimes I + \Im A \otimes XZ + \Re B \otimes Z + \Im B \otimes X \nn
&\xrightarrow{(\cdot)^T} &
\Re A^T \otimes I + \Im A^T \otimes ZX + \Re B^T \otimes Z + \Im B^T \otimes X \nn
&=& \Re A^T \otimes I + \Im (-A^T) \otimes XZ + \Re B^T \otimes Z + \Im B^T \otimes X \nn
&\xrightarrow{\cL} &
(\Re A^T - i \Im A^T) + (\Re B^T + i \Im B^T) K \nn
&=& A^\dag + B^T K = (A+BK)^\dag.
\end{eqnarray}
Hence, $\cL\big((\cdot)^T\big) = (\cdot)^\dag$.
\end{proof}

It is straightforward to check that the operator $\dag$ is an involutive antiautomorphism\footnote{Let $R$ be a ring, and let $*:R\rightarrow R$. $R$ together with $*$ is a $*$-ring if for all $x,y \in R$, (i) $(x^*)^* = x$, (ii) $(x+y)^* = x^* + y^*$, (iii) $(xy)^* = y^* x^*$. The map $*$ is called an \textit{involutive antiautomorphism}.} Hence, we obtain the following proposition.

\begin{prop}
The ring $(\bbR L_n, +, \star)$ together with the operator $\dag$ defined in \eq{eq:dagstar} is a $\dag$-ring.
\end{prop}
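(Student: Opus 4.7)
The plan is to verify the three defining axioms of a $\dag$-ring for the map defined by $(A+BK)^\dag = A^\dag + B^T K$: involution $(x^\dag)^\dag = x$, additivity $(x+y)^\dag = x^\dag + y^\dag$, and antimultiplicativity $(x \star y)^\dag = y^\dag \star x^\dag$. The first two are immediate. Involution follows from $((A+BK)^\dag)^\dag = (A^\dag + B^T K)^\dag = (A^\dag)^\dag + (B^T)^T K = A + BK$, using that the ordinary dagger on $L_n$ and the transpose are both involutive. Additivity is just as transparent, since $A \mapsto A^\dag$ and $B \mapsto B^T$ are each additive on $L_n$, so the pointwise sum in $\bbR L_n$ passes through $\dag$ componentwise.

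The substantive step is antimultiplicativity, for which I would offer two independent arguments. The direct expansion approach uses \eq{eq:multiplstar} to write the left hand side as $(AC + B\bar D)^\dag + (AD + B\bar C)^T K$, and then simplifies via the identities $(\bar D)^\dag = D^T$ and $(\bar C)^T = C^\dag$ to obtain $C^\dag A^\dag + D^T B^\dag + (D^T A^T + C^\dag B^T)K$. Expanding the right hand side $(C^\dag + D^T K) \star (A^\dag + B^T K)$ again via \eq{eq:multiplstar} and simplifying using $\overline{B^T} = B^\dag$ and $\overline{A^\dag} = A^T$ yields the same expression, so the two sides match.

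A cleaner, conceptually preferable alternative is transport of structure. Propositions~\ref{prop:LP_are_inverses}, \ref{prop:PL_homomorphisms}, and \ref{prop:pop_rlinear_lop_linear} together show that $\mathcal{L}$, restricted to $R_{n+1}$, is a bijective, $\bbR$-linear, multiplicative map onto $\bbR L_n$, i.e.\ a ring isomorphism between $(R_{n+1}, +, \circ)$ and $(\bbR L_n, +, \star)$. The ring $R_{n+1}$ of real linear operators is manifestly a $\dag$-ring under the ordinary matrix transpose, and Proposition~\ref{prop:LofTranspose} identifies the image of that transpose under $\mathcal{L}$ as precisely the operator $\dag$ defined in \eq{eq:dagstar}. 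Since ring isomorphisms transport $\dag$-ring structures, the three axioms for $(\bbR L_n, +, \star, \dag)$ follow at once from their counterparts in $R_{n+1}$. The main subtlety in either route is bookkeeping the interplay between complex conjugation, transpose, and dagger on complex matrices, captured by the two identities $\overline{B^T} = B^\dag$ and $\overline{A^\dag} = A^T$; once these are in hand, either argument closes cleanly.
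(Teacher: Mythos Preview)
Your proposal is correct. The paper itself offers no detailed proof, stating only that ``it is straightforward to check that the operator $\dag$ is an involutive antiautomorphism,'' so your direct verification of the three axioms is exactly the intended content, and your computation of the antimultiplicativity identity is accurate.

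Your second argument via transport of structure is a genuinely different and more conceptual route that the paper does not take. It leverages the ring isomorphism $\mathcal{L}: R_{n+1} \to \bbR L_n$ (assembled from Propositions~\ref{prop:LP_are_inverses}, \ref{prop:PL_homomorphisms}, \ref{prop:pop_rlinear_lop_linear}) together with Proposition~\ref{prop:LofTranspose} to pull back the known $\dag$-ring structure of $(R_{n+1}, (\cdot)^T)$. This buys you a proof that avoids the elementwise bookkeeping with $\overline{B^T}=B^\dag$ and $\overline{A^\dag}=A^T$, and it makes transparent \emph{why} the axioms hold: they are inherited from ordinary matrix transpose. The direct expansion, by contrast, is self-contained and does not rely on the correctness of the surrounding propositions, which is an advantage if one wants the appendix to stand alone.
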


Note that the involutive antiautomorphism $\dag$ generalizes the definition of the adjoint of linear operators. Indeed, when $B=0$ in \eq{eq:dagstar}, we recover $A^\dag = (A+0K)^\dag = A^\dag$.
We may now generalize the notion of unitarity to $\dag$-rings with identity. Let $R$ be a $\dag$-ring with identity $1$. We say that $U \in R$ is a \textit{unitary element} with respect to $\dag$ if 
\begin{equation}
U^\dag U = 1.
\end{equation}
Applying the above definition to the $\dag$-ring $(\bbR L_n, +, \star)$, we get that an element $A+BK \in \bbR L_n$ is a unitary element if and only if 
\begin{equation}
(A+BK)^\dag \star (A+BK) = I .
\end{equation}
The group of unitary elements are called $\bbR U_n$ in the main text, as a result of Theorem~\ref{thm:OrthogonalABK2} showing that they are the simulated operators of a real unitary rebit simulator.

We now give an equivalent condition for the unitarity of $\bbR$-linear operators.
\begin{prop} \label{prop:unitaryElement}
An element $A+BK \in \bbR L_n$ is a unitary element with respect to $\dag$ if and only if
\begin{eqnarray}
A^\dag A + B^T \bar B = I ,\nn
A^\dag B + B^T \bar A = 0.
\label{eq:propUnitaryElement}
\end{eqnarray}
\end{prop}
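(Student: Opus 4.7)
The plan is a direct computation using the definitions of $\dag$ from \eqref{eq:dagstar} and of $\star$ from \eqref{eq:multiplstar}. Since $(\bbR L_n, +, \star)$ is a ring with identity $I + 0K$ (Proposition~\ref{prop:LringWithId}), the element $A+BK$ is unitary with respect to $\dag$ precisely when $(A+BK)^\dag \star (A+BK)$ equals $I + 0K$. So the entire proof reduces to evaluating this product and matching the linear and antilinear parts against $I$ and $0$ respectively.

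First, I would write $(A+BK)^\dag = A^\dag + B^T K$ using \eqref{eq:dagstar}. Then, applying the multiplication rule \eqref{eq:multiplstar} with $C = A$ and $D = B$ (and the ``$A$'' slot of the formula playing the role of $A^\dag$ and the ``$B$'' slot the role of $B^T$), I get
\begin{equation}
(A^\dag + B^T K) \star (A + BK) = \bigl(A^\dag A + B^T \overline{B}\bigr) + \bigl(A^\dag B + B^T \overline{A}\bigr) K.
\end{equation}
Because $\bbR L_n$ admits a unique decomposition of every element as $X + Y K$ for complex linear $X,Y$ (by Theorem~\ref{thm:equiv_defRlinear}, the representation of an $\bbR$-linear operator in the form $X + Y K$ is unique), this expression equals the multiplicative identity $I + 0K$ if and only if the two coefficient equations
\begin{equation}
A^\dag A + B^T \bar B = I, \qquad A^\dag B + B^T \bar A = 0
\end{equation}
both hold, which is exactly \eqref{eq:propUnitaryElement}.

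The main (very small) obstacle is just bookkeeping: verifying uniqueness of the $X + YK$ decomposition so that we may equate coefficients, and being careful with the bar on $\overline{B}$ and $\overline{A}$ arising from the $K\cdot B = \bar B\cdot K$ commutation implicit in \eqref{eq:multiplstar}. Both are immediate, so the proof is a two-line calculation once the $\star$ and $\dag$ formulas are invoked.
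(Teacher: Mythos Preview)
Your proposal is correct and follows essentially the same approach as the paper: apply the definition of $\dag$, expand the $\star$ product, and equate the linear and antilinear parts. The paper's proof is the same one-line chain of equalities you describe, just without the explicit remark on uniqueness of the $X+YK$ decomposition.
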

\begin{proof}
An element $A+BK \in \bbR L_n$ is a unitary element if and only if $I = (A+BK)^\dag \star (A+BK) = (A^\dag + B^T K) \star (A+BK) = (A^\dag A + B^T \bar B) + ( A^\dag B + B^T \bar A )K$ if and only if $A^\dag A + B^T \bar B = I, A^\dag B + B^T \bar A = 0$. 
\end{proof}

\section{Equivalent expressions for the rebit encoding of a linear operator} \label{app:rebitEncodingLinear}

From \eq{eq:rebitEncodingLinear}, we find that the rebit encoding of a linear operator is given by
$\mathcal P(A) = \Re A \otimes I + \Im A \otimes XZ$. In this appendix, we derive alternative expressions for \eq{eq:rebitEncodingLinear}.
 
\begin{prop}\label{prop:rebit_encoding_linear_operator} The rebit encoding of a linear operator $A$ is equal to
\begin{equation}
\mathcal P(A) = \bar A \otimes \ketbra{\otimes}{\otimes}+  A \otimes \ketbra{\odot}{\odot}   = ( \bar A \otimes I)(
I \otimes \ketbra{\otimes}{\otimes} + A^T \otimes \ketbra{\odot}{\odot}) ,
\end{equation}
where $\ketbra\otimes\otimes = \frac 12(I+Y)$ and $\ketbra\odot\odot = \frac 12(I-Y)$ are the orthogonal projectors onto the $+1$ and $-1$ eigenspaces of the Pauli matrix $Y$, respectively.
\end{prop}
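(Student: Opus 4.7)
The plan is to prove both equalities by direct algebraic expansion, building on the formula $\mathcal{P}(A) = \Re A\otimes I + \Im A\otimes XZ$ established in Eq.~\eqref{eq:rebitEncodingLinear}. The only prerequisite beyond the definitions $\ketbra{\otimes}{\otimes} = \tfrac12(I+Y)$ and $\ketbra{\odot}{\odot} = \tfrac12(I-Y)$ is the single-qubit identity $XZ = -iY$, which I would verify by multiplying the Pauli matrices out explicitly once.

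For the first equality, I would substitute the projector expressions into $\bar A\otimes\ketbra{\otimes}{\otimes} + A\otimes\ketbra{\odot}{\odot}$ and collect the coefficients of $I$ and $Y$ in the ancilla factor:
\[
\bar A\otimes\ketbra{\otimes}{\otimes} + A\otimes\ketbra{\odot}{\odot}
= \frac{A+\bar A}{2}\otimes I + \frac{\bar A - A}{2}\otimes Y.
\]
Using $\Re A = \tfrac12(A+\bar A)$ and $\tfrac12(\bar A - A) = -i\,\Im A$, and then the replacement $-iY = XZ$, the right-hand side becomes $\Re A\otimes I + \Im A \otimes XZ$, which is precisely $\mathcal{P}(A)$.

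For the second equality, I would expand the product using bilinearity of the tensor product together with the mixed-product rule $(U\otimes V)(W\otimes X) = UW\otimes VX$:
\[
(\bar A\otimes I)\bigl(I\otimes\ketbra{\otimes}{\otimes} + A^T\otimes\ketbra{\odot}{\odot}\bigr)
= \bar A\otimes\ketbra{\otimes}{\otimes} + (\bar A\,A^T)\otimes\ketbra{\odot}{\odot}.
\]
The main obstacle here is reconciling the coefficient $\bar A\,A^T$ with the $A$ appearing in the middle expression. I expect the cleanest route is to evaluate both sides on the ancilla basis $\{\ket{\otimes},\ket{\odot}\}$, where one of the two summands annihilates each test vector and the remaining $A^T \otimes \ketbra{\odot}{\odot}$ piece can be simplified using the eigenequation $Y\ket{\odot} = -\ket{\odot}$ together with the transpose/conjugation identity for how matrix entries of $A$ pair against $\ket{\odot}$. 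If a clean reduction of the form $\bar A\,A^T = A$ cannot be established in full generality, I would revisit the factorization and look for a natural companion identity (for example, replacing $A^T$ by $\bar A^{-1}A$ in the invertible case, which does yield the claim on the nose) so that the chain of equalities closes on the physically relevant subspace.
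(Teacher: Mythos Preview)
Your treatment of the first equality is correct and is exactly the paper's argument, just run in the reverse direction: the paper starts from $\mathcal P(A)=\Re A\otimes I+\Im A\otimes XZ$, inserts $\Re A=\tfrac12(A+\bar A)$, $\Im A=\tfrac1{2i}(A-\bar A)$, $XZ=-iY$, and regroups to get $\tfrac12\bar A\otimes(I+Y)+\tfrac12 A\otimes(I-Y)$.

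Your difficulty with the second equality is not a gap in your reasoning but a genuine issue with the statement. Expanding $(\bar A\otimes I)(I\otimes\ketbra\otimes\otimes+A^T\otimes\ketbra\odot\odot)$ gives $\bar A\otimes\ketbra\otimes\otimes+\bar A A^T\otimes\ketbra\odot\odot$, and matching the middle expression forces $\bar A A^T=A$, which fails already for $A=2I$. The paper's own proof establishes only the first equality and never addresses the second; the follow-up Eq.~\eqref{eq:controlledYencoding} specializes to unitary $U$ and writes the second factor with $U^TU$ rather than $U^T$, which does make the factorization valid since $\bar U U^T=I$ for unitary $U$. So your instinct to look for a corrected companion identity (your ``$\bar A^{-1}A$ in the invertible case'') is exactly right: the second equality as printed appears to be a typo, and the intended content is the unitary specialization recorded immediately afterward.
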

\begin{proof}
Substituting into \eq{eq:rebitEncodingLinear} the identities $\Re A = 1/2(A+\bar A)$, $\Im U = 1/(2i)(A-\bar A)$ and $XZ = -i Y$, we obtain
\begin{eqnarray*}
\mathcal P(A) &=& \frac 12\left[(A+\bar A)\otimes I - (A-\bar A)\otimes Y\right] \\
&=& \frac 12\left[A\otimes (I-Y) + \bar A\otimes (I+Y)\right] ,
\end{eqnarray*}
which is equal to $\bar A \otimes \ketbra{\otimes}{\otimes}+  A \otimes \ketbra{\odot}{\odot}$.
\end{proof}
In particular, if $A=U$ is unitary,
\begin{equation}\label{eq:controlledYencoding}
\mathcal{P}(U)=(\bar U\otimes I)\left(I\otimes \ketbra{\otimes}{\otimes}+  U^TU \otimes \ketbra{\odot}{\odot}\right).
\end{equation}

To compare with the rebit simulation of linear operators in \cite{aharonov2003simple}, we calculate the action of $\mathcal{P}$ on states written in the computational basis as follows:
\begin{prop}
\begin{equation} \label{eq:PUCoor}
\mathcal P(A): \sum_{ij} \psi_{ij} \ket{ij} \mapsto \sum_i [(\psi_{i0} \Re A - \psi_{i1} \Im A)\ket i] \ket 0 + \sum_i [(\psi_{i0} \Im A + \psi_{i1} \Re A)\ket i] \ket 1 .
\end{equation}
\end{prop}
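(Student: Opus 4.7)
The plan is to derive the coordinate formula directly from the operator identity $\mathcal{P}(A) = \Re A \otimes I + \Im A \otimes XZ$ established in Eq.~\eqref{eq:rebitEncodingLinear} (i.e., the $B=0$ specialization of Lemma~\ref{lem:imageABunderK}). Since both sides of the claimed equation are $\mathbb{R}$-linear in the input state, it suffices to compute the action of $\mathcal{P}(A)$ on the basis elements $\ket{i}\ket{0}$ and $\ket{i}\ket{1}$ of $\mathcal{H}_{n+1}(\mathbb{R})$, and then reassemble by linearity.

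First I would record the action of $XZ$ on the ancilla basis: $XZ\ket{0} = \ket{1}$ and $XZ\ket{1} = -\ket{0}$. Then, using $\mathcal{P}(A) = \Re A \otimes I + \Im A \otimes XZ$, I get
\begin{align}
\mathcal{P}(A)\ket{i}\ket{0} &= \Re A\ket{i}\otimes \ket{0} + \Im A\ket{i}\otimes\ket{1}, \nonumber \\
\mathcal{P}(A)\ket{i}\ket{1} &= \Re A\ket{i}\otimes\ket{1} - \Im A\ket{i}\otimes\ket{0}. \nonumber
\end{align}
These two identities are the entire technical content.

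Next I would apply $\mathcal{P}(A)$ to $\sum_{ij}\psi_{ij}\ket{ij}$ by $\mathbb{R}$-linearity (indeed $\mathbb{C}$-linearity), separating the sum over $j\in\{0,1\}$ and substituting the two formulas above. Grouping the resulting terms by the ancilla label gives exactly
\begin{equation}
\sum_i\bigl[(\psi_{i0}\Re A - \psi_{i1}\Im A)\ket{i}\bigr]\ket{0} + \sum_i\bigl[(\psi_{i0}\Im A + \psi_{i1}\Re A)\ket{i}\bigr]\ket{1}, \nonumber
\end{equation}
which is the claimed expression.

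There is no real obstacle here: the statement is a direct coordinate-level unpacking of Eq.~\eqref{eq:rebitEncodingLinear}, so the only thing to be careful about is the sign flip coming from $XZ\ket{1} = -\ket{0}$, which is precisely what produces the $-\psi_{i1}\Im A$ term in the $\ket{0}$ block.
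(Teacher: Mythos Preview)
Your proof is correct and takes essentially the same approach as the paper: both start from $\mathcal{P}(A) = \Re A \otimes I + \Im A \otimes XZ$, use the action of $XZ$ on the ancilla basis (the paper writes it compactly as $XZ\ket{j} = (-1)^j\ket{1-j}$, you split the two cases), and expand by linearity.
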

\begin{proof}
\begin{equation} \label{eq:PUCoor1}
\mathcal P(A): \sum_{ij} \psi_{ij} \ket{ij} = \sum_{ij} \psi_{ij}[(\Re A\ket i)\ket j + (\Im A\ket i)\otimes XZ\ket j ].
\end{equation}
But $XZ\ket j = (-1)^j\ket{1- j}$. Plugging this into \eq{eq:PUCoor1} and expanding out the $j$ index, we obtain \eq{eq:PUCoor}.
\end{proof}

\noindent By setting $\psi_{i'j'}= \delta_{i'i}\delta_{j'j}$ in \eq{eq:PUCoor}, we obtain
\begin{eqnarray}
\mathcal P(A) \ket i \ket 0 &=& (\Re A \ket i)\ket 0 + (\Im A \ket i)\ket 1, \\
\mathcal P(A) \ket i \ket 1 &=& -(\Im A \ket i)\ket 0 + (\Re A \ket i)\ket 1 ,
\end{eqnarray}
which is equivalent to Definition 1 of \cite{aharonov2003simple}.

\section{Equivalence of norm definitions}\label{app:equivalenceNorm}

In this appendix, we provide a proof that on $L_n$ the operator norm for $\bbR$-linear operators coincides with that for linear operators.
\begin{prop}\label{prop:norms_coincide}
Let $A\in L_n$ be a linear operator on $n$-qubits. Then $\|A\|$ as defined by Definitions~\ref{defn:linear_norm} and \ref{defn:rlinear_norm} are the same.
\end{prop}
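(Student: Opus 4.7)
The plan is to show the equality $\|A\| = \|\mathcal{P}(A)\|$ by a direct substitution argument, leveraging only two previously established facts: that $\mathcal{P}$ preserves the $l_2$ norm of states (Proposition~\ref{prop:l2_norm_equal}) and that $\mathcal{P}$ is a bijection whose inverse is $\mathcal{L}$ (Proposition~\ref{prop:LP_are_inverses}). The key observation driving the proof is the intertwining identity
\begin{equation}
\mathcal{P}(A)\,\mathcal{P}(\ket{\psi}) \;=\; \mathcal{P}(A\ket{\psi}),
\end{equation}
valid for every $\ket{\psi}\in\mathcal{H}_n(\mathbb{C})$, which follows immediately from the definition $\mathcal{P}(A):\ket{\phi}\mapsto\mathcal{P}(A\mathcal{L}(\ket{\phi}))$ and Proposition~\ref{prop:LP_are_inverses}.

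With this identity in hand, the plan is as follows. First, I would unpack Definition~\ref{defn:rlinear_norm}: since $A\in L_n \subseteq \bbR L_n$, the $\bbR$-linear norm is $\|\mathcal{P}(A)\|$, where $\mathcal{P}(A)\in R_{n+1}$ acts on $\mathcal{H}_{n+1}(\mathbb{R})$ and its operator norm is the supremum of $\|\mathcal{P}(A)\ket{\phi}\|/\|\ket{\phi}\|$ over nonzero real vectors $\ket{\phi}$. Second, I would reparametrize this supremum by writing $\ket{\phi}=\mathcal{P}(\ket{\psi})$ for $\ket{\psi}=\mathcal{L}(\ket{\phi})\in\mathcal{H}_n(\mathbb{C})$; since $\mathcal{P}$ is a bijection between $\mathcal{H}_n(\mathbb{C})$ and $\mathcal{H}_{n+1}(\mathbb{R})$, this reparametrization is a bijection of the relevant nonzero vectors. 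Third, applying the intertwining identity and Proposition~\ref{prop:l2_norm_equal} twice gives
\begin{equation}
\frac{\|\mathcal{P}(A)\ket{\phi}\|}{\|\ket{\phi}\|}
\;=\;\frac{\|\mathcal{P}(A\ket{\psi})\|}{\|\mathcal{P}(\ket{\psi})\|}
\;=\;\frac{\|A\ket{\psi}\|}{\|\ket{\psi}\|}.
\end{equation}
Taking the supremum of both sides over all nonzero $\ket\psi\in\mathcal{H}_n(\mathbb{C})$ (equivalently, all nonzero $\ket\phi\in\mathcal{H}_{n+1}(\mathbb{R})$) yields $\|\mathcal{P}(A)\|=\|A\|$ in the sense of Definition~\ref{defn:linear_norm}.

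The only subtle point, which I would address with a brief remark, is that the operator norm of the real matrix $\mathcal{P}(A)$ coincides whether the supremum is taken over real or complex vectors in $\mathcal{H}_{n+1}$; this is a standard fact about real matrices (the largest singular value is the same over $\mathbb{R}$ and $\mathbb{C}$). So it is legitimate to compute $\|\mathcal{P}(A)\|$ by restricting to real test vectors, which is precisely what the argument does. Beyond this observation, there is no real obstacle: the proof reduces to combining the two propositions just cited with the intertwining relation, and amounts to only a few lines of computation.
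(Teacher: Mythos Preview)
Your argument is correct and takes a genuinely different route from the paper's. The paper proceeds via the singular value decomposition: writing $A=UDV$ and using Proposition~\ref{prop:rebit_encoding_linear_operator} to compute $\mathcal{P}(D)$ explicitly, it shows that $\text{sing}(\mathcal{P}(A))=\text{sing}(A)\cup\text{sing}(\bar A)$, whence the largest singular values agree. By contrast, you never invoke the SVD or the explicit block structure of $\mathcal{P}(A)$; you rely only on the bijectivity of $\mathcal{P}$, its norm preservation on states, and the intertwining relation $\mathcal{P}(A)\mathcal{P}(\ket{\psi})=\mathcal{P}(A\ket{\psi})$. Your approach is more elementary and conceptual, and it would extend verbatim to any encoding map with those three properties. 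The paper's approach, on the other hand, yields the stronger byproduct that the full singular spectrum of $\mathcal{P}(A)$ is determined by that of $A$. Your remark that the operator norm of a real matrix is unchanged whether the supremum runs over real or complex test vectors is exactly the bridge needed to match Definition~\ref{defn:linear_norm} (which quantifies over $\mathcal{H}_{n+1}(\mathbb{C})$) with your reparametrization over $\mathcal{H}_{n+1}(\mathbb{R})$, and it is standard.
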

\begin{proof}
For clarity, denote $\|A\|_l$ as the operator norm for linear operators from Definition~\ref{defn:linear_norm} and $\|A\|_r$ as the operator norm for $\bbR$-linear operators from Definition~\ref{defn:rlinear_norm}. Because $\|A\|_l$ is the largest singular value of $A$ and $\|A\|_r$ is the largest singular value of $\mathcal{P}(A)$, we need only show these coincide. Say $A=UDV$ for unitaries $U,V\in U_n$ and diagonal matrix $D$ so that $\text{sing}(A)=\{D_{ii},\forall i\}$ are the singular values of $A$ and $\|A\|_l=\max\{|\lambda|:\lambda\in\text{sing}(A)\}$. Now, $\mathcal{P}(A)=\mathcal{P}(U)\mathcal{P}(D)\mathcal{P}(V)$, and the singular value decomposition of $\mathcal{P}(D)$ is easily calculated using Proposition~\ref{prop:rebit_encoding_linear_operator},
\begin{equation}
\mathcal{P}(D)=\bar D\otimes\ket{\otimes}\bra{\otimes}+D\otimes\ket{\odot}\bra{\odot}=(I\otimes SH)(\bar D\otimes\ket{0}\bra{0}+D\otimes\ket{1}\bra{1})(I\otimes HS^\dag).
\end{equation}
Thus, $\text{sing}(\mathcal{P}(A))=\text{sing}(\mathcal{P}(A))=\text{sing}(A)\cup\text{sing}(\bar A)$. Finally, $\|A\|_r=\|\mathcal{P}(A)\|_l=\max\{|\lambda|:\lambda\in\text{sing}(\mathcal{P}(A))\}=\|A\|_l$.
\end{proof}

\section{Alternative formulation of Theorem \ref{thm:OrthogonalABK}} \label{app:altFormulation}

In Theorem \ref{thm:OrthogonalABK}, we showed that for an $\bbR$-linear operator $A+BK$, the operator $\mathcal P(A+BK)$ is orthogonal if and only if $A^\dag A + B^T \bar B = I$ and $A^\dag B + B^T \bar A = 0$.

We now find an equivalent condition for orthogonality.
\begin{theorem} \label{thm:OrthogonalABKalt}
Let $A+BK$ be an $\mathbb R$-linear operator. Then $\mathcal P(A+BK)$ is orthogonal if and only if
\begin{eqnarray} \label{eq:OrthogonalABKalt}
A A^\dag + B B^\dag = I , \nn
A B^T + B A^T = 0 .
\end{eqnarray}
\end{theorem}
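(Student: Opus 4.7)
The proof plan mirrors that of Theorem \ref{thm:OrthogonalABK}, but applies the alternative orthogonality criterion $WW^T = I$ instead of $W^TW = I$. Since $W = \mathcal{P}(A+BK)$ is real, the two criteria are equivalent, so we know a priori that Theorem \ref{thm:OrthogonalABKalt} must be equivalent to Theorem \ref{thm:OrthogonalABK}; the content of the theorem is simply that this equivalence is packaged in a form involving $A A^\dag, BB^\dag, AB^T, BA^T$ rather than $A^\dag A, B^T \bar B, A^\dag B, B^T \bar A$.

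The plan is to start from Eq.~\eqref{eq:PABK},
\begin{equation}
\mathcal P(A+BK) = \Re A \otimes I + \Im A \otimes XZ + \Re B \otimes Z + \Im B \otimes X,
\end{equation}
and compute its transpose. The only subtlety is that $XZ$ is antisymmetric (since $X$ and $Z$ are each symmetric but anticommute), so $(XZ)^T = ZX = -XZ$, while $I, X, Z$ are symmetric. Thus
\begin{equation}
\mathcal P(A+BK)^T = \Re A^T \otimes I - \Im A^T \otimes XZ + \Re B^T \otimes Z + \Im B^T \otimes X.
\end{equation}

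Next, I would expand the product $\mathcal P(A+BK)\,\mathcal P(A+BK)^T$ and group the sixteen resulting terms by their ancilla tensor factor using the multiplication table of $\{I, X, Z, XZ\}$ (in particular $(XZ)^2 = -I$, $X(XZ) = Z$, $Z(XZ) = -X$, $XZ = -ZX$). Collecting the coefficients of $I$, $XZ$, $Z$, and $X$ separately and setting the first equal to $I$ and the others equal to $0$ yields four real conditions. The key recognition step is that these four conditions pair up as the real and imaginary parts of exactly $A A^\dag + BB^\dag = I$ and $AB^T + BA^T = 0$, completing the proof. There is no real obstacle; the only point requiring care is the sign from $(XZ)^T = -XZ$ (which is why the $\Im A$ block receives a minus sign in $W^T$) and the bookkeeping of which $\sigma \tau$ products on the ancilla land in which of the four equivalence classes. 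Since the calculation is entirely parallel to that of Theorem~\ref{thm:OrthogonalABK}, I would present it in the same explicit matrix-expansion style for uniformity.
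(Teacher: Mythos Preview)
Your proposal is correct and is explicitly acknowledged by the paper as a valid route: the text preceding the proof states that ``Repeating the proof of Theorem~\ref{thm:OrthogonalABK} using this condition yields \eq{eq:OrthogonalABKalt}.'' However, the paper then chooses a \emph{different} argument. Rather than re-expanding the tensor product $WW^T$ directly, it works at the level of the $\star$-ring structure on $\bbR L_n$ (Appendix~\ref{app:algebraicProperties}): the conditions of Theorem~\ref{thm:OrthogonalABK} say exactly that $(A^\dag + B^T K)\star(A+BK)=I$, i.e.\ that $(A+BK)^\dag$ is a left inverse of $A+BK$; since left inverses coincide with right inverses in a ring, this is equivalent to $(A+BK)\star(A^\dag + B^T K)=I$, and expanding the $\star$-product then reads off $AA^\dag + BB^\dag = I$ and $AB^T + BA^T = 0$ directly.

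Your approach is self-contained and parallels Theorem~\ref{thm:OrthogonalABK} line for line, at the cost of repeating a sixteen-term Pauli bookkeeping calculation. The paper's approach avoids any tensor-factor bookkeeping and instead makes transparent why the two sets of conditions must be equivalent: they are the left- and right-unitarity conditions for $A+BK$ in the $\dag$-ring $(\bbR L_n,+,\star)$, and the equivalence is a one-line ring-theoretic fact. Either proof is fine; the paper's version is shorter and highlights the algebraic structure established in Proposition~\ref{prop:unitaryElement}.
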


In the proof of Theorem \ref{thm:OrthogonalABK}, we used the property that a matrix $W$ is orthogonal if and only if $W^T W = I$. But this is equivalent to the condition that $W W^T = I$. Repeating the proof of Theorem \ref{thm:OrthogonalABK} using this condition yields \eq{eq:OrthogonalABKalt}. An alternative approach, which we use here, is to directly show that \eqref{eq:OrthogonalABK} is equivalent to \eqref{eq:OrthogonalABKalt}. 

\begin{proof}
\begin{eqnarray}
&& A^\dag A + B^T \bar B = I,\  A^\dag B + B^T \bar A = 0 \nn
&\iff & I = (A^\dag A + B^T \bar B) + (A^\dag B + B^T \bar A) K \nn
&& \quad\! = (A^\dag + B^T K) \star (A+BK) \nn
&\iff & I = (A+BK) \star  (A^\dag + B^T K) \nn
&& \quad\! = (A A^\dag + B B^\dag) + (BA^T + AB^T) K \nn
&\iff & A A^\dag + B B^\dag = I,\ A B^T + B A^T = 0 ,
\end{eqnarray}
where we used the star product $\star$ defined in \eq{eq:multiplstar}, and the fact that left inverses are equal to right inverses in a ring (See Appendix \ref{app:algebraicProperties}).
\end{proof}

\section{On orthogonal projections}
\label{app:OrthogonalProjections}

In this appendix, we recall some definitions about orthogonal projections. Let $\mathbb{H}\subseteq \mathcal H$ be a subspace of a vector space $\mathcal H$. The orthogonal complement of $\mathbb{H}$ is the set $\mathbb{H}^\perp = \set{u \in \mathcal H| \langle v,u\rangle = 0 \ \forall v\in \mathbb{H}}$. For finite-dimensional vector spaces, $(\mathbb{H}^\perp)^\perp = \mathbb{H}$, and
$\mathbb{H}$ and $\mathbb{H}^\perp$ are complementary subspaces, i.e.\ $\mathbb{H} \cap \mathbb{H}^\perp = \set 0$ and $\mathbb{H}\oplus\mathbb{H}^\perp = \mathcal H$, where $\oplus$ denotes direct sum. Moreover, for any $v \in \mathcal H$, there exists a unique $a \in \mathbb{H}$ and a unique $b \in \mathbb{H}^\perp$ such that $v=a+b$. The map $v\mapsto a$ is called the orthogonal projection onto $\mathbb{H}$, and we shall denote it by $\proj_\mathbb{H}(\cdot)$. It then follows that the map $v\mapsto b$ is equal to $\proj_{\mathbb{H}^\perp}(\cdot)$. Two immediate consequences are that $\proj_\mathbb{H}+\proj_{\mathbb{H}^\perp} = I$ and that $\proj_\mathbb{H} \circ\proj_{\mathbb{H}^\perp}=\proj_{\mathbb{H}^\perp} \circ\proj_\mathbb{H} =0$. It is also easy to verify that orthogonal projections are linear operators that are idempotent and hermitian, i.e. $\proj_\mathbb{H} = \proj_\mathbb{H}^2 = \proj_\mathbb{H}^\dag$.

An alternate characterization of orthogonal projections is as follows:
\begin{prop} \label{prop:orthogonalprojection}
Let $\mathbb{H}\subseteq \mathcal H$ be a subspace of a vector space $\mathcal H$. Let $\set{\ket{a_i}}_{i=1}^s$ be an orthonormal basis for $S$. Then $P$ is an orthogonal projection onto $\mathbb{H}$ if and only if 
\begin{equation} \label{eq:orthogonalprojection}
P=\sum_{i=1}^s \ketbra{a_i}{a_i}.
\end{equation}
\end{prop}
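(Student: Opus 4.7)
The plan is to prove both directions by leveraging the uniqueness of the orthogonal decomposition $v = a + b$ with $a \in \mathbb{H}$ and $b \in \mathbb{H}^\perp$, which is already recorded in the discussion preceding the proposition. I treat the $(\Leftarrow)$ direction first because it is the computational one, and then use it together with uniqueness to deduce $(\Rightarrow)$.

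For $(\Leftarrow)$, suppose $P = \sum_{i=1}^s \ketbra{a_i}{a_i}$ and let $v \in \mathcal{H}$ be arbitrary. Using the decomposition $v = a + b$ with $a \in \mathbb{H}$ and $b \in \mathbb{H}^\perp$, expand $a = \sum_{i=1}^s c_i \ket{a_i}$ in the given orthonormal basis. Then
\begin{equation}
P v \;=\; \sum_{i=1}^s \ket{a_i}\langle a_i | a\rangle \;+\; \sum_{i=1}^s \ket{a_i}\langle a_i | b\rangle \;=\; \sum_{i=1}^s c_i \ket{a_i} \;+\; 0 \;=\; a,
\end{equation}
where $\langle a_i | b\rangle = 0$ follows from $\ket{a_i} \in \mathbb{H}$ and $b \in \mathbb{H}^\perp$. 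Since $Pv = a$ for every $v$, $P$ is the orthogonal projection onto $\mathbb{H}$.

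For $(\Rightarrow)$, define $Q = \sum_{i=1}^s \ketbra{a_i}{a_i}$. By the argument just given, $Q$ is the orthogonal projection onto $\mathbb{H}$. Since the map $v \mapsto a$ from the unique decomposition $v = a + b$ is single-valued, the orthogonal projection onto $\mathbb{H}$ is unique as an operator on $\mathcal{H}$, hence $P = Q$.

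I expect no real obstacle here; the only subtlety worth flagging is that the statement of the proposition tacitly uses finite-dimensionality (already assumed throughout the paper, see Section~\ref{sec:notation}) to guarantee existence and uniqueness of the decomposition $\mathcal{H} = \mathbb{H} \oplus \mathbb{H}^\perp$, and also uses that the basis $\{\ket{a_i}\}_{i=1}^s$ is orthonormal (rather than merely a basis) so that the expansion coefficients $c_i = \langle a_i | a\rangle$ are read off by inner products. Both facts are standard and are exactly what lets the two sums collapse in the display above.
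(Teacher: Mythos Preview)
Your proof is correct. The paper actually states this proposition without proof, treating it as a standard recalled fact in the appendix on orthogonal projections; your argument is the natural one given the decomposition $\mathcal H = \mathbb H \oplus \mathbb H^\perp$ set up in the text immediately preceding the proposition.
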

Note that Proposition \ref{prop:orthogonalprojection} implies that the formula in \eq{eq:orthogonalprojection} is in fact independent of the basis chosen for $\mathbb{H}$, i.e. if $\set{\ket{a_i}}_{i=1}^s$ and $\set{\ket{b_i}}_{i=1}^s$ are two bases for $\mathbb{H}$, then $\sum_{i=1}^s \ketbra{a_i}{a_i} = \sum_{i=1}^s \ketbra{b_i}{b_i}$.

\section{Matrix representation of $\bbR$-linear operators}\label{app:matrixrep}

In this appendix, we develop a matrix notation for the $\bbR$-linear operators. We define the \textit{matrix representation} of an $\bbR$-linear operator $A+BK$ to be the matrix $[A+BK]$ whose $(\mu,\nu)$th element is the column vector $\mat{A_{\mu\nu} \\B_{\mu\nu} } \in \bbC^2$. Each of these elements $\mat{A_{\mu\nu} \\B_{\mu\nu} }$  can be seen as belonging to the ring $R = \left\{\mat{a\\b}: a,b \in \mathbb C\right\}$, where addition is defined by $$\mat{a \\b }+\mat{c \\d } = \mat{a+c\\b+d}$$ and multiplication is defined by $$\mat{a \\b }\mat{c \\d } = \mat{ac + b\bar d \\ ad+ b\bar c}.$$
We shall sometimes also denote the column vector $\mat{a \\b}$ belonging to $R$ by $a+bk$, where $k$ is treated as a \textit{formal} symbol (for example, see its use in the proof of Theorem \ref{thm:partial_antiunitaries_not_group}). With this definition, the set of matrices $[A+BK]$ forms a matrix ring under the \textit{usual}\footnote{
Unlike usual matrix multiplication, the elements of the matrix belong to a ring, while the elements in the column vector representation of a vector belong to a field that is different from the ring. For a more rigorous treatment, we should treat the vector space $\mathbb C^n$ as a unital left $R$-module \cite{dummit2004abstract}.} rules of matrix addition and matrix multiplication.

Note that the matrix representation of an $\bbR$-linear operator is unique, as the following theorem, which expresses the matrix representation of an $\bbR$-linear operator $\Gamma$ in terms of $\Gamma$, shows 
\begin{theorem} \label{thm:matrixRep}
Let $\Gamma$ be an $\bbR$-linear operator. Then its matrix elements are given by\footnote{The notation $(A|B)$ refers to the augmented matrix formed from the matrices $A$ and $B$.}
\begin{equation} \label{eq:matrixPartialAntilinear}
\Gamma_{\mu\nu} = \frac 12 \left( \Gamma(e_\nu) - i \Gamma(i e_\nu) \right. \left| \Gamma(e_\nu) + i \Gamma(i e_\nu) \right)^T e_\mu,
\end{equation}
where $e_\mu$ is the $\mu$th basis vector defined by $e_\mu(\nu) = \delta_{\mu\nu}$.
\end{theorem}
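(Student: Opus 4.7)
The plan is to prove Theorem~\ref{thm:matrixRep} by exploiting the decomposition $\Gamma = A + BK$ guaranteed by Theorem~\ref{thm:equiv_defRlinear}, together with the fact that the computational basis vectors $e_\nu$ are real. The main idea is essentially already present in the proof of Theorem~\ref{thm:RlinearcharacterizationRestated}: apply $\Gamma$ to $e_\nu$ and to $ie_\nu$ separately, then solve the resulting $2\times 2$ linear system for the columns $Ae_\nu$ and $Be_\nu$.

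Concretely, first I would compute $\Gamma(e_\nu) = A e_\nu + B K e_\nu = A e_\nu + B e_\nu$, using $K e_\nu = e_\nu$ since $e_\nu$ is real. Then I would compute $\Gamma(i e_\nu) = A(i e_\nu) + B K(i e_\nu) = i A e_\nu - i B e_\nu$, using linearity of $A$ and the fact that $K(ie_\nu) = -ie_\nu$. Solving this pair gives
\begin{equation}
A e_\nu = \tfrac{1}{2}\bigl(\Gamma(e_\nu) - i\,\Gamma(i e_\nu)\bigr),\qquad B e_\nu = \tfrac{1}{2}\bigl(\Gamma(e_\nu) + i\,\Gamma(i e_\nu)\bigr).
\end{equation}
Since $A_{\mu\nu} = e_\mu^T A e_\nu$ and $B_{\mu\nu} = e_\mu^T B e_\nu$, stacking these two scalars into the column vector $\binom{A_{\mu\nu}}{B_{\mu\nu}}$ produces exactly the formula \eqref{eq:matrixPartialAntilinear} claimed in the theorem.

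The only subtlety worth flagging is the one sentence of the theorem asserting that the matrix representation is unique: this follows from the derivation itself, since \eqref{eq:matrixPartialAntilinear} expresses each $A_{\mu\nu}$ and $B_{\mu\nu}$ purely in terms of $\Gamma$'s action on vectors, with no choice involved. Equivalently, if $A + BK = A' + B'K$ as operators on $\mathcal H_n(\mathbb C)$, then applying the identity to both $e_\nu$ and $ie_\nu$ and solving as above forces $A = A'$ and $B = B'$ columnwise. There is no real obstacle here; the argument is a direct computation of a few lines, and the main thing to be careful about is keeping track of the antilinearity of $K$ in the step $\Gamma(ie_\nu) = iAe_\nu - iBe_\nu$ rather than $i(A-B)e_\nu$ written without justification.
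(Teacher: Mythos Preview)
Your proposal is correct and follows essentially the same route as the paper: both write $\Gamma = A + BK$, evaluate $\Gamma$ at $e_\nu$ and $ie_\nu$, and solve the resulting $2\times 2$ system to recover $A_{\mu\nu}$ and $B_{\mu\nu}$. The only cosmetic difference is that you first isolate the full columns $Ae_\nu,\,Be_\nu$ and then take the $\mu$-th component, whereas the paper takes components from the outset; your added remark on uniqueness is also exactly what the paper intends the theorem to establish.
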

\begin{proof} Since $\Gamma$ is an $\bbR$-linear operator, we can write $\Gamma = A+BK$, where $A$ and $B$ are both linear operators.
\begin{eqnarray*}
RHS &=& \frac 12 \left( \Gamma(e_\nu) - i \Gamma(i e_\nu) \right. \left| \Gamma(e_\nu) + i \Gamma(i e_\nu) \right)^T e_\mu  \\
&=& \frac 12 \left( \langle e_\mu, \Gamma(e_\nu) - i \Gamma(i e_\nu)\rangle , \langle e_\mu , \Gamma(e_\nu) + i \Gamma(i e_\nu) \right)^T \\
&=& \frac 12 ( \underbrace{\langle e_\mu , \Gamma(e_\nu) \rangle}_{(0)} - i \underbrace{\langle e_\mu , \Gamma(i e_\nu) \rangle}_{(1)} , \underbrace{\langle e_\mu , \Gamma(e_\nu) \rangle}_{(0)} + i \underbrace{\langle e_\mu , \Gamma(i e_\nu) \rangle}_{(1)})^T,
\end{eqnarray*}
where for $\alpha = 0, 1$,
\begin{eqnarray*}
(\alpha) &=& \langle e_\mu, \Gamma(i^\alpha e_\nu) \rangle \\
&=& \langle e_\mu, (A+BK)(i^\alpha e_\nu) \rangle \\
&=& \langle e_\mu, i^\alpha A(e_\nu) + (-1)^\alpha i^\alpha B(e_\nu) \rangle \\
&=& i^\alpha (A_{\mu\nu} + (-1)^\alpha  B_{\mu\nu} ).
\end{eqnarray*}
Hence, $(0) - i (1) = 2A_{\mu\nu}$ and $(0) + i (1) = 2B_{\mu\nu}$.

So,
\begin{eqnarray*}
RHS &=& \frac 12 (2 A_{\mu\nu}, 2 B_{\mu\nu})^T \\
&=& \mat{ A_{\mu\nu} \\ B_{\mu\nu} } \\
&=& (A+BK)_{\mu,\nu} = \Gamma_{\mu\nu}. 
\end{eqnarray*}
\end{proof}

Note that when $\Gamma$ is linear, the expression in \eq{eq:matrixPartialAntilinear} reduces to the following familiar expression.
\begin{eqnarray*}
\Gamma_{\mu\nu} &=& \frac 12 \left( \Gamma(e_\nu) + \Gamma(e_\nu) \right. \left| \Gamma(e_\nu) - \Gamma(e_\nu) \right)^T e_\mu \\
&=& (\Gamma(e_\nu)|0)^T e_\mu \\
&=& (\langle e_\mu , \Gamma(e_\nu) \rangle ,0)^T \\
&=& \langle e_\mu , \Gamma(e_\nu) \rangle + 0 K \\
&=& \langle e_\mu , \Gamma(e_\nu) \rangle.
\end{eqnarray*}

We will now illustrate the use of Theorem \ref{thm:matrixRep} in an example.
\begin{example}
Let $\Gamma: \bbC^2 \rightarrow \bbC^2$ be an $\bbR$-linear operator. Then 
$$ \Gamma: \mat{\alpha \\ \beta} \mapsto \mat{2 \Re\alpha + 2 \bar\beta \\ 3 \beta}$$
if and only if the matrix representation of $\Gamma$ is 
$$\mat{ 1+k & 2k \\ 0 & 3}.$$
\end{example}
\begin{proof}
The backward direction follows from matrix multiplication
$$ \mat{ 1+k & 2k \\ 0 & 3} \mat{\alpha \\ \beta} = \mat{ 1 & 0 \\ 0 & 3} \mat{\alpha \\ \beta} + \mat{ 1 & 2 \\ 0 & 0}\mat{\bar\alpha \\ \bar\beta} = \mat{\alpha + \bar\alpha + 2 \bar\beta \\ 3 \beta} = \mat{2 \Re\alpha + 2 \bar\beta \\ 3 \beta}$$
To obtain the forward direction, we make use of Theorem \ref{thm:matrixRep}:
\begin{eqnarray}
\Gamma_{\mu,0} &=& \frac 12 \left[ \mat{2 \\ 0} - i \mat{0 \\ 0} \right. \left| \mat{2 \\ 0} + i \mat{0 \\ 0} \right]^T e_\mu \\
 &=& \mat{1 & 0 \\ 1 & 0} e_\mu
\end{eqnarray}
and 
\begin{eqnarray}
\Gamma_{\mu,1} &=& \frac 12 \left[ \mat{2 \\3} - i \mat{-2 i \\ 3i} \right. \left|  \mat{2 \\3} + i \mat{-2 i \\ 3i} \right]^T e_\mu \\
 &=& \mat{0  & 3 \\ 2 & 0} e_\mu.
\end{eqnarray}
Hence, $\Gamma_{00} = \mat{1 \\ 1} = 1+K$, $\Gamma_{10} = \mat{0 \\ 0} = 0$,  $\Gamma_{01} = \mat{0 \\ 2} = 2K$ and $\Gamma_{11} = \mat{3 \\ 0} = 3$.

\end{proof}

\section{Proof of Lemma~\ref{lem:orthog_compiling}}\label{app:orthog_compiling}
Here we sketch the proof of Lemma~\ref{lem:orthog_compiling} following Chapter 4.5 of \cite{nielsen2010quantum}. The first thing to note is that an orthogonal gate $W$ on an $d$-dimensional system (i.e.~represented as a $d\times d$ matrix) can be broken down into a product of at most $d(d-1)/2$ ``two-level" orthogonal gates. A two-level orthogonal gate $V$ is one with non-zero off-diagonal entries in at most two rows and the corresponding two columns.

To show this, we write $W=(w_{ij})$ with matrix elements $w_{ij}$ for $i,j\in\{1,2,\dots,d\}$. We note that for any $i,j$ with $i>j$, we can find a two-level orthogonal gate $V$ such that $(VW)_{ij}=0$. This is done by choosing $V$ such that $V_{ab}=0$ except for (letting $N_{ij}=\sqrt{w_{jj}^2+w_{ji}^2}$)
\begin{eqnarray}
V_{jj}=w_{jj}/N_{ij},&\quad V_{ji}=w_{ij}/N_{ij}\\
V_{ij}=w_{ij}/N_{ij},&\quad V_{ii}=-w_{jj}/N_{ij}
\end{eqnarray}
and $V_{aa}=1$ whenever $a\not\in\{i,j\}$. One can check $V$ is orthogonal and it is clearly two-level. Calculating the $(i,j)$ element of $VW$ we get
\begin{equation}
(VW)_{ij}=\sum_h V_{ih}W_{hj}=V_{ii}W_{ij}+V_{ij}W_{jj}=(-w_{jj}w_{ij}+w_{ij}w_{jj})/N_{ij}=0.
\end{equation}
Incidentally, the $(j,j)$ element of $VW$ is unity
\begin{equation}
(VW)_{jj}=\sum_h V_{jh}W_{hj}=V_{ji}W_{ij}+V_{jj}W_{jj}=(w_{ji}^2+w_{jj}^2)/N_{ij}=1.
\end{equation}
Meanwhile, other elements in column $j$ are unchanged, $(VW)_{aj}=W_{aj}$ for $a\not\in\{i,j\}$. Also, if $W_{ib}=W_{jb}=0$ for $b<j<i$, then $(VW)_{ab}=W_{ab}$ for any $a$ as well.

Repeating this reduction with two-level orthogonal gates $V_1,V_2,\dots,V_{d-1}$, we can zero all off-diagonal elements in the first column of $V_{d-1}V_{d-2}\dots V_1W$. Since this product is still orthogonal, all off-diagonal elements of the first row are also zeroed. Now the same process can be repeated on the remaining block matrix in rows and columns 2 through $d$, and so on until only a $2\times 2$ block in the lower right remains. The inverse of this remaining matrix is two-level. In summary, we obtain two-level orthogonals $V_j$ such that
\begin{equation}
V_kV_{k-1}\dots V_1W=I
\end{equation}
and thus $W=V_1^TV_2^T\dots V_k^T$ is a product of two-level orthogonals. Also, $k$ is at most $(d-1)+(d-2)+\dots+1=d(d-1)/2$.

In the remainder of the proof we need to write two-level orthogonals as a product of $C^hZ$ gates and arbitrary single-qubit orthogonal gates. We note that $C^hX$ (which equals $C^hZ$ up to Hadamards, which are orthogonal, on the target) can swap two $n$-qubit basis states $\ket{x}$ and $\ket{y}$ when they differ in at most one place. Let $x_j\oplus y_j=1$ but $x_i\oplus y_i=0$ for $i\neq j$. Then, letting $C$ be the $C^{n-1}X$ gate with target $j$ and controls on all the remaining $n-1$ qubits, we find
\begin{eqnarray}
\left(\bigotimes_{i=1}^nX^{1-x_i}\right)C\left(\bigotimes_{i=1}^nX^{1-x_i}\right):&&\ket{x}\mapsto\ket{y}\\
&&\ket{y}\mapsto\ket{x}\\
&&\ket{z}\mapsto\ket{z},\quad z\neq x,y.
\end{eqnarray}
Say we have a two-level orthogonal $V$ acting non-trivially on $\ket{x_1}$ and $\ket{x_2}$. On those two states $V$ applies an orthogonal operator, call it $\tilde V$. First, we can swap $\ket{x_1}$ with other basis states in sequence $y_1,y_2,\dots,y_m$, where subsequent states differ in at most one bit, so that $y_m$ differs in one bit, bit $j$, from $x_2$. Now, applying $C^{n-1}\tilde S$ with target $j$ and controls the remaining $n-1$ qubits and reversing the swaps from the first step, we implement the two-level orthogonal gate $V$.

The last step is breaking down $C^{n-1}\tilde V$ into $C^{h}X$ gates and single-qubit rotations. Since $\tilde V$ is orthogonal and single-qubit, we can write without loss of generality
\begin{equation}\label{eq:generic_1qubit_orthog}
\tilde V=\cos\theta-i\sin\theta Y.
\end{equation}
Define $R=\cos(\theta/2)-i\sin(\theta/2) Y$, such that $RR^\dag=I$ and $RXR^\dag X=R^2=\tilde V$. The circuit in Fig.~\ref{fig:CS} implements $C^{n-1}\tilde V$. 

\begin{figure}
\begin{equation*}
\Qcircuit @C=1em @R=1em {
& \qw & \ctrl 2 & \qw & \ctrl 2 & \qw \\
& \qw &   \ctrl 1   & \qw & \ctrl 1  & \qw  \\
& \gate R & \targ & \gate {R^\dag}  & \targ & \qw \\
}
\end{equation*}

\caption{\label{fig:CS} The circuit, built from only orthogonal gates, implementing a $C^{n-1}V$ gate with two controls (the top two qubits), corresponding to $n=3$. Further controls are added in the natural way, extending the $CZ$ and $CCX$ gates to include the new qubits as controls too.}
\end{figure}
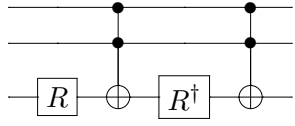

Finally, we note that this entire process breaks an arbitrary $n$-qubit orthogonal gate into at most $(n+1)2^{n-1}(2^n-1)$ $C^{h}Z$ gates, giving an upper bound on the number of $K_L$ gates that may need to be part of the $\bbR$-linear compilation in Eq.~\eqref{eq:compile_rlinear}.

\section{Pauli sets with different allowed phases}\label{app:p_sets_phases}

Our definition of the standard Pauli group $\cC_n(1)$ in \eq{eq:def_Paulis} differs from the Pauli group $G_n(1)$ in \eq{eq:def_Paulis_ikemike} in terms of the \textit{allowed phases} of the Pauli operators: while the Pauli operators in $\cC_n(1)$ are allowed to have arbitrary phases, those in $G_n(1)$ have phases which are constrained to be powers of $i$. In this appendix, we show that choosing different allowed phases has little effect on the definition of higher levels of the Clifford hierarchy.

We start by introducing some definitions. Let
\begin{equation}
P_n = \big\{I,X,Y,Z\}^{\otimes n} = \{p_1\otimes p_2 \otimes \ldots \otimes p_n: p_j \in \{I,X,Y,Z\} \mbox{ for all } j \in 1,2,\ldots, n \big\}
\end{equation}
denote the set of Pauli operators without any phases. 

\begin{definition}
\label{def:pauliSet}
Let $\Omega \subseteq \bbR$. The \textit{first level of the} $\Omega$-\textit{Clifford hierarchy} (called the $\Omega$-\textit{Pauli set}) is
\begin{equation} \label{eq:omegaPauliset}
\cC_n(1;\Omega) = \{e^{i\alpha} p: p \in P_n, \alpha \in \Omega \}.
\end{equation}
For $k\geq 2$, the $k$\textit{th level of the} $\Omega$\textit{-Clifford hierarchy} is
\begin{equation}
\cC_n(k;\Omega) = \{U \in U_n : U \cC_n(1;\Omega) U^\dag \subseteq \cC_n(k-1;\Omega) \}.
\end{equation}
\end{definition}

Note that \eq{eq:omegaPauliset} generalizes the Pauli groups discussed above: $\cC_n(1) = \cC_n(1;\bbR)$ and $G_n(1) = \cC_n(1;\tfrac \pi 2 \bbZ)$. Also, $P_n = \cC_n(1;2\pi \bbZ)$.
Note that $\cC_n(1;\Omega)$ is in general not a group, and hence we refer to it as a \textit{Pauli set} and not a \textit{Pauli group}. 

An immediate consequence of the definition of the $\Omega$-Clifford hierarchy is that it is closed under multiplication by global phases:
\begin{theorem}
\label{thm:closureGlobalPhasesClifford}
Let $k\geq 2$. If $\theta \in \bbR$ and $V \in \cC_n(k;\Omega)$, then $e^{i\theta} V \in \cC_n(k;\Omega)$.
\end{theorem}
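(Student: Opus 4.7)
The plan is to observe that the theorem follows almost immediately from the definition, because conjugation is insensitive to global phases. Specifically, for any operator $V \in U_n$, any $\theta \in \mathbb{R}$, and any operator $p$, we have
\begin{equation}
(e^{i\theta} V)\, p\, (e^{i\theta} V)^\dag = e^{i\theta} V p V^\dag e^{-i\theta} = V p V^\dag,
\end{equation}
since the scalar $e^{i\theta}$ commutes with everything and cancels against its conjugate. This single identity is the engine of the proof.

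With that identity in hand, I would proceed as follows. First, note that $e^{i\theta} V \in U_n$: since $V$ is unitary and $e^{i\theta}$ is a unit-modulus scalar, their product is unitary, so membership in $U_n$ (which is required to even be a candidate element of $\mathcal{C}_n(k;\Omega)$) is automatic. Next, let $V \in \mathcal{C}_n(k;\Omega)$ and let $p \in \mathcal{C}_n(1;\Omega)$ be arbitrary. By the definition of the $k$th level of the $\Omega$-Clifford hierarchy, $V p V^\dag \in \mathcal{C}_n(k-1;\Omega)$. By the cancellation identity above, $(e^{i\theta}V) p (e^{i\theta}V)^\dag = V p V^\dag \in \mathcal{C}_n(k-1;\Omega)$ as well. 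Since $p$ was arbitrary, this shows $(e^{i\theta}V)\mathcal{C}_n(1;\Omega)(e^{i\theta}V)^\dag \subseteq \mathcal{C}_n(k-1;\Omega)$, which is precisely the definition of $e^{i\theta} V \in \mathcal{C}_n(k;\Omega)$.

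No induction on $k$ is required, since the argument uses only the definition at level $k$ and treats the set $\mathcal{C}_n(k-1;\Omega)$ as a black box on the right-hand side. There is no real obstacle here; the statement is essentially a sanity-check observation that the allowed-phase freedom at level $1$ propagates upward for free, because the defining condition at higher levels uses conjugation rather than left-multiplication. This is in contrast to level $1$ itself, where multiplying by an arbitrary $e^{i\theta}$ can take one outside $\mathcal{C}_n(1;\Omega)$ whenever $\Omega \neq \mathbb{R}$, which is precisely why the theorem is restricted to $k \geq 2$.
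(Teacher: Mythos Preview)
Your proof is correct and matches the paper's treatment: the paper simply states this theorem as ``an immediate consequence of the definition of the $\Omega$-Clifford hierarchy'' without further argument, and your observation that conjugation cancels the global phase is exactly the immediate reasoning intended.
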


For $k \geq 2$, how does the set $\cC_n(k;\Omega)$ depend on $\Omega$? In the rest of this appendix, we will show that as long as $\Omega$ is nonempty and $\pi$-\textit{periodic} (to be defined next),  
the set $\cC_n(k;\Omega)$ is in fact independent of $\Omega$ and is equal to the standard Clifford hierarchy $\cC_n(k)$.

\begin{definition}
A subset $\Omega \subseteq \bbR$ is $\pi$-\textit{periodic} if for all $\alpha \in \Omega$, 
\begin{equation}
\alpha \in \Omega \iff \alpha + \pi \in \Omega.
\end{equation}
\end{definition}

Note that while the subsets $\bbR$ and $\tfrac \pi 2$ are $\pi$-periodic, the subset $2\pi \bbZ$ is not.

\begin{theorem}
\label{thm:cliffordhierarchyindependence}
Let $\Omega_1, \Omega_2 \subseteq \bbR$ be nonempty $\pi$-periodic sets. Then for all $k\geq 2$,
\begin{equation} \label{eq:CnOmegaEquality}
\cC_n(k;\Omega_1) = \cC_n(k;\Omega_2).
\end{equation}
\end{theorem}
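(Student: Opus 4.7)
The plan is to prove the stronger (and equivalent) statement that $\cC_n(k;\Omega) = \cC_n(k)$ for every nonempty $\pi$-periodic $\Omega \subseteq \bbR$ and every $k \geq 2$; the theorem then follows, since $\bbR$ itself is nonempty and $\pi$-periodic, by applying this to both $\Omega_1$ and $\Omega_2$ and equating. I would proceed by induction on $k$, with the base case $k=2$ carrying all the genuine content and the inductive step being largely formal given closure of the higher levels under global phases (Theorem~\ref{thm:closureGlobalPhasesClifford}).

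For the base case, the engine is the standard fact that Cliffords conjugate phase-free Paulis to $\pm$Paulis: for $U \in \cC_n(2)$ and $p \in P_n$, one has $UpU^\dag = cq$ for some unit-modulus $c$ and $q \in P_n$, and then Hermiticity of $UpU^\dag$ forces $c \in \bbR$, while $(UpU^\dag)^2 = Up^2U^\dag = I$ forces $c^2 = 1$, so $c = \pm 1$. For $\cC_n(2) \subseteq \cC_n(2;\Omega)$, I would take $e^{i\alpha}p \in \cC_n(1;\Omega)$ and write $Ue^{i\alpha}pU^\dag = e^{i(\alpha + \epsilon\pi)}q$ with $\epsilon \in \{0,1\}$, which lies in $\cC_n(1;\Omega)$ precisely by $\pi$-periodicity. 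For the reverse inclusion, nonemptiness supplies an $\alpha_0 \in \Omega$; then $Ue^{i\alpha_0}pU^\dag \in \cC_n(1;\Omega) \subseteq \cC_n(1;\bbR)$, and stripping the phase $e^{i\alpha_0}$ gives $UpU^\dag \in \cC_n(1;\bbR)$, so $U \in \cC_n(2;\bbR) = \cC_n(2)$.

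The inductive step assumes $\cC_n(k-1;\Omega') = \cC_n(k-1)$ for every nonempty $\pi$-periodic $\Omega'$ and splits into two inclusions. The direction $\cC_n(k) \subseteq \cC_n(k;\Omega)$ is immediate: since $\Omega \subseteq \bbR$, every element of $\cC_n(1;\Omega)$ is already in $\cC_n(1;\bbR)$, so if $U \in \cC_n(k)$ conjugates $\cC_n(1;\bbR)$ into $\cC_n(k-1) = \cC_n(k-1;\Omega)$, it also conjugates $\cC_n(1;\Omega)$ there. The direction $\cC_n(k;\Omega) \subseteq \cC_n(k)$ uses closure under global phases: fixing $\alpha_0 \in \Omega$, the hypothesis gives $Ue^{i\alpha_0}pU^\dag \in \cC_n(k-1;\Omega) = \cC_n(k-1)$ for every $p \in P_n$, and Theorem~\ref{thm:closureGlobalPhasesClifford} then yields $UpU^\dag \in \cC_n(k-1)$ and hence $Ue^{i\alpha}pU^\dag = e^{i\alpha}UpU^\dag \in \cC_n(k-1)$ for every $\alpha \in \bbR$.

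The main obstacle is really only the base case, and within it the identification of $\pm 1$ as the \emph{only} extra phase a Clifford produces when conjugating a phase-free Pauli, because this is exactly the structure that $\pi$-periodicity is tailor-made to absorb; in higher levels the image under conjugation need not even be a scalar multiple of a Pauli, but this is irrelevant since the induction hypothesis subsumes all such complications and Theorem~\ref{thm:closureGlobalPhasesClifford} lets us freely slide the phase $e^{i\alpha}$ in and out at each level.
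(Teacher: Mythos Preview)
Your proposal is correct and uses essentially the same ideas as the paper: the squaring trick $(UpU^\dag)^2 = I$ to pin down the $\pm 1$ phase in the base case, $\pi$-periodicity to absorb that sign, nonemptiness to supply an anchor phase, and Theorem~\ref{thm:closureGlobalPhasesClifford} to slide phases in the inductive step. The only difference is organizational: the paper directly proves $\cC_n(k;\Omega_1) \subseteq \cC_n(k;\Omega_2)$ for arbitrary nonempty $\pi$-periodic $\Omega_1,\Omega_2$ and appeals to symmetry, whereas you anchor everything to the canonical choice $\Omega = \bbR$ and prove $\cC_n(k;\Omega) = \cC_n(k)$ (which is the paper's Corollary~\ref{cor:cliffordhierarchyindt}), deriving the theorem as a consequence rather than the other way round.
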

\begin{proof}
To prove \eq{eq:CnOmegaEquality}, it suffices to prove that for all $k\geq 2$, \begin{equation}
\label{eq:sufficesC12}
\cC_n(k;\Omega_1) \subseteq \cC_n(k;\Omega_2),
\end{equation}
since, by symmetry, interchanging the roles of $\Omega_1$ and $\Omega_2$ will give the opposite inclusion. 
We shall proceed by induction on $k$. 

We start with the base case $k=2$. Let $V \in \cC_n(2;\Omega_1)$. Then
\begin{equation}
\label{eq:VCCVdag}
V \cC_n(1;\Omega_1) V^\dag \subseteq \cC_n(1;\Omega_1).
\end{equation}

Our goal is to show that
\begin{equation}
\label{eq:VCCVdag2}
V \cC_n(1;\Omega_2) V^\dag \subseteq \cC_n(1;\Omega_2).
\end{equation}

To this end, pick an arbitrary element $e^{i\alpha} p \in \cC_n(1;\Omega_2)$, where $\alpha \in \Omega_2$ and $p \in P_n$. 
Since $\Omega_1$ is nonempty, there exists $\beta \in \Omega_1$ such that $e^{i\beta}p \in \cC_n(1;\Omega_1)$. By \eq{eq:VCCVdag}, there exists $\gamma \in \Omega_1$ and $q \in P_n$ such that 
\begin{equation}
\label{eq:Veibeta}
V(e^{i\beta} p) V^\dag = e^{i\gamma} q ,
\end{equation}
which implies that 
\begin{equation}
e^{i(\beta-\gamma)} VpV^\dag = q.
\end{equation}
Taking squares on both sides, and using the property that $p^2 = q^2 = 1$, we get
\begin{equation}
e^{2i(\beta-\gamma)} = 1,
\end{equation}
which implies that 
\begin{equation}
e^{i \gamma} = \pm e^{i \beta}.
\end{equation}
Substituting this into \eq{eq:Veibeta}, we get
\begin{equation}
VpV^\dag = \pm q.
\end{equation}
Therefore,
\begin{eqnarray}
V(e^{i\alpha} p) V^\dag &=& \pm e^{i\alpha} q \nn
&=& e^{i\alpha} q \mbox{ or } e^{i(\alpha+\pi)} q \nn
& \in & \cC_n(1;\Omega_2) ,
\end{eqnarray}
where in the last step, we used the $\pi$-periodicity of $\Omega_2$ to obtain the inclusion $\alpha+\pi \in \Omega_2$.

Since $e^{i\alpha}p \in \cC_n(1;\Omega_2)$ was chosen arbitrarily, we obtain \eq{eq:VCCVdag2}, which shows that $V \in \cC_n(2;\Omega_2)$. Hence, $\cC_n(2;\Omega_1) \subseteq \cC_n(2;\Omega_2)$.

Next, we prove the inductive step. Assume that \eq{eq:sufficesC12} holds for $k$. Let $V \in \cC_n(k+1;\Omega_1)$. Pick an arbitrary element $e^{i\alpha} p \in \cC_n(1;\Omega_2)$, where $\alpha \in \Omega_2$ and $p \in P_n$. As above, there exists $\beta \in \Omega_1$ such that $e^{i\beta}p \in \cC_n(1;\Omega_1)$.
Then,
\begin{eqnarray}
V(e^{i\alpha}p)V^\dag &=&
e^{i(\alpha-\beta)}V(e^{i\beta}p)V^\dag \nn
&\in & e^{i(\alpha-\beta)} \cC_n(k;\Omega_1) \nn
&=& \cC_n(k;\Omega_1), \quad\mbox{by Theorem \ref{thm:closureGlobalPhasesClifford}} \nn
&\subseteq & \cC_n(k;\Omega_2), \quad\mbox{by induction hypothesis.}
\end{eqnarray}
Hence, $V \in \cC_n(k+1;\Omega_2)$. This implies that $\cC_n(k+1;\Omega_1) \subseteq \cC_n(k+1;\Omega_2)$, which completes the proof.
\end{proof}

\begin{corollary}
\label{cor:cliffordhierarchyindt}
Let $\Omega \subseteq \bbR$ be a nonempty $\pi$-periodic set. Then for all $k\geq 2$, \begin{equation}
\cC_n(k;\Omega) = \cC_n(k) = G_n(k),
\end{equation}
i.e. $\cC_n(k;\Omega)$ is independent of $\Omega$.
\end{corollary}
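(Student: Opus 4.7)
The plan is to derive Corollary~\ref{cor:cliffordhierarchyindt} as an essentially immediate consequence of Theorem~\ref{thm:cliffordhierarchyindependence}. The strategy is to identify two specific $\pi$-periodic subsets of $\bbR$ whose corresponding $\Omega$-Clifford hierarchies coincide with the two named hierarchies $\cC_n(k)$ and $G_n(k)$, and then apply the theorem to conclude equality with $\cC_n(k;\Omega)$.

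First I would recall from Definition~\ref{def:pauliSet} that $\cC_n(1;\bbR) = \cC_n(1)$ (the standard Pauli group from Eq.~\eqref{eq:def_Paulis}, where the phase is an arbitrary element of $\bbR$) and $\cC_n(1;\tfrac{\pi}{2}\bbZ) = G_n(1)$ (the Pauli group from Eq.~\eqref{eq:def_Paulis_ikemike}, where $\{e^{i\alpha}:\alpha\in\tfrac{\pi}{2}\bbZ\} = \{1,i,-1,-i\} = \{i^c:c\in\{0,1,2,3\}\}$). By the inductive definition of $\cC_n(k;\Omega)$, one immediately has $\cC_n(k;\bbR)=\cC_n(k)$ and $\cC_n(k;\tfrac{\pi}{2}\bbZ)=G_n(k)$ for all $k\ge 2$.

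Next I would verify the $\pi$-periodicity of the two chosen sets. Trivially, $\bbR$ is $\pi$-periodic since $\alpha\in\bbR\iff \alpha+\pi\in\bbR$. For $\tfrac{\pi}{2}\bbZ$, note that $\pi = \tfrac{\pi}{2}\cdot 2\in\tfrac{\pi}{2}\bbZ$, so $\alpha\in\tfrac{\pi}{2}\bbZ\iff\alpha+\pi\in\tfrac{\pi}{2}\bbZ$. Both sets are also manifestly nonempty.

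With these observations in hand, the corollary follows by applying Theorem~\ref{thm:cliffordhierarchyindependence} twice. Taking $\Omega_1=\Omega$ (nonempty and $\pi$-periodic by hypothesis) and $\Omega_2=\bbR$, one obtains $\cC_n(k;\Omega) = \cC_n(k;\bbR) = \cC_n(k)$. Taking instead $\Omega_2=\tfrac{\pi}{2}\bbZ$, one obtains $\cC_n(k;\Omega) = \cC_n(k;\tfrac{\pi}{2}\bbZ) = G_n(k)$. Chaining the two equalities yields $\cC_n(k;\Omega)=\cC_n(k)=G_n(k)$, as desired.

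Since all the hard work is done by Theorem~\ref{thm:cliffordhierarchyindependence}, there is no real obstacle here; the only thing to be careful about is correctly identifying $G_n(k)$ with $\cC_n(k;\tfrac{\pi}{2}\bbZ)$ at the level of the inductive definition, and confirming that $\tfrac{\pi}{2}\bbZ$ meets the hypotheses of the theorem (nonempty and $\pi$-periodic), both of which are routine.
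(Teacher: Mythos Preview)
Your proposal is correct and follows essentially the same approach as the paper: identify $\cC_n(k)=\cC_n(k;\bbR)$ and $G_n(k)=\cC_n(k;\tfrac{\pi}{2}\bbZ)$, note that both $\bbR$ and $\tfrac{\pi}{2}\bbZ$ are nonempty and $\pi$-periodic, and invoke Theorem~\ref{thm:cliffordhierarchyindependence}. The paper's version is simply terser, stating this in a single sentence.
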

\begin{proof}
Since $\cC_n(1) = \cC_n(1;\bbR)$ and $G_n(1) = \cC_n(1;\tfrac \pi 2 \bbZ)$, and $\bbR$ and $\tfrac \pi 2 \bbZ$ are both nonempty and $\pi$-periodic, Corollary \ref{cor:cliffordhierarchyindt} follows directly from Theorem \ref{thm:cliffordhierarchyindependence}.
\end{proof}

\end{document}